\renewcommand{\@dotsep}{1000}
\tikzset{->-/.style={decoration={
			markings,
			mark=at position #1 with {\arrow{>}}},postaction={decorate}}}	
\tikzset{-<-/.style={decoration={
			markings,
			mark=at position #1 with {\arrow{<}}},postaction={decorate}}}
\definecolor{airforceblue}{rgb}{0.36, 0.54, 0.66}
\definecolor{antiquefuchsia}{rgb}{0.57, 0.36, 0.51}
\definecolor{blush}{rgb}{0.87, 0.36, 0.51}
\definecolor{bondiblue}{rgb}{0.0, 0.58, 0.71}
\definecolor{MyGreen}{rgb}{0.0,0.5,0}
\definecolor{MyDarkRed}{rgb}{0.7,0,0}
\definecolor{MyBlue}{rgb}{0.0,0.0,.5}
\def\be#1\ee{\begin{align}#1\end{align}}
\def\bsub#1\esub{\begin{subequations}#1\end{subequations}}
\def\bg#1\eg{\begin{gather}#1\end{gather}}
\def\ba{\begin{eqnarray}}
\def\ea{\end{eqnarray}}
\def\sgn{\mathrm{sgn}}
\def\grav{\text{grav}}
\def\q{\qquad}
\def\f{\frac}
\def\df{\dfrac}
\def\eps{\varepsilon}
\def\lb{\big\lbrace}
\def\rb{\big\rbrace}
\def\ip{\lrcorner\,}
\def\ipp{\ip\!\!\!\ip}
\def\rm#1{\mathrm{#1}}
\def\lp{\ell_\text{Pl}}
\def\mb{\bar{\mu}}
\def\Tr{\mathrm{Tr}}
\def\de{\mathrm{d}}
\DeclareRobustCommand{\loplus}{\mathbin{\mathpalette\dog@lsemi{+}}}
\DeclareRobustCommand{\lotimes}{\mathbin{\mathpalette\dog@lsemi{\times}}}
\DeclareRobustCommand{\roplus}{\mathbin{\mathpalette\dog@rsemi{+}}}
\DeclareRobustCommand{\rotimes}{\mathbin{\mathpalette\dog@rsemi{\times}}}
\newcommand{\dog@rsemi}[2]{\dog@semi{#1}{#2}{-90,90}}
\newcommand{\dog@lsemi}[2]{\dog@semi{#1}{#2}{270,90}}
\newcommand{\dog@semi}[3]{%
  \begingroup
  \sbox\z@{$\m@th#1#2$}%
  \setlength{\unitlength}{\dimexpr\ht\z@+\dp\z@\relax}%
  \makebox[\wd\z@]{\raisebox{-\dp\z@}{%
    \begin{picture}(1,1)
    \linethickness{\variable@rule{#1}}
    \roundcap
    \put(0.5,0.5){\makebox(0,0){\raisebox{\dp\z@}{$\m@th#1#2$}}}
    \put(0.5,0.5){\arc[#3]{0.5}}
    \end{picture}%
  }}%
  \endgroup
}
\newcommand{\variable@rule}[1]{%
  \fontdimen8  
  \ifx#1\displaystyle\textfont3\else
    \ifx#1\textstyle\textfont3\else
      \ifx#1\scriptstyle\scriptfont3\else
        \scriptscriptfont3\relax
  \fi\fi\fi
}
\newcommand{\R}{{\mathbb R}}
\newcommand{\Z}{{\mathbb Z}}	
\newcommand{\I}{{\mathbb I}}	
\newcommand{\cA}{{\mathcal A}}
\newcommand{\cB}{{\mathcal B}}
\newcommand{\cC}{{\mathcal C}}
\newcommand{\cD}{{\mathcal D}}
\newcommand{\cE}{{\mathcal E}}
\newcommand{\cF}{{\mathcal F}}
\newcommand{\cG}{{\mathcal G}}
\newcommand{\cH}{{\mathcal H}}
\newcommand{\cJ}{{\mathcal J}}
\newcommand{\cK}{{\mathcal K}}
\newcommand{\cL}{{\mathcal L}}
\newcommand{\cM}{{\mathcal M}}
\newcommand{\cN}{{\mathcal N}}
\newcommand{\cO}{{\mathcal O}}
\newcommand{\cP}{{\mathcal P}}
\newcommand{\cQ}{{\mathcal Q}}
\newcommand{\cR}{{\mathcal R}}
\newcommand{\cS}{{\mathcal S}}
\newcommand{\cT}{{\mathcal T}}
\newcommand{\cU}{{\mathcal U}}
\newcommand{\cV}{{\mathcal V}}
\newcommand{\SU}{\mathrm{SU}}
\newcommand{\SL}{\mathrm{SL}}
\renewcommand{\O}{\mathrm{O}}
\newcommand{\ISO}{\mathrm{ISO}}
\newcommand{\BMS}{\mathrm{BMS}}
\renewcommand{\c}{{\mathfrak{c}}}
\newcommand{\su}{{\mathfrak{su}}}
\renewcommand{\sl}{{\mathfrak{sl}}}
\newcommand{\so}{{\mathfrak{so}}}
\newcommand{\iso}{{\mathfrak{iso}}}
\newcommand{\bms}{{\mathfrak{bms}}}
\newcommand{\h}{{\mathfrak{h}}}
\newtheorem{theorem}{Theorem}
\newtheorem*{Prop*}{Proposition}
\theoremstyle{definition}
\newtheorem*{Def*}{Definition}
\renewcommand\part{%
	\if@openright
	\cleardoublepage
	\else
	\clearpage
	\fi
	\thispagestyle{empty}%
	\if@twocolumn
	\onecolumn
	\@tempswatrue
	\else
	\@tempswafalse
	\fi
	\null\vfil
	\secdef\@part\@spart}
\patchcmd{\ttlh@hang}{\parindent\z@}{\parindent\z@\leavevmode}{}{}
\patchcmd{\ttlh@hang}{\noindent}{}{}{}
\newcommand{\secmark}{}
\newcommand{\marktotoc}[1]{\renewcommand{\secmark}{#1}}
\titleformat{\section}{\normalfont\Large\bfseries}{\makebox[1.5em][l]{\llap{\secmark}\thesection}}{0.4em}{}
\titleformat{\subsection}{\normalfont\large\bfseries}
{\makebox[2.3em][l]{\llap{\secmark}\thesubsection}}{0.4em}{}
\small\contentslabel[\llap{\secmark}\thecontentslabel]{3.1em}}
\begin{document}

\thispagestyle{empty}
\begin{tikzpicture}[overlay, remember picture] 
\node at (current page.center) 
    [
    anchor=center,
    xshift=0mm,
    yshift=8mm
    ] 
{
		\includegraphics[width=1.3\textwidth]{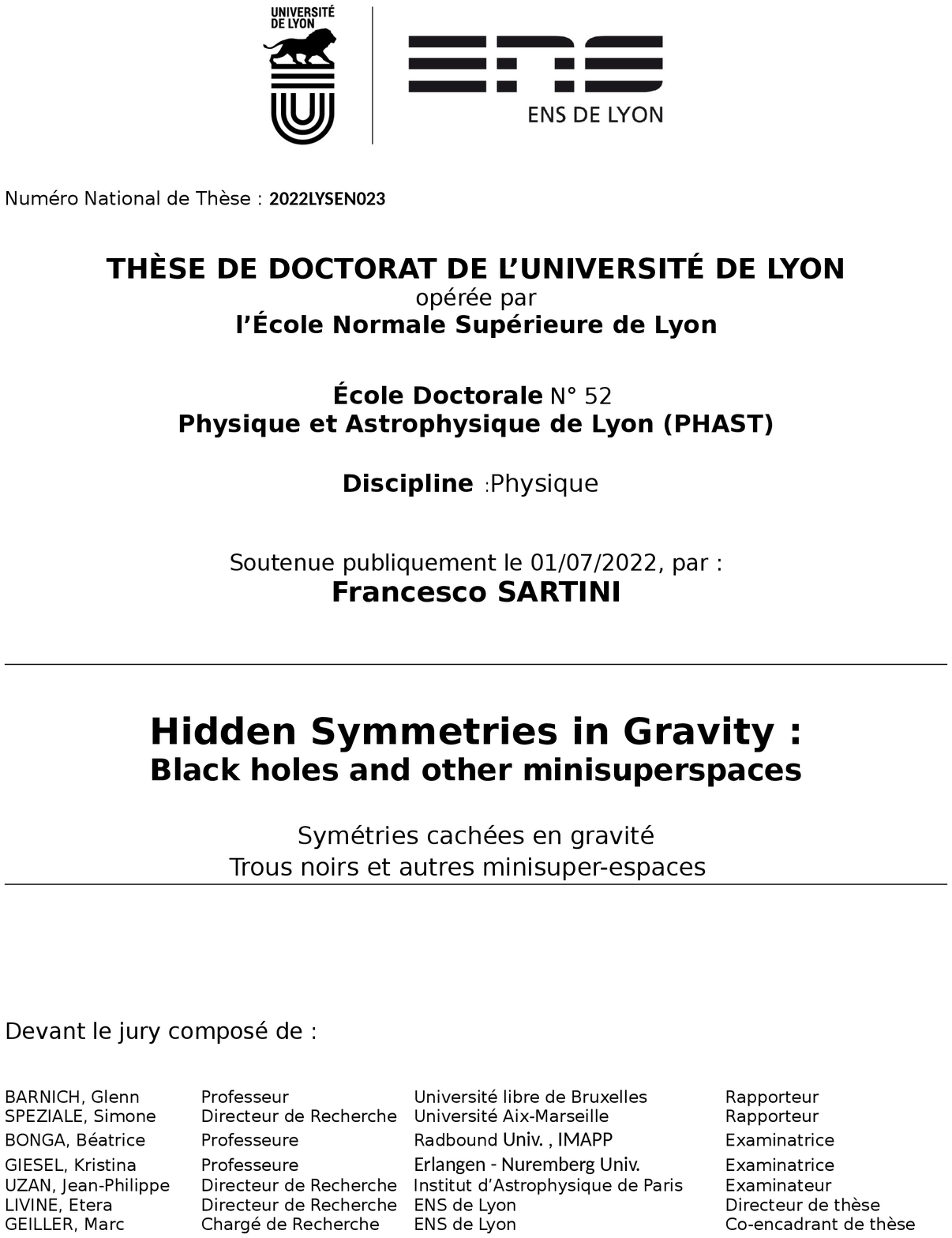}
};
\end{tikzpicture}


\frontmatter

\thispagestyle{empty}

\newpage
~
\thispagestyle{empty}


\vspace*{4cm}
\begin{flushright}
\textit{I think Nature's imagination is\\ so much greater than man's,\\ she's never gonna let us relax!}

R.P. Feynmann
\end{flushright}



%
\cleardoublepage

\ \newpage

\section*{Acknowledgements}
\addcontentsline{toc}{section}{Acknowledgements}

First and foremost, I would like to thank my supervisors. Marc has been the first one to welcome me when I was a master's student, and to accept guiding me through my years as a PhD student. He has been a passionate teacher during my first steps to discover the world of quantum gravity. He has always been present to help me enlighten my doubts and to answer my questions. I sincerely thank him for his constant enthusiasm, his pedagogical ability and for proposing many interesting project opportunities. It has been a pleasure to work with Etera, gifted with a quick mind and impressive intuition, he never ceased to push me to have a more global vision of my research. Whenever I felt stuck during my work, a simple discussion with him relaunched my research with many stimulating ideas. 

I am grateful to both of them for according me a lot of freedom in the choices of the directions that I wanted to pursue in my work, still guiding me in my path as a young researcher. Their complementary supervision has created a constructive and inspiring environment for my scientific growth.
 
I would like to express my thanks to the members of my committee. To Glenn Barnich and Simone Speziale, who accepted to take precious time to examine my manuscript. To Béatrice Bonga, Kristina Giesel and Jean--Philippe Uzan for being part of my jury.

On the scientific level, I would also like to thank Alexandre Arbey and Jérémy Auffinger. Our collaboration has been a valuable and enriching experience. Although the pandemic has prevented me from travelling and personally meeting other members of the scientific community, I would like to thank all those who accept to discuss my work, letting me give some talks, even if only virtually. 

Looking at the past I have to give credit to my middle and high school Maths and Physics professors, Patrizia and Ivan. They have been crucial for the development of my interest in mathematical problems and my curiosity about the scientific world. \newline

On a more private level, a special thought goes to my parents and my brother. They are undoubtedly the determining factor for my personal evolution, providing the peaceful and comfortable spacetime where I grew up. I will always be thankful to them, for pushing me to cultivate my passions and for being a solid anchor to hold on to, while facing the vortices of life.

I am extremely grateful to all the people that I met during these years in France, and those that supported me from Italy. A heartfelt thank goes to my friends in Florence, for the tenacious remote support, to Chiara, my flatmates and all those who made my life in Lyon more exciting. 

In parallel to research, during these years I also dedicate myself to another of my passion, defying gravity in a less scientific way. Climbing is not only an entertainment and a break from work, but also an incredible opportunity for personal growth. I will never cease to be amazed by the warmth and spirit of acceptance that the climbing community continuously shows.

Finally, I gratefully acknowledge all the staff of the \textit{Jardin Singulier}, for providing a wonderful and peaceful corner of the world, where this manuscript has been written.

\cleardoublepage
\vspace*{-2cm}

\section*{Summary of the Thesis (French version below)}
\addcontentsline{toc}{section}{Summary of the Thesis}

This thesis is dedicated to the study of symmetries in reduced models of gravity, with some frozen degrees of freedom. In particular, we focus on the minisuperspace reduction where the number of remaining degrees of freedom is finite. This work takes its place in the quest of understanding the role of physical and gauge symmetries in gravity. Consistent efforts have been made in this direction by the study of boundary symmetries. They can turn gauge into physical symmetries, this being related to the appearance of an infinite-dimensional algebra of conserved charges.

I choose here a simplified setup provided by minisuperspaces. They are treated as mechanical models, evolving in one spacetime direction. This evolution parameter represents the orthogonal coordinate to the homogeneous foliation of the spacetime. I investigate their classical symmetries and the algebra of the corresponding Noether charges.

After presenting the formalism allowing us to describe the reduced models in terms of an action principle, we discuss the condition for having an (extended) conformal symmetry. The $\SL(2,\R)$ group can be enhanced with the semidirect product with some abelian extension. In particular, the black hole model enlightens the subtle role of the spacelike boundary of the homogeneous slice. The latter interplays with the conformal symmetry, being associated with a conserved quantity from the mechanical point of view.

The absence of the infinite tower of charges, characteristic of the full theory, is due here to a symmetry-breaking mechanism. This is made explicit by looking at the infinite-dimensional extension of the symmetry group. In particular, this allows looking at the equation of motion of the mechanical system in terms of the infinite-dimensional group, who in turn has the effect of rescaling the coupling constants of the theory.

Finally, the presence of the finite symmetry group allows defining a quantum model in terms of the corresponding representation theory. At the level of the effective theory, accounting for the quantum effects, the request that the symmetry is protected provides a powerful tool to discriminate between different modifications. In the end, the conformal invariance of the black hole background opens the door to its holographic properties and might have important consequences on the study of the propagation of test fields on it and the corresponding perturbation theory.
 \newpage
 
 \section*{Résumé de la Thèse }

Cette thèse est dédié à l'étude des symétries des modèles à symétrie réduite en gravité, où on a gelé  certains degrés de liberté. En particulier, on va se concentrer sur la réduction à \textit{minisuper-espace}, où ils nous reste un nombre fini de degrés de liberté. Ce travail s'inscrit dans le cadre de la compréhension du rôle en gravité des symétries physiques et celles de jauge. Nombreux progrès ont été faits dans ce domaine, grâce à l'étude des symétries de bord. Elles peuvent changer des symétries de jauge en symétries physiques, et cela est souvent accompagné par la présence d'une algèbre de charges, de dimension infinie.

J'ai choisi ici un contexte plus simple: celui des minisuper-espaces. Ils sont considérés comme des modèles mécaniques, qui évoluent dans une dimension spatio-temporelle. Cette direction est représentée par la coordonné perpendiculaire aux hypersurfaces homogènes dans le quelle l'espace temps est folié. On va donc étudier les symétries de ces modèles est l’algèbre de leurs charges de Noether.

Après avoir présenté le formalisme nécessaire pour définir un principe de moindre action pour ces modèles en partant de la relativité générale, on va étudier les conditions nécessaires pour avoir une symétrie conforme (avec SL(2,R) comme groupe de symétrie). Celle ci peut aussi être étendue par produit semi-direct avec un autre groupe (abélien). En particulier, l'étude du modele des trous noirs révèle un rôle subtile joué par le bord de l'hypersurface spatiale. Il est relié à la symétrie du modele mécanique en étant associé a une des charges conservées.

L'absence d'une tour infinie de charge, typique de la théorie complète, est due ici à une brisure de symétrie. Cela est visible en regardant à l'extension infinito-dimensionnelle du groupe de symétrie. En particulier, cela va nous permettre de voir les équations du mouvement du système en termes d'orbite coadjointe du groupe de dimension infinie, qui, à son tour, peut changer les constants de couplage de la théorie.

La présence d'un groupe de dimension finie permet aussi de définir une théorie quantique, en utilisant la théorie des représentations. Au niveau effectif, la condition de conserver la symétrie est un outil puissant pour discriminer les différents régularisations. Finalement, l'invariance conforme du modèle des trous noirs ouvre la voie pour étudier ses propriétés holographiques et peut avoir des conséquences importantes pour la propagations des champs test et des perturbations.



\pagestyle{Contents}
\renewcommand\cftsubsectionfont{\small}
\newpage

\tableofcontents*

\newpage
~
\thispagestyle{empty}

\mainmatter
\pagestyle{Introduction}
\chapter*{Introduction}
\addcontentsline{toc}{part}{Introduction}

The cornerstone of modern high energy physics is the existence of four fundamental interactions: strong and weak nuclear interaction, electromagnetic and gravitational forces. A complex composition of these elementary interactions should in principle give a mathematical description for any natural phenomena. The first three interactions are used to describe the atomic and subatomic world. The strong force is responsible for binding together the bricks of atomic nuclei, the weak interaction mediates the radioactive decay, while the electromagnetic force describes the attraction between the nucleus and the electrons, the chemical bonds, as well as the electromagnetic waves, including visible light. Gravity, on the other hand, describes the way that objects with mass or energy attract each other. It is by far weaker than the other forces at the atomic scale but becomes the only relevant force for phenomena of astronomic size. This is because gravity has an infinite effective range, unlike strong and weak interaction, and celestial bodies carry are globally neutrally charged and they do not interact electromagnetically.

The study of gravitation is maybe the oldest branch of physics, very early civilisations had a predictive knowledge of the motion of the stars. The scientific method of Galileo lead to the Newtonian formulation of gravity, then superseded by Einstein's General Relativity and its Equivalence Principle, the theory used by a modern physicist to describe the gravitational interaction. 

This is formulated using the language of differential geometry, where the effect of mass is to curve a smooth spacetime, like a heavy ball on a trampoline. This is completely different from the physics of the Standard Model of particles, the theory used to describe the other three forces. It speaks the language of quantum field theories, where the spacetime is flat and everything is described by discrete quanta, elementary particles mediating the interactions. The two theories depict two completely different worlds and this makes it impossible to combine them, in particular, we cannot describe the quantum behaviour of gravity. 

Despite the global expectation that the Standard Model and General Relativity are both incomplete and effective theories, they successfully pass any experimental test. The energy limit where violations of the Standard Model could appear is constantly pushed further by particles accelerators: we have never seen the signature of a supersymmetric or exotic particle emerging in an experiment. On its side General Relativity has passed many tests among the years, starting from the right accounting for the perihelion precession of Mercury, the time dilation measured by GPS, the very recent detection of gravitational waves \cite{Abbott:2016blz} and image of the matter near the horizon of the black hole at the centre of the M87 galaxy \cite{EventHorizonTelescope:2019dse}.

On the other hand, because of the incapacity to represent the quantum nature of spacetime, we lack predictivity for phenomena taking place at the Planck scale, like for example near the black hole singularity or the very early cosmology, as well as the scattering between particles with very high energy and small impact parameter. According to quantum indetermination, to sharply localise the position of a particle, say with precision $\Delta x < L$ we need to hit it with an impulsion $p^2 > (\Delta p)^2 > (\hbar/L)^2$. High impulsion means high energy, because in the relativistic limit $E\approx c p > c \hbar / L$. We need to have an energy $E$ concentrated in a region of space of size $L \approx c \hbar/E$. If now we turn on the deformation of spacetime by the energy, acting as a mass $M \approx E/c^2$, we know that if this mass is concentrated in a radius smaller than $R =2 G M/c^2$, a black hole forms. The better is the precision desired (the smaller is $L$), the bigger is the energy that is required. At a certain threshold, the radius of the black hole is larger than $L$, this defines the Planck Length
\be
L \approx \f{G M}{c^2} \approx \f{ \hbar G}{L c^3} \q \Rightarrow\q \lp = \sqrt{\f{\hbar G}{c^3}} \approx 10^{-33} cm\,.
\ee
It is the shortest distance we can probe, beyond it we will inevitably produce a black hole, larger than the distance to measure. This simple argument, mixing a fundamental property of quantum mechanics and a prediction of General Relativity, is quite enlightening in many aspects. Besides the needing for a quantum theory of gravity, it also points out the fundamental role played by black holes in the quest for such a theory.

The impossibility to measure the gravitational field with an arbitrary precision lead to setting aside the view of standard quantum field theory, where the fields are defined over a smooth flat spacetime. On a more technical playground, this is translated into the non-renormalizability of General Relativity as a perturbative quantum field theory. Taking very seriously the lesson of Einstein, where matter and spacetime are interlaced together we must obtain a quantum theory \textit{of} spacetime itself, that is not defined on a fixed background.

General Relativity describes the geometric properties of spacetime, equipped with a metric tensor that measures distances between points. Einstein fields equations describe how the metric tensor and the matter content are related. Thanks to Lovelock's theorem \cite{Lovelock:1971yv}, we know that General Relativity is the only theory that depends on the metric up to second order. The beauty and at the same time the difficulty of General Relativity lies precisely in the fact that it comes directly as the only possible theory of gravity once we assume a small number of postulates. Einstein's theory is rooted in the \textit{general covariance} principle: the physical laws must be expressed by equations that hold good for any observer, represented by a system of coordinates. This translates mathematically into the choice of differential geometry as natural playground to describe spacetime, where the general covariance is expressed by the fact that diffeomorphisms represent the local gauge symmetry of general relativity. 

The invariance under a set of transformations of the mathematical description of Nature is the cornerstone of any physical theory, making symmetries the most powerful tool in theoretical physics. Classical mechanics is based on the fact that physical laws are the same for all inertial observers. The fact that the speed of lights is the same, regardless of the speed of the frame, lead Einstein to the formulation of special relativity, where the group of symmetries is known as the Poincar\'e group. The generalisation to a non-inertial observer gives General Relativity, as already discussed. But also the quantum world strongly relies on the concept of symmetries because, through the representation theory of the symmetry groups, we organise the states and the spectra of the quantum theories. The understanding of the symmetries of a theory is then a crucial step in the way to describe its quantum properties, and their role in gravity is far from being understood.

Symmetries are intimately related to conserved charges, through Noether's theorem \cite{1971TTSP....1..186N}. For example, invariance under time translations implies energy conservation. We usually distinguish the classical symmetries in global (physical) and gauge, assigning to the first ones some physical measurable quantity, while the gauge symmetries are an artefact of redundancy of different ways of describing the same physical system. This is made more precise by the difference between first and second Noether theorems \cite{Banados:2016zim}. Both are based on the action principle, for which the solutions of the equations of motion (EOM) are equivalent to stationary points of the action functional. The first theorem states that to every differentiable symmetry of the action there corresponds a conserved current. Whenever this current is non-zero, the symmetry is called global (or rigid), and it is usually associated to finite dimensional Lie groups. While the second theorem concerns infinite dimensional sets of transformations, parametrized by $n$ \textit{arbitrary} functions of spacetime, and gives $n$ relations among the equation of motions, that in turn implies that not all the variables are fixed by the equation of motion. At the Hamiltonian level this is translated by the presence of first-class constraint that allows reducing the degrees of freedom.

We shall notice that gauge invariance is one of the common aspects of all the descriptions of the fundamental interactions: the Standard Model of particle physics is invariant under a $\SU(3)\times \SU(2)\times \rm{U}(1)$ transformation at every point of spacetime, and General Relativity possesses diffeomorphisms covariance. Nevertheless, their precise role in gravity is far from being understood. Whereas in Einstein's theory, as it is usually presented in textbooks, the diffeomorphisms play the role of gauge symmetries, they relate two physically different observers. From a quantum mechanics point of view, where the relation between system and observer is crucial, they cannot be just a redundancy in the description. This is even more dramatic in the presence of boundaries, where the diffeomorphisms acquire a non-trivial charge and become physical. 

It has been known for a while \cite{Bondi:1962px,Sachs:1962wk,Sachs:1962zza} that the asymptotic symmetry group of Einstein gravity at null infinity is provided by an infinite dimensional extension of the Poincar\'e transformation, called BMS group. In recent years the study of asymptotic or, more in general, boundary symmetries have received lots of attention, due to the newly discovered connections with gravitational scattering, memory effects \cite{Strominger:2013jfa,He:2014laa,Compere:2016gwf,Choi:2017ylo,Ashtekar:2018lor,Rahman:2019bmk}, as well as the conjectural ability of these structures to account for the Hawking radiation or the prospect of constructing holographic theories \cite{Arcioni:2003xx,Arcioni:2003td,Barnich:2010eb,Fotopoulos:2019vac,Laddha:2020kvp,Donnay:2020guq,Puhm:2019zbl,Guevara:2021abz}. The asymptotic symmetries are placed at one of the vertices of a triangular equivalence, while memory effects and soft theorems lie on the other corners \cite{Strominger:2017zoo}. There are several different approaches to this boundary symmetries, and a big zoology exists. They depend on whether the boundary is spacelike or null, but also on the geometry and dimension of the internal spacetime, the boundary condition, the location of the boundary, but also on the description of gravity being considered.

The world described by Einstein's theory is four-dimensional, and this brings in lots of complexities, a good strategy is to turn to the study of toy models. This can be achieved in many ways, for example by considering lower dimensional systems. In three or two spacetime dimensions there are no local degrees of freedom, meaning that there is no propagation of gravitational waves. Their topological nature makes them natural candidates to explore the boundary structures and their holographic properties. Ever since the work of Brown and Henneaux \cite{Brown:1986nw}, which pointed out the existence of a double Virasoro algebra for 3D gravity with a particular class of fall-off conditions, numerous generalisation has been revealed  \cite{Ashtekar:1996cd,Barnich:2006av,Barnich:2012aw,Oblak:2016eij,Carlip:2017xne,Barnich:2014kra,Barnich:2015uva,Barnich:2016lyg,Barnich:2017jgw,Penna:2017vms,Barnich:2019vzx,Carlip:2019dbu,Ruzziconi:2020wrb,Freidel:2021fxf,Geiller:2021vpg}, enlightening a richer structure in spite of what the topological nature might suggest. 

But the toy models being toy models, there are aspects of the four-dimensional setup that are not captured and that prevent from generalising conclusions valid in three dimensions. We have already cited the absence of local degrees of freedom and radiative properties. Alongside it, in three dimensions we lack a good Newtonian limit, the force between particles being zero, making the dynamics of the four and three-dimensional cases completely different.

A different possibility to simplify the study is to stick with the four-dimensional case, but freeze some degrees of freedom, by imposing some symmetry condition on the spacetime, reducing the class of possible solutions, choosing some gauge condition, or a combination of both. The most extreme way of doing so is provided by \textit{minisuperspaces} \cite{ryan_1972, MACCALLUM1972385,Christodoulakis:1991ky,Christodoulakis:2018swq}, where the infinite-dimensional phase space of General Relativity is reduced to a finite-dimensional mechanical model. This is for instance what happens when we consider the cosmological approximation of a homogeneous scale factor, which is left as a single variable that evolves in time, like the position of one particle on a line. This thesis takes its place in the framework of these symmetry-reduced models. From now on we refer to the coordinate describing the evolution as \textit{time} although it is not necessarily time-like, the hypersurfaces at constant time are called \textit{slices}. While the term \textit{minisuperspace} is used to denote a reduction to a finite number of degrees of freedom, we define as \textit{midisuperpsaces} the symmetry reduced models that are still described by a field theory, with an infinite number of degrees of freedom.

An important role of minisuperspace is played in the building of solutions for quantum theories. For any known quantum theory there is indeed no straightforward way to obtain the states representing classical solutions, even the very simple case of spherically symmetric static black holes or homogeneous cosmologies. On the other hand, is possible to incorporate some features of the full theories into reduced models with a finite number of degrees of freedom. This is for example what has been done to build the so-called Loop Quantum Cosmology (see \cite{Ashtekar:2011ni} for a review), where we import into the cosmological minisuperspace the holonomy quantization of Loop Quantum Gravity (LQG hereafter) \cite{Thiemann:2007pyv,Ashtekar:2004eh,rovelli_2004,Perez:2012wv}. The last one is a canonical and non-perturbative approach to quantum gravity, written in terms of Wilson loops of the Ashtekar connection.

The apparent simplicity of minisuperspaces is the tree hiding in the forest. Besides their importance for model building, they possess surprising symmetry properties. It has been recently shown that homogeneous and isotropic cosmology, coupled with a massless scalar field exhibits a rigid $\SL(2,\R)$ conformal invariance, in the form of a M\"obius transformation of proper time \cite{BenAchour:2017qpb,BenAchour:2019ywl,BenAchour:2019ufa,BenAchour:2020njq,BenAchour:2020xif} (see also \cite{Dimakis:2015rba,Dimakis:2016mpg,Christodoulakis:2018swq,Pailas:2020xhh,Dussault:2020uvj}). This is because of the close relationship with the conformal particle model \cite{Pioline:2002qz}. The existence of the $\SL(2,\R)$ structure has been rapidly generalised for anisotropic Bianchi I models \cite{BenAchour:2019ywl} and in the presence of a cosmological constant \cite{BenAchour:2020xif}. The relationship with the latter was already pointed out by Gibbons \cite{Gibbons:2014zla}. What is quite surprising, and granted them the status of \textit{hidden} symmetries, is the fact that they exist on top of the residual spacetime diffeomorphism. 

Moreover, at this stage, it seemed that there were no systematic ways of understanding their origin, and their relationship with some boundary structure was quite obscure because at first glance it looks like no boundary is involved.

This thesis points towards the direction of improving our understanding of the minisuperspaces symmetry, providing a more general construction based on the dynamical structure of the theory. By the very definition of minisuperspace, once the symmetry reduction is performed, we are left with a mechanical phase space, with a finite number of degrees of freedom. The remaining metric free fields, which now depend only on one variable, can be treated as particles moving on a curved field space, endowed with a supermetric, with some potential. The evolution of the metric coefficients will then inherit the symmetry structure of the supermetric. 

The  cosmological $\SL(2,\R)$ group can be enhanced up to a symmetry group that is isomorphic to the semi-direct product $\left (\SL(2,\R)\times \R\right ) \ltimes \R^4$ \cite{Geiller:2022baq}. The corresponding $\sl(2,\R)$ piece of the algebra is connected to what was called CVH generators, namely the generator of phase space dilatations $C$, the volume of the three-dimensional slice $V$ and the Hamiltonian $H$. 

The simplicity of the cosmological model blurs the role of the boundary. Through this generalisation, we will see that the latter has actually to be considered, even if it awkwardly enters the game. Whenever the slice is non-compact, we require an IR cutoff in order to normalize the integration of the reduced action. This fiducial length plays a subtle role and enters as a shift in the Hamiltonian, meaning that symmetry and boundary speak each other.

The most interesting application of this construction is given by black holes \cite{Geiller:2020xze}. Due to their fundamental role in the quest for a quantum theory, we seek a deep understanding of their symmetries. In this case, it turns out to be useful to rewrite the symmetry generators in a way that we make contact with the (2+1) dimensional Poincar\'e group $\ISO(2,1)$. Interestingly, it is possible to show that it descends from its infinite dimensional extension the 3-dimensional BMS$_3$ group \cite{Geiller:2021jmg}. This ubiquitous group is not a symmetry in the homogeneous setup. Although this might seem disappointing at first, we know that symmetry breaking is as important as symmetry invariance, because it reveals physical degrees of freedom. In geometric theories, there is a similar situation in which the boundary partially breaks the symmetry, instead of enlarging it. For example the case of the Schwarzian theory leaving on the edge of JT gravity, where the Virasoro group is broken to its $\SL(2,\R)$ global part.   

Moreover, for minisuperspaces, we are in an intermediate situation, where the infinite-dimensional group act as a (time-dependent) rescaling of the coupling constant of the theory, somehow in analogy to what we might expect for a renormalization group flow. Furthermore, despite not being symmetries, we will be able to identify the integrable generators for the whole infinite dimensional set of transformations.

These symmetries have an important role also as guiding principle in quantization. Provided that the symmetry group is finite-dimensional, we can use its representation theory to uniquely choose a Hilbert space, well defined, in which the group structure would be realized by construction. This has been done for the black hole minisuperspace and has given striking predictions like, among others, the
continuity of the spectrum for the mass operator \cite{Sartini:2021ktb}. Finally, the requirement that the symmetry must be protected upon quantization gives a powerful criterion to discriminate between different regularization schemes, in particular for the so-called \textit{polymerisation}, inspired by Loop Quantum Gravity.

\newpage
\section*{Original contributions and plan of the thesis}
The purpose of this thesis is to give a detailed description of the minisuperspaces symmetries. This manuscript collects thoughts and results from four articles \cite{Geiller:2020xze,Geiller:2021jmg,Sartini:2021ktb,Geiller:2022baq} which originated from a three years research project. It also contains some comments inspired by a collaboration with researchers of Lyon 1 University on the consequences on Hawking radiation of quantum effective modification of the black hole metric \cite{Arbey:2021jif, Arbey:2021yke}.

\begin{itemize}

\item[\cite{Geiller:2022baq}]
M.~Geiller, E.~R. Livine and F.~Sartini, \emph{{Dynamical symmetries of
  homogeneous minisuperspace models}},
  [\href{https://arxiv.org/abs/2205.02615}{{\ttfamily 2205.02615}}].

\item[\cite{Geiller:2020xze}]
M.~Geiller, E.~R. Livine and F.~Sartini, \emph{{Symmetries of the black hole
  interior and singularity regularization}},
  \href{http://dx.doi.org/10.21468/SciPostPhys.10.1.022}{\emph{SciPost Phys.}
  {\bfseries 10} (2021) 022},
  [\href{https://arxiv.org/abs/2010.07059}{{\ttfamily 2010.07059}}].

\item[\cite{Geiller:2021jmg}]
M.~Geiller, E.~R. Livine and F.~Sartini, \emph{{BMS$_{3}$ mechanics and the
  black hole interior}},
  \href{http://dx.doi.org/10.1088/1361-6382/ac3e51}{\emph{Class. Quant. Grav.}
  {\bfseries 39} (2022) 025001},
  [\href{https://arxiv.org/abs/2107.03878}{{\ttfamily 2107.03878}}].

\item[\cite{Sartini:2021ktb}]
F.~Sartini, \emph{{Group quantization of the black hole minisuperspace}},
  \href{http://dx.doi.org/10.1103/PhysRevD.105.126003}{\emph{Phys. Rev. D}
  {\bfseries 105} (2022) 126003},
  [\href{https://arxiv.org/abs/2110.13756}{{\ttfamily 2110.13756}}].

\item[\cite{Arbey:2021jif}]
A.~Arbey, J.~Auffinger, M.~Geiller, E.~R. Livine and F.~Sartini, \emph{{Hawking
  radiation by spherically-symmetric static black holes for all spins:
  Teukolsky equations and potentials}},
  \href{http://dx.doi.org/10.1103/PhysRevD.103.104010}{\emph{Phys. Rev. D}
  {\bfseries 103} (2021) 104010},
  [\href{https://arxiv.org/abs/2101.02951}{{\ttfamily 2101.02951}}].

\item[\cite{Arbey:2021yke}]
A.~Arbey, J.~Auffinger, M.~Geiller, E.~R. Livine and F.~Sartini, \emph{{Hawking
  radiation by spherically-symmetric static black holes for all spins. II.
  Numerical emission rates, analytical limits, and new constraints}},
  \href{http://dx.doi.org/10.1103/PhysRevD.104.084016}{\emph{Phys. Rev. D}
  {\bfseries 104} (2021) 084016},
  [\href{https://arxiv.org/abs/2107.03293}{{\ttfamily 2107.03293}}].
\end{itemize}
 
 In the following there is a brief summary of each article, in order to show the original contributions of my work. This also provide an overview of the contents of the thesis.
\begin{itemize}
\item \textbf{Poincar\'e symmetry for black holes} \cite{Geiller:2020xze}. 
We have revealed an $\iso(2,1)$ Poincar\'e algebra of conserved charges associated with the dynamics of the interior of black holes. This symmetry corresponds to Möbius transformations of the proper time together with translations, and extends the known results for cosmologies. Remarkably, this is a physical symmetry changing the state of the system, by e.g. interplaying with the black hole mass.
\item \textbf{Infinite dimensional extension} \cite{Geiller:2021jmg}. 
The $\iso(2,1)$ structure presented in the previous work naturally forms a subgroup of the larger $\bms_3$ group. This is done by reinterpreting the action for the model as a geometric action for $\BMS_3$, where the configuration space variables are elements of the algebra $\bms_3$ and the equations of motion transform as coadjoint vectors. The Poincar\'e subgroup then arises as the stabilizer of the vacuum orbit. This symmetry breaking is analogous to what happens with the Schwarzian action in AdS$_2$ JT gravity, although in the present case there is no direct interpretation in terms of boundary symmetries. 
\item \textbf{Group quantization of black holes}  \cite{Sartini:2021ktb}.
We use the well-known $\ISO(2,1)$ irreducible representations to build a consistent quantum theory of black hole minisuperspace. This has, among others, the striking consequence of implying a continuous spectrum for the mass operator. Following loop quantum cosmology, we also obtain a regularization scheme compatible with the symmetry structure. It is possible to study the evolution of coherent states following the classical trajectories in the low curvature regime. We show that this produces an effective metric where the singularity is replaced by a Killing horizon merging two asymptotically flat regions. The quantum correction comes from a fundamental discreteness of spacetime, and the uncertainty on the energy of the system. Remarkably the effective evolution of semiclassical states is described by an effective (polymerised) Hamiltonian, related to the original one through a canonical transformation. 
\item \textbf{Generalisation to other minisuperspaces} \cite{Geiller:2022baq}. Through a more systematic approach to minisuperspace, we reveal the existence of rigid symmetries for a wide class of models. We review the homogeneous reduction of gravity, separating  the homogeneous dynamical fields from the background geometric structure of the various minisuperspace models. This enables us to identify an internal metric on field space (or superspace). The geometry of the superspace controls the dynamics of the homogeneous symmetry-reduced action and is directly connected with the existence of rigid symmetries.

\item \textbf{Hawking radiation on modified backgrounds}  \cite{Arbey:2021jif, Arbey:2021yke}. In the first paper we have derived the short-ranged potentials for the Teukolsky equations for massless spins (0,1/2,1,2) in general spherically-symmetric and static metrics, focusing on the case of polymerized BHs arising from models of quantum gravity. The second paper is devoted to the numerical application, computing the Hawking radiation spectra. In order to ensure the robustness of our numerical procedure, we show that it agrees with newly derived analytic formulas for the cross-sections in the high and low energy limits. We show how the short-ranged potentials and precise Hawking radiation rates can be used inside the code \texttt{BlackHawk} to predict future primordial BH evaporation signals. In particular, we derive the first Hawking radiation constraints on polymerized BHs from AMEGO. We prove that the mass window $10^{16}-10^{18}g$ for all dark matter into primordial BHs can be reopened with high values of the polymerization parameter, which encodes the typical scale and strength of quantum gravity corrections. 
\end{itemize}

\paragraph{Plan of the thesis} The present manuscript is organised into five chapters divided into two parts.

The first part focuses on the classical analysis of the minisuperspace symmetries. We start [\ref{chap1}] by reviewing the framework of General relativity, in particular its Hamiltonian ADM formulation. This will allow setting the stage to properly define the symmetry reduction and to go from a field theory to a mechanical model with a finite number of degrees of freedom. Therein we will develop the concept of supermetric, regulating the dynamics of the field space.

In the following chapter [\ref{chap2}], after a brief review of the results for the homogeneous cosmology, we will discuss how the rigid symmetries appear in a more general setting. We will see how the global $\left (\SL(2,\R) \times \R\right ) \ltimes \R^4$ is connected to the geometry of the field space. We will then dwell on the application to the black hole minisuperspace, as it provides interesting insights on how these symmetries are related to the boundary structure. We will briefly spend some words on the potential implication of the Poincar\'e symmetry on the perturbation of the black hole background and the Hawking radiation. This will also allow to enlighten the relationship between the boundary and the symmetries.

We will conclude this part by dedicating some attention to the infinite dimensional extensions of the symmetry group [\ref{chap3}]. We define the transformation on the fields through the adjoint representation, and use this to act on the action and derive the integrable Hamiltonian generators. This shows that only the Poincar\'e subgroup defines symmetries, while a more general transformation rescales the coupling constants. A BMS$_3$ fully invariant theory is obtained via a Stueckelberg mechanism. The study of coadjoint representations allows understanding the broken symmetry and the equation of motion in terms of coadjoint orbits

The second part is devoted to the consequences of the symmetry in the quantum regime. After a brief review of the general features of Loop Quantum Gravity [\ref{chap4}] and how to incorporate them into the cosmological model (LQC), we use the Poincar\'e symmetry as a guiding principle towards a group quantization of the black hole minisuperspace in chapter [\ref{chap5}], and its consequences on the polymerisation in the last chapter [\ref{chap6}]. 

Note that the first chapter and the fourth one are mainly for the completeness of the thesis.

\paragraph{Conventions and notations}Through all the manuscript we use the units where $c=1$, $\hbar=1$, but we leave the explicitly the Newton constant $G$. In this units it is equivalent to the Planck length, in the sense that $G= \lp^2$. If not specified differently, the metric tensor is always taken with Lorentzian $(-,+,+,\dots)$ signature, the spacelike directions have a positive measure, and the temporal direction has a negative one.

\pagestyle{Regular}
\newpage
~
\thispagestyle{empty}
\renewcommand{\afterpartskip}{}
\part*{Part I\\[.3cm]
Classical symmetries of minisuperspaces} 
\addcontentsline{toc}{part}{I\ Classical symmetries of minisuperspaces } \label{part:partI}
\newpage
~
\thispagestyle{empty}

\chapter{Gravity and its symmetries}
\label{chap1}

By General Relativity we usually mean the geometric theory of gravitation, as it was described by Einstein in the first place \cite{Einstein:1915by}. The key aspect of Einstein's theory is to give a unified description of space and time on a four-dimensional manifold, that is curved by the presence of energy and momentum, provided by both matter and radiation. The equations describing this relationship are second-order partial differential equations, the Einstein field equations.

Ever since Einstein's seminal work, numerous approaches to gravity have been developed. This thesis contains by no means a review of the different approaches to gravity, the motivation behind this section being simply to recall the fundamental elements necessary to address the symmetry reduction, necessary to deal with minisuperspaces. The interested readers shall refer to the already existing excellent reviews in the literature. See for instance \cite{Misner:1973prb,wald1984general} for excellent textbooks on General Relativity or the lecture notes \cite{Compere:2018aar} for a more modern approach. We will focus here on the Hamiltonian formulation of General Relativity, in particular on its properties of constrained (gauge) theory. 

In the second part, we will examine how and when it is possible to reduce the degrees of freedom of the theory, going from a four-dimensional field theory to the equivalent of a mechanical model. This is done by selecting a particular class of solution of general relativity, and by choosing an appropriate gauge fixing. The first consequence of doing so is to map the Einstein field equations to a set of ordinary differential equations (ODEs), much simpler to deal with. 

The Lagrangian of the minisuperspace model will take the form of the functional describing the motion of a particle on a $n$ dimensional curved spacetime, with a generically non-trivial potential. The number $n$ stands for the amount of free dynamical fields left in the metric. The \textit{dynamics} should be thought of as the evolution with respect to the coordinate labelling different slices of the original four-dimensional manifold. The geometrical property of the field space will play a crucial role in the determination of the minisuperspace symmetry in chapter \ref{chap2}.

The final section of this chapter is devoted to the effect of boundaries on symmetries. Providing a complete review of the boundary or asymptotic symmetry is beyond the scope of this thesis, and we advise the reader to look at the reference therein for sake of completeness. The focus will be on the appearance of the BMS group in flat three-dimensional spacetimes, but we postpone the review of its representation theory to the chapter \ref{chap3}, dedicated to the study of the infinite-dimensional extension of the minisuperspace symmetries.

\section{Hamiltonian formulation of General Relativity}
\label{sec1.1:Hamiltonian_GR}
The Einstein field equation can be obtained via the variational principle, starting from the Einstein-Hilbert action. Consider a $D+1$ dimensional manifold $\cM$, in the absence of boundaries, the action functional reads
\be
\label{Einst_Hilb_action}
\cS_\rm{EH} = \f{1}{\kappa}\int_\cM \de^{D+1} x \sqrt{|g|} \cR\,,
\ee
where $g$ is the determinant of the invertible metric tensor $g_{\alpha\beta}$, $\cR$ its Ricci scalar and $\kappa^2 = 16 \pi G$,  with the Newton constant $G$. The space time indices are denoted here with the greek letters $\alpha,\beta=0,1,2,3$,  keeping the mid-alphabet letters for the field space in the following sections. In the presence of a cosmological constant we shall add a term $-2\sqrt{|g|}\Lambda$ to the Lagrangian \eqref{Einst_Hilb_action}. In the absence of sources (matter or radiation) the field equation are given by
\be 
\label{EFE}
\cG_{\alpha\beta}:= \cR_{\alpha\beta}-\f{1}{2} \cR\, g_{\alpha\beta}=0\,,
\ee
where $\cR_{\alpha\beta}$ is the Ricci tensor and $\cG_{\alpha\beta}$ is called Einstein's tensor. The cosmological constant adds a term $+\Lambda g_{\alpha\beta}$ to the Einstein's tensor.

In the presence of boundaries, it is known that some terms must be added for the variational principle to be well defined. This is intimately interlaced with choice of boundary condition we want to impose \cite{Odak:2021axr}, the most common choice being to fix the metric on the boundary. In this case, defining $h$ the pull-back of the metric $g$ to the boundary $\partial \cM$, the term that we must add is provided by the Gibbons-Hawking-York term
\be
\cS_\rm{GHY} =\f{2}{\kappa^2}\int_{\partial\cM} \de^{D}x\, \sqrt{h} K\,,
\ee
where $K$ is the trace of the extrinsic curvature. For any hypersurface with normal vector $n^\mu$, the trace of the extrinsic curvature tensor is defined as
\be
K := \nabla_\alpha n^\alpha\,.
\label{Extrinsic curvature_trace}
\ee
In the aim of quantizing the theory, in the fifties, Dirac first developed an Hamiltonian approach to General relativity \cite{doi:10.1098/rspa.1958.0141}. This was then simplified by the change of variables introduced by Arnowit, Deser and Misner \cite{Arnowitt:1962hi} (ADM) in the early sixties. The latter strongly relies on the 3+1 decomposition on the manifold. This is because any Hamiltonian approach needs the concept of evolution of some physical quantities (here the metric on the $(D-1)$-dimensional slice) with respect to  \textit{a certain time}. From this point of view, known as \textit{gravitodynamics}, we aim to formulate the quantum theory in terms of wave-functions of the induced metric and consider transition amplitudes between metrics on the slices. We remark that other polarisations are obtained through a Fourier transform in the ($q,K$) (super)space, which amount to the modification of the boundary term. The so-called ADM variables consist in separating the metric coefficient in three groups
\bsub
\be 
q_{ab} &=g_{ab}\,,\\
\cN_a &= g_{0a}\,,\\
\cN&=1/\sqrt{-g^{00}}\,,\\
\de s^2 =& -(\cN^2-\cN_a \cN^a) \de t^2 + 2 \cN_a \de x^a \de t + q_{ab} \de x^a \de x^b\,,
\ee
\label{ADM_variables}
\esub
where $a,b =1,2\dots D$, while $\cN$ and $\cN^a$ take respectively the names of \textit{lapse} and \textit{shift}. This comes from their geometric interpretation in terms of spacetime foliations. For this, we think $t$ as the time labelling the ($D-1$)-dimensional hypersurfaces that foliate the spacetime. The unit normal to the slice is the vector 
\be
n^\alpha\partial_\alpha = \f{1}{\cN} \left (\partial_t -\cN^a\partial_a\right )\,.
\label{normal_slice}
\ee 
It is easy to convince ourself that the lapse function measures the proper time elapsed between the events $x^\alpha$, lying on the slice at time $t$, and $x^\alpha+\cN n^\alpha \de t$, at coordinate time $t + \de t$. On the other hand the shift vector $\cN^a := q^{ab} \cN_b$ accounts for the spatial separation (on the $t+ \de t$ slice) between the two events \cite{Gourgoulhon:2007ue}. In the following figure \eqref{fig:slices} we have schematically represented the foliation, the coordinate choice and the meaning of shift and lapse.

\vbox{
\begin{center}
\begin{minipage}{0.5\textwidth}
		\includegraphics[width=\textwidth]{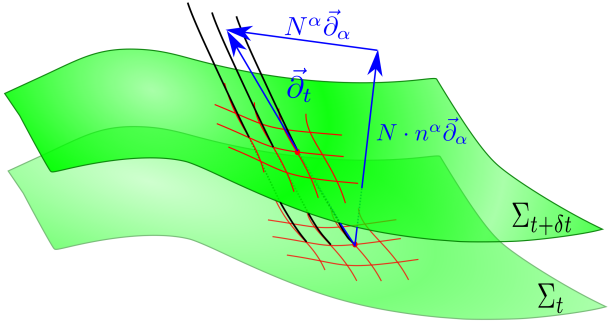}
\end{minipage}
~\hspace{0.5cm}
\begin{minipage}{0.4\textwidth}
\captionof{figure}{\small{The red lines on each slices represents the coordinate $x^i$ on the hypersurfaces $\Sigma$, each line at $x^i = const$ (in black) cuts across the foliation and defines the time vector $\partial_t$, the unit normal is $n^\alpha \partial_\alpha$ and the shift vector is then  the difference between the two.
} \label{fig:slices}}
\end{minipage}
\end{center}
}

The main advantage of these variables is that it makes evident that the lapse and the shift play the role of Lagrange multipliers enforcing a set of constraints. To better see this we rewrite the General Relativity action (with the GHY term) in the variables \eqref{ADM_variables}. The boundary that should be considered for the computation of the GHY term is provided by the slices at the initial and final time, assuming for the moment that there is no spatial boundary for the slices, we will come back to this point later on this chapter. The extrinsic curvature  (or second fundamental form) of the hypersurface is given explicitly by
\be
K_{ab} =  \f{1}{2\cN} (\dot q_{ab} -\pounds_{\vec{\cN}} q_{ab})= \f{1}{2\cN} (\dot q_{ab} - D_{(a} \cN_{b)}) \,,
\label{Extrinsic curvature}
\ee
where the dot represent the derivative with respect to time, $D_a$ the covariant derivative corresponding to the measure on the hypersurface $q_{ab}$ and $\pounds_{\vec\cN}$ is the Lie derivative along the vector field $\vec{\cN}$. It is easy to convince ourselves that this agrees with the formula above \eqref{Extrinsic curvature_trace}. This makes the action functional to read \cite{Gourgoulhon:2007ue}
\bsub
\be
\cS_\rm{ADM} =& \cS_\rm{EH} + \cS_\rm{GHY} = \int_{t_{\rm i}}^{t_\rm{f}} \de t \int_\Sigma \de^{D} x \left (\pi^{ab} \dot{q}_{ab} - \cN \cH - \cN^a \cH_a\right )\,,
\label{ADM_Hamilton}\\
\pi^{ab}&= \f{\partial \cL}{\partial \dot q_{ab}} =\f{\sqrt{q}}{\kappa^2}\left (K^{ab} - K q^{ab}\right )\,,\\
\cH&= -\f{\sqrt{q}}{\kappa^2} \left (K^2 - K^{ab}K_{ab}+R\right ) = -\f{\kappa^2}{\sqrt{q}} \left (\f{\pi^2}{2} - \pi^{ab}\pi_{ab}\right )-\f{\sqrt{q}R}{\kappa^2}  
\label{ADM_scalar_cstr}\,,\\
\cH_a&= -2q_{ab}D_c \left (\f{\pi^{bc}}{\sqrt{q}}\right ) = \f{2}{\kappa^2} q_{ab} D_c \left(K q^{bc} -K^{bc}\right )
\label{ADM_vect_cstr}\,,
\ee
\esub
where $q$ is the determinant of the metric on the slice and $R$ its Ricci scalar. We have decomposed the original manifold $\cM$ into slices,  $\cM= \bigcup\limits_{t=t_\rm{i}}^{t_\rm{f}} \Sigma_t$. We have also supposed the absence of time like boundaries or corner terms, meaning that the GHY term accounts only for the initial and final slices. 

We would like to remark that $\pi$ is not a tensor (it has a non zero weight density), and that is why we need to scale it with the determinant of $q$, to make sense of the covariant derivative . It is manifest that the action do not depend on the derivatives of lapse and shift, stationarity of $\cS$ with respect of their variations imposes respectively the constraints $\cH\approx 0$ and $\cH_a\approx 0$. The first one is called \textit{scalar constraint}, and the other one is the \textit{vector constraint}.

The action \eqref{ADM_Hamilton} is already written in terms of a canonical phase space, endowed with a poisson structure 
\be
\label{ADM_phase_space}
\{q_{ab}(t,\vec{x}),\pi^{cd}(t,\vec{y})\}= \delta_{(a}^c \delta_{b)}^d \delta(\vec{x}-\vec{y})\,.
\ee
The relationship between the momenta and the extrinsic curvature is easily understood by looking at the definitions \eqref{Extrinsic curvature_trace} and \eqref{Extrinsic curvature}. The tensor $K$ naively represents the \textit{velocity} of the metric on the hypersurface. It makes it natural that it is related to the conjugate momentum, up to a trace factor, which corresponds to the longitudinal modes.

The temporal evolution of any phase space functional $\cO[q,\pi]$ is given by the bracket with the Hamiltonian
\be
\dot \cO = \{\cO, \cN \cH+\cN^a\cH_a\}\,.
\ee

The theory to deal with constrained systems has been developed by Dirac \cite{doi:10.1098/rspa.1958.0141}, and strongly relies on the second Noether's theorem. We denote by \textit{first class}, the set of constraints whose Poisson brackets vanish on the constraint surface (a surface defined by the simultaneous vanishing of all the constraints). Equivalently this means that the first class constraints form a closed algebra. It is a fact that they are associated to gauge symmetries, for a constraint $\cC$, the corresponding smeared generator is given by
\bsub
\be
G[\alpha] =&\int \de^{D} x\, \alpha\, \cC\,,\\
\delta_\alpha \cO &= \{\cO,G[\alpha]\}\,,
\ee
\esub
for a generic function $\alpha$, $\delta_\alpha\cO$ gives the infinitesimal gauge transformation of the phase space functional $\cO$. In gravity the scalar and vector constraint are respectively associated with the time and spatial reparametrization of the manifold. Defining the corresponding generators as $H[\cN]=\int \cN \cH$, $D[\vec{\cN}]=\int \cN^a \cH_a$ they form the algebra \cite{Thiemann:2007pyv}
\bsub
\be
\{H[\cN_1,H[\cN_2]\} 		&=-D[q^{ab}(\cN_1 \partial_b\, \cN_2-\cN_2 \partial_b\, \cN_1)]\,,\\
\{D[\vec{\cN_1}],H[\cN]\} 		&= H[\pounds_{\vec {\cN_1}} \cN]	\,,\\
\{D[\vec{\cN_1}],D[\vec{\cN_2}]\} 	&= D[\pounds_{\vec{\cN_1}}\vec{\cN_2}]\,,
\ee
\label{constraint_algebra_GR}
\esub
where $\pounds_X$ represents the Lie derivative with respect to the vector field $X$. The existence of constraints is a typical feature of gauge system (it is actually their very definition), but what makes General relativity different, is the fact that the Hamiltonian itself vanishes on-shell. 

The gauge symmetry is crucial in the determination of the number of degrees of freedom. For each first-class constraint, its imposition $\cC\approx 0$ subtracts one degree of freedom, plus, because there is a gauge symmetry we shall subtract another degree of freedom, accounting for the redundancy in the physical description. In general relativity in $D+1$ dimensions we have a $D(D+1)$-dimensional phase space given by the symmetric coefficients of the metric $q_{ab}$ on the hypersurface and its momenta, and $D+1$ first-class constraints (the scalar constraint and $D$ vector constraints), which leaves us with $D(D+1)-2(D+1)=(D+1)(D-2)$ integration constants (per point) that physically determine the dynamics, we shall then divide by two to obtain the number of degrees of freedom. This gives the two local d.o.f representing gravitational waves in the four-dimensional case ($D=3$) and makes the topological nature of 3D gravity manifest ($D=2$).

Dirac had also developed an approach to quantising constrained systems. The basic idea consists in quantising the unconstrained phase space, giving a \textit{kinematical} phase space, and then one obtain the subset of physical states as the kernel of the constraints operators. Unfortunately, because of the high non-linearity of the constraints \eqref{ADM_scalar_cstr} and \eqref{ADM_vect_cstr} in terms of canonical metric variables, the application of Dirac's program to the ADM formulation is cumbersome and had not found a consistent mathematical formulation to date. The way out of this puzzle has been found thanks to the introduction of a new set of variables by A. Ashetkar, which recast GR in the form of a Yang-Mills theory, easier to handle. This launched the canonical quantization procedure that lead to the formulation of Loop Quantum Gravity in the late nineties. We will come back to this in the second part of the thesis, devoted to the study of the quantum world.

As already discussed in the introduction, the reduction to a subclass of solutions, with a high number of spacetime symmetries (spacetime Killing vectors) simplifies a lot the discussion. This symmetry reduction takes the name of \textit{midisuperspace} if we are still left a field theory, with local degrees of freedom, or \textit{minisuperspace} if the fields are defined only globally. This simplification is however not harmless, we cannot always partially solve the equation of motion (Einstein's field equation) and substitute it into the GR action to obtain the functional describing the dynamics of the remaining fields. We will discuss this issue in the next section. Moreover, in the case of minisuperspaces, we will see that they introduce a new set of symmetry, independent from the residual diffeomorphism, somehow analogously to the effect of the introduction of boundaries, we will dedicate the central part of the thesis to this subject.

\section{Symmetry reduction and minisuperspaces}
\label{sec1.2:symm_reduc}
Minisuperspaces provide the most extreme simplification of general relativity and yet contain some dynamical information and some non-trivial symmetry structure. The infinite-dimensional phase space is reduced to a finite-dimensional mechanical model, simplifying very much the canonical analysis of the classical theory, and the corresponding quantization. Classically these minisuperspace models are viable in the so-called adiabatic approximation, where variations in one spacetime direction differ by orders of magnitude with respect to the other ones. This is believed to be valid in the early stages of cosmology or near the black hole singularity.

A minisuperspace is thus defined by a spacetime metrics where the lapse only depends on time and the temporal and spatial dependence of the three metric and the shift split like
\bsub\be
q_{ab}&= \gamma_{ij}(t) e^i_a(\vec{x}) e^j_b(\vec{x})\,,\\
\cN_a&=N_i(t) e^i_a(\vec{x})\,,\\
\cN &= N(t)\,,\\
\de s^2 =& -(N(t)^2 -N_i N^i) \de t^2 + 2 N_i e^i_a \de x^a \de t + \gamma_{ij} e^i_a e^j_b \de x^a \de x^b\,.
\ee
\label{minisup_metric}
\esub
We use the mid-alphabet latin indices ($i,j,k,\dots$) for the internal tensors: the shift vector $N^i$ and the metric $\gamma_{ij}$. We use the latter to rise and lower the internal indices, and we refer to its coefficients as \textit{scale factors}. We also call the frame $e^i_a$ and its inverse $e^a_i$ as \textit{tetrad}. They are fixed once and for all and they select a particular class of solutions with a set of spacetime symmetries encoded by the isometries of the frame field. The dynamical information is left into $N$, $N_i$ and $\gamma_{ij}$. We have in principle 10 time-dependent fields, nonetheless, it could occur that for a given minisuperspace, not all these fields are independent, but we need to consider a section of the ten-dimensional field space. This is to assure compatibility between the symmetry reduction, the gauge choice and the action principle.

Indeed, we need to verify what happens if we calculate the reduced action for the line element \eqref{minisup_metric}, does the corresponding action principle give the correct Einstein equations? 

To answer this question we start by rewriting the constraints in terms of the frame and the dynamical quantities. We start with the extrinsic curvature tensor
\bsub
\be
K_{ab}&=\f{1}{2N}\left (\dot{q}_{ab} - D_{(a} \cN_{b)}\right )= \f{1}{2N} \left (e^i_a e^j_b \dot\gamma_{ij} - N_i D_{(a} e^i_{b)} \right )\,,\\
K&= \f{1}{2N} \left (\gamma^{ij} \dot\gamma_{ij} - N^i D_a e^a_i\right )\,.
\ee\label{extrins_curv_tetrad}
\esub
We need here to be careful about what is meant to be a boundary term. All the dynamical information is left in the time dependence, meaning that, in calculating the reduced action, we explicitly carry out the spatial integration and blur the spatial dependence of the metric into some global integration factor. The reduced action is then a mechanical action of the kind $\cS[\gamma_{ij}, N , N^i]= \int \de t\, \cL[\gamma_{ij}, N , N^i]$. This means that the boundary term must be total derivatives with respect to time only.

On the other hand, we have that the GHY term in the ADM formalism is given by
\be
\int \de t \int_\sigma \de^D x \partial_\alpha (\sqrt{|g|} K n^\alpha)\,,
\ee
where $n^\alpha$ is given at \eqref{normal_slice}. For this to be a total derivative in $t$, we need to fix the shift such that:
\be 
0\doteq \sum_{a=1}^D \partial_a (\sqrt{|g|} K n^a)= \sqrt{q} D_a ( K \cN^a) = \sqrt{q} N^i D_a ( K e^a_i) \,.
\ee
We can ensure that the latter is satisfied by demanding the stronger condition
\be
0\doteq \pounds_{\vec{\cN}} q_{ab} = N_i D_{(a} e^i_{b)}\,,
\label{shift_condition}
\ee
in particular the latter has the consequence of making disappear the terms linear in the velocities $\dot \gamma_{ij}$ form the action, by the elimination of the shift term in the extrinsic curvature \eqref{extrins_curv_tetrad}. This do not necessary imply that the shift is zero, only that the combination \eqref{shift_condition} vanishes. 

This, on the other hand, has the side effect of cancelling out all the shift terms from the reduced action, meaning that $\delta \cL/\delta N^i =0$. We recall that the vector constrain is obtained in the ADM form precisely by a combination of the integration by part of the GHY term, and the contribution from the Lie derivative of the three metrics along $\vec{\cN}$ in the extrinsic curvature (see \cite{Gourgoulhon:2007ue} for a review on 3+1 decomposition in gravity), both being zero in the cases under consideration. 

This leads us to the necessity of imposing the vector constraint by hand. The imposition of the condition \eqref{shift_condition}, together with the minisuperspace reduction, make the constraint \eqref{ADM_vect_cstr} to be
\be 
\label{minisup_vect_constr}
0\doteq D_a ( K q^{ab} - K^{ab})= \f{(\gamma^{k\ell}\dot{\gamma}_{k\ell}) \gamma^{ij} + \dot{\gamma}^{ij}}{2N} D_a(e^a_i e^b_j)\,.
\ee
For a given triad, the internal metric $\gamma_{ij}$ must be chosen such that this equation is satisfied, by fixing relationships between different coefficients. Nonetheless, we would like to stress that there is no straightforward solution to none of the conditions \eqref{minisup_vect_constr} nor \eqref{shift_condition}, and the solutions must be searched case by case. We report in the appendix \ref{app:2_bianchi} the results for a wide class of models, provided by Bianchi cosmologies and black hole minisuperspace. For some of them (e.g. black holes) only some of the shift components must vanish in order to satisfy \eqref{shift_condition} and \eqref{minisup_vect_constr}, meaning that we can obtain the same reduced action for different choices of the shift vector. In order to see the geometrical interpretation of this fact, imagine that $N^1$ is not constrained to be zero. Let the dual vector to the frame field $e^1$  be $\vec{\partial_1}$, then $N^1$ is not constrained. It do not appear in \eqref{shift_condition} if and only if $\pounds_{\vec{\partial}_1} q_{ab} =0$. This is interpreted as the fact that we can freely translate different slices at constant time $t$ along $\vec{\partial_1}$, without affecting the dynamics of the reduced model. 

At the end of the day, the result is that the vector constraint is imposed by hand, meaning that we are left with only the scalar constraint imposed by the lapse. But it could happen that we are left with more than one Lagrange multiplier in the metric. As the condition \eqref{shift_condition} does not necessarily fix all the shift components, the same action, containing only the lapse, is obtained for any choice of the unconstrained shift. This will correspond to different coordinates on the slice within the four-dimensional manifold, and we will see its consequences for the black hole case.

One last remark should be done at this point. When we integrate out the spatial dependence in the action, we can be forced to introduce a fiducial size on it, in order to regulate the divergent integration. This will introduce the presence of a spatial boundary, that will play a crucial role in the minisuperspace symmetry, in spite of what might meet the eye. We define thus the fiducial volume as
\be
\cV_0 =\f{1}{16\pi} \int_\Sigma \de^3 x |e|\,, \q\q e=\det(e^i_a)\,,
\label{fiducial_volume}
\ee
over the compact support $\Sigma$. Upon imposing the vector and shift constraint the ADM action reduces to 
\be\label{Einstein_mini}
\cS_{\rm{ADM}} &=\f{1}{\kappa }\int_{t_\rm f}^{t_\rm i} \de t \int_\Sigma\de^D x\,\sqrt{-g}\,\cR -\f{2}{\kappa^2} \left .\int_\Sigma \de^D x\, \sqrt{q}\, K\right |_{t_\rm{i}}^{t_\rm{f}}\cr
&=\f{1}{\kappa} \int_{t_\rm f}^{t_\rm i}\de t \int_\Sigma \de^D x \,N\sqrt{q}\left (K^2-K_{ab}K^{ab} + R^{(3)} - 2 \Lambda \right )    \\
&= \f{\cV_0}{G} \int_{t_\rm f}^{t_\rm i}\de t\,\sqrt{\gamma}\left[\f{1}{4N}\Big((\gamma^{ij} \dot \gamma_{ij})^2+ \dot \gamma_{ij} \dot \gamma^{ij}\Big)  \right ] +\f{1}{\kappa} \int_{t_\rm f}^{t_\rm i} \de t \int_\Sigma \de^D x \,N\sqrt{q}\,R^{(3)} \notag\,,
\ee
where the Lagrangian splits in two terms: the kinetic term coming from the extrinsic curvature, while the potential is given by the three dimensional Ricci scalar. For the former we have a simple expression in terms of the internal metric and the triad, but for the scalar curvature the two do not split in a simple fashion. See appendix \ref{app:1_ADM_triad} for a complete list of formulas. Moreover the vector constraint imposes some relations between the internal metric coefficients, meaning that the theory is actually defined on a slice of the original 6 dimensional configuration space, and the $\gamma_{ij}$ are actually functions of a lower dimensional field space, that is going to be studied in the following chapter.  

For the sake of clarity, we will detail the previous construction and verify the equivalence with the Einstein field equation on a specific example, namely the four-dimensional black hole minisuperspace. The latter is defined through the triad\footnote{The coordinates $\theta$ and $\phi$ parametrize the sphere at fixed time and radius in Schwarzschild coordinates, while $t$ is going to be the radius of the sphere. However, from the point of view of the interior, where the minisuperspace has been built in the first place, it is a time-like coordinate. The interpretation of $x$ is less straightforward and depends on how we choose the foliation, this is discussed just below.}
\be
e^1 = \de x\,,\q\q e^2 =L_s\, \de \theta\,,\q\q e^3=L_s\, \sin \theta\, \de\phi\,.
\ee
The length scale $L_s$ is introduced here to give the same dimension to the frame fields, and to have dimensionless coefficients for the internal metric. The slice has the topology of $S^2\times \R$ and then we need to regulate the divergent $x$ integration via the cutoff $L_0$. The fiducial volume is here
\be
\cV_0 = \f{L_0 L_s^2}{4}\,, \q\rm{with} \q x\in[0,L_0]\,,\theta\in[0,\pi]\,, \phi \in[0,2\pi].
\label{fiducial_volume_KS}
\ee
The dependence on multiples length scales is a typical feature of minisuperspaces. On the one hand, the need to regulate the spatial integration introduces an infrared cutoff, on the other hand, the Planck length $\lp = \sqrt{G}$ signals the UV regime.

We need to solve the vector constraint \eqref{minisup_vect_constr} and the condition \eqref{shift_condition}. The calculation is quite involved, because of the non-linear dependence of the equations on the internal metric and the shift vector. On the other hand, it is quite easy to verify that it is satisfied for 
\be
\gamma_{ij} = \rm{diag}\,\big(a^2(t),b^2(t), b^2(t)\big)\,,\q\q N^2=0=N^3\,.
\ee
The result is easily understood, by the fact that the triad only depends on the coordinate $\theta$ via $e^3$ and $e^2$ is proportional to $\de \theta$. In this case the field space is two dimensional and is spanned by the variables $a,b$. This is not the only solution, but it is the one that we are going to study in the following, and it defines the black hole minisuperspace
\be
\de s^2 = -N^2 \de t^2 +a^2 (\de x + N^1 \de t)^2 + L_s^2\, b^2\, \de \Omega^2\,,\q\q \de \Omega^2 =\de \theta^2 + \sin^2 \theta\, \de \phi^2\,. 
\ee
The non-zero components of the Einstein tensors are
\bsub
\be 
\cG_{xx} =\f{1}{N^1}\cG_{tx}&= -\f{a^2}{L_s^2 b^2 N^3}\left (N^3-2 L_s^2 b \dot b \dot N +L_s^2 N(\dot b^2 +2b \ddot b)\right )\,, \label{Einst_tens_KS_1}\\
\cG_{tt}-N^1 \cG_{tx}&= \f{\dot b^2}{ b^2} + 2\f{\dot a \dot b}{ ab} + \f{N^2}{L_s^2 b^2} \label{Einst_tens_KS_2}\,,\\
\dot b \cG_{\theta\theta} = \f{\dot b }{\sin^2 \theta} \cG_{\phi\phi} &= - \f{L_s^2 b}{2a} \f{\de}{\de t}\left (\f{a b^2(\cG_{tt}-N^1 \cG_{tx})}{N^2}\right ) -\f{L_s^2 b^3 \dot a}{2a^3}  \cG_{xx} \,.
\ee
\label{Einst_tens_KS}
\esub
The last equation reflects the residual gauge invariance of time reparametrization, giving a relationship between the components of the Einstein tensor, while the first two reflects the freedom in the shift choice.  The reduced action gives:
\be
\cS_\rm{ADM} = \f{\cV_0}{G} \int \de t \left [-\f{4 b\, \dot a \dot b+ 2 a 	\dot b^2}{N} +2N\f{a}{L_s^2} \right ]\,.
\label{KS_action_ab}
\ee
This has the form of a mechanical action for the fields $a$ and $b$, the kinetic term coming from the extrinsic curvature, while the potential is given by the three dimensional Ricci scalar.

In order to complete the construction of the minisuperspace, we still need to choose a particular lapse and shift. The latter has completely disappeared from the Lagrangian, it is impossible to recover it from the action principle, while de lapse is always present, but simply reflect the residual time reparametrization invariance. The absence of the shift in the action is not a problem, as also the Einstein equations are independent on it, meaning that the vanishing of the r.h.s. of equations \eqref{Einst_tens_KS_1}, \eqref{Einst_tens_KS_2} implies the vanishing of all the components of the Einstein tensor, regardless of the value of $N^1$. The equation of motion arising from \eqref{KS_action_ab} are completely equivalent to the vanishing of the Einstein tensor \eqref{Einst_tens_KS}, for all lapse and shift, even if they are field dependent. The Euler Lagrange equations for the action \eqref{KS_action_ab} are
\bsub
\be
\cE_a &= -2\f{\cV_0}{G}\f{b^2 N}{a^2} \cG_{xx}\,,\\
\cE_b &= -4\f{\cV_0}{G}\f{a N}{b L_s^2} \cG_{\theta\theta}\,,\\
\cE_N &= 2 \f{\cV_0}{G}\f{a b^2}{N^2}(\cG_{tt}-N^1 \cG_{tx})\,.
\ee
\label{Eom_BH_minisup}
\esub
The gauge symmetry is here represented as the relationship between equations of motion, equivalent to the one between terms of the Einstein tensor
\be
N \dot \cE_N = \dot b \cE_b + \dot a \cE_a\,.
\ee
The redundancy of the shift vector is not translated into a gauge symmetry, but conversely into the fact that the mechanical model is the same for different choices of the slicing of the black hole. For example, we can take $N^1=0$, representing a $x$ space-like coordinate, or $N^1 =\pm N/a$, giving a null coordinate. The slices at constant time $t$ are always the same (we will see that they actually correspond to slices at constant Schwarzschild radius), but the point at constant $x$ can move from one slice to the other. We will come back to the meaning of this freedom in the next chapter.

Beside the one presented here, there are also other, more systematic, ways to discuss symmetry reduction in gravity \cite{Ashtekar:1991wa,Fels:2001rv,Torre:2010xa} . However, these strongly rely on the structure of the isometry group of the foliation \cite{Fels:2001rv,Torre:2010xa}, or the topology of the slice \cite{Ashtekar:1991wa}. We choose the approach presented in this chapter because of its easier representation in terms of spacetime metric: we can roughly summarize the conditions \eqref{shift_condition} and  \eqref{minisup_vect_constr} as respectively the requirement that the evolution in $\tau$ is the same as the evolution in the orthogonal direction (there is no variation along the shift) and the freezing of the spatial diffeomorphisms (the vector constraint is satisfied). We were able to find at least one set of (diagonal) internal metric whose symmetry reduced action gives the right Einstein equations, for all the minisuperspace we have considered. Then we content ourself with this approach. We are aware that a more systematic study is needed if we want to consider all the possible internal metrics.

Before moving to the study of the minisuperspace symmetries, let us propose a brief review of the boundary symmetries in General Relativity.

\section{Boundary symmetries and covariant phase space}
\label{sec1.3:boundary}

We finish this introductive chapter with a brief presentation of how boundary symmetries play a role in gravity. We have stressed since the beginning that boundaries (may them be at finite distance or asymptotic) should be treated with care in the case of a gauge theory. It may occur that some transformations, which are gauge in the bulk, get promoted to physical symmetries on the boundary, by the acquisition of a non-trivial charge. This is crucial for topological theories, where an infinite number of boundaries degrees of freedom could arise, despite the absence of local bulk degrees of freedom, but it has also the appealing prospect of establishing holographic dualities  \cite{Arcioni:2003xx,Arcioni:2003td, Barnich:2010eb, Fotopoulos:2019vac, Laddha:2020kvp, Donnay:2020guq, Puhm:2019zbl, Guevara:2021abz}. An important feature of the charge algebra is that it might differ from the algebra of gauge symmetries, often by a central extension, that must be present on a quantum level. On the contrary, by definition, the gauge transformations must be implemented with an anomaly-free algebra.

Even setting aside their relationship with gauge transformations, boundaries have another crucial role in field theory. Maybe the greatest achievement of theoretical physics is the description of dynamics in terms of the \textit{principle of least action}. This is the most general physical \textit{rule}, and describes an enormous amount of phenomena, going from String Theory and General Relativity to fluid dynamics or optics. It is a variational principle states that the trajectories of a given system (i.e. the solution of the equation of motions, EOM) are stationary points\footnote{The adjective \textit{least} finds historical roots in the minimization of the length of the trajectory followed by a light ray, where the principle has been formulated in the first place, no minimality is required in the principle, just stationarity} of an action functional. We denote the latter schematically by $I[\Psi] = \int_\cM \cL[\Psi]$, where $\Psi$ represents the collection fields of the theory, defined on a manifold $\cM$. For the principle to be well-defined the action must be \textit{differentiable}, meaning that its functional derivative must be of the form $\delta I =\int_\cM \rm{EOM} \cdot \delta \Psi $ + a boundary term that is zero if we held fixed some boundary condition. It is evident that the boundary has its words to say. 

To deal with the last issue, the covariant phase space formalism has been developed, which hides the choice of boundary condition into a pre-symplectic potential, that in turn enlightens a procedure to define integrable charges. This last ingredient is fundamental to determining whether the gauge symmetry becomes physical or not, being unable to define the associate charges makes the discussion of boundaries and gauge symmetries pointless. This observation points out the strong relationship between boundary, boundary conditions, gauge symmetries and the formulation of the theory that is chosen, upon which the symplectic structure relies.

We start this section by resuming the main features of the covariant phase space, and the formulation of Noether's theorems in its terms. We then go on to review its application to asymptotically AdS or flat 3D gravity. In the latter the BMS$_3$ group appears, which unravels the double-fold scope of this section, that is precisely to introduce the infinite-dimensional extension of the 2+1 Poincar\'e group that will be employed in chapter \ref{chap3}.

\subsection{Covariant phase space}
The covariant phase space allows to deal at the same time with the definition of the Poisson structure on the field space and the account for boundary conditions. We choose to follow the notation as in \cite{Geiller:2020edh}, using $\delta$ for derivation on the field space, and $\de$ for the exterior derivative of space-time forms, the interior product on field space is $\ipp$, while for spacetime forms is $\ip$. The variation of the Lagrangian $\cL$ (that might contain some boundary term to control polarization, boundary condition or renormalization informations) is given by
\be
\delta \cL = \rm{EOM}\cdot \delta \Psi  + \de \theta \,,
\label{var_lagr}
\ee
where $\theta $ is the pre-symplectic potential, and defines the symplectic form\footnote{By variation of \eqref{var_lagr} we have that $d \delta \theta =0$, so $\int_{\partial \cM} \delta \theta =0$. If the spatial boundary do not contribute, the symplectic structure is conserved} through
\be
\Omega = \int_\Sigma \delta \theta\,,
\ee
where $\Sigma$ is a slice at constant time (as in any Hamiltonian approach, we meed a decomposition between slices and evolution time). The first Noether theorem states that if a transformation $\delta_\epsilon$ is a symmetry in the sense that the variation of the Lagrangian is a total derivative
\be
\delta_\epsilon \cL = \rm{EOM}\cdot\delta_\epsilon \Psi + \de \theta[\delta_\epsilon] = \de b\,,
\ee
then the current $J:=\theta[\delta_\epsilon] - b$ is conserved, when the equation of motion holds\footnote{We denote the \textit{on-shell} equalities by $\approx$.}:
\be
\de J \approx 0 \q \Rightarrow \q J=\de Q\,.
\ee
For gauge symmetries, on the other hand we have the \textit{off-shell equality} that reflects the relationship between different equation of motion ($P$ vanishes \textit{on-shell})
\be
\rm{EOM}\cdot\delta_\epsilon \Psi = -\de P\,.
\ee
The Hamiltonian charges are then defined as
\be
\cancel{\delta} \cH[\epsilon] &= \Omega[\delta,\delta_\epsilon] = -\int_\sigma \delta_\epsilon\ipp\delta\theta[\delta]\\
&=\int_\Sigma \delta P+ \oint_{\partial \Sigma}(\delta Q +f[\delta])\,,\q\q\de f=\delta b-\delta_\epsilon \theta\,. 
\notag
\ee
For gauge symmetries, since $P\approx 0$, all the information is captured on the co-dimension 2 corner $\partial \Sigma$. The symbol $\cancel\delta$ stands for the fact that in general, the charges might not be integrable, because of the contribution of $f[\delta]$. The quest is then to find a clever set of choices, playing on the symplectic form, or the field dependence of the gauge parameter, to make them integrable. This is a quite involved task and might be even impossible to succeed, for example in the presence of a symplectic flux. For 3D gravity, consistent efforts have been made in this direction, and a huge set of boundary symmetry has been discovered. We review here the most historical ones, provided by the asymptotic $\BMS_3$ group.

There are of course transformations for which the codimension 2 integration is also zero. Using the terminology of Regge and Teitelboim \cite{REGGE1974286}, we refer to the latter as \textit{proper gauge transformation}, while if the charge is non zero we call them \textit{improper gauge transformation}. The asymptotic (or boundary) symmetry group is then defined by the quotient between improper and proper transformations that preserve the boundary conditions.

We just end this section by recalling that the Poisson's structure descends directly from the symplectic form. For integrable charges, their algebra is calculated through
\be
\{H[\epsilon_1],H[\epsilon_2]\}=\Omega[\delta_{\epsilon_1},\delta_{\epsilon_1}]\,.
\ee
We refer the interested reader to \cite{Iyer:1994ys,Compere:2018aar,Harlow:2019yfa} for a more detailed introduction to covariant phase space, and \cite{REGGE1974286,HenneauxTeitelboim2020,Banados:2016zim}  for the relationship between Noether's theorem, gauge symmetries and boundaries.
\subsection{3D gravity}
Three-dimensional Lorentzian gravity offers an easy handle to deal with boundary structure and holographic properties, because of its topological nature. The solutions are indeed parametrized by a finite set of global quantities (like mass or angular momentum), on the other hand, the presence of a bounded region unravels new structures. To see how they appear let us consider the parametrization of the manifold through the Bondi-Sachs form:
\be
\label{BS_gauge_3D}
\de s^2 = e^{2\beta} \f{V}{r} \de u^2 - 2e^{2\beta} \de u \de r + r^2 (\de \phi - U \de u)^2\,. 
\ee
Einstein equations in the presence of a negative cosmological constant $\Lambda = -1/\ell^2$, together with Dirichlet boundary conditions (we fix the metric at the boundary) at $r\to \infty$, give the following fall off conditions
\be 
\f{V}{r}=-\f{r^2}{\ell^2}+2\cM(u,\phi) - r^{-2} \cN(u,\phi) + \cO(r^{-3})\,,\q
U=- r^{-2} \cN + \cO(r^{-3})\,,\q
\beta = \cO(1/r)\,,
\label{eq:fall_off_3d}\ee
where 
\be
\partial_u\cN = \partial_\phi\cM\,,\q\q
\partial_u \cM = \f{1}{\ell^2} \partial_\phi \cN\,.
\label{eq:on_shell_3d}
\ee
The subset of diffeomorphism that preserves the conditions \eqref{eq:fall_off_3d} and \eqref{eq:on_shell_3d} are generated by the vector $\xi$:
\be
\xi^u =f\,,\q
\xi^r =\partial_\phi^2 f -r \partial_\phi g -\cN \f{\partial_\phi f}{r}\,,\q
\xi^\phi =g -\f{\partial_\phi f}{r}\,.
\ee
with $\partial_u f = \partial_\phi g$ and $\partial_u g = \f{1}{\ell^2} \partial_\phi f$. As the vectors are field dependent, their Lie algebra is evaluated by the introduction of a modified Lie bracket, that accounts for the filed variation of one transformation upon the other (see \cite{Barnich:2010eb} for definition and discussion about the modified bracket\footnote{Sometimes this is called \textit{adjusted} Lie bracket \cite{Compere:2015knw}, but has not to be confused with the modified Barnich-Troessaert bracket at the level of the charge
algebra \cite{Barnich:2011mi}.}):
\be
\label{modif_bracket}
[\xi_1,\xi_2]_\star =[\xi_1,\xi_2] -\delta_{\xi_1} \xi_2 +\delta_{\xi_2} \xi_1\,.
\ee
This gives the Lie algebra $[\xi(f_1,g_1),\xi(f_2,g_2)]_\star = \xi(f_{12},g_{12})$, with
\bsub\be
f_{12} &= f_1 \partial_\phi g_2 + g_1 \partial_\phi f_2 -\left (1\leftrightarrow 2\right )\,,\\
g_{12} &= \f{f_1}{\ell^2} \partial_\phi f_2 + g_1 \partial_\phi g_2  -\left (1\leftrightarrow 2\right )\,.
\ee\label{BMS_vector_algebra_3d}\esub
It is now time to discuss the existence of the corresponding charges. In order to apply the covariant phase space approach, we need to choose a formulation of the theory, may it be tetrad formalism with Einstein Cartan Lagrangian \cite{Carlip:2017xne,Carlip:2019dbu,Geiller:2020edh}, or metric gravity with the opportune renormalized symplectic potential \cite{Ashtekar:1996cd,Barnich:2006av,Barnich:2017jgw,Ruzziconi:2020wrb,Geiller:2021vpg}. The review of the different approaches is beyond the scope of this thesis, the interested reader might look at \cite{Brown:1986nw} for a more complete historical review of 3D asymptotic symmetries, \cite{Geiller:2020edh} for computation of the charges at finite distance, \cite{Ruzziconi:2020wrb} for a generalisation to other boundary condition than Dirichlet, and \cite{Geiller:2021vpg} for a more general gauge, that unravels a new part of the algebra. Let us go back to the charges, no matter the approach that we use,  we find for the asymptotic boundary charges 
\be
\cQ_\xi =\f{1}{\kappa} \oint_0^{2\pi} \de \phi \left [f \cM + g \cN \right ]\,.
\ee
Let us now restrict to the flat case, where the BMS group arises. This is given by $\ell \to \infty$, which means that equation \eqref{eq:on_shell_3d} and the condition on the $f,g$ functions implies:
\be\left|
\begin{array}{rl}
\cM &= \cJ(\phi)\\
\cN &= \cJ(\phi)+u\cP'(\phi)
\end{array}\right .\,,&\q\q\left |
\begin{array}{rl}
f &= \alpha(\phi) + u X'(\phi)\\
g &= X(\phi)
\end{array}\right .\,, \label{BMS_gravity_charges}\\
\notag \cQ[X,\alpha] &=\f{1}{\kappa} \oint \left [X\cJ + \alpha \cP \right] \,.
\ee 
The prime $'$ denotes the derivative with respect to the angle $\phi$. Defining the mode expansion of the symmetries generators as $\cL_n =\cQ[X=e^{in\phi}]$, $\cT_n =\cQ[\alpha=e^{in\phi}]$, the charge algebra is given by
\bsub\be
\{ \cL_n,\cL_m\} &= i(n-m)\cL_{n+m} \,,\\
\{ \cL_n,\cT_m\} &= i(n-m)\cT_{n+m} + \f{i m^3}{4\pi G} \delta^0_{n+m}\,,\\
\{ \cT_n,\cT_m\} &=0\,.
\ee\label{bms_modes}\esub
This is the centrally extended BMS algebra (\cite{Oblak:2016eij} for a review). We will come back to its properties in the third chapter. 

For a non vanishing cosmological constant, we can redefine the symmetry generators as $f=\f{\ell}{2}(Y^+ + Y^-)$ and $g=\f{1}{2}(Y^+-Y^-)$, whose mode the mode expansion is given by $\cL_n^\pm =\cQ[Y^\pm=e^{\pm in x^\pm}]$, with $x^\pm = t/\ell \pm \phi$, that corresponds to a double copy of the Virasoro algebra
\bsub\be
\{ \cL_n^\pm,\cL_m^\pm\} &= i(n-m)\cL_{n+m}^\pm +i \f{3\ell}{24 G} m^3 \delta^0_{n+m}\,,\\
\{ \cL_n^\pm,\cL_m^\mp\} &=0\,.
\ee\label{2vira_modes}\esub \newline \

In this introductive chapter, we have reviewed some aspects of the symmetries in General Relativity. This is a gauge theory, invariant under diffeomorphism, and moreover, the only possible one depending on the metric coefficient up to second order. On the other hand, it is not the only theory invariant under diffeomorphism, what has just been discussed about the boundary structures allows the glimpse of something more peculiar to gravity that makes diffeomorphism something more than a mere gauge symmetry. In the next chapter, we will see how the simplicity of minisuperspace makes manifest the rising of this kind of new structure.

\chapter{Symmetries in minisuperspaces}
\label{chap2}

We have just seen how boundaries play an important role in the understanding of gravitational symmetries and degrees of freedom. However, even setting aside the role of boundaries, bulk symmetries present their own subtleties. Among the many formulations of general relativity, the gauge content is different, diffeomorphism freedom may either be partially fixed or supplemented with additional gauge invariances \cite{Peld_n_1994, Aldrovandi:2013wha, Gielen:2018pvk, krasnov_2020}. It is still not clear whether this freedom is harmless or not \cite{Matschull:1999he, Lusanna:2003im, Kiriushcheva:2008sf}. If, as it appears with the introduction of boundaries,  gauges contain physical information \cite{Rozali:2008ex, Rovelli:2013fga, Kapec:2014opa, Rovelli:2020mpk}, we should aim to understand in detail the symmetry content of a given formulation. 

Minisuperspaces furnish a good testbed for this. Despite their apparent simplicity, they contain a richer structure than what meets the eye. The simplest reduced model we could think off is the homogeneous and isotropic picture of the Universe. Freezing almost all the degrees of freedom, except for a global conformal factor for flat 3-dimensional slices, results in the Friedmann-Lema\^itre-Robertson-Walker (FLRW) metric. The dynamics of this model is too simple, and we need to add some matter content to make the scale factor evolve in time. The simplest one, and yet very relevant for the study of early stages of inflation in cosmology, is given by a massless scalar field. 

It was in this setup that the conformal $\SL(2,\R)$ rigid symmetry was find in the first place \cite{BenAchour:2017qpb,BenAchour:2019ywl,BenAchour:2019ufa,BenAchour:2020njq,BenAchour:2020xif} (see also \cite{Dimakis:2015rba,Dimakis:2016mpg,Christodoulakis:2018swq,Pailas:2020xhh,Dussault:2020uvj}), to be rapidly extended to Bianchi I models \cite{BenAchour:2019ywl} and in the presence of a cosmological constant \cite{BenAchour:2020xif}. We start this chapter with a brief review of the cosmological flat model (along the lines of \cite{BenAchour:2019ufa}), as it is the precursor for the generalisation that we develop in the following, where we are going to extend the symmetry group to the semidirect product  $\left (\SL(2,\R) \times \R\right ) \ltimes\R^4$. The last part of the chapter is devoted to the application to black hole minisuperspace and anisotropic cosmological systems.

The main part of the chapter collects results and thoughts of \cite{Geiller:2020xze} and \cite{Geiller:2022baq}

\section{FLRW model and conformal symmetry}
\label{sec2.1:FLRW}

In the minisuperspace formalism introduced in section \ref{sec1.2:symm_reduc} the flat spacetime is encoded in the diagonal triad $e^i_a =\delta^i_a$ and the homogeneity corresponds to the diagonal internal metric $\gamma_{ij} = a^2 \delta_{ij}$. The shift is usually set to zero, but we add a minimally coupled massless scalar field $\phi$. The reduced action is given by
\be
\label{FLRW action}
\cS= \cV_0 \int_\R \de t \left [a^3 \f{\Phi'^2}{2N} - \f{3}{8\pi G} \f{a a'^2}{N} \right ]\,,
\ee
where $\cV_0$ is the fiducial cell volume $\cV_0 = \int_\Sigma \de^3 x$, meaning that the spatial coordinates are constrained into a compact support, e.g. a cube of edge $\cV_0^{1/3}$. The derivative with respect to coordinate time is hereafter denoted with a prime $a' = \de a/\de t$.	Because of the symmetry reduction we have killed almost all the freedom of coordinates reparametrization, with the exception of the time diffeomorphisms, encoded into the lapse function and the scalar constraint. The corresponding gauge symmetry is given by
\be
\label{time_diffeos}
\left|\begin{array}{ccrl}
t &\mapsto&  \tilde{t} &=f(t)\\
N &\mapsto & \tilde N (\tilde{t}) &= h^{-1} N(t)\\
a &\mapsto & \tilde a (\tilde{t}) &= a(t)\\
\Phi &\mapsto&  \tilde \Phi (\tilde{t}) &= \Phi(t)\\
\end{array}\right .
\q\q \rm{with}\q h(t) := f'(t)\,.\ee
A straightforward calculation of the Euler-Lagrange equation corresponding to \eqref{FLRW action} gives 
\bsub \be
\cE_a &= \f{3 \cV_0 \left (a'^2+4\pi G a^2 \Phi'^2+2a a'')\right )}{8\pi G } \approx 0\,,\\
\cE_\Phi &= -\cV_0\f{ \de}{\de t} (a^3 \Phi') \approx 0\,,\\
\cE_N &= \cV_0 \left [\f{3}{8\pi G} \f{a a'^2}{N^2} - a^3  \f{\Phi'^2}{2N^2} \right ] \approx 0\,.
\ee
\label{eom_FLRW}
\esub
They correspond to the Friedmann equations, meaning in the first place that the variational principle of \eqref{FLRW action} is equivalent to the Einstein field equations of the FLRW metric in the presence of a scalar field. Respectively, the first one corresponds to the acceleration equation or Raychaudhury’s equation, the second to the energy conservation of the scalar field, and the latter to the first Friedmann equation, relating the Hubble factor to the energy density. It is known that the three equations are not independent, and in the present framework this can be seen as a direct consequence of the gauge freedom \eqref{time_diffeos}:
\be
\label{relation_eom_FLRW}
N  \cE_N' =  \Phi' \cE_\Phi + a' \cE_a\,.
\ee
This means that $N$ is not specified by the equation of motion and we can freely choose it. 

We would like to dig a little further into the relationship between time diffeomorphisms, gauge choices and charges. In the general setting, we have claimed that the scalar constraint is associated with the time diffeomorphisms, but actually, this relationship is less straightforward than it meets the eye. The main problem is how we infer the lapse transformation in \eqref{time_diffeos} from the Poisson's bracket with the Hamiltonian if the lapse is not in the phase space?

\paragraph{Integrable charges for time reparametrization} In order to properly build a relationship between the infinitesimal time reparametrization and the Hamiltonian constraint we shall add the lapse to the phase space \cite{Pons:1996av,Pons:1999az,Garcia:2000yt,Pons:2003uc}. This means that we need to consider the following modified action, for the time coordinate $t$:
\be 
\cS_\rm{Lapse} = \cS + \int \de t(N'  - \mu) \Pi\,.
\ee
The action $\cS$ is given at \eqref{FLRW action}, and we have added form the onset a momentum $\Pi$ for the lapse $N$, together with a multiplier $\mu$ enforcing the constraint $\Pi\approx 0$. The variational principle for the new action is completely equivalent to the evolution described by the action principle for $\cS$. We have the general variation of the Lagrangian
\be
\delta \cL_\rm{Lapse} = \delta a \cE_a +\delta \Phi \cE_\Phi + \delta N (\cE_N - \Pi') -\delta \mu \Pi + (N'-\mu) \delta \Pi + \theta'\,,
\ee 
where the $\cE$ are defined in \eqref{eom_FLRW}. From these we see the equivalence between the two action principles and we can infer presymplectic potential 
\be
\theta=p_a \delta a +\Pi\delta N + p_\Phi \delta\Phi\,,\q\q \left\{
\begin{array}{rl}
p_a &= -\df{3 \cV_0 a a'}{4 \pi G N}\\[9pt]
p_\Phi &= \df{\cV_0 a^3 \Phi'}{N}
\end{array}\right.\,.
\ee
We now want to make use of the covariant phase space formalism to study the existence of integrable charges associated with the symmetries \eqref{time_diffeos}, their infinitesimal version for the new action $\cS_\rm{Lapse}$ reads:
\be\label{alpha symmetry}
\delta_\alpha a=\alpha a'\,, 
\q
\delta_\alpha \Phi=\alpha \Phi'\,,\q
\delta_\alpha N=\partial_t( \alpha N)\,,
\q
\delta_\alpha\Pi=0\,,
\q
\delta_\alpha\mu=\partial_t^2( \alpha N)\,.
\ee
Using the symplectic form $\Omega =\delta\theta$, we can compute the generator of the symmetries \eqref{alpha symmetry}. Thanks to the equations of motion, we can get rid of the second derivatives and obtain
\be
\cancel \delta \cH = -\delta_\alpha \ipp \Omega\approx \alpha \delta\left (a^3 \f{\Phi'^2}{2N} - \f{3}{8\pi G} \f{a a'^2}{N} \right ) + \partial_t(\alpha N) \delta \Pi\,.
\ee 
One can see that this is not integrable, and integrability can be achieved with the field-dependent redefinition $\alpha = \eps/N$, for which we find the generator
\be
G(\eps) = \f{\eps}{\cV_0} \left [ \f{p_\Phi^2	}{2a^3} - \f{2 \pi G}{3} \f{p_a^2}{a}\right ] + \eps' \Pi\,.
\ee

We would like to remark here that in the one-dimensional context here (all the fields depend only on one spacetime direction) there is no codimension 2 corner that captures the non-trivial information of the boundary. Everything lives in the bulk and the integrable generators are zero on-shell $G\approx 0$, synonymous with a gauge transformation. The Poisson bracket of this generator with $q$ and $N$ does indeed generate \eqref{alpha symmetry} with $\alpha=\eps/N$.

We can perform a Legendre transform of the Lagrangian $\cL_\rm{Lapse}$ to obtain the corresponding Hamiltonian, this gives:
\be
H_\rm{Lapse} = \f{N}{\cV_0} \left [ \f{p_\Phi^2	}{2a^3} - \f{2 \pi G}{3} \f{p_a^2}{a}\right ] + \mu \Pi := N \cH_\rm{ADM} + \mu \Pi\,,
\ee
where the first part $\cH_\rm{ADM}$ corresponds to the first class constraint of the initial gravitational action \eqref{FLRW action}, and the $\mu \Pi$ term accounts for the inclusion of $N$ into the phase space.  

This has the following interpretation: the original action \eqref{FLRW action} exhibits a time reparametrization gauge invariance \eqref{time_diffeos}, that is associated to the first-class constraint $\cH_\rm{ADM}\approx 0$ and to the relationship between equations of motion \eqref{relation_eom_FLRW}. However, if we want to recover the transformation from the constraint, we must add $N$ to the phase space, which in turn implies that the integrable generator is given by $G(\eps)$, nonetheless, there is a priori no obstruction to acting with 
\be
G(\alpha N)=\alpha H_\rm{Lapse}+\dot{\alpha}N\Pi = \alpha N \cH_\rm{ADM} + (\alpha \mu + \dot{\alpha}N )\Pi \,,
\ee which generates the initial (non-integrable) transformation \eqref{alpha symmetry}.

We now discuss the opposite approach, which consists of gauge fixing the time. It is indeed convenient to describe the dynamics in terms of a gauge fixed time corresponding to a choice of lapse. 
For example we can take here the proper time $\tau$, such that $\de \tau = N \de t$. This makes $N$ to disappear from the action, which becomes
\be
\label{FLRW action_gfix}
\cS_0= \cV_0 \int_\R \de \tau \left [a^3 \f{\dot \Phi^2}{2} - \f{3}{8\pi G} a \dot a^2 \right ]\,,
\ee
where now the dot represents proper time derivatives $\dot a = \de a/\de \tau$. This is exactly the same thing as if we had chosen from the onset $N=1$. This is of course not the only possible choice for the lapse, it could be for instance field dependent (e.g. $N=1/a$, etc...), as $N$ is completely free.

This turns the gauge theory into a simple mechanical model without constraint, but this means that we loose the Euler Lagrange equation imposed by the lapse $\cE_N$, which is precisely the scalar constraint and it cannot be recovered from the relationship \eqref{relation_eom_FLRW}. The latter just says that from the equation of motion of $a$ and $\Phi$ we infer that $\cE_N$ is constant, but not that it is zero. Another way of saying it is that the action \eqref{FLRW action_gfix} has a true Hamiltonian, that is a constant of motion, whose value must be fixed to zero by hand. Introducing the canonical momenta $p_\Phi ,p_a$, the Legendre transform\footnote{The canonical momenta are the same as for the gauge Hamiltonian, for a configuration variable q: $\delta S/\delta q' = \delta \cS_0 /\delta \dot q$. e.g.:
\be\notag
p_a = \f{\delta S}{\delta a'} = -\f{3\cV_0}{4\pi G} \f{ aa'}{N} = -\f{3\cV_0}{4\pi G} a \dot{a} = \f{\delta S_0}{\delta \dot a}\,.
\ee} of \eqref{FLRW action_gfix} gives the Hamiltonian
\be
H = \f{1}{\cV_0} \left [ \f{p_\Phi^2	}{2a^3} - \f{2 \pi G}{3} \f{p_a^2}{a}\right ]\,.
\ee

The main message that shall be retained here is that for a mechanical model with time reparametrization invariance like \eqref{FLRW action}, we can explicitly build a relationship between the constraint and the integrable generator of the gauge symmetry only by adding the lapse to the phase space.

Conversely, we can recast the model in a gauge-fixed form by choosing a lapse (here we pick $N=1$, but it could also be field dependent), paying the price of losing the Hamiltonian constraint, that must be reintroduced by fixing by hand the energy value of a true Hamiltonian. This assures the classical equivalence between gauge fixed \eqref{FLRW action_gfix} and unfixed \eqref{FLRW action} actions but begs the question of the contribution of quantum fluctuations on different time choices. We leave this discussion for the second part of the thesis, and we continue here with the study of the symmetries of \eqref{FLRW action_gfix}.

\subsection{M\"obius symmetry and conformal particle}
On top of the time diffeomorphism invariance it exists an other symmetry \cite{BenAchour:2019ufa}, in the form of M\"obius transformations (parametrized by a $\SL(2,\R)$ group element) of proper time
\be
\label{Mobius_FLRW}
\left|\begin{array}{ccrl}
\tau &\mapsto&  \tilde{\tau} &=f(\tau) = \df{\alpha  \tau +\beta }{\gamma \tau +\delta}\\
a &\mapsto & \tilde a (\tilde{\tau}) &= h(\tau)^{1/3} a(\tau)\\
\Phi &\mapsto&  \tilde \Phi (\tilde{\tau}) &= \Phi(\tau)\\
\end{array}\right .
\q\q \rm{with}\q 
\begin{array}{rl}
&\alpha \delta -\beta \gamma =1\\
&h(t) := f'(t) = \df{1}{(\gamma \tau + \delta)^2}
\end{array}\,.\ee
The scale factor $a$ represents what is called a \textit{ conformal field} of weight $1/3$, where the weight is given by the power of the derivative that appears in the transformation law. This terminology will be used throughout the whole thesis. A straightforward calculation leads to verifying that the action \eqref{FLRW action} is invariant (up to a total derivative), under the transformations above \eqref{Mobius_FLRW}:
\be
\Delta \cS_0 = \cS_0[\tilde{a}(\tilde{\tau}),\tilde{\Phi}(\tilde{\tau})] - \cS_0[a(\tau),\Phi(\tau)] = \int \de \tau \f{\de}{\de \tau} \left [\f{\cV_0}{6 \pi G} \f{\gamma a^3}{\gamma \tau + \delta}  \right ]\,, 
\ee
proving that the M\"obius transformation is indeed a symmetry for the action \eqref{FLRW action_gfix}, leaving the corresponding Euler-Lagrange equation invariant. An interesting comparison could be made with the one dimensional conformal particle. The simplest model of conformal mechanics introduced by de Alfaro, Fubini and Furland \cite{deAlfaro:1976vlx}, describing a particle moving in one direction $q$, submitted to a conformal potential $-\beta/q^2$, exhibits the same conformal symmetry, in terms of Mobius transformation. The analogy is made explicit if we restrict ourself to the study of the gravitational degree of freedom $a(t)$. 

The equation of motion for the scalar field imposes that
\be
0\approx \dot{p_\Phi} = \f{\de}{\de \tau} (\cV_0 a^3\dot \Phi )\,.
\ee
We can directly substitute this into the action and this will not spoil the dynamics of $a$. The action in terms of $p_\Phi$ gives the same equation of motion for the scale factor as \eqref{FLRW action_gfix}. With the  change of variables 
\be
q = \sqrt{\f{\cV_0}{6\pi G}} a^{3/2}\,,\q\q \beta = -\f{1}{12 \pi G} \f{p_\Phi^2}{4\cV_0^2} =- \f{1}{12 \pi G} \left (\f{a^3 \dot \Phi}{2}\right )^2\,,
\ee
we recover exactly the dAFF action for conformal mechanics \cite{deAlfaro:1976vlx}
\be
\cS_\beta[q] =\int \de \tau \left [\dot q^2-\f{\beta}{q^2}\right ]\,.
\ee
This provides the simplest example of a one-dimensional conformal model, and has been widely studied. The analogy with the conformal particles will be the guiding principle to generalise the construction of the symmetries to other minisuperspaces, where the generalisation will be given by the increasing number of directions in which the particle can move. This will be the subject of the next section, but let us focus again on the FLRW model and discuss the Noether charges associated with the symmetry.

\subsection{Noether's theorem and conserved charges}
We have already discussed the first Noether theorem in the context of gauge symmetry. We will review it here, in order to apply it to the rigid M\"obius symmetry \eqref{Mobius_FLRW}. The pre-symplectic form for the gauge fixed model is given by $\theta = p_a \delta a  + p_\Phi \delta_\Phi $, the infinitesimal variation on the volume and the scalar field are defined as
\bsub 
\be
\delta a &= \tilde{a} (\tau) -a(\tau)  = \f{1}{3}\,\dot \xi a - \xi \dot a\,, \\
\delta \Phi &= \tilde{\Phi} (\tau) -\Phi(\tau)  = - \xi \dot \Phi\,,
\ee
\esub
for $\delta\tau =\xi(\tau)$ a second degree polynomial in time, whose coefficients are associated respectively the time translations, dilations and special conformal transformation:
\be
\tilde \tau &= \tau + \eps &&\rm{translation\;} (\alpha=\delta =1, \beta =\eps, \gamma=0)\notag\,,\\
\tilde \tau &= \tau + \eps \tau &&\rm{dilatations\;}  (\beta=\gamma =0, \alpha  =\delta^{-1}=\sqrt{1+\eps^2})\,,\\
\tilde \tau &= \tau + \eps \tau^2 &&\rm{special\; conformal\; transformation\;} (\beta= 0,\alpha  =\delta=1,\gamma =-\eps)\notag\,.
\ee
The corresponding infinitesimal variation of the Lagrangian, at leading order in $\eps$ is given by
\be
\delta L = \f{\de}{\de t}\left (\eps \cV_0\f{3 a \dot a^2 - 4 \pi G a^3 \dot \Phi^2}{8\pi G} t^n - \f{\eps \cV_0 a^3}{12 \pi G} n(n-1)\right ):= \dot b\,,
\ee
for $\delta \tau = \eps \tau^n$, with $n\in \{0,1,2\}$. In the one dimensional context there is no codimension 2 surface, so all the information is encoded into what is called \textit{current} in the field theory context, that takes here the status of charge. It is evaluated as $Q_\eps = \theta[\delta_\eps] -b$.
\be
Q[\eps t^n] = \eps \cV_0 \f{4 a^3}{3\kappa^2} n(n-1) t^{n-2}  - 4\eps\cV_0 \f{a^2 \dot a}{\kappa^2} n t^n +\eps \cV_0\left (6 \f{a \dot a^2}{\kappa^2} -\f{a^3 \dot \Phi^2}{2}\right ) t^n\,.
\ee

We can also read the $\SL(2,\R)$ structure from the Hamiltonian formulation of the dynamics. For this it turns out to be useful to consider a new set of canonical variables, measuring the physical volume of the slice, and its rate of variation
\be
V = a^3 \cV_0\,, \q\q B= -\f{1}{3\cV_0} \f{p_a}{a^2} = \f{1}{12 \pi G}\f{\dot V}{V}\,,
\ee
which inherit the Poisson structure\footnote{The reason behind the weird sign in the canonical structure, that swap what we usually assign to configuration variables and canonical momenta, find its roots in literature about quantum cosmology (see chapter \ref{chap4} and appendix \ref{app:4_LQG_phase_space}). We keep this choice here to make contact with this literature}  $\{B,V\}=1$. The Hamiltonian then reads:
\be
\label{Ham_prop_time_flrw}
H =\f{1}{2} \left [\f {p_\Phi^2}{V} - 12 \pi G V\, B^2\right ]\,.
\ee
The evolution of any phase space $\cO$ function is given by its bracket with the Hamiltonian, as in any mechanical model $\dot \cO = \partial_\tau \cO + \{\cO,H\}$. On top of this, we must impose the condition $H \approx 0$, once we solve the equation of motion. 

Now, we dig deeper into the algebraic structure of the evolution equations for the gravitational degrees of freedom. As just discussed, the variation of the volume $V$ in terms of proper time is given by its bracket with the Hamiltonian \eqref{Ham_prop_time_flrw}. The second-order time derivative by the iterated brackets $\{\{V, H\}, H\}$, and so on. We can thus check if at which stage the iteration closes and form a Lie algebra, moreover this gives the order of the differential equation that must be solved to integrate the evolution of $V$. In this case, it turns out that the iteration stops at the second step and the Lie algebra is isomorphic to $\sl(2,\R)$. 	The first derivative gives
\be
C:= \f{4}{3} \kappa^{-2} \{V,H\} = VB\,,
\ee
where $\kappa^2=16 \pi G$. The phase space function $C$ can also be interpreted as the integrated extrinsic curvature, upon the fiducial cell on the slice:
\be
\f{3}{4} \kappa^2 C = \int_\Sigma \sqrt{q} K = \cV_0 a^3 \f{3 \dot a}{a}	 = \f{3}{4} \kappa^2 VB\,.
\ee
Then the second time derivative is proportional to the Hamiltonian, as it trivially commutes with itself. The iteration stops here.
\be 
\dot C =\{C,H\} = - H\,.
\ee
The quantities $C$, $V$ and $H$, respectively the trace of the extrinsic curvature\footnote{Sometimes it is also called complexifier, from which the name $C$, because of its role in shifting the value of the Immirzi parameter, see \cite{BenAchour:2017qpb} for a more detailed discussion}, the volume of the slice, and the Hamiltonian form then an algebra, to which we refer as CVH, that is isomorphic to $\sl(2,\R)$
\be
\{V,H\} = \f{3\kappa^2}{4}C\,,\q\q \{C,H\} = - H \,,\q\q \{C,V\} = V\,.
\ee
We have seen the conformal structure appearing in two different ways, in the first place as a symmetry of the action, and then in the algebraic structure encoding the dynamics. In order to see the equivalence, it is useful to recast the CVH generators into the more familiar $\sl(2,\R)$ form
\be
L_{-1} = \f{\sigma}{3}\kappa H\,,\q
L_0 &= -C\,,\q
L_1 = \f{8}{2 \sigma \kappa^3}V \,,\q \\
\{L_n, L_m\}&= (n-m) L_{n+m}\,,
\ee
for a real parameter $\sigma$ and $\kappa= \sqrt{16 \pi G}$. We can also rewrite the Noether charges in terms of phase space variables, namely the volume and its momentum, to make contact with the CVH generators
\bsub 
\be
\cQ_- &= \f{2}{\eps}	Q[\eps ]	= -2 H \,,\\
\cQ_0 &= \f{1}{\eps}	Q[\eps t]	= -C -H \tau \,,\\
\cQ_+ &= \f{1}{2\eps}	Q[\eps t^2]	= \f{V}{12 \pi G} -C \tau - H \f{\tau^2}{2}\,.
\ee
\label{noether_charges_FLRW}
\esub
We see that the CVH generators are the evolving counterpart of the constant of motion. Despite the explicit time dependence of the conserved charges, they provide indeed constants of motion, corresponding to the initial condition of the CVH generators (i.e. their value at $\tau=0$). On the other hand $C$ and $V$ are not constant themselves. 
\be
\cQ[n]_{(\tau=0)} = L_n[\sigma = 3/\sqrt{4 \pi G}]\,.
\ee
Moreover, it is straightforward to see that the Poisson algebra of the charges is the same as the Lie algebra $\sl(2,\R)$ of the infinitesimal transformation. At the quantum level we should aim to represent the algebra without anomalies. This is easily achievable using the well know representation theory for $\SL(2,\R)$ \cite{BenAchour:2019ywl}. We can compute the charge algebra both using the standard definition of Poisson's bracket or using a double contraction of the symmetries with the symplectic potential.

The equivalence between the two structure will become a key point in the study of the quantization \ref{sec5.1:WDW}, we will define here the following notation: the charges $\cQ$ that corresponds to the Noether charges are sometimes referred to as \textit{evolving constants of motion}, because of the explicit time dependence. They measure the initial condition of the \textit{kinematical} algebra (here CVH) and the map between the two is given by the Hamiltonian flow $e^{\{\bullet, \tau H\}}$. By construction, this map is canonical and preserves the Poisson structure so that the kinematical CVH algebra is the same $\sl(2,\R)$ as the charge algebra. Moreover, the map is implemented quantum mechanically by the unitary transformation that swaps between Shr\"odinger and Heisenberg pictures, the former is given by the quantization of the CVH algebra without explicit time dependence (this is hidden in the wavefunction), and the latter is obtained by the realization of the charge $\cQ$ algebra on a suitable Hilbert space where the states are kept fixed in time.

We would like now to address the question of whether this symmetry exists from the point of view of the gauge un-fixed perspective. For this it is interesting to translate the M\"obius transformation in terms of coordinate time. Since the relation $\de \tau = N \de t$, the mapping between $\tau$ and $\tilde{\tau}$ can be achieved through an infinite number of different choices of transformations from $t,N$ to $\tilde t, \tilde N$. The simplest one is assuming that the time coordinate is unchanged and all the transformation is encoded into the lapse \cite{BenAchour:2019ufa}. This gives the finite transformations
\be
a 		&\mapsto \f{a(t)}{(\gamma \tau + \delta)^{2/3}}\,,\notag\\
\Phi 	&\mapsto \Phi(t) \,,\\
N		&\mapsto \f{N(t)}{(\gamma \tau + \delta)^2}\notag\,,
\label{Mobius_t}
	\ee
where we would like to remark that the transformations are non-local, given that they involve the proper time $\tau(t) = \int^t_{t_0}  N(s)\de s$. First of all we see that the presence of the initial time $t_0$ is necessary to recover a three parameter family as expected for the M\"obius transformation, while the transformations \eqref{Mobius_t} seemed to depend only on two parameters. Secondly the non-locality is present also in the conserved quantities \eqref{noether_charges_FLRW}, explicitly involving the proper time.  The infinitesimal version  of \eqref{Mobius_t} reads
\be
\delta_\eps a = \f{1}{3}\f{\eps'}{N} a\,,\q\q \delta \Phi =0\,,\q\q \delta N = \eps'\,. 
\ee
For $\eps = \epsilon (\tau (t))^n$, $n \in\{0,1,2\}$. A straightforward calculation verifies that they really are symmetries for the gauge action \eqref{FLRW action}, because the variation of the action is a total derivative
\be
\delta S =  -\cV_0\f{n(n-1)}{12 \pi G} \int \de t\, \f{\de q^3}{\de t} \,, \q\q n \in \{0,1,2\}\,.
\ee
On the other hand, we could ask ourselves if it is possible to find integrable generators on the extended phase space including $N$, for the transformations \eqref{Mobius_t}. Once contracted with the symplectic form $\Omega$, including the $\delta N \delta \Pi$ term, they give the non-integrable generators:
\be
\cancel \delta \cH = \f{\eps'}{N} \delta \left (\f{\cV_0 \eps' a^2 a'}{4 \pi G N} \right )+ \eps' \f{\cV_0 a^2 N'}{4 \pi G N^3}\delta a -\eps''  \f{\cV_0 a^2}{4\pi G N^2} \delta a+\eps ' \delta \Pi\,.
\ee
Unfortunately, there is no way of making them integrable, by a local (in the sense that does not contain integrals over time) field-dependent redefinition of the parameter $\eps$. 

It has been recently shown that the CVH kinematical generators are associated with volume-preserving diffeomorphisms \cite{Achour:2021lqq} and the corresponding Kodama charges. The relationship however holds only for some specific choices of clocks, to whom belongs the proper time in the flat case. This means that the M\"obius symmetry that lives on top of the time reparametrization is associated with a finite set of non-local charges, depending on the history of the system by the factor $\int N\de t$, however, these become local for a particular choice of clock.

Nonetheless, the physical interpretation of these charges and the corresponding clock is still quite obscure. This is because the simplicity of the cosmological models happens to blur the role of the infrared regulator and the respective boundary. For this, we will study the presence of the conformal symmetry in a more general setup.

\section{Field space approach and geodesics}
\label{sec2.2:mink&minisup}
We have shown how the CVH algebra has been introduced in the first place. Using the spacetime interpretation of the CVH generator we could wonder if the presence of the conformal structure in cosmology is somehow a reduction of a more general property. It is actually well known for a while that the volume, extrinsic curvature, and kinetic term of the Hamiltonian constraint form a closed algebra, even in the inhomogeneous setting. This plays a crucial role in the regularization procedure in Loop Quantum Gravity\footnote{In Loop quantum gravity we use a different set of canonical variables, namely the connection-triad variables, and the CVH algebra is provided by the smeared triad determinant, complexifier and Lorentzian part of the constraint. The two sets of generators are related by a canonical transformation} \cite{Thiemann:2007pyv}. See appendix \ref{app:4_LQG_phase_space} for details. Consider the following functionals on the ADM phase space \eqref{ADM_phase_space}:
\bsub \be
H_\rm{Kin} &= \kappa^2 \int_\Sigma \de^3 x\, \f{1}{\sqrt{q}}\left (\pi^{ab}\pi_{ab}-\f{\pi^2}{2}  \right ) \,,\\  
C &= -\f{1}{2}\int_\Sigma \de^3 x\, \pi\,,\\
V &= \int_\Sigma \de^3 x\, \sqrt{q}\,.
\ee \esub
They form a closed algebra, dropping the boundary terms:
\be
\{V,H_\rm{Kin}\} = \kappa^2 C\,,\q\q \{C,H_\rm{Kin}\} = -\f{1}{2} H_\rm{Kin} \,,\q\q \{C,V\} = \f{1}{2}V\,,
\label{CVH_ADM_kinem}
\ee
that is of course isomorphic to the CVH algebra in cosmology. Unfortunately, the introduction of the three dimensional Ricci curvature, playing the role of potential in the Hamiltonian constraint, spoils completely the algebra, making it impossible to relate it to the symmetry of the system \cite{Geiller:2022baq}. The kinematical algebra \eqref{CVH_ADM_kinem} of course still holds, but it does not contain the physical Hamiltonian, and it is not related to a dynamical symmetry.  Hopefully, the minisuperspace reduction provides an easier playground to test how and when it is however possible to make a contact between the kinematical algebra and a true dynamical symmetry. 

We have already discussed in the first chapter, once the symmetry reduction is well performed by fixing the shift and choosing an appropriate internal metric, the ADM action takes the form of a mechanical action describing the motion of a particle moving in a $n$-dimensional curved spacetime, under a potential $\cU$, where $n$ is the number of independent fields in the internal metric $\gamma_{ij}$ and the potential is proportional to the three-dimensional curvature. The Lagrangian has the form
\be
\label{minisup_mech_lagr}
\cL =\f{1}{2N} \tilde g_{\mu\nu} \dot q^\mu \dot q ^\nu -N \cU(q)\,.
\ee
A remark should be done at this point. The tensor $\tilde g_{\mu\nu}(q)$ is an invertible metric on the field space, and the $q$'s are coordinates on it. We will denote by $\de \tilde s^2_\rm{mini}$ the corresponding line element, to whom we refer hereafter as \textit{supermetric}. These should not be confused with the original spacetime metric and coordinates\footnote{We sill use the greek letters to denote the coordinates on the field space. To avoid confusions we have used the first part of the greek alphabet for the spacetime indices. To resume we have many kinds of indices: 
\begin{itemize}
\item The $D+1$ spacetime coordinates $x^\alpha$, $q_{\alpha\beta}$, where $\alpha,\beta,\gamma \dots = 0 \dots D$
\item The coordinates and the metric on the hypersurface $x^a$, $q_{ab}$, where $a,b= 1\dots D$
\item The internal metric and index for the frame $i,j, k \dots = 1\dots D$
\item The field space coordinates and metric $q^\mu$, $g_{\mu\nu}$. Its dimension varies on a case by case basis, but we will concentrate in the following on the case where it is two dimensional.
\end{itemize}
To them, we will add in this section an index for the Killing vectors, still denoted by the latin letters, but between parentheses: $(i)$, $(j)$,...
}. The two are related awkwardly by the fact that the vector constraint must be satisfied, and of course once the minisuperspace model is chosen, supermetric and potential are uniquely determined. In the following, we will focus on the case in which the field space is two dimensional, which is the relevant one for the majority of the gravitational minisuperspaces.

As in cosmology, the lapse function $N$ is a remnant of the diffeomorphism invariance of general relativity as it assures that the model is invariant under time reparametrization. It still plays the role of a Lagrangian multiplier enforcing the Hamiltonian constraint, which takes here the form  
\be
\cH = \f{1}{2} \tilde g^{\mu \nu} p_\mu p_\nu + \cU\,\approx 0\,, 
\ee
on the phase space $\{q^\mu, p_\nu\} =\delta_\nu^\mu$. The matrix $\tilde{g}^{\mu\nu}$ gives the inverse of the supermetric. Classically we can freely choose the lapse as any phase space function and this will give equivalent descriptions of the dynamics, provided that we remember to impose the constraint on the energy value. The Lagrangian \eqref{minisup_mech_lagr} is indeed invariant under the symmetry
\be
\delta_f q^\mu = f \f{\de q^\mu}{\de t} \,,\q \delta_f N = f \f{\de N}{\de t} + \f{\de f}{\de t} N\,, 
\ee
for a generic function $f$, that could also be field dependent. It is precisely by playing on this freedom that we make the extended conformal symmetry appear. We will see that the symmetry will depend on a particular choice of clock (\textit{i.e.} time variable), for which the charges will have a local form, the whole game of unravelling the symmetry is then to find the smart lapse choice.

We will take here the approach of taking the lapse as a field space function $N(q)$, which is seen as a conformal factor in front of the field space metric. 
We define again the \textit{gauge-fixed} time as $\tau$, even if the field space dependence of the lapse make it to differ from the usual definition of proper-time. From now on, we will denote by a dot the derivative with respect to $\tau$. And we will use the metric with the conformal factor to raise and lower the indices.
\be
g_{\mu\nu} := \f{1}{N} \tilde g_{\mu\nu}\,,\q p^\mu =g^{\mu\nu} p_\nu = \dot q^\mu\,, \q \xi_\mu = g_{\mu\nu} \xi^\mu\, \dots \,,
\ee
for any vector on the field space $\xi^\mu$. The evolution of a phase space observable $\cO$ with respect to $\tau$ is given by
\be
\dot \cO:=\f{\de \cO}{\de \tau} = \partial_\tau \cO + \{\O, H\}\,,\q\q H:= N \cH = \f{p^\mu p_\mu}{2} +N\cU\,.  
\ee

It is now time to turn to the study of the symmetries of the motion associated to the Lagrangian \eqref{minisup_mech_lagr}. For this let us begin with the \textit{free} case, i.e. without potential. A straightforward computation of the equation of motion in the free case gives the geodesic equation
\be
\ddot q^{\,\mu} + \Gamma^\mu_{\nu \rho} \dot q^\nu \dot q^\rho =0\,, 
\ee
where the $\Gamma$'s are the Christoffel symbols of the field space metric $g_{\mu\nu}$. The scalar constraint $H =p^\mu p_\mu/2 \approx 0$ restrict ourself to the null geodesics. In this case the evolution of phase space functions along any trajectory is given by
\be
\f{\de \cO}{\de \tau} = \partial_\tau \cO + p^\mu \nabla_\mu \cO\,, \q\q p^\mu \nabla_\mu p^\nu =0\,.
\ee

\subsection{Killing vectors and conserved quantities}

In the free case, the most intuitive way of searching for dynamical symmetries, is to look at the symmetries of the supermetric. We recall that, if the potential is zero, we are describing geodesic motion over the field space, and Killing vectors of the metric are associated with conserved quantities along the geodesics. We are going to describe a slightly more general structure,  for this, we consider the \textit{conformal} Killing vectors of $\tilde g_{\mu\nu}$. These are such that
\be 
\pounds_\xi \tilde g_{\mu\nu} = \tilde \nabla_\mu \xi_\nu + \tilde \nabla_\nu \xi_\mu \doteq \tilde \lambda \tilde g_{\mu\nu}\q\Rightarrow\q   \pounds_\xi \tilde g^{\mu\nu} =-\tilde \lambda \tilde g^{\mu\nu} \,,
\ee
where $\tilde \nabla$ is the covariant  derivative with respect to $\tilde g$. Whenever $\tilde \lambda$ is a constant and non zero we refer to the corresponding vector as \textit{homothetic Killing vector}. If it vanishes, we will call the vector simply \textit{Killing vector}. We see that the conformal factor $N$ do not change the fact that a vector is conformal but it changes the value of $\lambda$:
\be
\pounds_\xi g_{\mu\nu} =  \nabla_\mu \xi_\nu +  \nabla_\nu \xi_\mu =  \pounds_\xi \left (\f{1}{N} \tilde g_{\mu\nu}\right ) = \left (\tilde \lambda - \xi^\mu \partial_\mu (\log N)  \right ) \left (\f{1}{N} \tilde  g_{\mu\nu}\right ) := \lambda {g}_{\mu\nu}\,.
\label{HKV_g_def}
\ee
So that homothetic Killing vectors can be mapped to true Killing vectors or just conformal vectors with a non-constant $\lambda$, and the other way round. This is a first hint about the importance of the lapse in this analysis. 

In the following, we imagine having fixed the lapse and considering the homothetic Killing vectors of $g$, such that $\lambda$ is constant. The whole set of homothetic and true Killing vectors $\left \{ \xi_{(i)} \right\}$ form an algebra
\be
[\xi_{(i)}, \xi_{(j)}] = \left (\xi_{(i)}^\mu \nabla_\mu \xi_{(j)}^\nu  -\xi_{(j)}^\mu \nabla_\mu \xi_{(i)}^\nu \right )\partial_\nu = c_{ij}^{\;\;k} \xi_{(k)}\,,
\ee
where the coefficient $c_{ij}^{\;\;k}$ are the structure constants. It turns out that they are also solutions of the geodesic deviation equation (see \ref{app:5_HKV_prop} for the proof)
\be
p^\mu p^\nu \nabla_\mu \nabla_\nu \xi_\rho = - R_{\rho\mu\sigma\nu} p^\mu p^\nu \xi^\sigma \,.
\ee
With these vectors at hand we can consider the following phase space functions:
\be
C_{(i)} : = p_\mu \xi_{(i)}^\mu\,.
\ee
By construction we have that
\be
\{C_{(i)},H\} = p^\nu \nabla_\nu \left ( p^\mu \xi_{(i)\,\mu} \right ) = \f{\lambda_{(i)}}{2}  g_{\mu \nu} p^\mu p^\nu =\lambda_{(i)} H\,,
\ee
so that the following quantities are conserved
\be
\cQ_{(i)} =  C_{(i)} - \tau \lambda_{(i)} H\,,\q\q \dot \cQ_{(i)} =0\,, 
\label{linear_charges}
\ee
and thus they generate infinitesimal symmetries on the action corresponding to \eqref{minisup_mech_lagr}, given by the transformations on the phase space
\bsub
\be
\delta x^\mu = \{x^\mu, \cQ_i\} &= \xi_{(i)}^\mu - \lambda_{(i)}  \tau\, p^\mu = \xi_{(i)}^\mu - \lambda_{(i)}  \tau\, \dot x^\mu \,,\\
\delta p_\mu = \{p_\mu, \cQ_i\} &= -p_\nu \partial_\mu \xi_{(i)}^\nu+\f{\lambda_{(i)}  \tau}{2}\, p_\nu p_\rho \partial_\mu g^{\nu\rho}\,.
\ee
\esub
The interpretation is as follows: let us take a geodesic, with an affine parameterization. The transformation moves the point $x^\mu$ to a nearby geodesic, recalling that the conformal killing vectors are solutions of the geodesic deviation equation, and then along the new geodesic by $- \lambda_i t\, p^\mu$, that account for an eventual dilation of the spacetime generated by $\xi$ to obtain again an affine parameterization of the new solution. The algebra of the charges reproduces the Lie algebra of the vectors
\be
\{\cQ_{(i)},\cQ_{(j)}\} = -c_{ij}^{\;\;k} \cQ_{(k)}\,.
\ee
Once again we would like to stress the fact that the size of the algebra, or even its existence, depends on the choice of the conformal factor $N$, as the latter determines the number of homothetic killing vectors.

However, as already discussed in the case of cosmology, these charges happen to be non-local once we want to restore the time parametrization invariance, in the sense that they are still conserved but they depend on the history of the system through $\tau$.

Inspired by the symmetry that has been exhibited for the cosmological minisuperspace, we would like to extend the construction to charges that are quadratic in time. Unfortunately, to date, we are not able to find a straightforward generalisation, valid for any dimension of the field space. Only in the simple (and yet very relevant for gravitation) example of two-dimensional field space, we can say something more about the quadratic charges.

\subsection{Inclusion of a potential}

Before moving to the two-dimensional case we shall discuss what happens in the presence of a non-vanishing potential. Here the lapse comes again into the game. Let's use the following notation
\be 
N(q)\cU(q) = U(q) + U_0\,,
\label{potential splitting}
\ee
where $U_0$ captures the constant part of the potential (i.e. $\partial_q U_0=0$) and $U$ is the \textit{effective} potential, this separation of course depends on the lapse. The constant part doesn't play any fundamental role from the point of view of the algebra. If $U=0$, all the discussion above is still valid, except for the fact that we shift from null to massive or tachyonic trajectories. The constant piece provides indeed a boundary term for the corresponding action. We just need to replace the Hamiltonian in the algebra with the charge \footnote{This is exactly what is going to happen for the black hole algebra \cite{Geiller:2020xze}, as it is going to be discussed later in section \ref{sec2.3:BH_minisup}}
\be
\cQ_0 := H -U_0\,,\q\q \{\cO,H\} =\{\cO,\cQ_0\}\,, 
\ee
for any phase space function $\cO$. On the other hand, whenever there is a non-constant potential we should add a condition on the killing vectors
\be 
\pounds_\xi U = \xi^\mu\partial_\mu U \doteq -\lambda U\,.
\label{potential_condition}
\ee
In this case, we still have
\be
\{C_{(i)}, \cQ_0\} = \{C_{(i)}, \f{p^\mu p_\mu}{2} +  U \} = \f{\lambda_{(i)}}{2} p^\mu p_\mu - \xi^\mu \partial_\mu U = \lambda_{(i)} \cQ_0\,.  
\ee
So that again the quantities 
\be
\cQ_{(i)} = C_{(i)} -\lambda_{(i)} \tau \cQ_0 
\ee
are conserved. This set the stage for a quite puzzling quest: given a minisuperspace model we would like to find the lapse that gives the maximal algebra, provided that almost every step in the construction depends on the choice of the conformal factor. The number of homothetic vectors indeed depends on $N$, as well as the decomposition of the potential into its constant and no constant part, so finding a general solution to \eqref{potential_condition} is cumbersome.

Another method to deal with a non-trivial potential would be the  Eisenhart lift \cite{Cariglia:2015bla}, that consists in adding one dimension (named $y$) to the field space in order to map  the $n$-dimensional problem with potential to a $n+1$-dimensional free system. The Eisenhart lift is explicitly given by a new enhanced supermetric $G_{AB}$, where
\be
G_{\mu\nu} =  g_{\mu\nu}\,,\q G_{\mu, n+1} = 0\,,\q G_{n+1,n+1} = \df{1}{2 U}\,,\q\q x^A = (q^\mu, y)\,.
\ee
The null geodesic equation of $G_{AB}$ on the configuration space $(q^\mu, y)$ gives exactly the same equation of $H = p^\mu p_\mu + U$, as can be easily verified by computing the Hamiltonian equations for $H_\rm{Lift}= G^{AB} p_A p_B$ with $p_A =(p_\mu,p_y)$. In  particular the equation of motion for $p_y$ states that it is a constant. Once we pull back this into the equations for $q^\mu$ and $p_\mu$, these become equivalent to the original equations for the $n$-dimensional problem. 

It is possible to show that searching for the homothetic Killing vectors of the supermetric of the enlarged space is the same as finding a simultaneous solution of the rescaling equations \eqref{potential_condition} and \eqref{HKV_g_def} for the supermetric and potential, $\pounds_\xi g_{\mu\nu}  ={\lambda} g_{\mu\nu}$ and  $\pounds_\xi U  =-{\lambda} U$. Although this apparently look equivalent, it allows for vector fields $\xi$ depend on the new coordinate $y$ and have non-trivial components along $\partial_{y}$. Unfortunately, this new coordinate does not have a physical meaning and there is no way to extract non-trivial Dirac observable  from such a homothetic Killing vector on the Eisenhart-lift metric, so that this does not come into help for the systems we want to consider in the following.

\subsection{Two-dimensional field space geometries}
In the vast majority of the general relativity minisuperspaces (Bianchi cosmologies, black holes, etc..), because of the vector constraint, we must reduce the number of independent fields. At the end of the day, we will deal with two-dimensional dynamical space and a two-dimensional supermetric $g_{\mu\nu}$. What makes this case special is the fact that in two dimensions every metric is conformally flat, such that it exists a set of coordinates (on the field space) on which the metric takes the form \footnote{This is for a Lorentzian signature of $g_{\mu\nu}$, that is the relevant case for all the examples treated here}
\be
\de s^2_\rm{mini} = -2 e^{-\varphi} \de \tilde u \de \tilde v\,, \q\q  \cL = \f{1}{2} g_{\mu\nu} \dot q^\mu \dot q^\nu =  - e^{-\varphi} \dot{ \tilde u }\dot {\tilde  v}\,. 
\label{conformal_2d}
\ee
We focus for the moment on the free case, all the information is actually carried by the conformal factor (i.e. the lapse after an appropriate coordinate transformation). The structure of the homothetic killing vectors depends on the scalar curvature
\be
R= -2e^{\varphi} \partial_{\tilde u} \partial_{\tilde v} \varphi\,.
\ee
Whenever $\varphi$ is separable into functions depending only on one of the null directions, the field space is flat, and the conformal structure is inherited from the 1+1 Minkowski space\footnote{If $\varphi(u,v) = f(u) + g(v)$ we obtain the Minkowski space by the transformation that integrates
\[ \begin{pmatrix}
e^{-f(\tilde u)} \de \tilde u \\
e^{-g(\tilde v)} \de \tilde v 
\end{pmatrix} =
\begin{pmatrix}
\de u \\
\de v 
\end{pmatrix}
\]}
\be
\de s^2_\rm{Mink} = -2 \de u \de v\,, \label{Mink_de_s}
\ee
that has three independent killing vectors (two translation and one boost) and one homothetic vector, corresponding to the dilation of the plane. In null coordinates they have the form
\bsub
\be
\rm{time\, translation}&	& \xi_{(\rm t)}^\mu =&\{1,1\}			& \lambda_{\rm{t}}=&0 
&C_\rm{t} &=p_u + p_v\,,  \\
\rm{space\, translation}&	& \xi_{(\rm x)}^\mu =&\{1,-1\}			& \lambda_{\rm{x}}=&0
&C_\rm{x} &=p_u - p_v\,,  \\
\rm{boost}&					& \xi_{(\rm b)}^\mu =&\{u,-v\}			& \lambda_{\rm{b}}=&0
&C_\rm{b} &=u p_u -v  p_v\,,  \\
\rm{dilation}&				& \xi_{(\rm d)}^\mu =&\f{1}{2}\{u,v\}	& \lambda_{\rm{d}}=&1
&C_\rm{d} &=\f{u p_u + v p_v}{2}\,.
\ee
\label{HKV_mink}
\esub 
On the other hand if $R\neq 0$ the number of independent homothetic killing vectors varies on a case by case basis, e.g.
\begin{itemize}	
\item (A)dS$_2$ $\Rightarrow$ $\varphi = 2 \log(u\mp v)$, there are only 3 killing vector,
\be
\xi^\mu\partial_\mu =\left( \mathfrak{c}_1 u^2 + \mathfrak{c}_2 u +\mathfrak{c}_3\right )\partial_u + \left (\pm \mathfrak{c}_1 v^2 + \mathfrak{c}_2 v \pm \mathfrak{c}_3 \right )\partial_v \,,\q \pounds_\xi g_{\mu\nu} = 0\,,\q \mathfrak{c}_i = const\,,
\ee
the upper sign is for AdS, the lower one is for dS.
\item $\varphi = u^2 v$, there is only 1 homothetic killing vector,
\be
\xi^\mu\partial_\mu = \lambda (- u\partial_u + 2v \partial_v) \,,\q\q \pounds_\xi g_{\mu\nu} = \lambda g_{\mu\nu} \,.
\ee
\item $\varphi = u v$, there is only 1 killing vector,
\be
\xi^\mu\partial_\mu = - u \partial_u+  v\partial_v \,,\q\q \pounds_\xi g_{\mu\nu} = 0\,.
\ee
\end{itemize}
There are also examples without any killing vector. Let us concentrate on the cases where the conformal factor gives a flat space. In this situation, we can easily extend the algebra and find the quadratic charges. Indeed it turns out that when the Riemann tensor vanishes, the third time derivative of the following quantities is zero (\ref{app:5_HKV_prop}):
\be
V_{ij} := g_{\mu\nu} \xi_{(i)}^\mu \xi_{(j)}^\nu\,, \q\q R=0\; \Rightarrow\;\dddot{V}_{ij} =0\,.
\ee
The notation $V_{ij}$ and $C_{(i)}$ is in order to make contact with the kinematica CVH algebra, as we will illustrate at the end of this chapter. We recall that in the two-dimensional case the vanishing of the Ricci scalar and of the Riemann tensor are equivalent. With the Killing vectors \eqref{HKV_mink} at hand we can explicitly calculate the scalar products $V_{ij}$ and find the non trivial combinations
\be
\begin{array}{rcl}
V_{\rm{bt}}&=-2 V_{\rm{dx}}=&v-u\,,\\
V_{\rm{bx}}&=-2 V_{\rm{dt}}=&u+v\,,\\
V_{\rm{bb}}&=-4 V_{\rm{dd}}=&2 u v\,.\\
\end{array}
\ee
Together with the central element $V_{\rm{xx}}=-V_{\rm{tt}}=2$. By construction the second derivatives of the $V_{ij}$ is a constant and in particular the following charges are conserved for a constant potential:
\bsub
\be
\cQ_\rm{bt} &:=V_\rm{bt} + \tau\, C_\rm{x}\,\\
\cQ_\rm{bx} &:=V_\rm{bx} + \tau\, C_\rm{t}\,, 	\\
\cQ_\rm{bb} &:=V_\rm{bb} + 4 \tau\, C_\rm{d} -2 \tau^2\, H\,.
\ee
\esub
Together with the charges associated with the $C$'s functions, presented in the previous section, they form an algebra. This can be checked using the fact that the brackets with the Hamiltonian are $\lb C_{(i)},\cQ_0\rb=\lambda_{(i)}\cQ_0$ and
\be\label{Vij Q0 algebra}
\lb V_1,\cQ_0\rb=-C_\rm{x}\,,\q\q\lb V_2,\cQ_0\rb=-C_\rm{t}\,,\q\q\lb V_3,\cQ_0\rb=-4C_\rm{d}\,.
\ee
The charges $(\cQ_0,\cQ_\rm{t},\cQ_\rm{x},\cQ_\rm{b},\cQ_\rm{d})$ found previously and $(\cQ_\rm{bb},\cQ_\rm{bx},\cQ_\rm{bt})$ presented here form an 8-dimensional algebra isomorphic to the semi-direct sum $\big(\sl(2,\R) \oplus\R\big)\loplus\mathfrak{h}_2$, where $\mathfrak{h}_2$ is the two-dimensional Heisenberg algebra\footnote{Note that this algebra has four generators, which are analogous to the two positions and two momenta of a two-dimensional space obeying the Heisenberg commutation relations, the central element is provided by the identity on the phase space $V_{\rm{xx}}=-V_{\rm{tt}}=2$}. For this it is convenient to introduce the linear combinations
\bsub
\be
\label{sl_2+R} \cL_{-1} = -\cQ_0\,,\q\q \cL_{0} = \cQ_\rm{d}\,, \q\q \cL_{1} = \f{1}{2} \cQ_\rm{bb}\,, 
 \q\q \cD_0 = \cQ_\rm{b}\,,\ee
 which span $\sl(2,\R) \oplus \R $, as well as the four generators
 \be
\label{h_2} \cS_{1/2}^\pm = \f{\cQ_\rm{bx} \pm \cQ_\rm{bt}}{2}\,, \q\q \cS_{-1/2}^\pm = \f{\cQ_\rm{t} \pm \cQ_\rm{x}}{2}\,,
\ee
\label{charges}
\esub
which span $\h_2$. These definition can be repackaged in the more compact form
\bsub
\be
\cL_n 			&= \f{\ddot f}{4} V_\rm{bb} + \dot f C_{\rm d} - f \cQ_0  	
			&f(\tau) &= \tau^{n+1} 		& n & \in \{-1,0,1\}\,, \label{sl_nform}\\
\cS^\pm_s		&= \f{(\dot f V_\rm{bx} + f C_\rm{t}) \pm(\dot f V_\rm{bt} + f C_\rm{x})}{2}  					  	 						&f(\tau) &= \tau^{s+1/2} 	& s & \in \{-1/2,1/2\}\,.
\ee
\esub
They form the algebra
\bsub
\be
\{\cL_n,\cL_m\} &= (n-m) \cL_{n+m}\,,\\
\{\cL_n,\cD_0\}&=0\,,\\
\{\cS_s^\eta,\cS_{s'}^\epsilon\} &= 2 n \delta_{\epsilon+\eta} \delta_{s+s'}\,,\\
\{\cL_n,\cS_s^\pm\} &= \left (\f{n}{2}-s\right ) \cS_{n+s}^\pm\,,\\
\{\cD_0,\cS_s^\pm\} &= \pm \cS_{s}^\pm\,.
\ee
\label{charge_algebra}
\esub
The first line define the $\sl(2,\R)$ algebra, commuting with $\cD_0$. The third line is the Heisenberg algebra with central element  $V_{\rm{xx}}=-V_{\rm{tt}}=2$, and the last two lines show how the semidirect sum is built. The charges \eqref{charges} measure the initial condition of the corresponding phase space functions, i.e $\cQ_{ij}(\tau=0) = V_{ij}$ and $\cQ_{i}(\tau=0) = C_{i}$ meaning that the two sets have the same algebra, exactly as for the CVH kinematical algebra and the Noether charges \eqref{noether_charges_FLRW} in homogeneous cosmology.

We stress here that the existence of the symmetry and the charges in a local form is connected with the lapse. Choosing the lapse in a way that the supermetric is flat makes indeed the quantities \eqref{charges} to be conserved, where we read $\tau$ as the gauge fixed time. This is always possible in the free case, and there is actually an infinite set of choices of clocks for which $N$ satisfy the flatness condition.  On the other hand, if $N$ is such that the supermetric is curved, the charges \eqref{charges} still exist but can be only partially recast in a local form, depending now on the conformal structure of the supermetric.

This procedure, which we have outlined here in the two-dimensional case, can be extended to a field space of arbitrary dimension, although it is of course not guaranteed to then lead to the same results. If the field space is non-flat and higher-dimensional we are indeed not guaranteed that homothetic Killing vectors exist, nor that the observables $C_{(i)}$ and $V_{(ij)}$, if they exist, form a closed algebra with the Hamiltonian. We give in appendix \ref{app:5_HKV_prop} the equation \eqref{condition on xi's} which has to be satisfied by the Killing vectors in order for this to be the case. Although this equation has a simple form, one cannot find an a priori condition on the field space geometry (and on the potential if this latter is non-trivial) which guarantees the existence of an algebra on phase space. With the construction presented here, we have however the tools to investigate this on a case by case basis.


\paragraph{Finite transformations}

Looking at the conserved quantities as Noether charges, we can integrate the infinitesimal action of the 
$\cQ$ on the phase space to reconstruct the corresponding symmetries. This is the inverse construction to the one presented for cosmology, therein we have started from the symmetry to obtain the charges, here we already have the conserved quantities at hand and we try to reconstruct the symmetries. The finite transformation on the fields $(u,v)$ is obtained by exponentiating the action of the charges, given by the Poisson bracket with the generator, replacing then the canonical momenta in terms of the first derivatives of the fields. For example $\tilde u = e^{\{\bullet,\cQ\}} \triangleright u$.

The first part \eqref{sl_2+R} gives the scaled M\"obius transformations, represented by the $\R \times \SL (2, \R)$ group acting on the null coordinates as
\be
\label{mobius_sym} 
\begin{pmatrix}
u\\
v\\
\end{pmatrix}\mapsto \dot f^{1/2}
\begin{pmatrix}
k\, u(\tau)\\
v(\tau)/k\\
\end{pmatrix} \circ f^{(-1)}(\tau)=  \f{1}{c \tau + d}
\begin{pmatrix}
k\, u(\tau)\\
v(\tau)/k\\
\end{pmatrix}\circ f^{(-1)}(\tau)\;\;\text{where}\;\; f(\tau)=\f{a \tau + b}{c \tau + d}\,.
\ee
For constants coefficients $k$, $a$, $b$, $c$, $d$, such that $ad-bc=1$. The null variables transforms as conformal fields of weight $1/2$ form the point of view of M\"obius transformations. For the Heisenberg algebra \eqref{h_2} the transformations actually exponentiate into the abelian group $\R^4$:
\be 
\label{abelian_symm}
\begin{pmatrix}
u \\
v \\
\end{pmatrix}\mapsto 
\begin{pmatrix}
u + \alpha \tau + \beta \\
v + \gamma \tau + \delta \\
\end{pmatrix}\,.
\ee
A straightforward calculation leads to verify that these are indeed symmetries for the action 
\be
\cS= \int \de \tau \dot u \dot v\,.
\ee
Applying Noether's theorem, as shown for cosmology, provides exactly the charges $\cL_n$, $\cS^\pm_n$, $\cD_0$ as conserved quantities. The Lie algebra of infinitesimal transformations can be written in the compact form:
\bg \label{null_conf_infinit_transf}\left |
\begin{array}{rl}
\delta u = - f \dot u + \df{1}{2}\dot f u + k u+ g(\tau)\\[9pt]
\delta v = - f \dot v + \df{1}{2}\dot f v - k v + h(\tau)\\
\end{array} \right .\,,\\
\big[\delta(f_1, g_1, h_1, k_1 ),\delta(f_2,g_2,h_2,k_2)\big]_{\rm{Lie}} = \delta(f_{12},g_{12},h_{12},k_{12})\,,\notag
\eg
with
\bsub
\be
f_{12}&= \dot f_1 f_2 -(1\leftrightarrow 2)\,,\\
k_{12}&=0\,,\\
g_{12}&= -\f{1}{2} g_1 \dot f_2 + f_2 \dot g_1 -k_1 g_1 -(1\leftrightarrow 2)\,,\\
h_{12}&= -\f{1}{2} h_1 \dot f_2 + f_2 \dot h_1 +k_1 h_1 -(1\leftrightarrow 2)\,.
\ee
\label{Lie_commutator}
\esub
The $\sl(2,\R)$ charges are associated to the function $f$, that is at most a two degree polynomial in time, while the abelian transformation are generated by the linear functions $g$, $h$. This is equivalent to the charge algebra \eqref{charge_algebra}, up to the Heisenberg central extension, that is absent on the Lie algebra level, and is replaced by the commuting functions $g$ and $h$. Each $\cL_n$ generate the $\tau^{n+1}$ coefficient in the function $f$, while $\cS^-_n$ generates the $\tau^{n+1/2}$ coefficient of $h$, and $\cS^+_n$ generates the $\tau^{n+1/2}$ coefficient of $g$. The symmetry group is isomorphic to $\left (\SL(2,\R) \times \R\right ) \ltimes \R^4$, where the semidirect product is built according to the last two lines of \eqref{Lie_commutator}, that shows how the $\left (\SL(2,\R) \times \R\right )$ generators $(f(\tau),k)$ act on $\R^4$.

\paragraph{Size of the algebra}
The previous construction consists in translating the symmetry properties of the supermetric, encoded into the Killing vectors, into a dynamical symmetry. To them there correspond a set of conserved charges, whose number is equal to the dimension of the symmetry group. At the same time, we have only a four-dimensional phase space $\{u,v,p_u,p_v\}$, so that not all the charges are independent. We have indeed the following identities:
\bsub
\be
\cL_n &= \sum_{s=-1/2}^{1/2} s \left (\f{n}{2}+s\right ) (\cS^+_s \cS^-_{n-s}+\cS^-_s \cS^+_{n-s})\,,\\
\cD_0 &=  \sum_{s=-1/2}^{1/2} 2s\,\cS^-_s \cS^+_{-s}\,.
\ee
\label{identities}
\esub
The previous identities could also be interpreted differently: the fact that all the generators are quadratic or linear in the Heisenberg algebra charges makes the four $\cS^\pm_n$ as the fundamental building blocks for the symmetry transformations. They are conserved and provide the four necessary integration constants. Any (non-linear) combination of them will be trivially conserved on the trajectory and will generate a symmetry transformation by calculating its Poisson bracket.

The presence of the Heisenberg algebra plays also a crucial role in the quantization. Promoting the Poisson bracket to the commutator in the usual quantization procedure will provide an extremely easy way of quantizing the system, such that the symmetry is faithfully represented. 

Furthermore we can consider two other set of transformations that are quadratic in the $\cS^\pm_n$. We define them as
\be
\label{trans_gen_null}
\cT^\pm_n = \sum_{s=-1/2}^{1/2} s \left (\f{n}{2}+s\right ) \cS^\pm_s \cS^\pm_{n-s}\,,\q n\in \{0,\pm 1\}\,.
\ee
They form an algebra with the other quadratic\footnote{Here \textit{quadratic} is referred to the power of $\cS$, they can be both linear or quadratic in time} charges:
\bsub
\be
\{\cL_n,\cT_m^\pm\} &= (n-m) \cT^\pm_{n+m}\,,\\
\{\cD_0,\cT_n^\pm\}&= \pm \cT_n^\pm\,,\\
\{\cT_n^+,\cT_n^-\} &= \f{n-m}{2} \cL_{n+m} -\f{m^2-mn+n^2-1}{4} \cD_{0}\,,\\
\{\cT_n^+,\cT_n^+\} &= 0=\{\cT_n^-,\cT_n^-\}\,.
\ee
\label{charge_algebra_2}
\esub
In particular the subset of $\cL$ with one of the two sets $\cT^+$ or $\cT^-$ form a closed subalgebra, isomorphic to the Poincar\'e algebra in 2+1 dimensions. hereafter denoted by $\ISO(2,1) = \SL(2,\R) \ltimes R^3$. Exponentiating the Poisson bracket we can read the corresponding finite symmetry, these are given by two sets of coupled abelian transformations
\be 
\label{abelian_symm_quadratic}
\begin{pmatrix}
u \\
v \\
\end{pmatrix}\mapsto 
\begin{pmatrix}
u + 2 g \dot v - v \dot g  \\
v + 2 h \dot u - u \dot h  \\
\end{pmatrix}\,,
\ee
for second-degree polynomials in time $g$ and $h$.

\paragraph{Inclusion of a potential}

With the explicit expression for the killing vectors at hand, we can dig here a little further the discussion about the choice of the lapse to get the maximal algebra. Let us imagine that we have managed to write the supermetric of a certain minisuperspace in the null conformally flat gauge \eqref{conformal_2d}. In particular let us set\footnote{Here the lapse is left explicit and only the $\tilde g$ metric is recast into the conformal null form, on the contrary to the free case above \eqref{conformal_2d}}
\be
\de s^2 &= \f{1}{N(q)} \tilde g_{\mu\nu} \de q ^\mu \de q^\nu = \f{1}{N[q(\tilde u,\tilde v)]} e^{-\varphi} \de \tilde u \de \tilde v \,,\label{conf_2d_potential}\\
\cL &=  \f{1}{2N} e^{-\varphi} \dot {\tilde u} \dot{ \tilde v} +N \cU\,.\notag
\ee
We can follow two different directions:
\begin{enumerate}
\item  We set $N=1/\cU$. As discussed previously, the potential does not play any role on the dynamical level, except for the fact that restricts the solution to represent trajectories of massive (or tachyonic) particles, and we can discuss the symmetries as in the free case, but now the lapse is fixed. 
\begin{enumerate}
\item If  $\cU e^{-\varphi}$ has the form of the product of functions depending separately on the null directions, we inherit the symmetries from the conformal structure of the flat 2d Minkowski spacetime.
\item If  $\cU e^{-\varphi}$ do not satisfy the flatness condition we must evaluate the existence and the size of the symmetry case by case. If we are in this situation is better to proceed in the other direction \textit{(ii)}.
\end{enumerate}
\item  We choose $N$ such that the supermetric is flat, and the non-constant part of the potential does not spoil the existence of algebra, satisfying \eqref{potential_condition}. 
\end{enumerate} 
The application of this procedure to black holes will be the subject of the next section, we spend there also some words about the Bianchi models. In general, we will see that in case \textit{(ii)} it is not possible to preserve the whole set of Killing vectors, but the case  \textit{(i.b)} is even worse, being without any killing vector. In the following we neglect it and focus only on the case \textit{(i.a)} and \textit{(ii)}. 

We should remark that the (conformal) Killing vectors of the supermetric and the potential are not related to the isometries of the original spacetime description,. We will see in the following the charges presented in this chapter generate a set of physical symmetries that changes the value of the integration constant, allowing to move into the solution space of the theory, while the spacetime diffeomorphism simply change the coordinates on the same solution. There is of course a one to one correspondence between the minisuperspace model, with its internal isometries ($e.g.$ spherical symmetries for black holes) and the supermetric, it we can imagine to build a dictionary between the isometry group of the homogeneous slice and the group of dynamical symmetries, although the logic behind it is still unclear.

We will see that we are going to be able to cast directly the supermetric in a form with $\varphi = 0$ so that the whole game is played by the lapse and the potential. In this case, the algorithm that we use is as follows:
\begin{itemize}
\item \textbf{Step 0}. We have managed to rewrite the Lagrangian and the supermetric as \eqref{conf_2d_potential} with $\varphi =0$.
\item \textbf{Step 1}. We look at the form of the potential in terms of $\tilde u$ and $\tilde v$: if it is separable in functions depending on only one direction, we can choose $N=1/\cU$ and by construction, we have that $ \cU \de \tilde u \de \tilde v$ is flat and there is a new change of variables that put it into the Minkowski form \eqref{Mink_de_s}.

We inherit the symmetry structure of the flat 2d free case and using backwards the map from $u,v$ to $\tilde u, \tilde v$ and then to the original configuration space, we find the expression for the local conserved charges. The algorithm stops here.

\item \textbf{Step 2}. The potential does not satisfy the flatness condition. In this case, we choose $N$ such that the supermetric is flat and hopefully we set part of the potential to a constant, such that, as in \eqref{potential splitting}
\be
N \cU = U + U_0\,.
\ee
We can again find the new change of variables that put the supermetric into the Minkowski form \eqref{Mink_de_s},  but we need now to discuss the condition \eqref{potential_condition} on $U$.
\end{itemize}

In particular, in the following, it turns out to be useful to study the case in which $U = u^n v^m$, where $u,v$ are the null Minkowski coordinates, for which the supermetric is \eqref{Mink_de_s} (this can be easily generalised to a general function, that can be expanded in a series of monomials). We have three possibilities:
\begin{enumerate}
\bsub
\item $n=0$, $m\neq 0$. There are two vector satisfying $\pounds_\xi U = \lambda U$  \eqref{potential_condition}:
\be\xi = \xi_\rm{t}+\xi_\rm{x}\,,\q\q \xi = (2+m)\xi_\rm{b}+2m\xi_\rm{d}\,,\ee
the first one is a true Killing vector with $\lambda =0$, while the second one has $\lambda = 2 m$.
\item $m=0$, $n\neq 0$, it is the complementary case to the previous one, so that we have the vectors
\be\xi = \xi_\rm{t}-\xi_\rm{x}\,,\q\q \xi = (2+n)\xi_\rm{b}-2n\xi_\rm{d}\,,\ee
with respectively $\lambda=0$ and $\lambda = -2n$.
\item $n,m\neq 0$, we have only one vector 
\be (2+n+m)\xi_\rm{b}+2(m-n)\xi_\rm{d}\,,\q \lambda= 2(m-n)\,.\ee
\esub
\end{enumerate}
This gives the relation between the potential and the existence of the linear charges \eqref{linear_charges}, on the other hand for the quadratic charges to do not spoil the algebra we need to have closed brackets between the products $\xi_{(i)}^\mu \xi_{(j)\,\mu}$ and the $C_{(i)}$'s. Let us focus on the third case ($n,m\neq 0$) that will be relevant in the following. We have only one homothetic Killing vector, thus 
\bsub
\be
\xi^2 &= (2+n+m)^2 \xi_\rm{b}^2 -4(m-n)^2 \xi_\rm{d}^2 = 4(1+m)(1+n) V_\rm{bb}\,,\\
p_\mu \xi^\mu &= (2+n+m) C_\rm{b} -2(m-n) C_\rm{d}\,, \\
\{p_\mu \xi^\mu, \xi^2\} &= 2(n-m)\xi^2\,,\\
\{p^\mu p_\mu, \xi^2\} &= 16(1+m)(1+n) C_\rm{d}\,,
\ee
\esub
so that the last bracket is proportional to $p_\mu \xi^\mu$, and $\xi^2$ is non-trivial if and only if $2+n+m=0$. In this case the algebra of the charges reduces to its $\sl(2,\R)$ subset. This gives exactly the two dimensional equivalent of the conformal potential $1/q^2$, discussed at the beginning of this chapter. 

For a two dimensional minisuperspace, whenever we have access to the full algebra and in particular to the $\cS^\pm$ generators, we can fully integrate the motion on the phase space. The $\cS^\pm$ are always independent and provide the four necessary constant integration, and their flow under the Hamiltonian is easily obtained in terms of group action. Conversely, if we are left with the $\sl(2,\R)$ sector alone, we can integrate only half of the phase space variables evolution.

It is now time to see how this construction applies to the black hole minisuperspace.

\section{Black holes minisuperspaces}
\label{sec2.3:BH_minisup}

Black holes are one of the most fascinating predictions of General Relativity, but at the same time are inevitably associated with singularities, signals of the breakdown of the theory. It is a common expectation that approaching the classical singularity, in a region where the curvature become Planckian, quantum effects become relevant and that in a complete quantum description of spacetime singularities must be replaced by a unitary evolution through a fuzzy geometry. Hope is that with the increasing experimental ability, by gravitational wave detection \cite{Abbott:2016blz} or black hole imagery \cite{EventHorizonTelescope:2019dse}, black holes will provide an experimental probe for quantum deformations of spacetime. 

Any quantum gravity theory should then provide some information about the fate of black hole singularity. Unfortunately, to date, for any known quantum gravity
theory, this remains an outstanding challenge, there is no straightforward way to determine the physical quantum states representing black holes, even for the simplest case of spherically symmetric geometries.

It is nonetheless possible to incorporate some features of the full theories into reduced symmetry models, and this is why it is worthy to dig a little further into the structure of black hole minisuperspaces. 

In the first chapter \ref{sec1.2:symm_reduc} we have already introduced the model and its mechanical action. We will discuss here the corresponding solution to show that the model describes well the black holes and then we will apply the previous construction to build the symmetry algebra.

\subsection{Classical solution}
We set in a first place the cosmological constant to zero. Exhaustive discussions to how the latter enters the game can be found in \cite{Achour:2021dtj}, and will be briefly reviewed in the next chapter. We recall that the black hole minisuperspace is characterised by the line element
\be
\de s^2_\rm{BH} = -N^2 \de t^2 + a^2 (\de x + N^1 \de t)^2 +L_s^2 b^2\, \de \Omega^2\,,
\label{Bh_minisup_line}
\ee
where the $x$ coordinate is constrained into the finite interval $[0,L_0]$. The corresponding action (see \eqref{KS_action_ab}) takes the minisuperspace form \eqref{minisup_mech_lagr}, where the supermetric and the potential are given by
\be
\de \tilde s^2 = -\f{\cV_0}{G} \left (4 a\, \de b^2 +8 b\, \de a\, \de b\right )\,, \q\q \cU = -2\f{\cV_0}{G} \f{a}{L_s^2}
\ee
and the field space is parametrized by $a$ and $b$. We recall that in order to study the symmetry of the model in the terms presented in the previous section we need to recast the field space in the  null conformal parametrization \eqref{conformal_2d}. This allow to study the potential and see if choosing $N \propto 1/\cU$ gives a flat or curved spacetime.
For this we study the $\de \tilde s$ line element, without lapse, and consider the change of variables
\be
\label{family_1_conf}
\left|
\begin{array}{rl}
\tilde u &= 2  \sqrt{\df {2\cV_0}{3G}}\, a \sqrt{b} \\
\tilde v &= 2 \sqrt{\df{2\cV_0}{3G}}\, b^{3/2} \\
\end{array} \right.\,,\q\q \de \tilde s^2 = -2 \de \tilde u \de \tilde v\,.
\ee
As announced it is flat and $\varphi=0$. However, to discuss the existence of an algebra encoding the dynamics of the minisuperspace and the equivalent realization of the symmetries \eqref{mobius_sym}, \eqref{abelian_symm} and \eqref{abelian_symm_quadratic} we shall write the potential in terms of the conformal null coordinates. It turns out that the latter takes the form of a product of two functions depending only on one null direction:
\be
\cU = -\f{3^{1/3}\cV_0}{L_s^2 G^{2/3}} \tilde{u} \tilde{v}^{-1/3}\,.
\ee
It is then smart to take $N =\df{\tilde{v}^{1/3}}{3^{1/3} G^{1/3}\tilde{u}} = \df{1}{2a}$, so that the potential becomes a constant 
\be 
U_0 = -\f{\cV_0}{G L_s^2} = -\f{L_0}{16\lp^2}\,,
\ee
using the expression for the fiducial volume \eqref{fiducial_volume_KS} and the Planck length $\lp^2=G$. As we have just discussed, the lapse choice will correspond to a flat supermetric, and the system will inherit the whole $\left (\SL(2,\R) \times \R\right) \ltimes \R^4$ symmetry group. We should stress that the choice of lapse is not exactly the inverse of the potential, but it differs form a dimensionful prefactor, because we want to keep $N$ dimensionless, while the Hamiltonian as an inverse length ad $\tau$ as a time.

Let us now rapidly discuss the equation of motion corresponding to the gauge fixed time. We will stick for the moment to the $a,b$ coordinates on the field space. The gauge fixed effective Lagrangian and the corresponding Hamiltonian are given by 
\be
\label{Bh_lagran} 
\cL =  -\f{\cV_0}{G}\left (4 a^2\, \dot b^2 +8 b a\, \dot a\, \dot b\right )\,, \q\q H=\f{G}{\cV_0} \f{p_a(a p_a -2 b p_b)}{16 a b^2}\,.
\ee
The action principle from $\cS_0 = \int \de \tau \cL$ will give the Euler-Lagrange equation \eqref{Eom_BH_minisup}, with fixed lapse. Equivalently we can use the Hamilton equation from the Hamiltonian above, to whom we shall add the energy constraint $H=\df{\cV_0}{G L_s^2} = \df{L_0}{16\lp^2}$. Conversely, we can also study directly the equation of motion in the null conformal gauge. According to what is presented in the previous section we search for the field redefinition that recast the supermetric into the two dimensional  Minkowski metric. This is for example provided by
\be
\label{conf_to_null_bh}
\left|\begin{array}{rl}
u &= \df{3}{4}\sqrt{\df{G}{\cV_0}}  \tilde u^2    =2 \sqrt{\df{\cV_0}{G}} a^2 b	  \\[9pt]
v &= 3^{1/3}\left (\df{\cV_0}{G}\right )^{1/6} \tilde v^{2/3}  =2 \sqrt{\df{\cV_0}{G}} b  \\
\end{array}\right .\,.
\ee
This makes the equation of motion extremely simple. In terms of Minkowski supermetric they are given by tachyonic geodesics with mass $-H=2 \dot u \dot v =-\df{\cV_0}{G L_s^2}$:
\bsub
\be 
u(\tau) &= u_0 - \sigma \sqrt{\f{\cV_0}{2G L_s^2}} (\tau -\tau_0) \,,\\
v(\tau) &= \sqrt{\f{\cV_0}{2G L_s^2 \sigma^2}}  (\tau -\tau_0) \,.
\ee
\label{Bh_sol_conf}
\esub
From this we can infer the solution for the scale factors $a,b$, and the respective momenta
\bsub
\be
a &= \left(-\f{\cV_0^2}{2\cA L_s^2 G^2} + \f{ \cB \cV_0}{\cA (\tau - \tau_0) G}\right )^{1/2} \label{Bh_sol_a}\,,\\
b &=\f{G \sqrt{\cA}}{\sqrt{2}\cV_0} (\tau- \tau_0)\,,\label{Bh_sol_b}
\\
p_a &=-8\f{\cV_0}{G}\left ( b a\, \dot b\right ) 
= -\left (8 \cA (\tau - \tau_0) \f{\cV_0(\tau_0-\tau) + 2\cB L_s^2 G}{L_s^2 \cV_0} \right )^{1/2}\,,\\
p_b &=-8\f{\cV_0}{G}\left ( a^2\, \dot b + b a\, \dot a\right )
= -2\sqrt{2} \cV_0 \f{\cV_0(\tau_0 -\tau) + \cB L_s^2 G}{\sqrt{\cA} L_s^2 (\tau - \tau_0) G^2}\,.
\ee
\label{Bh_sol}
\esub
The energy constraints already imposes one initial condition, the three other integration quantities are $\cA$, $\cB$ (or equivalently $u_0$, $v_0$) and $\tau_0$. The last one simply reflect the residual time translation freedom, while the first can be defined as phase space functions
\be
\cA=\f{\cV_0^2}{L_s^2 G^2 \sigma^2} =\f{p_a^2}{32 a^2 b^2}\,, \q\q \cB=\f{\sqrt{ \cV_0}}{L_s \sqrt{2G} \sigma} u_0 =\f{b p_b -a p_a}{2}\,. 
\ee
Without surprise they both commute with the Hamiltonian $\{\cA,H\} =0=\{\cB,H\}$. Let us insert this solution into the line element \eqref{Bh_minisup_line}, and consider the change of variables on the space time coordinates
\be
\tau -\tau_0 =\f{\sqrt{2} \cV_0}{L_s G \sqrt{\cA}} R\,,\q\q x= \f{L_s G \sqrt{2 \cA}}{\cV_0} T\,,
\label{change_classic}
\ee
we then get the static spherically symmetric black hole metric 
\be
\de s^2_\rm{BH} &= -f(R) \left (\de T +\f{N^1\cV_0^2}{G^2 \cA L_s^2} \de R\right )^2 + f(R)^{-1} \de R^2 + R^2 \de \Omega^2\,,\notag\\
f(R) &=1- \f{2M}{R}\,,\q\q M= \f{\sqrt{\cA} L_s^3 G^2 \cB}{\sqrt{2} \cV_0^2}\,.
\label{classic mass}
\ee

A note should be made here about the range of validity of the model. We have started from the line-element \eqref{Bh_minisup_line}, for which we need real values of the fields $a,b, N, N^1$. This happens only for $\tau \in [\tau_0,\tau_1]$ so that $R$ is timelike, and the solution describes the black hole interior. On the other hand, via the transformation \eqref{conf_to_null_bh}, and the gauge fixing we can extend the range of validity of the model, similar to what happens for the change of coordinates from Schwarzschild to Kruskal charts. Inverting the transformation \eqref{conf_to_null_bh}, gauge fixing the lapse and inserting it into the line element \eqref{Bh_minisup_line}, gives:
\be
\de s^2_\rm{BH} = -\f{v(\tau)}{4u(\tau)} \de \tau^2 + \f{u(\tau)}{v(\tau)} (\de x + N^1 \de \tau)^2 + \f{G L_s^2}{4 \cV_0} v(\tau)^2 \, \de \Omega^2\,.
\label{Bh_minisup_line:null_conf}
\ee
With this parametrization, the $g_{xx}$ coefficient can change its sign. This is translated into the fact that the solution can be extended beyond the horizon of the black hole. Furthermore,  from the point of view of mechanical model describing the evolution in the configuration space spanned by the null fields, nothing special happens when the values of fields are zero. The classical mechanical evolution smoothly crosses not only the point representing the horizon but also the one that represents the singularity. For $\tau<\tau_0$, corresponding to a negative radius, we recover a white hole solution (with a negative effective mass). Extending the time $\tau$ on the whole real line covers an extension of the black hole solution of Einstein's equations, which however differs from the Kruskal–Szekeres one. The subtlety is that this solution is not connected in the sense that there is a hole in the conformal Penrose diagram. The singularity appears to be spacelike approaching from the positive radius and timelike if we arrive from $R<0$ (see figures below), this is synonymous with a naked singularity on the white hole side, that moreover is behind the singularity and not attached to the past asymptotically flat region $\cJ^-$ as in the usual Kruskal–Szekeres coordinates. This also results in an effective negative mass on the white hole side. We will come back to this point many times in this thesis, and we will see how the introduction of a regularisation capturing the quantum effects can erase the discontinuity in the Penrose diagram.

Let us go back to the line element \eqref{classic mass}. It is parametrized with a radial coordinate $R$ and a retarded coordinate $T$. This means that the mechanical model under consideration describes the evolution of the metric coefficients with respect to the coordinate Schwarzschild radius. The evolving parameter $\tau$ runs backwards from the point of view of an infalling observer. The choice of retarded coordinate depends on the shift, for example:
\begin{itemize}
\item $N^1=0$ gives the Schwarzschild coordinates,
\item $N^1=\df{1}{2 a^2} = -\df{v}{2u}$ gives the Eddington-Finkelstein coordinates,
\item $N^1= \left (\df{a p_a - b p_b}{2 a^4 (a p_a -2 b p_b)}\right )^{1/2} =\left (\df{v^2 (p_v v-p_u u)}{4 u^2 p_v v}\right )^{1/2}$ gives the Gullstrand-Painlev\'e coordinates.
\end{itemize}
In the following figures \eqref{fig:coord_level} we draw the lines at constant $x$ (or retarded time $T$) for the three choices of shift.

\vbox{
\begin{center}
\small{(a)}\begin{minipage}{0.4\textwidth}
		\includegraphics[width=\textwidth]{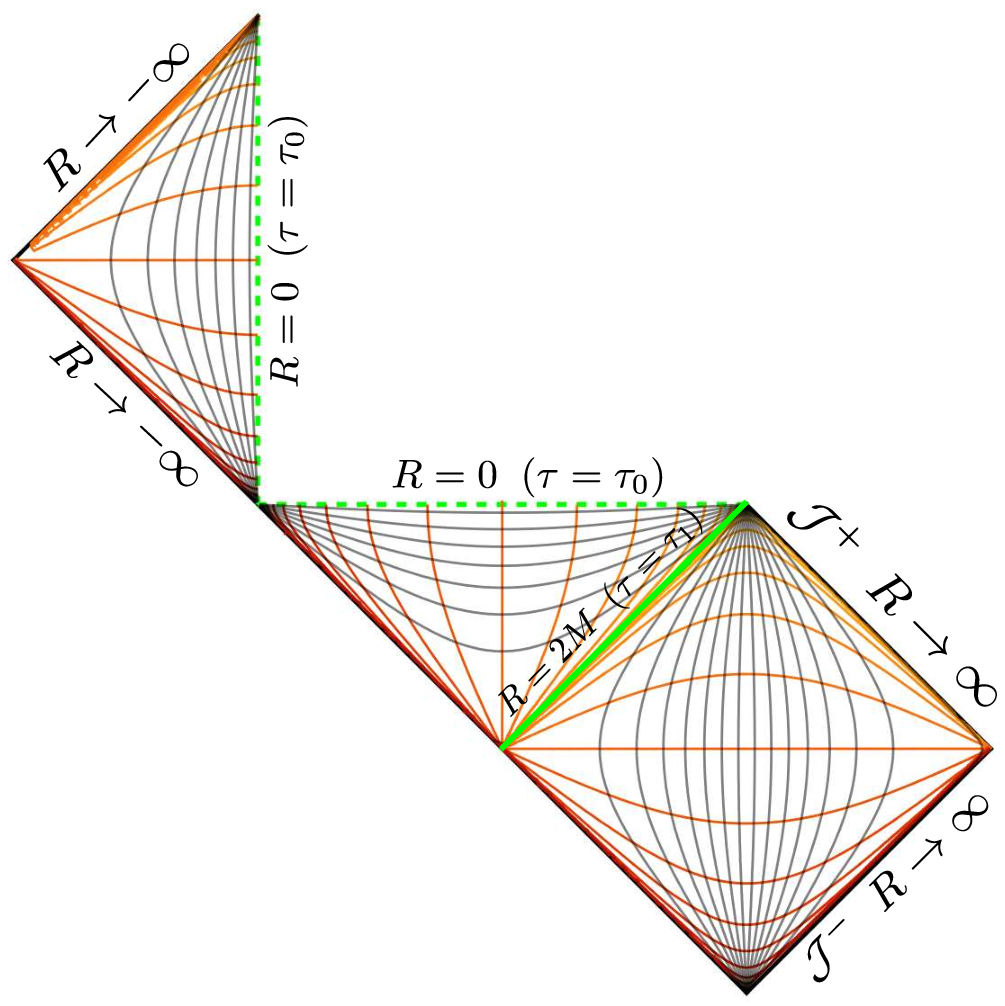}
\end{minipage}
~\hspace{0.5cm}
\small{(b)}
\begin{minipage}{0.4\textwidth}
		\includegraphics[width=\textwidth]{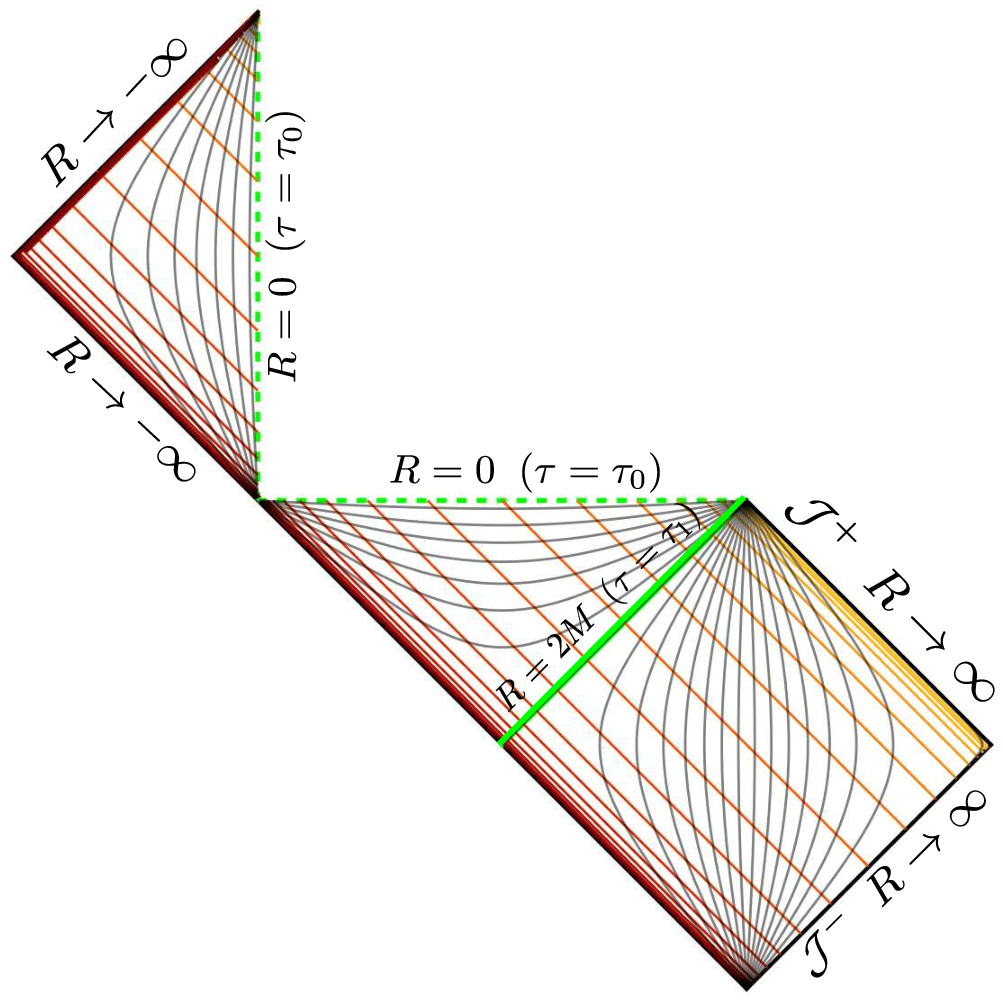}
\end{minipage}\\
\small{(c)}\begin{minipage}{0.4\textwidth}
		\includegraphics[width=\textwidth]{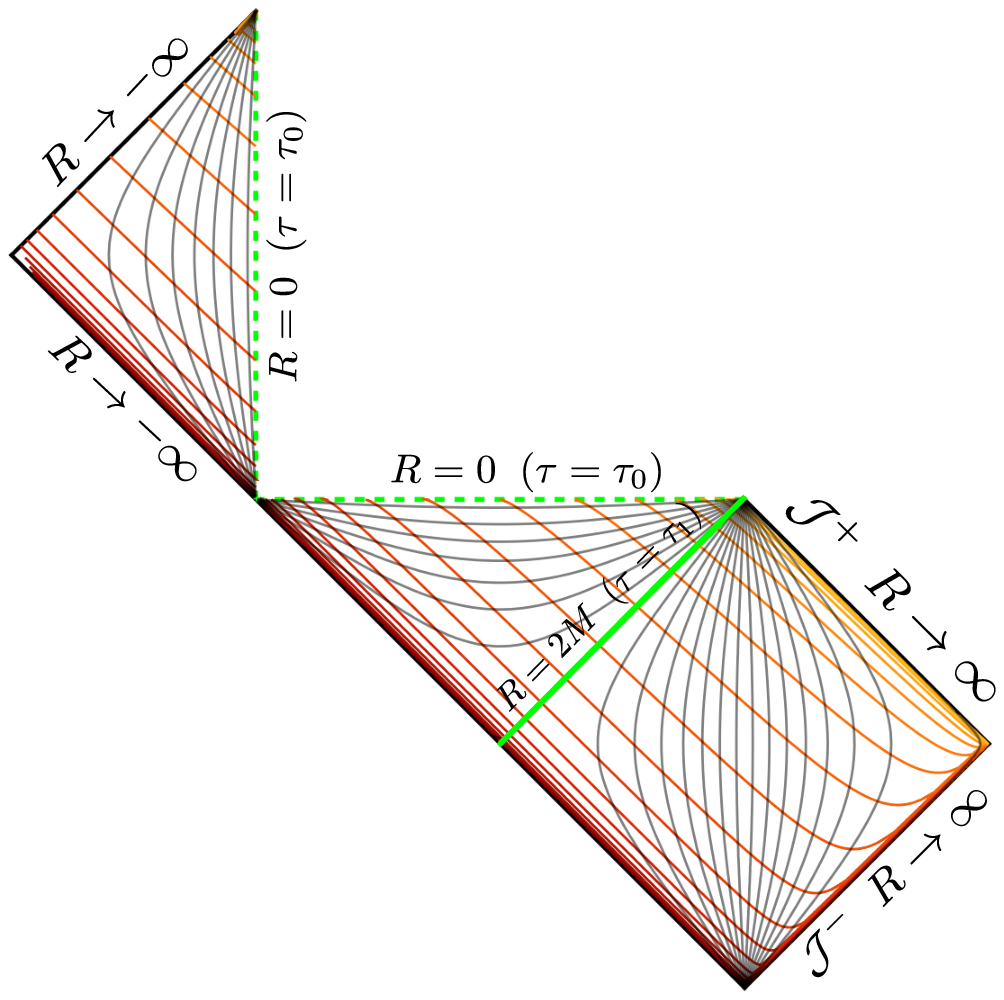}
\end{minipage}\hspace{0.5cm}
\begin{minipage}{0.5\textwidth}
\captionof{figure}{\small{In these figures we have drawn the lines at constant $\tau \sim R$ and $x \sim T$. The first one are represented in black, and they deinfine the homogeneous slices labelled by the evolution parameter of the model $\tau$. In the leftmost figure (a) we have the line at constant Schwarzschild time, the color of the lines goes from a darker one for negative values of $T$, and gets lighter when it grow. We see that all the slices are well spanned by the $T$ coordinate, both in the interior and the exterior, with the exception of the horizon. The central figure (b) represent the case in which the shift select a null Eddington-Finkelstein coordinate, while the rightmost one is the raindrop time. For the cases (b) and (c) the coordinate $T$ is well defined also on the horizon. We recall that each point on the diagram actually represents a two-sphere, spanned by the remaining angular coordinates $\de \Omega$ in \eqref{Bh_minisup_line}. }\label{fig:coord_level}}
\end{minipage}
\end{center}}

The fact that the metric is well defined at the horizon depends on the choice of shift. We define for what follows the point of the evolution that represents the horizon as 
\be
\tau_1 = \f{2 L_s^2 G \cB}{ \cV_0} + \tau_0\,.
\label{horizon_time}
\ee
The homogeneous hypersurface labelled by $\tau=\tau_1$,  spanned by $x$, $\theta$ and $\phi$ is always the horizon of the black hole, no matter the shift that we choose, as long as it does not diverge faster than $(\tau-\tau_1)^{-1}$ close to the horizon.

In the figures \ref{fig:kruskal_level} we schematically draw the levels of the functions $a,b,u,v$ in terms of $R$ on a Kruskal diagram. During the evolution, $\tau = \tau_0$ corresponds to the true singularity at $R = 0$, where $v$ vanish, while at the horizon $R = 2M$ it $u$ that vanishes. In terms of $a,b$ we have at the singularity a vanishing $b$ and a divergent $a$, while only the latter vanishes at the horizon.  From the figures \ref{fig:kruskal_level} and the equations \eqref{Bh_sol_conf} and \eqref{Bh_sol} we also see explicitly that $a,b$ describe only the interior of the black hole, for $\tau \in [\tau_0, \tau_1]$, because $a$ is well defined only therein, while $u,v$ are well defined all along the real evolution line $\tau \in \R$.

\vbox{
\begin{flushleft}
\small{(a)}\begin{minipage}{0.30\textwidth}
		\includegraphics[width=\textwidth]{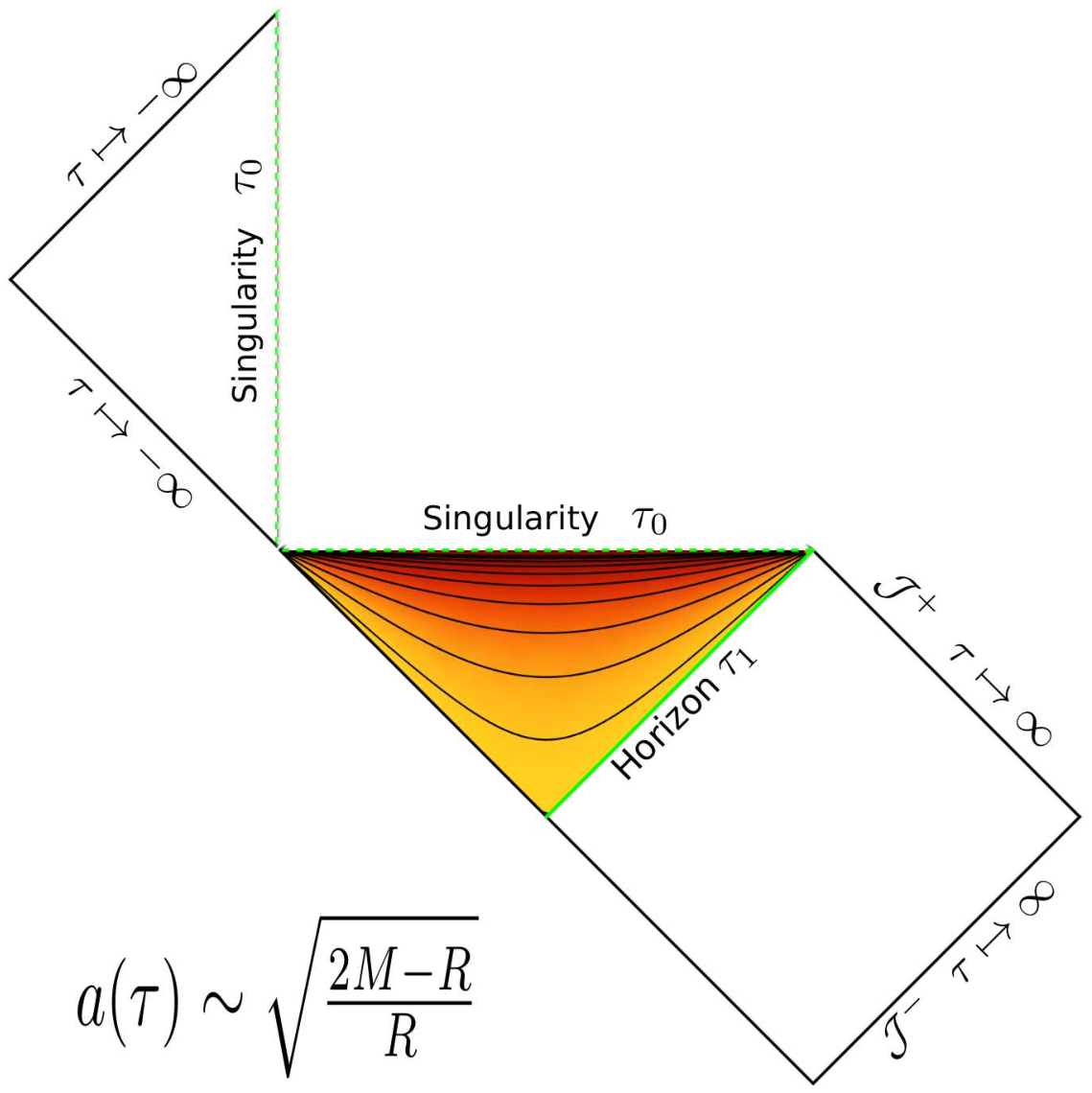}
\end{minipage}
~\hspace{2cm}
\small{(b)}
\begin{minipage}{0.30\textwidth}
		\includegraphics[width=\textwidth]{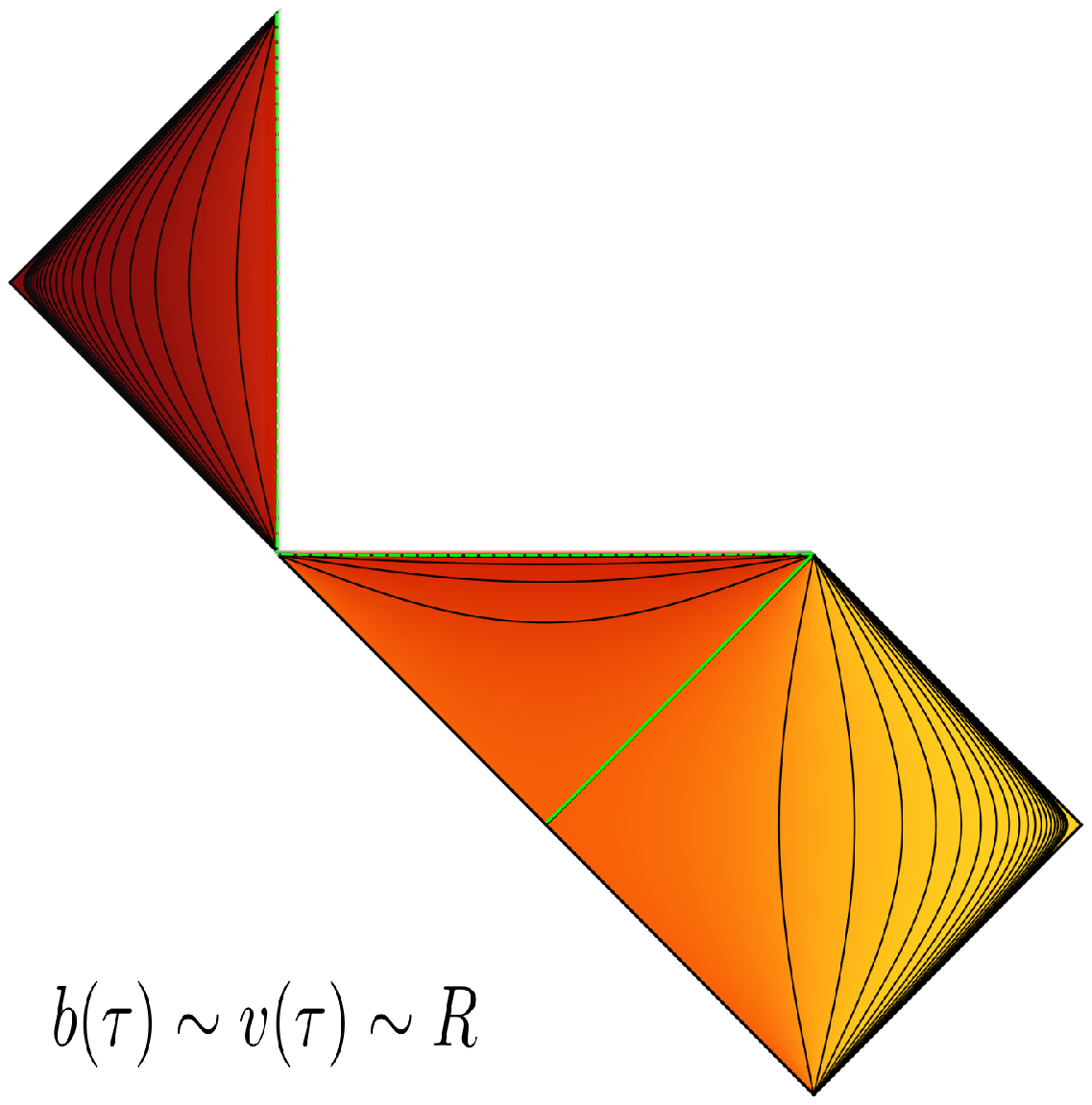}
\end{minipage}\\
\small{(c)}\begin{minipage}{0.30\textwidth}
		\includegraphics[width=\textwidth]{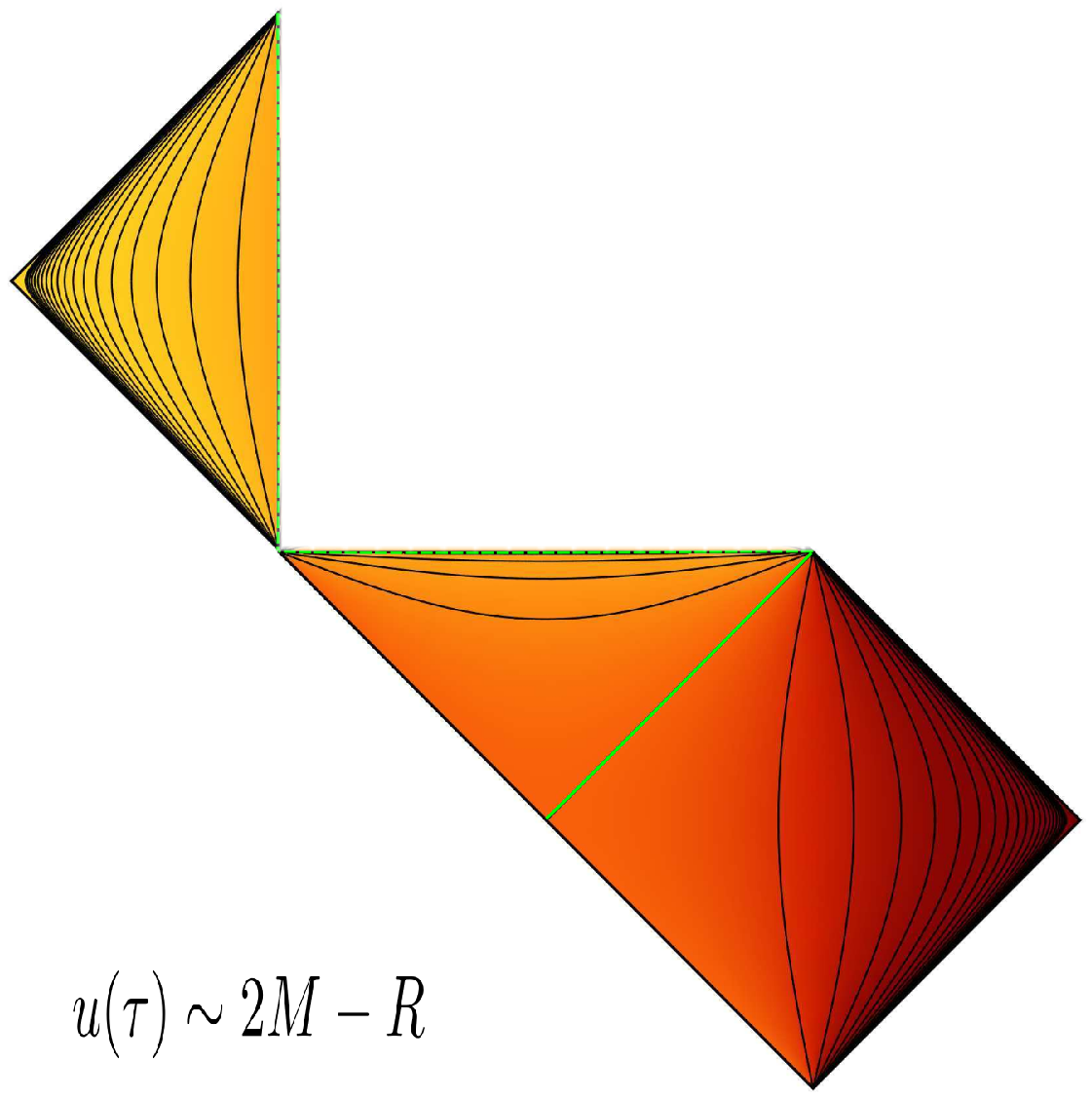}
\end{minipage}
~\hspace{0.5cm}
\begin{minipage}{0.60\textwidth}
\captionof{figure}{\small{In these figures we have drawn the levels of the fuctions $a,b \sim v,u$ in terms of the Schwarzschild radius $R$ on a Penrose diagram, the black lines are hypersurfaces that equally space the levels of the respective functions, the lighter background color represent the higher values of the functions and it gets darker as the function decreases, we recall that $R \propto (\tau-\tau_0)$ and the Penrose diagram is extented behind the singularity if we add the patch with negative coordinate radius. This part is given by the upper triangle, where there is no horizon and the singularity is connected to the asymptotically flat region at $\tau\to -\infty$. The first one starting from the left (a) represents $a(\tau)$ as in \eqref{Bh_sol_a} that is defined only for $\tau\in[\tau_0, \tau_1]$. Then in (b) we have $b$ as in \eqref{Bh_sol_b}, this is also proportional to  $v$, according to \eqref{conf_to_null_bh}, the rightmost plot represents $u(\tau)$.} \label{fig:kruskal_level}}
\end{minipage}
\end{flushleft}}

Finally, an important point should be stressed here. One of the most important theorems in general relativity is the no-hair theorem, which states that black hole solutions of Einstein equations, coupled to an electromagnetic field, are characterised only by three observables: mass, electric charge and angular momentum. In the case under consideration here we don't have an electric charge and the spherical symmetry makes the angular momentum vanish, and we are left with the mass alone and a one-parameter family of solutions.

On the other hand, the mechanical model that we have considered is defined as a system with two degrees of freedom, where does this mismatch come from? The answer to this puzzle is hidden in the dynamical symmetries and without surprise it is the boundary that has something to say about it. For this, we recall that the boundary exists here as a consequence of the infrared regulator constraining the homogeneous coordinate $x$ (or $T$) to a finite range, and is then represented by two of the lines in figure \ref{fig:coord_level}. Our model does not represent the whole black hole solution but only a patch bounded by the regularization of the $x$ coordinate. The size of the fiducial cell will reappear as a conserved charge, suggesting that the additional degree of freedom really accounts for the boundary. Furthermore the ability to freely choose this boundary by playing on the shift is a remarkable property that deserves a deeper analysis. A more natural framework to deal with this question is given by the so-called \textit{midisuperspace} model, where the dynamical fields still depends on two coordinates $x,t$. This still contains both spatial and temporal diffeomorphisms, allowing for a clearer description of the equivalence between different choices of $x$ (or shift). The price to pay is the loss of simplicity, trading the mechanical model for a field theory. The question is there that of the relationship between the boundary and some conformal rigid symmetry.  

If we want to make sense of the physical meaning of the boundary sticking with the one dimensional model, we need to use the results of the previous section \ref{sec2.2:mink&minisup}. We are going to apply them to black holes, to explicitly see the relationship between symmetries and fiducial cell. 

\subsection{Symmetries and Poincar\'e invariance}
In the previous section \ref{sec2.2:mink&minisup} we have shown the existence of dynamical symmetries and the corresponding conserved charges, for a particle with any mass moving in a two dimensional Minkowski background, possibly with a conformal potential. We have just shown that is also possible to recast the black hole minisuperspace in these terms so that it will automatically inherit the $\left (\SL(2,\R) \oplus \R\right ) \loplus \R^4$ symmetry. Using the change of variables \eqref{conf_to_null_bh}, we can read the finite transformation on the scale factors. We shall start with the rescaled M\"obius transformations:
\bsub
\be
\tau	& \mapsto \tilde{\tau} = f(\tau) = \f{\alpha \tau +\beta}{\gamma \tau + \delta}\,,\q\q \alpha\delta -\gamma\beta = 1 \label{Mobius_BH_time}\,,\\
a 		& \mapsto \tilde{a}(\tilde \tau)=k\, a(\tau)\,,\q\q\q\q\q k=const\,,\\
b 		& \mapsto \tilde{b}(\tilde \tau)= k^{-1} \dot f^{1/2}(\tau) b(\tau)\,.
\ee
\label{Mobius_BH}
\esub
The $\SL(2,\R)$ act as a conformal rescaling of the unit sphere metric, recalling that $b$, that has a $1/2$ conformal weight, is the coefficient in front of the $\de \Omega^2$ in the line element \eqref{Bh_minisup_line}. On the other hand the $g_{xx}$ term is unaffected by the M\"obius transformation. For the sake of completeness we verify again that the transformations are indeed symmetries. The corresponding variation of the action is given by the total derivative, leaving the equation of motion unchanged
\be
\Delta \cS = \cS[\tilde{a}(\tilde{\tau}), \tilde{b}(\tilde{\tau})] - \cS[a(\tau),b(\tau)] =
\int \de \tau \f{\de}{\de \tau} \left [ \f{4\cV_0}{G} \f{\gamma a^2 b^2}{\gamma t +\delta} \right ]\,.
\ee
On its side, the abelian subset of transformations \eqref{abelian_symm}, generated by $\cS^\pm$, acts on the metric coefficients as
\bsub
\be
a 		& \mapsto \tilde{a}= \sqrt{\f{a^2 b  +g(\tau)}{ b + h(\tau)}}\,,\\
b 		& \mapsto \tilde{b}= b+h(\tau)\,,
\ee
\esub
for two linear functions $g$ and $h$. It is of course a symmetry and the variation of the Lagrangian \eqref{Bh_lagran} gives a total derivative
\be
\Delta \cS = \cS[\tilde{a}(\tilde{\tau}), \tilde{b}(\tilde{\tau})] - \cS[a(\tau),b(\tau)] = - \int \de \tau \f{\de}{\de \tau} \left [ \f{4\cV_0}{G} \left (\dot g (h+b^2) +\dot g a^2 b \right)\right  ]\,.
\ee
On top of them, we have also the transformations generated by the charges $\cT^\pm$. This exponentiates into two abelian $\R^3$ transformations, that do not commute with each other. As already discussed in the previous section, this gives two subgroup of symmetry that are isomorphic to the Poincar\'e group in 2+1 dimensions, $\ISO(2,1)$. Their action on the metric coefficients is given by
\bsub
\be
a 		& \mapsto \tilde{a}= \sqrt{\f{a^2 b+ 2g \dot b - b \dot g}{2a h \dot b +b(1+ 4 a h \dot a- a^2 \dot h)}}\,,\\
b 		& \mapsto \tilde{b}= b +2 a^2 h \dot b + 4 a b h \dot a - b a^2 \dot h	\,,
\ee
\esub
for second-degree polynomials in time $g$ and $h$. Without loss of generality we will consider here the $\ISO(2,1)$ subset of symmetries as the one generated by $\cL$  and $\cT^+$ (i.e. the function $g$ in the formulas above), hereafter we refer to the abelian transformations as \textit{field translation}. The corresponding transformations takes a very compact form written in terms of the metric coefficients $a^2$ and $b^2$. The 6-dimensional group is parametrized by a M\"obius function $f$ and a two degree polynomial $g$:
\bsub \be
\tilde \tau &= f(\tau)\,,\\
g_{xx} = a^2 &\mapsto \tilde{a}^2(\tilde{\tau}) = a^2 + 2 g \f{\dot {b}}{b} - \dot g \,, \\
g_{\Omega\Omega} \propto b^2 &\mapsto \tilde{b}^2(\tilde{\tau}) = \dot f b^2\,.
\ee \label{iso_bh}\esub
For the rest of this section we will concentrate on this subset of symmetry, to come back to the full set of charges in the following. As a direct consequence of the identities \eqref{identities} relating the generators, we have that the two Poincar\'e Casimir vanish
\be
\mathfrak{C}_1 = \cT_1^+ \cT_{-1}^+ - {\cT_0^+}^2= 0\,, \q\q \mathfrak{C}_2 =\cT_1^+ \cL_{-1} +\cT_{-1}^+ \cL_1 -2{\cT_0^+}\cL_0= 0\,.
\label{poincar_casimir}
\ee
These two Casimir conditions have a two-fold effect. One one side they reduce the 6-dimensional Lie algebra $\iso(2,1)$ back to the original 4-dimensional phase space generated by the configuration and momentum variables $a,b,p_a,p_b$. This also means that the minisuperspace black hole interior carries a massless and zero spin unitary representation of the Lie algebra $\iso(2,1)$. Preserving this structure at the quantum level is a synonym for quantizing the black hole interior dynamics in terms of Poincar\'e representations.

\paragraph{Conserved charges}
Before moving on a closer investigation of the consequences of the symmetry, we would like to discuss the Noether charges, and their relationship with the integration constants. For this, we just need to use the form of the generators $\cL$ \eqref{sl_2+R} and $\cT$ \eqref{trans_gen_null}, combined with the change of variables \eqref{conf_to_null_bh}, in order to obtain the expression of the charges in terms of $a$ and $b$.  Finally, we use the solutions \eqref{Bh_sol} to obtain their \textit{on-shell} values
\bsub \be
\cL_n &\approx -(n+1)\tau_0^n \cB - \tau_0^{n+1} \f{\cV_0}{L_s^2 G} \label{gen&integral_1}\,,\\
\cT_n^+ &\approx \f{\cA G}{\cV_0} \tau_0^{n+1} \label{gen&integral_2}\,.
\ee	
\esub
In particular, we see that the $\sl(2,\R)$ Casimir operator measures the first integral $\cB$
\be
\mathfrak{C}_{\sl 2} =\cL_1 \cL_{-1} - \cL_0^2 = -\cB^2\,,
\ee
where the last equality holds for each time $\tau$. As a remark, we note that here the Hamiltonian $H=-\cL_{-1}$ belongs to the non-Abelian $\mathfrak{sl}(2,\R)$ subalgebra. This is very different from the usual spacetime picture, where the Hamiltonian is part of the Abelian generators and can be seen as a consequence of the fact that the Poincar\'e structure which we have discovered here has nothing to do per se with the isometries of spacetime. In the group quantization of the model using representation theory, this will have important consequences as it will determine how the Hamiltonian is represented.

\paragraph{Physical interpretation}
It is now time to understand how these transformations act on the physical trajectories of the system. In particular, we want to determine if they simply consist of a different time parametrization of the same trajectory, or if they provide a map between different trajectories and thus contain physically relevant information. 

From the mechanical point of view, different trajectories are labelled by the values of the first integrals $\cA$ and $\cB$, as well as the energy $H$ and the initial time $\tau_0$. We will study how these are modified by the conformal transformations in proper time and field translations.

By explicitly computing the action of the $\SL(2,\R)$ reparametrizations of the proper time \eqref{Mobius_BH} on the classical solutions \eqref{Bh_sol} we obtain
\be 
\tilde a (\tilde \tau) &= \sqrt{\f{ \cB (\alpha -\gamma \tilde{\tau})(\alpha -\gamma \tilde{\tau}_0) \cV_0}{\cA (\tilde \tau - \tilde \tau_0) G} -\f{\cV_0^2}{2\cA L_s^2 G^2}} \,,\\ 
\tilde{b} (\tilde \tau) &= \f{G \sqrt{\cA}}{\sqrt{2}\cV_0} \f{\tilde \tau- \tilde \tau_0}{\alpha - \gamma \tilde \tau_0}\,,\\
\tilde{p}_a &= -8 \f{\cV_0}{G} \tilde b \tilde a \f{\de \tilde b}{\de \tilde \tau} 
= -\left (8 \cA (\tilde \tau - \tilde \tau_0) \f{\cV_0(\tilde \tau_0-\tilde \tau) + 2\cB L_s^2 G(\alpha - \gamma \tilde{\tau})(\alpha - \gamma \tilde{\tau_0})}{L_s^2 \cV_0 (\alpha - \tilde \tau_0 \gamma)^4} \right )^{1/2}\,,\\
\tilde{p}_b &= -8 \f{\cV_0}{G} \left (\tilde a^2  \f{\de \tilde b}{\de \tilde \tau} + \tilde b \tilde a \f{\de \tilde a}{\de \tilde \tau}\right )  
= -2\sqrt{2} \cV_0 \f{ \cV_0(\tilde \tau_0 -\tilde \tau) + \cB L_s^2 G\left (\alpha - \gamma( \tilde{\tau}-2\tilde{\tau}_0)\right )(\alpha - \gamma \tilde{\tau_0})}{\sqrt{\cA} L_s^2 (\tilde \tau - \tilde \tau_0)(\alpha - \tilde \tau_0 \gamma) G^2}\,,
\ee
where we have used the inverse M\"obius transformation to \eqref{Mobius_BH_time}, given by $\tau=(\delta\tilde{\tau}-\beta)/(\alpha-\gamma\tilde{\tau})$. The new metric components $\tilde a$, $\tilde b$ are obviously still solutions to the equations of motion, but the values of the constants of the motion need to be slightly adjusted. A closer inspection reveals that the value of $\cB$ remains the same, while  $\cA$ acquires a transformation-dependent factor. Explicitly, we have
\be
\cA &\mapsto \widetilde{\cA}=\f{\tilde{p}_a}{32 \tilde a \tilde{b}}= \f{\cA}{(\alpha - \tilde \tau_0 \gamma)^2} = \cA (\tau_0 \gamma + \delta)^2 \,,\\ 
\cB &\mapsto \widetilde{\cB}= \f{\tilde b\tilde{p}_b - \tilde{a}\tilde{p}_a}{2} = \cB\,.
\ee
This result is consistent with the fact that $\cB$ is the Casimir invariant of the $\mathfrak{sl}(2,\R)$ algebra, and as such is expected to be conserved under $\SL(2,\R)$ transformations. On the other hand, $\cA$ belongs to the translational sector of $\iso(2,1)$ and is therefore naturally modified by this symmetry.

However, the story of the conformal mapping is actually more subtle because it happens to shift the Hamiltonian constraint, which therefore does not seem to vanish anymore. Indeed, we have a variation in the energy value
\be
H=\f{G}{\cV_0} \f{p_a(a p_a -2 b p_b)}{16 a b^2}
\,\, \mapsto\,\,
\widetilde{H} =\f{G}{\widetilde \cV}_0 \f{\tilde p_a(\tilde a \tilde p_a -2 \tilde b \tilde p_b)}{16 \tilde a \tilde b^2}=  \f{\cV_0^2\widetilde{\cV}_0^{-1}}{ L_s^2 G (\alpha  - \tilde \tau_0 \gamma)^2} + \f{2\cB \cV_0\widetilde{\cV}_0^{-1} \gamma} {\alpha - \tilde \tau_0 \gamma}
\,.\nonumber
\ee
This apparent puzzle is resolved by the fact that the fiducial lengths do actually need to change under conformal transformations. In order to interpret the new trajectory as a black hole solution, we need to restore the on-shell condition and redefine the fiducial scales. The most reasonable way of doing it is to keep the fiducial volume $\cV_0$ to be always constant (i.e. $\cV_0= \widetilde{\cV}_0$), and modify accordingly the length unit on the sphere $L_s$, and the regulator $L_0$.
\bsub \be
L_{0} &\mapsto\widetilde L_{0} =\f{L_0}{(\alpha - \gamma \tilde{\tau}_0)(\alpha - \gamma \tilde{\tau}_1)} = L_0 (\delta +\gamma \tau_0)(\delta + \gamma \tau_1)\,,\\
L_{s}^2 &\mapsto\widetilde L_{s}^2 = L_s^2 (\alpha - \gamma \tilde{\tau}_0)(\alpha - \gamma \tilde{\tau}_1) = \f{L_s^2}{(\delta +\gamma \tau_0)(\delta + \gamma \tau_1)} \,,
\ee \esub
where the horizon locations $\tilde \tau_1 = f(\tau_1)$ (see \eqref{horizon_time}) appears in a surprising manner. One can see that the last requirement ensures that
\be
H \approx \f{\cV_0}{G} L_s^{-2} \mapsto \widetilde{H} \approx \f{\cV_0}{G} \widetilde{L_s}^{-2}   .
\ee
Because of the presence of the fiducial length in the line element \eqref{Bh_minisup_line}, the 4-metric gets modified as well, and becomes
\be
\de s^2_\rm{BH} = -\f{1}{4 \tilde{a}^2} \de \tilde \tau^2 + \tilde a^2 (\de x + \tilde N^1 \de t)^2 +\widetilde {L_s}^2 \tilde b^2\, \de \Omega^2\,.
\ee
Inserting the explicit trajectories in $\tilde{\tau}$ in this expression and performing the change of coordinates
\be
\tau -\tau_0 =\f{\sqrt{2} \cV_0}{\widetilde{L_s} G \sqrt{\widetilde \cA}} R\,,\q\q x= \f{\widetilde{L_s} G \sqrt{\widetilde {2\cA}}}{\cV_0} T\,,
\ee
gives the static spherically symmetric black hole metric with a dilated mass 
\be
\widetilde{M}= \f{\sqrt{\widetilde \cA} \widetilde{L_s}^3 G^2 \widetilde{\cB}}{\sqrt{2} \cV_0^2} = M (\alpha - \gamma \tilde{\tau}_0)^{1/2}(\alpha - \gamma \tilde{\tau}_1)^{3/2} =\f{M}{(\delta +\gamma \tau_0)^{1/2}(\delta + \gamma \tau_1)^{3/2}}\,.
\ee
We can treat the field translations in the same manner. They act only on the field $a$ and the resulting flow between trajectories corresponds to a shift in the constants of motion
\be
\widetilde{\cA} =\cA \,,\q\q  \widetilde{\cB} =\cB + \f{2\cA G}{ \cV_0} g(\tau_0)\,,
\ee
where $g$ is a second degree polynomial. In order to preserve the Hamiltonian constraint we redefine the fiducial length as
\be
\widetilde{L_0} &= \f{L_0^2 L_s^2 - 32 \cA G^2 \dot{g}(\tau_0)}{L_0 L_s^2}\,,\\
\widetilde{L_s}^2 &= \f{L_0^2 L_s^4}{L_0^2 L_s^2 - 32 \cA G^2 \dot{g}(\tau_0)}\,,
\ee
which in turn leads to the shifted mass
\be
\tilde{M}= \f{ M \cV_0^3 +\sqrt{2\cA^3}L_s^3 G^3 g(\tau_0)}{ \left ( \cV_0^2-2\cA L_s^2 G^2 \dot g(\tau_0)\right )^ {3/2}}\,.
\ee
This shows that the Poincar\'e symmetry transformations are not mere time reparametrizations of the classical solutions, but map physical black hole trajectories onto different trajectories with different initial conditions (the values of $\cA$ and $\cB$) and a different black hole mass. In this sense we speak about \textit{hidden} rigid symmetries, leaving on top of the residual time-diffeomorphisms.

Furthermore, this makes clearer which are the degrees of freedom described by the minisuperspace Lagrangian \eqref{Bh_lagran}. While from the naive point of view, the only physical parameter is the mass $M$, the presence of the symmetry makes the fiducial length to be physically relevant, in the sense that they are to be considered as labels of the solutions, interplaying with the conserved charges. Conversely, the introduction of the regulation scale $L_0$, which in turn defines the boundary of the homogeneous slice, is unavoidable from the very beginning, as it is necessary to properly define the dynamics of the model via its action principle. On the one side, the symmetry unravels the role of the boundary as a degree of freedom, on the other side it is the boundary itself that makes the symmetry manifest. 

Together with the mass $M$ and the size of the slice $L_0$, we can completely span the phase space and the corresponding solution space by considering also the quantities under which the M\"obius transformation takes a very suggestive form, namely the singularity $\tau_0$ and horizon location $\tau_1$. The Poincar\'e symmetry transformations thus govern the conformal properties of the main features of black holes: mass, horizon and singularity. This turns it into a very powerful tool to deal with black hole physics, and hopefully, helps to shed a light on its holographic properties.

Another point in favour of the relevance of the Poincar\'e symmetry has been recently raised in the study of static perturbation of black holes \cite{Achour:2022syr}. It has been shown that any static linear perturbation of Schwarzschild background exhibits an $\SL(2,\R)$ symmetry transformation, that is connected to the vanishing of Love numbers. In particular, the finite transformation for the zero modes of perturbations stands for conformal reparametrization of coordinate radius. We recall that the latter is indeed proportional to the evolution parameter $\tau$ chosen in this section. Despite the difficulty of generalising the connection to higher multipoles (see \cite{Achour:2022syr}), this nevertheless makes first contact between the Poincar\'e invariance of the background and its perturbations.

From the same point of view, the Poincar\'e symmetry can also have important consequences on Hawking radiation. The quasi-thermal emission spectra of black holes \cite{Hawking1975}, and their quasi-normal modes are indeed intimately related to their response to perturbations.   For spherically-symmetric static metrics the wave equations for the perturbation fields can be written in the form known as Regge--Wheeler equation, \emph{i.e.} as a one-dimensional Schr\"odinger-like radial wave equations with a short-ranged potential, that depends on the spin of the perturbation field (see \cite{Arbey:2021jif} for a review) and on the metric coefficients.  
The conformal reparametrization of the radial coordinate, and the corresponding $\SL(2,\R)$ symmetry must act on the perturbation fields, but finding the corresponding conserved quantities and their physical interpretation might be quite cumbersome. This is already the case for the static perturbation, studied in \cite{Achour:2022syr}.

Furthermore, we can push the logic even further and consider that the conformal symmetry presented here must be protected upon quantization and that the consequences of modifying gravity by the inclusion of quantum effects should keep the conformal invariance of the radial coordinate. This will translate again into a symmetry for the perturbation on a modified background, as studied in \cite{Arbey:2021jif,Arbey:2021yke}. 

The Poincar\'e symmetry exhibited here provides, in the end, a powerful tool to deal with both the propagation of test fields on an (eventually modified) black hole and the corresponding perturbation theory.

\section{Bianchi Models and Cosmologies}
\label{sec2.4:Bianchi}
The black hole is not the only interesting minisuperspace that we can study with the formalism developed in section \eqref{sec2.2:mink&minisup}. The very nature of the minisuperspace models, with the metric evolution described in only one direction, is perfectly suited to describe cosmological models. In particular, whenever we believe the cosmological principle to be valid, we can approximate the Universe with a homogeneous and isotropic FLRW line element evolving in time. Furthermore, the minisuperspace (or adiabatic) approximation has been widely used to describe the early stage of the evolution of the Universe. From a quantum cosmology perspective, where we assign a quantum wavefunction to the Universe, the quantization of the minisuperspace will correspond to quantising the largest wavelength mode, of the size of the Universe.  

The relaxation of the isotropy property, by keeping only the homogeneity will open the door to the whole set of Bianchi cosmologies

\subsection{Symmetries of Bianchi cosmologies}

At the end of the 19th century, the mathematician Luigi Bianchi managed to classify all the three dimensional Lie algebras into nine different equivalence classes \cite{bianchi1897sugli}. This classification was imported into geometry and physics, using the associated Lie groups as symmetry groups of 3-dimensional Riemannian manifolds. The Bianchi spaces are then defined as manifolds admitting a set of three Killing vectors, whose Lie brackets reproduce one of the Bianchi algebras. 

The original Bianchi classification \cite{bianchi1897sugli} distinguishes nine classes containing a single Lie algebra and two infinite-sized families, nowadays known as VI$_h$ and VII$_h$. The inclusion of two classes (VI$_0$ and VII$_0$) into the one-parameter families reduces the number from eleven to nine groups. Hereafter, when the index $h$ is present, it should be understood that it is non zero $h \neq 0$, while we denote with VI and VII the groups that contain the zero value of $h$. Furthermore, the VI$_{1}$ algebra is equivalent to III, so we can consider $h\neq 1$ for the Bianchi VI spaces.  

The main application of this classification is in the context of cosmology, where the Bianchi spaces represent a homogeneous slice evolving in time. The whole four-dimensional manifold is then named Bianchi spacetime. For example, the FLRW metrics are isotropic and they are particular cases of types I, V, VII$_h$ and IX. The Bianchi type I model contains the Kasner metric as a special case, and the Taub metric is a particular case of Bianchi IX cosmologies \cite{wald1984general}. They also play a crucial role in the dynamics near the singularity, which is approximately governed by a series of successive Kasner (Bianchi I) phases in the presence of a scalar field, or by type VIII and IX in vacuum. The complicated dynamics which is referred to as BKL (Belinskii, Khalatnikov and Lifshitz \cite{Belinsky:1970ew, Belinskii:1972sg}), consists of a billiard motion in hyperbolic space and exhibits a chaotic behaviour \cite{Damour:2002et}.

In the terms presented in the first chapter of this thesis, the fact that the homogeneous slice belongs to the Bianchi classification fixes the triads $e^i$ to be left-invariant forms under the Killing vectors. The reader shall refer to appendix \eqref{app:2_bianchi} for the complete list of Bianchi hypersurfaces. On the other hand, the internal metric $\gamma$ is left completely free by the Bianchi classification, we shall fix some of its coefficients by the requirement that the vector constraint \eqref{minisup_vect_constr} is satisfied.

We would like to stress here that the black hole minisuperspace, also known as Kantowski-Sachs cosmology does not belong to the Bianchi classification, as the isometries of the spatial slice do not form a three dimensional Lie algebra.

The importance of Bianchi models in cosmology and the dynamics near singularities raised interest in the structure of their symmetry \cite{Christodoulakis:1991ky, Christodoulakis:2018swq}. These works and the reference therein were already based on the isometries of the supermetric providing the existence of the linear charge $\cQ_{(i)}$, but the application of the construction in section \ref{sec2.2:mink&minisup} will allow to extend to the existence of quadratic charges in time, and eventually completely solve the dynamics in terms of the algebra.

For this, we will restrict ourselves to the so-called \textit{diagonal models} \cite{MACCALLUM1972385}, where the adjective ``diagonal" stands for the internal metric. It turns out that for each Bianchi family it is possible to define the triad such that the coefficients of $\gamma_{ij}$ are diagonal and the vector constraint is satisfied. We refer to appendix \ref{app:2_bianchi} for the discussion on the shift vector. The Bianchi models happen to be divided into four classes according to the form of the internal metric: 
\bsub
\be
\gamma_{ij} &= \rm{diag}\left (a^2,b^2,c^2\right)\,, & \rm{Bianchi}&\; \rm{I}\,,\; \rm{II}\,,\\
\gamma_{ij} &= \rm{diag}\left(a^2,b^2,b^2\right)\,, & \;\phantom{\rm{Bianchi}}&\; \rm{III}\,,\; \rm{VI}_{0}\,,\; \rm{VIII}\,,\; \rm{IX}\,,\\
\gamma_{ij} &= \rm{diag}\left (a^2,\f{a^4}{b^2},b^2\right)\,, & \phantom{\rm{Bianchi}}&\; \rm{IV}\,,\; \rm{V}\,,\; \rm{VII}\,,\\
\gamma_{ij} &= \rm{diag}\left (a^2, b^{\f{2(1+h)}{1 - h}} a^{\f{4 h}{h -1}},b^2\right)\,, & \phantom{\rm{Bianchi}}&\; \rm{VI}_h\,.
\ee
\esub
We focus our attention on the cases with a two-dimensional field space, a brief discussion on Bianchi I and II can be found in \cite{Geiller:2022baq}. We recall that the supermetric only depends on the form of the internal metric, up to the fiducial scale $\cV_0$, that on the contrary depends on the topology of the slice. The second class has the same internal metric as the black hole model, and we already know how to recast the field space into its null conformal parametrization \eqref{family_1_conf}. For the other two families the kinetic term of the reduced action reads (see appendix \ref{app:2_bianchi} for the computation of the fiducial volumes):
\be
\f{\cV_0 \sqrt{\gamma}}{G} \f{(\gamma^{ij} \dot \gamma_{ij})^2+ \dot \gamma_{ij} \dot \gamma^{ij}}{4N} &= 
- \df{2 \cV_0}{N G} \left\{
\begin{array}{lc}
\df{2a b^2 \dot a^2 + 2 a^2 b\, \dot a \dot b - a^3 \dot b^2}{ b^2}\,, &\; \rm{IV}\,,\; \rm{V}\,,\; \rm{VII}\,,\\[10pt]
a^{\f{2h}{h-1}}b^{\f{1+h}{1-h}} \left (\df{2b \dot a^2 h}{a(h-1)} +2 \dot a \dot b + \df{a \dot b^2(1+h)}{b(1-h)}\right )\,, &\;\rm{VI}_h\,.
\end{array}
\right. 
\ee
With a constant lapse both the supermetrics are flat. For the IV, V and VII models the following field transformations map the field space to the null conformal parametrization \eqref{conformal_2d}:
\bsub
\be
\label{family_2_conf}
&\left|
\begin{array}{rl}
\tilde u &= 2 \sqrt{\df{2\cV_0}{3G}} a^{\f{3+\sqrt 3}{2}} b^{-\f{\sqrt{3}}{2}} \\
\tilde v &= 2 \sqrt{\df{2\cV_0}{3G}} a^{\f{3-\sqrt 3}{2}} b^{\f{\sqrt{3}}{2}}\\
\end{array}
\right.\,, &&\; \rm{IV}\,,\; \rm{V}\,,\; \rm{VII}\,,
\ee
while for the VI$_h$ family we use
\be
\label{family_3_conf}
&\left|
\begin{array}{rl}
\tilde u &= 2 \sqrt{\df{2\cV_0}{3G}} a^\f{1 - 3 h- \sqrt{1 + 3 h^2}}{2 -2 h}  b^\f{2 + \sqrt{1 + 3 h^2}}{2 -2 h}\\
\tilde v &= 2 \sqrt{\df{2\cV_0}{3G}} a^\f{1 - 3 h+ \sqrt{1 + 3 h^2}}{2 -2 h}  b^\f{2 - \sqrt{1 + 3 h^2}}{2 -2 h} \\
\end{array} \right.\,,
&&\; \rm{VI}_h\,.
\ee
\esub
The conformal factor $\varphi$ in \eqref{conformal_2d} is related to the choice of clock of the Bianchi cosmology through $N(a,b)G/\cV_0 = \log \varphi(\tilde u,\tilde v)$. In order to discuss the existence of the symmetry groups \eqref{mobius_sym}, \eqref{abelian_symm} and \eqref{abelian_symm_quadratic} we should rewrite the potentials in terms of the conformal null coordinates. The potentials do depend on both the internal metric and the frame field (see  \ref{app:1_ADM_triad} and \ref{app:2_bianchi}).

We distinguish two families of potentials: the ones that separate in the product of two functions of the null directions (III, V, VI) and the ones that don't (IV, VII, VIII, IX). For the first family we simply need to take $N\propto1/\cU$, then find the reparametrization that maps to the Minkowski coordinates, inverse the two transformations and obtain the full algebra. We have already illustrated the procedure for the black hole example in the previous section and the reader shall refer to the appendix \ref{app:2_bianchi} for a more detailed calculation. 

The study of the second family of models, where the conformal factor that set the potential to a constant doesn't give a flat supermetric, is a little more involved. It turns out that in the cases IV, VII, VIII and IX, choosing the factor $N=1/\cU$, the resulting curved metric doesn't have any Killing vector. Nevertheless, the potential is a sum of two or three monomials $\tilde u^n \tilde v^m$. This means that we can set one of them to a constant, find the reparametrization that maps to the Minkowski coordinates and then study the condition \eqref{potential_condition} for the remaining piece(s). Let us see this on the action
\be
\cS= \f{\cV_0}{G} \int \de \tau \left [-\f{1}{N}\dot {\tilde u} \dot {\tilde v}+ N\left (\c_1 \tilde u^{n_1} \tilde v^{m_1} + \c_2 \tilde u^{n_2} \tilde v^{m_2}\right )\right ] \,.
\ee
Without loss of generality we take (with $m_i, n_i \neq -1$)
\be
N = \f{1}{\tilde u^{n_1} \tilde v^{m_1}}\,,\q\q  
\left|\begin{array}{rl}
u &= \df{\tilde u^{n_1+1}}{1+n_1}  \\[9pt]
v &= \df{\tilde v^{m_1+1}}{1+m_1}  \\
\end{array}\right .\,.
\ee
The effective potential reads
\be
U= \f{\cV_0}{G}\c_2 (u (1+n_1))^{\f{n_2-n_1}{1+n_1}} (v (1+m_1))^{\f{m_2-m_1}{1+m_1}}\,,
\ee
so that the algebra reduces to $\sl(2,\R)$ but still closes if and only if 
\be
\label{power_condition}0=2+\f{n_2-n_1}{1+n_1}+\f{m_2-m_1}{1+m_1}\q \Leftrightarrow\q2+ m_1 +m_2+ n_1 +n_2 + m_1 n_2 + n_1 m_2 =0\,.
\ee
It turns out that the equation above is satisfied for the potential of the VIII and IX models, but it is not for IV and VII. 

To summarize, we have the full algebra for the black hole dynamics and the III, V and VI Bianchi models, while the VIII and IX potentials preserve only the conformal part of the algebra and spoil the Heisenberg $\cS^{\pm}_n$ subalgebra. The symmetry does not exist for the IV and VII models. We would like to stress that in \cite{Christodoulakis:2018swq} (and references therein) the authors single out the same two models (IV and VII) as the ones for which a diagonal ansatz for the scales factors is incompatible with Einstein's equation. This is not the case here, where we manage to have a diagonal ansatz that gives the correct equation of motion, but it spoils the conformal symmetry. 

\subsection{Revisiting the CVH symmetry}
We end this chapter by remarking that the curved FLRW model can also be reviewed in terms of a two dimensional field space, parametrized by the scale factor and the scalar field. Because of the inclusion of matter degrees of freedom, we cannot apply blindly the construction presented in \ref{sec1.2:symm_reduc} that ensures the well definition of the symmetry reduction. Luckily, as we have seen in the beginning of this chapter \ref{sec2.1:FLRW}, in the presence of an homogeneous scalar field the reduced action \eqref{FLRW action} gives the correct  Friedmann equations. Furthermore the minisuperspace Lagrangian takes the form of a two dimensional mechanical action \eqref{minisup_mech_lagr}. The inclusion of a non-zero curvature in the FLRW metric 
\be
\de s^2_{\rm{FLRW}} = -N^2 \de t^2 + a(t)^2 \left (\f{\de r^2}{1-k r^2} + r^2 (\de \theta^2+\sin^2 \theta\, \de \phi^2)\right )\,,
\ee
contributes with a potential term, turning the Lagrangian \eqref{FLRW action} into \cite{Achour:2021lqq}
\be
\label{FLRW_action_k}
\cS_{\rm{FLRW}}= \cV_0 \int \de t\left [\f{a^3 \dot \Phi^2}{2 N} -\f{3 a \dot a^2}{8 \pi G N} + \f{3 k a N}{8 \pi G}  \right ]\,, \q\q \cV_0 = \int_\Sigma \f{r^2 \sin \theta}{\sqrt{1-k r^2}}\,.
\ee
As announced, the reduced action is of the same form as the others treated above, the supermetric and the potential being:
\be
\de s^2_\rm{mini} = \f{\cV_0 a^3 }{N} \de \Phi^2 -\f{3 \cV_0 a}{4 \pi G N} \de a^2\,, \q\q \cU = -\f{3 \cV_0 k a }{8 \pi G}\,.
\ee
Along the lines of the previous sections, we choose the lapse that set the potential to a constant
\be
\label{lapse_FLRW_k}
N = -\f{8 \pi }{3 a }\,,\q\q U_0  = -k \f{\cV_0}{G}\,,
\ee
and we find the reparametrization of the field space that maps to the null coordinates \eqref{Mink_de_s}
\be
\label{FLRW_conf}
&
\begin{array}{rlcrl}
 u &= \df{3}{4\kappa} \sqrt{\df{\cV_0}{\pi}}  a^2 e^{ \f{\kappa \Phi}{\sqrt{3}}}\,,   &\q\q & p_u &= \sqrt{\df{\pi}{\cV_0}} e^{ -\f{\kappa \Phi}{\sqrt{3}}} \df{a p_a \kappa + 2\sqrt{3} p_\Phi}{3 a^2}\,,    \\[9pt]
 v &= \df{3}{4\kappa} \sqrt{\df{\cV_0}{\pi}} a^2 e^{ -\f{\kappa \Phi}{\sqrt{3}}}\,,  &\q\q & p_v &=\sqrt{\df{\pi}{\cV_0}} e^{\f{\kappa \Phi}{\sqrt{3}}} \df{a p_a \kappa - 2\sqrt{3} p_\Phi}{3 a^2}\,,  \\
\end{array} 
\ee
with $\kappa^2 = 16 \pi G$. We see here that we can extend the conformal symmetry presented in \ref{sec2.1:FLRW} in two directions. In a first place, the algebra $\sl(2,\R)$ is enlarged to $\left (\sl(2,\R) \oplus \R\right ) \loplus \h_2$. The full algebra is again given by the replacement of the transformation \eqref{FLRW_conf} above in the charges \eqref{charges}. In addition, we can extend the results to the presence of a non zero spatial curvature \cite{Achour:2021lqq}. The symmetries that we have added here, allow to integrate algebraically the evolution of the scalar field and are somehow analogous to the spinorial construction in \cite{BenAchour:2020njq}. As for the black hole, we have that the motion on the four dimensional phase space is completely determined by the four independent charges $\cS^\pm_n$. They furnish the necessary number of initial condition and any non-linear combination of this functional will still generate a symmetry, once we exponentiate its Poisson brackets. In the present work we have focused on the quadratic combinations \eqref{identities} and \eqref{trans_gen_null}, but we can equivalently recover the $\so(2,3)$ generators in \cite{BenAchour:2020njq}, as combinations of the Heisenberg charges.

\paragraph{CVH algebra}
At the beginning of this chapter, we have motivated the study of the minisuperpace symmetry by seeking a generalisation of the CVH algebra in the presence of a non-trivial three-dimensional curvature. This is the last question that we need to address, before moving to the study of the infinite-dimensional extensions and the quantum theory.

For this we would like to remark that the choice of lapse \eqref{lapse_FLRW_k} that allows to recast the potential of the curved FLRW model into a constant is completely different from the constant proper time ($N=1$) choice that defines the CVH algebra. Conversely we find that the change of variables \eqref{FLRW_conf} into the charges gives the $\sl(2,\R)$ generators \eqref{sl_2+R}
\bsub 
\be
\cL_n &= \tau^n\left ( \f{p_a^2 \pi \kappa^2}{9 a^2 \cV_0} - 4 \pi \f{p_\Phi^2}{3a^4 \cV_0} \right ) + n \tau^{n-1} \f{a p_a}{4} + n(n-1) \tau^{n-2} \f{9 a^4 \cV_0}{32 \pi \kappa^2}\\
&= \tau^n\left ( B^2 \pi V^{2/3} \cV_0^{1/3} \kappa^2 - 4 \pi \f{p_\Phi^2 \cV_0^{1/3}}{3 V^{4/3}} \right ) - n \tau^{n-1} \f{3 B V}{4} + n(n-1) \tau^{n-2} \f{9 V^{4/3}}{32 \pi \cV_0^{1/3} \kappa^2}\,,
\ee 
\esub
so that they measures the initial condition of the smeared CVH generators. Putting in evidence the lapse factor $N = -8 \pi \cV_0/3 G a = -8 \pi \cV_0^{4/3} /3 G V^{1/3}$ we have indeed
\bsub 
\be
\cL_{-1}(\tau=0) &= N \left (\f{p_\Phi^2}{V} -\f{3\kappa^2}{4} V B^2\right ) = - N H\,,\\
\cL_{0}(\tau=0) &= -\f{3 BV}{4}\,,\\
\cL_{1}(\tau=0) &= N^{-1} \f{3}{2\kappa^2} V \,,
\ee 
\esub
which are exactly proportional to the smeared Hamiltonian and volume, together with the complexifier. Despite the fact that the algebra that we have built here relies only on the properties of the supermetric via its isometries, the gravitational origin of the minisuperspace seems to be somehow carried over because at the end of the day we recover the kinematical CVH algebra. The subtlety is that the latter become a local symmetry only through a good choice of clock, that fixes the lapse as in \eqref{lapse_FLRW_k}, which in turns happens to rescale the physical volume by an inverse lapse factor.

The reason behind this coincidence remains mysterious. Due to the very different origin of the $\sl(2,\R)$ sector in terms of homothetic Killing vector \eqref{sl_2+R} with respect to the kinematical CVH algebra of general relativity we do not expect the two to coincide, even up to a rescaling factor. The only generator that is expected to be the same is of course the Hamiltonian $\cL_{-1}$. For example, it turns out that for the black hole the $\cL_{-1}$ and $\cL_{1}$ generators do measure the initial condition for the smeared Hamiltonian and volume, but $\cL_0$ is not proportional to the trace of the extrinsic curvature (see appendix \ref{app:4_LQG_phase_space}). For some of the Bianchi models (e.g. VI$_0$) we do not even have that $\cL_1$ is proportional to the volume of the slice. 

\ \newline 

In this chapter, we have reviewed the symmetry structure of some minisuperspaces in General Relativity. The simplicity of the models, coming from the reduction to a mechanical model has allowed us to probe the existence of a conformal structure living on top of the residual diffeomorphism invariance. This is associated with a set of conserved charges, that take a local form only for a particular choice of the clock, and are non-local otherwise. 

These rigid symmetries are associated with a transformation of physical parameters, like the mass of the black hole, or the location of the boundary. They somehow encode the scaling properties of the minisuperspace.  This signals the presence of a richer structure in gravity than the diffeomorphism covariance alone. The relationship between the appearance of new physical degrees of freedom and the boundary is not new and is known in general relativity for a while. Usually, the boundary degrees of freedom are however associated with an infinite tower of corner charges, provided for example by the ubiquitous $\bms_3$ group.  The natural question is to what extent we could make contact with these infinite-dimensional groups, and this is going to be the subject of the next chapter.

	\chapter{Infinite-dimensional extension}
\label{chap3}

We have already recalled that, in numerous situations, boundaries are responsible for the appearance of an infinite set of charges, manifesting itself in the ubiquitous BMS group.
Indeed, it is natural to investigate if the finite symmetry group exhibited in the previous chapter are maximal or whether they extend to a larger symmetry group. For instance we are tempted to generalise the $\ISO(2,1)$ subgroup for black holes to the BMS$_3$ group. 

We have already presented in the first chapter \ref{sec1.3:boundary} how it appears in the context of asymptotic flat three-dimensional spacetime, and we rapidly review some basic facts about its properties and its central extensions. The interested reader shall look at \cite{Oblak:2016eij} for a more exhaustive discussion.

This chapter is developed along the lines of the article \cite{Geiller:2021jmg}.

\section{Virasoro and BMS$_3$ groups}
\label{sec3.1:vir_bms_prop}
We recall that the  group BMS$_3$ is defined as $\text{BMS}_3=\text{Diff}(S^1) \ltimes \text{Vect}(S^1)_\text{ab}$, where $\text{Diff}(S^1)$ is the group of diffeomorphism of the unit circle, while $\text{Vect}(S^1)_\text{ab}$ is its Lie algebra seen as an Abelian vector group. The central extension of the diffeomorphisms of the circle $S^1$ is known as Virasoro group, and we start by reviewing its main properties. We denote by $\text{Vect}(S^1)$ the space of vector fields on the circle endowed with the bracket
\be
\big[f(\theta)\,\partial_\theta ,\,g(\theta)\,\partial_\theta\big]=(f g' -g f')\partial_\theta \,,\q\q 
f\,\partial_\theta ,\,g\,\partial_\theta \in \text{Vect}(S^1)\,,
\ee
where $f$ and $g$ are periodic functions of $\theta$ with period $2 \pi$. A standard basis is given by the Fourier modes $\ell_n\coloneqq  i e^{in \theta}\partial_\theta$, with the commutation relations $[\ell_n,\ell_m] = -i(n-m) \ell_{n+m}$. This is the Witt algebra. Moreover, the vector space $\text{Vect}(S^1)$ can been seen as the space of generators of the (orientation preserving) diffeomorphisms on the circle, denoted by $\text{Diff}^+(S^1)$. Conversely, the diffeomorphisms can be endowed with a Lie group structure, whose algebra is $\text{Vect}(S^1)$. 

We denote by $\cF_\lambda$ the space of tensor densities of degree $\lambda\in \R$. It consists of elements of the form $\alpha=\alpha(\theta) \de \theta^\lambda$. Diffeomorphisms act by the adjoint action as
\be
\alpha(\theta) \xrightarrow{\varphi\, \in\, \text{Diff}^+(S^1)}  \text{Ad}^*_{\varphi^{-1}} \alpha(\theta) = \left (\varphi'(\theta)\right )^\lambda \alpha\left (\varphi(\theta)\right ).
\ee
The infinitesimal version gives the adjoint action of $\text{Vect}(S^1)$ on densities, i.e. the Lie derivative of the differential along a vector field. Taking $\varphi=\mathbb{I}+\epsilon X$, with $\epsilon \to 0$ and $X=X(\theta)\partial_\theta$, we get
\be
L^{(\lambda)}_X \alpha\coloneqq \text{ad}^*_{X} \alpha = X \alpha'+\lambda X' \alpha.
\ee
A first remark is due here, we would like indeed to stress the formal analogy with this definition and the conformal weight in the transformation law of the black hole metric coefficients under a M\"obius transformation (see e.g. \eqref{null_conf_infinit_transf}). We can naturally define a bilinear form on $\cF_\lambda \times \cF_{1-\lambda}$ which is invariant under Lie derivative as
\be
\braket{\alpha,\beta}\coloneqq \int_{S^1} \alpha \otimes \beta\,, \q\q  \forall\ \alpha \in \cF_\lambda\,,\ \beta \in \cF_{1-\lambda}\,.
\ee
We then identify the dual of $\text{Vect}(S^1)$ with the space of quadratic differentials $p= p (\theta)\de \theta^2 \in \cF_2$, and we have
\be
\braket{p,v}\coloneqq \int_{0}^{2\pi}\de \theta\, p(\theta) v(\theta)\,, \q\q \forall\ v \in \cF_{-1}\,.
\ee
This means that the action of the vector fields on $\cF_2$ coincides with the co-adjoint action, i.e.
\be\label{coadjoint_def}
\braket{\text{ad}^*_{X}(p),Y}=-\braket{p,[X,Y]}\,,\q  \forall\ X,Y \in \cF_{-1}\,,\ p \in \cF_{2}\,.
\ee

We recall then that the only non-trivial cocycle on the algebra $\text{Vect}(S^1)$ is given by the Gelfand--Fuchs 2-cocycle
\ba
\omega : \text{Vect}(S^1) \times \text{Vect}(S^1) &\to& \R\\
(X,Y) &\mapsto& \int \de \theta\, X' Y'' = \f 1 2 \int \de \theta (X' Y''-X''Y')\,.  \notag
\ea
The Virasoro algebra is the central extension of the algebra of vector fields, defined on the vector space $\widehat{\text{Vect}(S^1)}\coloneqq\text{Vect}(S^1) \oplus \R$, with the bracket
\be
\big[(X,a),(Y,b)\big]\coloneqq \big([X,Y], \omega(X,Y)\big)\,, \q\q \forall\ a,b \in \R\,,\ X,Y \in \cF_{-1}\,.
\ee
The corresponding group is introduced via the Bott--Thurston 2-cocycle in the group $\rm {Diff}^+(S^1)$, which is
\ba
B : \text{Diff}^+(S^1) \times \text{Diff}^+(S^1) &\to& \R\\
(\varphi,\psi) &\mapsto&\f{1}{2} \int \de \theta \log (\varphi' \circ \psi) \left  (\log(\psi')\right )'\,. \notag
\ea
We then define the Virasoro group as $\widehat{\text{Diff}^+}(S^1)=\text{Diff}^+(S^1) \times \R$ equipped with the group law
\be
(\varphi,a) \circ (\psi,b) = \big(\varphi \circ \psi, b + B(\varphi,\psi)\big) \,.
\ee
Consistently, we recover the Virasoro algebra by the observation that the infinitesimal limit of $B$ is $\omega$. Indeed, defining the flows $\varphi_t$ and $\psi_s$ corresponding respectively to the vector fields $X(\theta)\partial_\theta$ and $Y(\theta)\partial_\theta$, we have
\be
\omega(X,Y) = \f{\de^2}{\de s\,\de t}  \big (B(\varphi_t,\psi_s)-B(\psi_s,\varphi_t)\big ) \Big|_{t=s=0}\,.
\ee

With the extended Lie bracket and the bilinear form at our disposal, we can compute the coadjoint action of the Virasoro algebra on quadratic forms. Considering the adjoint action of the algebra on itself, we define its coadjoint action as in \eqref{coadjoint_def}. A straightforward calculation leads to
\be
\text{ad}^*_{(X,a)} (p,c) = \big(p'X +2 X'p - c X^{(3)} ,0\big)\,, \q\q \forall\ X \in \cF_{-1}\,,\ p \in \cF_{2}\,, \ a,c \in \R\,.
\ee 
Its exponential gives the group coadjoint action
\be
\text{Ad}^*_{(f^{-1},a)} (p,c) = \big(p \bullet f - c \,\cS[f],\,c \big)\,, \q\q \forall\ f \in \rm {Diff}^+(S^1)\,,\ p \in \cF_{2}\,, \ a,c \in \R\,,
\ee
where
\be
p \bullet f = f'(\theta)^2(p \circ f)\, \de \theta^2\,, \q\q \cS[f]= \text{Sch}[f]\de \theta^2= \left [\f{f^{(3)}}{f'}-\f{3}{2}\left (\f{f''}{f'}\right )^2\right ]\, \de \theta^2\,.
\ee
Finally, introducing $h=f'$, we recall that the Schwarzian derivative satisfies
\bsub
\be
\text{Sch}[f]&=\f{h''}{h}-\f{3}{2}\left (\f{h'}{h}\right )^2 = (\log h)'' -\f{1}{2} \big((\log h)'\big)^2\,,\\
\text{Sch}[f\circ g]&=\text{Sch}[g] +(g')^2\,\text{Sch}[f]\circ g\,,\\
\text{Sch}[f^{-1}]&=-\f{\text{Sch}[f]}{h^2}\,.
\ee
\label{schw_dev}
\esub
The Schwarzian derivative is the unique solution of the cocycle equation \cite{Matone:2005qc}.

\paragraph{The BMS$_3$ group}

The centrally-extended BMS$_3$ group is the semidirect product, under the adjoint action, of the Virasoro group and its algebra seen as an Abelian vector group. This is
\be
\widehat{\text{BMS}_3}=\widehat{\text{Diff}^+}(S^1) \ltimes_\text{Ad} \widehat{\text{Vect}}(S^1)_\text{ab}\,.
\ee
We will (abusively) refer to the first factor as superrotations, and to the second one as supertranslations. Its elements are quadruples $(f,a;g,b)$ with $f \in \text{Diff}^+(S^1)$, $g \in \text{Vect}(S^1)_\text{ab}$, and $a,b \in \R$. The adjoint action of the Virasoro group reads
\be
\text{Ad}_f (g,b)=\left ( (g f') \circ f^{-1}, b - \int \de \theta\, \text{Sch}[f] g  \right )\,.
\ee
The corresponding algebra is
\be
\widehat{\mathfrak{bms}_3}=\widehat{\text{Vect}}(S^1) \loplus_\text{ad} \widehat{\text{Vect}}(S^1)_\text{ab}\,,
\ee
whose elements are again quadruples $(X,a; \alpha,b)$ with $X \in \text{Vect}(S^1)$, $\alpha \in \text{Vect}(S^1)_\text{ab}$, and $a,b \in \R$. The commutation relations are
\be
\label{BMS_algebra}
\big[ (X,a;\alpha,b), (Y,r;\beta,s) \big ]=
\big ([X,Y], \omega(X,Y); [X,\beta]-[Y,\alpha],\omega(X,\beta) -\omega(Y,\alpha) \big )\,.
\ee
As always, we can define an adjoint action of the group on its algebra by conjugation, which simply consists in exponentiating the commutation relation above. In the specific case studied here this gives
\be
\label{BMS_adjoint}
\text{Ad}_{(f; g)} (X;\alpha)=\left ( \text{Ad}_f X; \text{Ad}_f \alpha + [\text{Ad}_f X,g]\right )\,.
\ee
We are now interested in the coadjoint representation. The dual of the algebra $\widehat{\mathfrak{bms}_3}$ is the space $\widehat{\text{Vect}}(S^1)^* \loplus \widehat{\text{Vect}}(S^1)_\text{ab}^*$, paired with the algebra element via the bilinear form
\be
\label{BMS_bilinear}
\lb (\cJ,c_1;\cP,c_2),(X,a;\alpha,b) \rb\coloneqq \int_{0}^{2\pi} \left(\cJ X+ \cP\alpha\right ) \de \theta +c_1 a +c_2 b\,.
\ee
This leads to the coadjoint representations of $\widehat{\text{BMS}_3}$, which is
\be
\text{Ad}^*_{f^{-1},g} (\cJ,c_1;\cP,c_2) =\left (\tilde \cJ,c_1; \tilde \cP,c_2 \right)\,,
\ee
with
\be
\label{BMS_coadj}
\tilde \cP=f'^2(\cP \circ f) - c_2 \,\text{Sch}[f]\,,\q\q \tilde{\cJ} =   f'^2\left(\cJ+g\cP' +2 g'\cP - c_2 g^{(3)}\right )\circ f - c_1 \,\text{Sch}[f]\,.
\ee

This is what we use in the section \ref{sec1.3:boundary}, where the mass and angular momentum aspect are rewritten in terms of $\cJ$ and $\cP$ transforming exactly as covectors under the BMS group. Therein the central charges are $c_2 =0$ and $c_1 = \f{1}{4\pi G}$. 

Given an element of the coadjoint algebra, we define its orbit as the subset of coadjoint element that are attainable from the starting point via the action of a group element. For a Lie algebra $\h$ of the group H and a coadjoint vector $p_0 \in \h^*$, its orbit is 
\be
\cO_{p_0} = \left  \{p\in \h^* | \exists\, h \in H\; \rm{such\; that}\; p= Ad^*_H p_0\right \} 	\,.
\ee
The stabilizer of the orbit, or little group, is the subset of elements that leaves the orbit points invariant
\be
H_0(p_0) = \left  \{ h \in H | Ad^*_H p_0 =p_0 \right \} 	\,.
\ee
The orbit is then isomorphic to $\cO_{p_0} = H/H_0(p_0)$. The understanding of coadjoint orbits of the symmetry groups is crucial in order to go towards the quantum theory. In particular, the classification of the orbits of semi-direct product groups by their little groups will naturally lead to irreducible representations of the full group. In the case of the BMS$_3$ group, a detailed discussion is presented in \cite{Oblak:2016eij}. The key ingredient that makes the contact between coadjoint classes and representation theory is a result from Kirillov, Kostant and Souriau \cite{souriau1970structure, BFb0079068, Kirillov_2004, Kostant2009}. This states that the coadjoint orbits are naturally endowed with a symplectic structure. Without surprise, in gravity this happens to coincide with the gravitational phase space, as is manifest by comparing the form of the conserved charges \eqref{BMS_gravity_charges} with the bilinear form \eqref{BMS_bilinear}. This implies that the infinitesimal Lie algebra of variation seen on the coadjoint vectors (i.e. the mode expansion of the infinitesimal version of \eqref{BMS_coadj}) reproduces the charge algebra \eqref{bms_modes}.

\section{Black holes and adjoint representation of BMS${}_3$}
\label{sec3.2:bms_adjoint}

We now set out to explain the relationship between the action \eqref{Bh_lagran} and BMS${}_3$. In the previous chapter \ref{sec2.3:BH_minisup} we have shown that the dynamical system describing the black hole interior via \eqref{Bh_lagran} admits a set of symmetries whose finite form is isomorphic to the 3-dimensional Poincar\'e group ISO$(2,1)$. This defines the global part of the BMS$_3$, i.e. its maximal finite-dimensional subalgebra. The idea is then to try to extend the generators $\cL_n$ \eqref{sl_2+R} and $\cT^+_n$ \eqref{trans_gen_null} to have any possible integer value $n$.

Before starting, we shall remark that the mode expansion is defined in two different ways in the usual gravitational picture and the minisuperspaces. While the diffeomorphisms that define the $\BMS_3$ group live on a circle, the M\"obius transformations and translations $\cT_n$ for the black hole depend on a time coordinate with no periodic condition. The two are mapped onto each other by the (de-)compactification 
\be
\tau = \arctan \theta/2\,.
\label{decompact}
\ee
This pushes the periodicity in $\theta$ at $\pm \infty$ for $\tau$, opening the circle to the real line. It changes the mode expansion from the Fourier series to the Laurent expansion in powers of time changing by a complex factor the charge algebra \eqref{bms_modes}, but has also the effect of changing the map between coadjoint orbits and irreducible representations. For example the vacuum orbit $\cP_0=-\f{c_2}{2}$ of 3d gravity \cite{Oblak:2016eij} is mapped to $\cP_0=0$ in the minisuperspace.

From this perspective, it becomes then crucial that the time coordinate spans the whole real line, and does not restrict itself to the interior region of the black hole (as already discussed in \ref{sec2.3:BH_minisup}).

The natural question is now to understand if and how it is possible to reformulate the Kantowski--Sachs minisuperspace model in terms of the (centrally-extended) BMS group. For this, we will first discuss again the symmetries of the action \eqref{Bh_lagran} using the adjoint representation, and then explain in the next section the relationship with coadjoint representations. The central extensions were defined in the previous section \ref{sec3.1:vir_bms_prop} by the notation $\widehat{G}$ for the extension of the group $G$. In the following, however, we drop this hat in order to deal with lighter notations.

\subsection{Vector fields and transformations of the action}

In the previous chapter \ref{sec2.3:BH_minisup} we have discussed the symmetries of the action \eqref{Bh_lagran} starting from the conformal properties of the supermetric. They take the form \eqref{iso_bh} for the Poincar\'e group acting on the square root of	the metric coefficients $a$ and $b$. In particular, they transform as conformal fields of respective weight $0$ and $1/2$ under a M\"obius reparametrization of proper time. In this section it happens to be easier to work with a different parametrization of the field space:
\be
V_1= b^2= \f{G v^2}{4 \cV_0} \,,\q\q V_2 = 2a^2 b^2 = \f{G uv}{2 \cV_0}\,.
\label{new_fields_v}
\ee
 For sake of completeness we  report here the relationship between the $V_i$'s and the null coordinates ($u,v$) on the configuration space\footnote{The reason for the notation $V_i$ is two-fold. On the one side it turns out that $V_2 \propto V_\rm{bb}$ measuring the smeared volume on the slice. Secondly the fields $V_i$ are going to be embedded into the Lie algebra $\bms_3 =\rm{Vect}(S^1) \loplus \rm{Vect}(S^1)_\rm{ab}$.}. The mechanical action \eqref{Bh_lagran} takes then the form 
\be
\label{new_lapse_action}
\cS_0 =\f{\cV_0}{G} \int  \de \tau \left [ \f{\dot V_1 (V_2 \dot V_1 - 2 V_1 \dot V_2)}{2V_1^2} \right ]\,.	
\ee 
From the transformations laws \eqref{iso_bh} together with the reparametrization \eqref{new_fields_v}, we read the variations on the $V_i$'s:
\bsub
\be
\cL[f]:\,&\left|
\begin{array}{lcl}
V_i &\mapsto& \widetilde V_i = (\dot{f}\, V_i)\circ f^{-1}\,,
\end{array}
\right.\\
\cT^+[g]:\,&\left|\begin{array}{lcl}
V_{1}&\mapsto& \widetilde{V}_{1}=V_{1}\,,\\
V_2 &\mapsto& \widetilde {V}_2  = V_2 + g\dot{V}_1  -  \dot{g}V_1\,,
\end{array}\right.\ee
\label{fgreparam}\esub
where we recall that all these quantities depend on $\tau$. In principle they are limited to be a M\"obius function for $f$ and a second degree polynomial for $g$, but we are going to consider here general functions. This is the most natural way of generalising to an infinite dimensional extension. They are given here in a form acting only on the dynamical fields $V_i$. To compute the transformation of the action it will be convenient to act instead with $\cL_f$ equivalently rewritten as
\be\label{freparam}
\cL[f]:\,\,\left|
\begin{array}{lcl}
\tau &\mapsto& \tilde{\tau} = f(\tau)\vspace*{1mm}\\
V_i &\mapsto& \widetilde V_i (\tilde{\tau}) = \dot f(\tau)\, V_i(\tau)
\end{array}
\right.\,,
\qquad\textrm{so}\q
\left|
\begin{array}{lcl}
\de \tau &\mapsto& \de \tilde{\tau} = \dot f \de \tau
\vspace*{1mm}\\
 {\de_{\tau} V_i } &\mapsto&
\de_{\tilde{\tau}} \widetilde V_i = {\de_{\tau} V_i } + V_i\,\de_{\tau}\ln \dot f
\end{array}
\right.\,.
\ee
We will also drop the index $(+)$ from the abelian transformation in order to lighten the notation. Following the terminology introduced for asymptotic symmetries, we will (abusively) refer to the transformations $\cL[f]$ as \textit{superrotations} and to $\cT[g]$ as \textit{supertranslations}.
In order to justify this name, we now show that the composition
\be\label{Ad definition}
\text{Ad}_{f,g}\coloneqq \cT[g]\circ \cL[f]
\ee
does indeed define a representation of BMS$_3$. For this, we use the properties
\bsub
\be
\cL[f^{-1}]\circ \cT[g]\circ \L[f]&=\cT[(g\circ f)/\dot{f}],
\\
\cL[f_1]\circ \cL[f_2]&=\cL[f_1\circ f_2]\,, \\
\cT[g_1]\circ \cT[g_2]&=\cT[g_1+g_2]\,,
\ee
\esub
which allow us to show that the composition law for \eqref{Ad definition} is given by
\be\label{BMScomposition}
\text{Ad}_{f_1,g_1}\circ \text{Ad}_{f_2,g_2}
&= \cT[g_1]\circ \cL[f_1]\circ \cT[g_2]\circ \cL[f_2]\cr
&=\cT[g_1]\circ (\cL[f_1]\circ \cT[g_2]\circ \cL[f_1]^{-1}) \circ (\cL[f_1]\circ \cL[f_2]) \cr
&=\cT[g_1+\dot{f}_1\,g_2 \circ f_1^{-1}]\circ \cL[f_1\circ f_2]. 
\ee
The inverse composition is
\be
(\cT[g]\circ \cL[f])^{{-1}}
=
\cL[f^{-1}]\circ \cT[-g]
=
\cT[-(g\circ f)/\dot{f}]\circ \cL_[f^{-1}]
\,.
\ee
As desired, this is the group multiplication and its inverse for the Lie group defined as the semi-direct product $\text{Diff}(S^1) \ltimes \text{Vect}(S^1)_\text{ab}$. The interpretation is then that the gravitational variables $V_1$ and $V_2$ belong to the Lie algebra
\be
\mathfrak{bms}_3=\text{Vect}(S^1) \loplus_\text{ad} \text{Vect}(S^1)_\text{ab}\ni(V_1,V_2)\,,
\ee
and are acted on by the group via the adjoint action \eqref{BMS_adjoint}.
 
\paragraph{Finite transformation of the action}

We now study how the action \eqref{new_lapse_action} transforms under finite BMS transformations. It turns out to be easier to compute this using the inverse transformations to \eqref{fgreparam}. Acting jointly with superrotations and supertranslations we find
\be
\label{transfo_action}
\text{Ad}_{(f^{-1},g)} \cS_0[V_i,\tau] &=\cS_0[V_i,\tilde \tau] - \f{\cV_0}{G}\int \de \tilde \tau\left [\vphantom{\f{\cV_0}{G}}(V_2-\dot{g}V_1+g \dot{V}_1)\,\text{Sch}[f^{-1}]-V_1g^{(3)} \right.\\
&\phantom{=\cS_0[V_i, \tilde \tau] +\f{1}{G} \int \de \tilde \tau\left[\right.~}+\left.\f{\de}{\de\tilde{\tau}} \left (\f{\ddot f}{\dot{f}}(V_2-\dot{g}V_1+g \dot{V}_1)+ \ddot{g}V_1 - \f{g \dot{V}_1^2}{2V_1} \right ) \right ]\,,\notag
\ee
where $\text{Sch}[\,\cdot\,]$ denotes the Schwarzian, whose properties have been previously recalled in \eqref{schw_dev}. As expected, this expression has cross terms involving both $f$ and $g$ since we have acted jointly with $\text{Ad}_{f,g}$, and the group is a semidirect product of the two sectors. Focusing on the sector $f$, we see immediately that the subgroup of transformations with vanishing Schwarzian derivative are symmetries of the reduced gravitational action. These are indeed given by M\"obius transformations, isomorphic to the group SL$(2,\R)$. Similarly, in the $g$ sector we obtain a symmetry if and only if $g$ has a vanishing third derivative, meaning that it is a second degree polynomial. This shows that the subgroup of \eqref{fgreparam} which describes the symmetries of \eqref{new_lapse_action} is given by $\text{SL}(2,\R)\ltimes\R^3$, which is the $2+1$ Poincar\'e group \cite{Geiller:2020xze, Geiller:2021jmg}.

When the transformation $\text{Ad}_{f,g}$ does not belong to the Poincar\'e subgroup, we see that the general BMS transformation produces terms proportional to the dynamical variables $V_1$ and $V_2$ in the action. We can see the same two terms in $V_1$ and $V_2$  appearing when introducing a cosmological constant or a scalar field in the model. For this we shall use the line element \eqref{Bh_minisup_line} with fixed lapse\eqref{Bh_minisup_line:null_conf} in terms of the $V_i$'s:
\be
\de s^2_\rm{BH} = -\f{V_1}{2V_2} \de \tau^2 + \f{V_2}{2V_1} (\de x + N^1 \de \tau)^2 + L_s^2 V_1 \, \de \Omega^2\,.
\label{Bh_minisup_line:Vi}
\ee
Working with proper time $\tau$ and introducing the IR cutoff $L_0$ to regulate the integration over $x$, the inclusion of a cosmological constant is done by adding to the Einstein--Hilbert action the volume term
\be\label{cosmo_const_in_S}
\cS_\Lambda = -\f{1}{8 \pi G}\int \de^4 x\,  \sqrt{-g}\, \Lambda= -\f{\cV_0}{G}\int \de \tau \, V_1\Lambda\,.
\ee
Similarly, the contribution of a minimally-coupled scalar field $\Phi$ in the spherically-symmetric and homogeneous spacetime \eqref{Bh_minisup_line:Vi} is described by the action
\be\label{scalar_field_in_S}
\cS_\phi = -\f{1}{16 \pi G}\int \de^4x\, \sqrt{-g} \left ( \f{1}{2} (\nabla \Phi)^2- \cV(\Phi) \right )= \f{\cV_0}{2G}\int \de \tau \left (\dot \Phi^2V_2  +V_1 \cV(\Phi)\right )\,.
\ee
Comparing these two actions with the transformation \eqref{transfo_action}, one can see that for pure supertranslations (i.e. with $f=\mathbb{I}$) we can interpret the third derivative of $g$ as an effective cosmological constant, while for pure superrotations (i.e. with $g=0$) we can be tempted to interpret the term created by the superrotation as the kinetic term of a scalar field (up to a redefinition $\text{Sch}[f^{-1}]\mapsto\dot{\phi}^2$). However, this interpretation does not strictly hold since, as can be seen on \eqref{scalar_field_in_S} a scalar field comes with specific time derivatives which cannot be obtained from $f$ and $g$ in \eqref{transfo_action}. Nonetheless, in the case of pure supertranslations, $g$ one can really interpret the term $g^{(3)}V_1$ as the contribution of a cosmological constant. The action of a further finite supertranslation then shifts the value of this cosmological constant. This is an important property that turns the BMS transformation into solution-generating maps, that in particular provide a conformal bridge between (A)dS black holes and the flat Schwarzschild solution. By taking the supertranslation with $\dddot g=const$, we are indeed able to turn on and off the cosmological constant or move between arbitrary values of $\Lambda$. This has been used for instance in \cite{Achour:2021dtj} to generalise the M\"obius covariance of the flat solution to the Schwarzschild-(A)dS, This is analogous to what happens in cosmology \cite{Gibbons:2014zla,BenAchour:2020xif}, even if therein the cosmological term is produced by a transformation with constant Schwarzian.

As an alternative to this interpretation in terms of a cosmological constant and a scalar field, it turns out that it is possible to write down a fully BMS invariant action by including extra fields in the theory. Indeed, once one realizes that the theory is not strictly invariant under the BMS group but only under its Poincar\'e subgroup, the natural question is whether this can be understood as the result of a symmetry breaking of a BMS-invariant theory. The symmetry reduction from BMS to Poincar\'e would then turn gauge degrees of freedom of the invariant theory into physical degrees of freedom of the black hole interior theory. Here, to his purpose, we introduce new fields that  transform in a specific way so as to compensate the Schwarzian and the $g^{(3)}$ terms produced by finite BMS transformations. This is achieved by the augmented action
\be
\label{invariant_1d_action}
\cS_\text{inv}[V_i,\Phi,\Psi] \coloneqq\cS_0[V_i] + \f{ \cV_0}{G}\int  \de \tau\left [(V_2+ \dot \Psi V_1- \Psi \dot{V}_1) \text{Sch}[\Phi]  + V_1\Psi^{(3)}  \right ],
\ee
provided the new fields $\Phi$ and $\Psi$ (which can be seen as St\"uckelberg fields for the BMS symmetry) transform as
\be
\label{FGreparam}
\text{Ad}_{f,g}:\,&\left|
\begin{array}{lcl}
\Phi &\mapsto& \widetilde \Phi = \Phi \circ f^{-1}\\
\Psi &\mapsto& \widetilde \Psi = g + (\dot{f} \Psi)\circ f^{-1}\\
\end{array}
\right.\,.
\ee
This BMS invariant action can be seen as the extension from the Virasoro group to the BMS group of the conformally-invariant action for FLRW cosmology that was introduced in \cite{BenAchour:2020xif}. The cosmological model can be understood as the $V_1=0$ regime of the theory with $V_2$ playing the role of the cosmological volume. In that case, there are no supertranslations but only superrotations and the conformally-invariant cosmological action was defined by introducing solely an extra $V_2\,\text{Sch}[\Phi]$ term.

In terms of group representation, this means that the fields are embedded in the BMS${}_3$ group itself:
\be
(\Phi,\Psi)\in\text{BMS}_3=\text{Diff}(S^1) \ltimes_\text{Ad} \text{Vect}(S^1)_\text{ab}.
\ee
This will inevitably modify the equations of motion for the initial fields $V_i$, providing again a solution-generating tool. We will come back to this point in the next sections while discussing the coadjoint orbits of the BMS group (see \cite{Geiller:2021jmg} for a more complete discussion).

Note that here we have simply introduced the BMS invariant action as an example of a 1-dimensional system with full BMS symmetry. This is to be contrasted with the 2-dimensional BMS--Liouville theory \cite{Merbis:2019wgk, Barnich:2017jgw, Barnich:2012rz} which will also have a close relationship with the action $\cS_0$ we started from.

\subsubsection{Infinitesimal transformations}

We now move on to the study of the infinitesimal transformations. Since the supertranslations are Abelian, their infinitesimal generator, which we will denote $\alpha$, acts in the same way as the finite generator $g$. For the superrotations, we parametrize the transformation infinitesimally as $f(\tau)= \tau + \epsilon X(\tau)$. The infinitesimal version of \eqref{fgreparam} reproduces by construction the $\mathfrak{bms}_3$ algebra \eqref{BMS_algebra}. We  have indeed 
\be\label{X_infinit}
\delta_X V_i:=\frac{\text{Ad}_{f,0} V_i -V_i}{\epsilon}= \dot XV_i-  X\dot{V}_i = [V_i,X],
\ee
and
\bsub\label{a_infinit}
\be
\delta_\alpha V_1 &= 0,\\
\delta_\alpha V_2 &= \alpha \dot{V}_1- \dot \alpha V_1 = -[V_1,\alpha].
\ee
\esub
With this, the infinitesimal variation of the action is
\be
\delta_{X,\alpha} \cS_0 =& \f{ \cV_0}{G}\int \de \tau \left[V_2 X^{(3)}-V_1 \alpha^{(3)} \right ] \\&+ \f{ \cV_0}{G} \de\tau\f{\de}{\de\tau}\left[\f{\dot{V}_1}{2V_1^2}\Big(2XV_1\dot{V}_2-\dot{V}_1(\alpha V_1+XV_2)\Big)+\ddot{\alpha}V_1-\ddot{X}V_2\right].\notag
\ee
Once again, we see that the Poincar\'e subgroup, which at the infinitesimal level has generators such that $X^{(3)}=0=\alpha^{(3)}$, generates symmetries of the system since this variation then reduces to a total derivative.

\subsection{Hamiltonian formulation and generators}
\label{subsec_hamilt_gen}

Having studied the action of finite and infinitesimal BMS transformations on our system $\cS_0$, we can now study the generators in the (covariant) Hamiltonian formulation. From the general variational expression of the action we can read the pre-symplectic potential $\theta$ and the equation of motions. These are explicitly given by
\bsub 
\be 
\delta \cS_0 =&  \int \de \tau \left [\cJ \delta V_1 + \cP \delta V_2 + \de_\tau \theta\right ]\,, \notag\\
\cJ&\coloneqq \f{\cV_0}{G}\left (\f{V_2 \dot{V}_1^2}{V_1^3} -\f{\dot{V}_1 \dot{V}_2}{V_1^2}- \f{V_2 \ddot V_1}{V_1^2} +\f{\ddot V_2}{V_1}\right) \,,\label{J def}\\
\cP&\coloneqq\f{\cV_0}{G}\left (\f{\ddot V_1 }{V_1}- \f{\dot{V}_1^2}{2 V_1^2}\right )\,,\label{P def}\\
\theta&\coloneqq\f{\cV_0}{G}\left (\f{\dot{V}_1 V_2 - \dot{V}_2 V_1}{V_1^2} \delta  V_1-\f{\dot{V}_1}{V_1} \delta V_2\right )\,.\label{V potential}
\ee
\label{J P theta}
\esub
The equations of motion are $\cJ=0=\cP$, and one can note that in terms of $W$ defined by $\de_\tau W=1/V_1$ we have $\cP=-\text{Sch}[W]$. From $\theta$ we find the symplectic structure $\omega := \delta \theta$, and read the canonical momenta by $\omega = \delta P_i \delta V_i$
\be\label{phase_space}
P_1 =\f{\cV_0}{G} \f{\dot{V}_1 V_2 -\dot{V}_2 V_1 }{V_1^2} \,,\q \q P_2=-\f{\cV_0}{G}\f{\dot{V}_1}{V_1}\,,\q\q\{V_i, P_j\}= \delta_{ij} \,,
\ee
that, without surprise, agree with the supermetric derivation $p_\mu = g_{\mu\nu} \dot x^\mu$. 
With the symplectic structure at our disposal, we can ask whether there are integrable generators associated with the infinitesimal transformations \eqref{X_infinit}, \eqref{a_infinit}. This is found by contracting $\omega$ with the variations $\delta_{X,\alpha}$. We adopt here the notations of the covariant phase space formalism, recalled in section \ref{sec1.3:boundary}, where we see the field variations as forms and the transformations as vectors. The contraction of the variation $\delta_X$ with the field-space two-form $\omega$ is then defined as $\delta_X\ipp \omega:=\omega[\delta_X,\delta]-\omega[\delta,\delta_X] \,$. We then use the equations of motion to get rid of the second order derivatives, and the definitions \eqref{phase_space} to rewrite the first derivatives in terms of canonical momenta. Quite surprisingly, even if the general BMS transformations do not describe symmetries of the action, and we cannot make use of the Noether theorem, they have an integrable generator. Separating the superrotations and supertranslations, we find
\bsub
\be
\delta_{X} \ipp \omega &= - \delta_{X} P_i\, \delta V_i + \delta P_i\, \delta_X V_i \cr
&\approx -\delta \left (\f{G}{\cV_0} P_2 \left (P_1 V_1 + \f{1}{2} P_2 V_2\right ) X + (V_2 P_2 +V_1 P_1) \dot X + \f{\cV_0}{G} V_2 \ddot X \right )\cr
&:= -\delta \cL_X \label{superrot_gen} \,,\\
\delta_{\alpha} \ipp \omega &= - \delta_{\alpha} P_i\, \delta V_i\, + \delta P_i\, \delta_\alpha V_i \cr
&\approx \delta \left (\f{G}{2\cV_0} {P_2}^2 V_1 \alpha + P_2 V_1   \dot \alpha + \f{\cV_0}{G}V_1 \ddot \alpha \right )\cr
&:= -\delta \cT_\alpha \label{supertran_gen} \,.
\ee
\esub
We can use the inverse of the change of variables \eqref{new_fields_v} to recast the generators in terms of the null configuration fields. Expanding the functions $\alpha$ and $X$ in power series ($\tau^{n+1}$) we obtain exactly the same expressions as in \eqref{sl_nform} and \eqref{trans_gen_null}, but with a generic integer $n$.

It is now interesting to compute the Poisson brackets between these generators. This can be one either by contracting twice the symplectic structure with a variation $\delta_{X,\alpha}$, or using the phase space expression for the generators and the usual definition of Poisson brackets. For example we have
\be
\{\cL_X, \cL_Y\}=- \delta_{X} \ipp \delta_{Y} \ipp \omega = \f{\de \cL_X}{\de V_i}\f{\de \cL_Y}{\de P_i}-\f{\de \cL_X}{\de P_i}\f{\de \cL_Y}{\de V_i}\,. 
\ee
Explicitly, we find
\bsub\label{poisson_bms}
\be
\{\cL_X, \cL_Y\}&=-\cL_{[X,Y]} + \f{\cV_0}{G} \big(X Y^{(3)}-Y X^{(3)} \big) V_2\,,\\
\{\cT_\alpha, \cT_\beta \}&=0\,,\\
\{\cD_X, \cT_\alpha \}&=-\cT_{[X,\alpha]} + \big(\alpha X^{(3)} - X \alpha^{(3)}\big) \f{\cV_0}{G} V_1\,,
\ee
\esub
where $[X,Y]=X\dot{Y}-Y\dot{X}$ and similarly for $[X,\alpha]$. These brackets are consistent with the equations of motion. For example, one can study the superrotation associated with a constant shift in time. This actually corresponds to the Hamiltonian of the system, i.e. $\cL_{X(\tau)=1}=-H$, as one can also check using a Legendre transform of the Lagrangian. Note that, as already pointed out, this is slightly peculiar since in the usual spacetime picture the Hamiltonian belongs to the Abelian sector of the symmetry algebra (i.e. Poincaré time translations), whereas here it belongs to the superrotations. For a general (time-dependent) phase space function $\cO(\tau)$, the time evolution is given by
\be 
\dot \cO\eqqcolon\de_\tau\cO = \partial_\tau \cO + \{\cO,H\}\,.
\ee
Using the commutation relations \eqref{poisson_bms} applied to $\cD_1$, and the fact that e.g. $\partial_\tau \cD_X = \cD_{\dot X}$, we then find
\bsub
\be
\de_\tau \cL_X &= \partial_\tau \cL_X-\{H,\cL_X\}= \f{\cV_0}{G}V_2 X^{(3)},\\
\de_\tau \cT_\alpha &= \partial_\tau \cT_\alpha-\{H,\cT_\alpha\}= V_1 \f{\cV_0}{G} \alpha^{(3)},
\ee
\esub
which, as expected, say that the generators are conserved only if the third derivative of the functions $\alpha$ and $X$ vanish. In spite of this consistency, we see that the brackets \eqref{poisson_bms} between the generators $\cL$ and $\cT$ do not close because of a remaining field-dependency on the right-hand side, although up to the $V_i$'s this would have been the centrally-extended BMS$_3$ algebra \eqref{bms_modes}. Rewriting the brackets in the mode expansion we have
\bsub\be
\{\cL_n , \cL_m\} &= (n-m) \cL_{n+m} +\tau^{m+n-1} \f{\cV_0 V_2}{G} (m^3 - m -n^3-n)\,,\\
\{\cT_n , \cT_m\} &= 0\,,\\
\{\cT_n , \cL_m\} &= (n-m) \cT_{n+m} +\tau^{m+n-1} \f{\cV_0 V_1}{G} (m^3 - m -n^3+n)\,.
\ee\esub
The non-closure of the generator algebra is to be expected since $\cL$ and $\cT$, although they are integrable, do not generate symmetries of the theory. Consistently, we recover a closed subalgebra for the Poincar\'e generators ($n \in \{\pm 1,0\}$) since these have a vanishing third derivative.

One way to obtain charges closing on the BMS$_3$ algebra is of course to consider a BMS invariant field theory. Although we have proposed a 1-dimensional such theory in \eqref{invariant_1d_action}, it turns out that a simpler field theory that has an interesting relationship with the black hole action \eqref{new_lapse_action} is the 2-dimensional BMS--Liouville theory of \cite{Merbis:2019wgk, Barnich:2017jgw, Barnich:2012rz}, that describe the boundary dynamics of 3D gravity. By a suitable redefinition of the fundamental fields therein, the two dimensional action is \footnote{In \cite{Merbis:2019wgk,Barnich:2017jgw,Barnich:2012rz} the action has the form 
\[
\cS_\rm{BMS} = \iint \de u\, \de \tau \big[ \partial_\tau\xi\, \partial_u \varphi - (\partial_\tau \varphi)^2 \big]\,.
\]
This is mapped to the one in the text by 
\[
\varphi=-\log V_1\,, \q\q \xi = \f{V_2}{V_1}\,.
\]
The fields now depend on two variables, but the symmetries are still in the form \eqref{fgreparam}, with 
\be
X(u,\tau) = Y(\tau)\,,\q\q \alpha(u,\tau)=T(\tau) + u \, \f{\de Y(\tau)}{\de \tau}\,.   
\ee
Usually the $\tau$ coordinate is compactified and represented by an angle $\phi$, but here we keep the name $\tau$ for a better comparison with our case. The evolution parameter $u$ is the null coordinate at infinity in 3D gravity, and shall not be confused with the null fields used in section \ref{sec2.2:mink&minisup}}
\be\label{BMS Liouville}
\cS_\rm{BMS} = \iint \de u\, \de \tau \left [ - \partial_\tau\left ( \f{V_2}{V_1}\right ) \f{\partial_u V_1}{V_1} - \left (\f{\partial_\tau V_1}{V_1}\right )^2 \right ]\,.
\ee

Performing the Hamiltonian analysis of this action with respect to the new evolution parameter variable, we find that the conserved charges generating the BMS$_3$ symmetries are given by (see also the formula \eqref{BMS_gravity_charges} in the first chapter)
\be\label{BMS Liouville charge}
\cQ= \int \de \tau \big[\cJ X+\cP \alpha \big],
\ee
where $\cJ$ and $\cP$ are precisely the quantities \eqref{J def} and \eqref{P def}, whose vanishing is equivalent to the equations of motion. 
As usual, one interprets this expression for the charges as a pairing between elements $(X,\alpha)$ in the adjoint representation and elements $(\cJ,\cP)$ in the coadjoint representation. This particular relationship between the charges of the 2d BMS$_3$ invariant theory and the equations of motion of the 1d model \eqref{new_lapse_action} becomes clearer if we look at the possibility of embedding our system in the coadjoint representation of BMS$_3$. This is the subject of the next section. We would like to note that for the action \eqref{invariant_1d_action} the BMS symmetries are \textit{gauge symmetries}, while in the case of \eqref{BMS Liouville} they are \textit{physical symmetries}.

\section{Coadjoint representation of BMS$_3$ and central charges}
\label{sec3.3:coadj_section}

The understanding of coadjoint orbits of the symmetry groups is crucial in order to go towards the quantum theory. In particular, the classification of the orbits of semi-direct product groups by their little groups will naturally lead to irreducible representations of the full group. In the case of the BMS$_3$ group, a detailed discussion is presented in \cite{Oblak:2016eij}, and we have reported its key ingredients in the first part of this chapter.

The main goal of this section is to explain how the identification of the gravitational fields with the $\mathfrak{bms}_3$ Lie algebra elements naturally leads to the introduction of 2-forms that transform as covectors under the group action. Moreover, we show here that the action itself can be seen as a bilinear form between vectors and covectors. For this, let us first consider the equations of motion for the configuration fields $V_i$, written in the form \eqref{J P theta}
\bsub\label{coadj_vector}
\be
\cP&= \f{\cV_0}{G}\left (\f{\ddot V_1}{V_1}-\f{\dot{V}_1^2}{2V_1^2}\right )\,, \label{supermomenta} \\
\cJ&= \f{\cV_0}{G}\left (-\f{\dot{V}_2 \dot{V}_1 }{V_1^2} + \f{\ddot{V_2}}{V_1}+\f{V_2}{V_1}\left ( \f{\dot{V}_1^2}{V_1^2} - \f{\ddot V_1}{V_1}\right )\right )\,.
\ee
\esub
A straightforward calculation reveals that $\cJ$ and $\cP$ transform exactly as in the coadjoint representation of the centrally-extended group $\text{BMS}_3$, i.e. as \eqref{BMS_coadj}, with central charges given by
\be
c_1= 0\,,\q\q c_2 = \f{\cV_0}{G}.
\ee 
The volume of the slice appears here in the central charge $c_2$. Nevertheless, by a constant and dimensionfull rescaling of the $V_i$'s in \eqref{J P theta} we can arbitrarily change the value of $c_2$ (and its dimension). The normalization in 3-dimensional gravity comes from the fact that the zero mode of the supermomentum represents the Bondi mass aspect, while here the on-shell value of $\cP$ is zero, so its dimension is not important. More precisely, defining the coadjoint action as
\be
\text{Ad}^*_{f^{-1},g} (\cJ,\cP):= \left (\text{D}_{f^{-1}}\circ \text{T}_{g}\right ) (\cJ,\cP) \eqqcolon(\tilde \cJ,\tilde \cP)\,,
\ee
we find the transformation laws
\be
\label{BMS_coadj_bh}
\tilde \cP=\dot{f}^2(\cP \circ f) - \,\text{Sch}[f]\,,
 \q\q
 \tilde{\cJ} =   \dot{f}^2\left(\cJ+g\dot{\cP}+2\dot{g} \cP - g^{(3)}\right )\circ f\,.
\ee

As a remark, one can note that the other central charge can be switched on by the shift $\cJ \to \cJ+c\cP$, which leads to $c_1 =c$. This shift can be obtained by adding a term $c \cP V_1$ to the action \eqref{new_lapse_action}. Remarkably, this shift does not affect the equations of motion and only modifies the symplectic potential. This comes from the fact that
\be\label{delta PV1}
\delta(\cP V_1)=\cP\delta V_1+\f{\de}{\de\tau}\left(\delta\dot{V}_1-\f{\delta V_1\dot{V}_1}{V_1}\right),
\ee
so that the variation of the new Lagrangian is $\delta(\cL_0+c\cP V_1)=(\cJ+c\cP)\delta V_1+\cP\delta V_2+\de_\tau\bar{\theta}$, where $\bar{\theta}$ is the sum of the boundary term \eqref{delta PV1} and the potential \eqref{V potential} coming from $\cL_0$. The equations of motion therefore combine to give once again $\cJ=0$ and $\cP=0$. It is very intriguing that this mechanism is the same as in 3-dimensional gravity, where the BMS$_3$ central charge $c_1$ (or a chiral mismatch $c_+\neq c_-$ between the two Brown--Henneaux central charges in the AdS$_3$ case) can be switched on by adding a Chern--Simons and a torsion term (together these constitute Witten's exotic Lagrangian \cite{MR974271}) to the first order Lagrangian, without however modifying the equations of motion \cite{Blagojevic:2006hh,Geiller:2020edh,Geiller:2020okp}.

By adding an appropriate boundary term to the initial action \eqref{new_lapse_action}, it is also possible to write a new action as the bilinear form between vectors and covectors. This rewriting is done by defining
\be
\label{BMS_geometric_action}
\cS[V_i]&:=\cS_0[V_i] - \f{1}{G}\int\de \tau \f{\de}{\de \tau}\left (\f{\dot{V}_1V_2}{V_1} \right ) \cr
&=  \f{1}{G}\int \de \tau \left[\f{ V_2 \dot{V}_1^2}{2 V_1^2} -  \f{\dot{V}_1 \dot{V}_2}{V_1} + \ddot V_2 \right]\cr
&=  \f{1}{G}\int \de \tau \big[\cJ V_1 + \cP V_2 \big].
\ee 
By construction this action leads to equation of motion equivalent to the ones of  $\cS_0$. It is however remarkable that the explicit variation is again of the form
\be
\label{BMS_geometric_action_variation}
\delta\cS=\f{1}{G}\int \de \tau\,\big[ \cJ\delta V_1+\cP\delta V_2+\de_\tau\tilde{\theta}\,\big]\,,
\ee
even though $\cJ$ and $\cP$ (expressed in terms of the $V_i$'s as in \eqref{coadj_vector}) have of course been varied in \eqref{BMS_geometric_action} to obtain this expression! Consistently, the only difference with \eqref{J P theta} is the form of the symplectic potential (which we have not displayed here because it won't play a role in what follows).

The result is the 1-dimensional action \eqref{BMS_geometric_action} for the black hole minisuperspace in gauge-fixed time $\tau$. This action, which is \textit{not} BMS$_3$ invariant, is however equivalent to the charge expression \eqref{BMS Liouville charge} for the 2-dimensional BMS$_3$ invariant theory \eqref{BMS Liouville}, provided we identify the infinitesimal parameters $(X,\alpha)$ in the charge with the algebra element $(V_1,V_2)$. We note that this new action transforms under BMS$_3$ as
\be
\label{transfo_geometric_action}
\text{Ad}^*_{(f^{-1},g)} \cS[V_i,\tau] = \cS[V_i,\tilde {\tau}] -\f{1}{G}\int \de \tilde \tau\left [(V_2-\dot{g}V_1 +g \dot{V}_1)\,\text{Sch}[f^{-1}] - V_1 g^{(3)} -\f{\de}{\de \tilde \tau} (g V_1 P) \right ],
\ee
which as expected differs from \eqref{transfo_action} only by a boundary term.

We can now give a nice interpretation of the Poincar\'e symmetry subgroup as an orbit stabilizer. So far we have embedded the fields $V_1$ and $V_2$ within the $\bms_3$ Lie algebra. We can build corresponding group elements by considering the functions
\be
f(\tau):=  \int^\tau \f{\de s}{V_1(s)}\,,\q\q g(\tau) :=  \int^\tau \de s \,\left (\f{V_2}{V_1} \right )\circ f^{(-1)}(s)\,.
\ee
Transforming these quantities under BMS$_3$ reveals that they are indeed group elements. If we now insert these functions into the transformation law \eqref{BMS_coadj_bh} for the covectors, starting from the point $(P=0,J=0)$ we find that the transformed pair $(\tilde J, \tilde P)$ is precisely equal to the right-hand side of \eqref{coadj_vector}. In other words, for the two functions $(f,g)$ given above we have
\be
-\text{Sch}[f]=\f{\ddot V_1}{V_1}-\f{\dot{V}_1^2}{2V_1^2}=\cP\,,\q\q -\dot{f}^2\, g^{(3)}\circ f = -\f{\dot{V}_2 \dot{V}_1 }{V_1^2} +\f{\ddot{V_2}}{V_1} +\f{V_2}{V_1}\left ( \f{\dot{V}_1^2}{V_1^2} - \f{\ddot V_1}{V_1}\right )=\cJ.
\ee
This shows that demanding that the fields satisfy the equations of motion is equivalent to the condition that ($\cP=0, \;\cJ=0$) be preserved under \eqref{BMS_coadj_bh}, which means in turn that $(f,g)$ belongs to the stabilizer of the orbit. In general, the coadjoint orbit of BMS$_3$ is defined by the Virasoro orbit spanned by the supermomentum $\cP$. These orbits are usually classified for the diffeomorphism group of the circle, while here we are dealing with functions on the real line since $\tau\in \R$. Through the de-compactification \eqref{decompact} introduced in the beginning we map point $(\cP=0,\;\cJ=0)$ to $(\cP=-c_2/2,\;\cJ=0)$, i.e. the constant representative of the so-called \textit{vacuum} orbit \cite{Oblak:2016eij}. The stabilizer of this orbit is known to be ISO(2,1), which of course is precisely the symmetry group of our theory.

\section{Weyl extension}
\label{sec3.4:Weyl}

In the previous section we have shown that even if the BMS transformations are not strictly speaking symmetries of the action, they nevertheless admit integrable generators on the phase space of the theory. Their charge algebra, given by \eqref{poisson_bms}, does however fail to reproduce the centrally-extended $\mathfrak{bms}_3$. This is to be expected since these are indeed not symmetries of the theory. Perhaps not surprisingly, the generators for the infinite-dimensional transformation that generalise \eqref{iso_bh} for any function $f,g$ corresponds to the extension of the global subalgebras $\cL_n$ and $\cT_n$, ($n \in\{0,\pm 1\}$) to any integer $n$.  
We recall however that the ISO(2,1) is part of a larger structure, corresponding to the quadratic or linear charge in the generators $\cS^\pm_n$ ($n =\pm 1/2$). It is then natural to beg the question of what happens if we relax the restriction on $n$ for the Heisemberg algebra. For this we restore the notation as in section \ref{sec2.2:mink&minisup}. Writing the phase space in null configuration variables and respective momenta, we want to see what happens to the generators $\cL_n$, $\cD_0$ \eqref{sl_2+R} and $\cS^\pm_n$ \eqref{h_2} for a general power\footnote{For the generator $\cD$, we define its infinite dimensional extension as $\cD_n:=\tau^{n} \cD_0$. And for the \textit{spin} generators $\cS_n$, the index $n$ is half integer, i.e $n =\f{1}{2}+m$, $m \in \Z$} $n$.
A straightforward calculation, by exponentiating the Poisson bracket on the configuration variables, gives the finite transformations
\bsub\be
\cL[f]:\,&\left|
\begin{array}{lcl}
u &\mapsto&  (\dot{f}^{1/2}\, u)\circ f^{-1}\\
v &\mapsto&  (\dot{f}^{1/2}\, v)\circ f^{-1}
\end{array}
\right.\,,\\
\cD[k]:\,&\left|
\begin{array}{lcl}
u &\mapsto&  k(\tau)\\
v &\mapsto&  v/k(\tau)
\end{array}
\right.\,,\\
\cS^+[g]:\,&\left|\begin{array}{lcl}
u&\mapsto& u+g(\tau)\,,\\
\end{array}\right.\\
\cS^-[h]:\,&\left|\begin{array}{lcl}
v&\mapsto& v+k(\tau)\,.\\
\end{array}\right.
\ee\label{fkgh_reparam}\esub
That corresponds to \eqref{mobius_sym} and \eqref{abelian_symm} with generic functions $f,k,g,h$, whose group law is isomorphic to $\left(\rm{Diff}(S^1) \ltimes \rm{Vect}(S^1)\right )\ltimes \left (\rm{Vect}(S^1)\times \rm{Vect}(S^1)\right ) $, upon (de-)compactification \eqref{decompact}. 

The enlargement of the $\bms$ algebra is not something new in gravity. For example in three-dimensional gravity, where the $\BMS_3$ group has been observed in the first place, the relaxation of the Bondi-Sachs gauge makes a four-parameter family appear (\cite{Geiller:2021vpg} and references therein). The new functions are associated with the symmetries that capture the Weyl rescaling of the boundary metric, which the extension is named after. 

Similarly as the $\bms$ extension, the brackets of the corresponding generators fails to reproduce a central extension of the Lie algebra of transformations. In their mode expansion, we have that
\be
\{\cL_n,\cL_m\} &= (n-m) \cL_{n+m} + \f{uv}{2} \left (m(m^2-1) - n(n^2-1)\right ) \tau^{m+n-1} \,,\notag\\
\{\cL_n,\cS_s^\pm\} &= \left (\f{n}{2}-s\right ) \cS_{n+s}^\pm +  \f{(u+v) \pm (v-u)}{8}\,(4 s^2-1) \tau^{n+s-1/2}\,,\notag\\
\{\cD_n,\cS_s^\pm\} &= \pm \cS_{s}^\pm + \f{(u-v)\pm (v+u)}{2}\, n\, \tau^{n+s-1/2}\,,\notag\\
\{\cL_n,\cD_m\} &= 0\,,\notag\\
\{\cS_s^\eta,\cS_{s'}^\epsilon\} &= (s-s')\delta_{\eta+\epsilon} \tau^{s+s'}\,,\notag
\ee
where $(m,n)$ are integers, and $(s, s')$ are half integers (i.e. $s = 1/2+n$, $n\in \Z$). It also turns out that the identities \eqref{identities} still hold, in the sense that
\bsub
\be
\cL_n &= \sum_{k=-1/2}^{1/2} k \left (\f{n}{2}+k\right ) (\cS^+_k \cS^-_{n-k}+\cS^-_k \cS^+_{n-k})\,,\\
\cD_n &=  \sum_{k=-1/2}^{1/2} k\, \left (\cS^-_k \cS^+_{n-k}- \cS^+_k \cS^-_{n-k}\right )\,,\\
\cT^\pm_n &= \sum_{k=-1/2}^{1/2} k \left (\f{n}{2}+k\right ) \cS^\pm_k \cS^\pm_{n-k}\,,\q \forall n \in \Z\,.
\ee \label{identities 2}
\esub
\ \\
As already discussed the enlargement of the $\bms$ algebra has been already witnessed in other gravitational setups. However, this is usually associated with the relaxation of some boundary conditions, to capture new degrees of freedom, for example, the Weyl rescaling of the boundary metric in 3D gravity \cite{Geiller:2021vpg}, or even radiative perturbation associated with the news tensor in four dimensions \cite{Barnich:2010eb, Barnich:2016lyg, Barnich:2019vzx, Freidel:2021fxf}. This is not the case here, where the extension from $\bms$ to a larger group is associated with a sort of Sugawara construction, that builds quadratic charges starting from an initial set, as shown in the formulas \eqref{identities 2} just above. We could wonder if it is possible to extend the previous discussion about coadjoint orbits in terms of the lager group $\left(\rm{Diff}(S^1) \ltimes \rm{Vect}(S^1)\right )\ltimes \left (\rm{Vect}(S^1)\times \rm{Vect}(S^1)\right ) $, but we  expect that due to the identities \eqref{identities 2}, seen as a restriction on the representation of the group, the conclusions might not change significantly. We plan to address the question shortly.

\ \newline 

In this chapter, we have discussed the possibility to extend the group of minisuperspaces symmetries to an infinite-dimensional group. We have found that this is not possible in the sense that the general transformations happen to not preserve the dynamics. Although this might seem disappointing at first, symmetry breaking is as important in physics as symmetry invariance, revealing degrees of freedom of the theory. Moreover it turns out that we are in an intermediate position, in the sense that the transformations act as a rescaling of the coupling constants, similarly to what we might expect from a renormalization group. We have also shown that the equation of motion themselves got an interpretation in terms of coadjoint orbits of the infinite-dimensional group, remarkably the fiducial size appears as the central charge of the coadjoint representation. Finally, we have built a relationship between the reduced action and the charges of the 2 dimensional Liouville theory, suggesting that maybe the extension to the midisuperspace accounting for spatial inhomogeneities (in the $x$ direction), might enlighten the symmetry breaking pattern. The non-trivial question is however that of the realisation of the symmetries in the precursor model. \newline \

This concludes the dive into the classical structure of the minisuperspaces symmetries, it is now time to discuss the consequences of these structures in the quantum theory.


\newpage
~
\thispagestyle{empty}

\newpage
~
\thispagestyle{empty}

\renewcommand{\afterpartskip}{}
\part*{Part II\\[.3cm]
From classical symmetries to canonical quantization} 
\addcontentsline{toc}{part}{II\ From classical symmetries to canonical quantization} \label{part:partII}
\newpage
~
\thispagestyle{empty}

\chapter{Canonical quantization of gravity}
\label{chap4}

The canonical approach to Quantum Gravity can be roughly summarized as the attempt of constructing a non-perturbative and background independent Quantum Field Theory of spacetime itself \cite{Thiemann:2007pyv}.  As already discussed in the introduction, the cornerstone of the canonical quantization of general relativity has been posed by Dirac, with the development of the technique to quantize constrained systems. Further works by Arnowit, Deser and Misner \cite{Arnowitt:1962hi} and finally, the ones by Wheeler and DeWitt \cite{Wheeler:1962, PhysRev.160.1113, PhysRev.162.1195, PhysRev.162.1239} contribute to the development of the theory. The main idea is to rewrite the Einstein-Hilbert action into its Hamiltonian form by splitting spacetime into three-dimensional hypersurfaces evolving in time (see \ref{sec1.1:Hamiltonian_GR}), and then apply Dirac's recipe to quantize the resulting constrained system.

In non-gauge theories the canonical quantization consists in the standard procedure of replacing the Poisson brackets on a phase space with commutators of operators acting on a Hilbert space. Conversely, gauge theories like gravity possess constraint and in Dirac's approach, the unconstrained (full) phase space is mapped to the so-called \textit{kinematical} Hilbert space. Defining then the constraint operators $\widehat \cC$ as quantum version of the constrain functions $\cC \approx 0$, we search for solutions $\ket \Psi$ of the quantum equation $\widehat \cC \ket \Psi =0$. The subset of states satisfying the quantum constrain, defines then the \textit{physical} Hilbert space.

\section{Wheeler-DeWitt quantization}

In metric general relativity (or \textit{geometrodynamics}) the set of constraints is provided by one scalar constraint \eqref{ADM_scalar_cstr} and three vector constraints \eqref{ADM_vect_cstr}, in particular the vanishing of the Hamiltonian constraints on the states $\hat H \ket \Psi =0 $ defines the so-called Wheeler-DeWitt equation, or \textit{quantum Einstein equations}. It resembles closely a Shr\"odinger equation, missing the time derivative term. This absence is a consequence of the fact that the coordinate time is unphysical and can be freely reparametrized, and is also a signature of the \textit{problem of time}.

Unfortunately, in their ADM formulation the constraints  \eqref{ADM_scalar_cstr}, \eqref{ADM_vect_cstr} are highly non-linear in the metric variables, leading to non-analytic second-order differential operators in the basic configuration variables so that a mathematically rigorous definition of a suitable Hilbert space is still unknown. 

Moreover, as a consequence of the 3+1 splitting, we lose a direct correspondence with the original four-dimensional covariance, which is on the other hand recovered as a gauge symmetry generated by the constraint. This means that we must avoid anomalies in the quantum realization of the algebra. Address the anomaly free condition in the Wheeler-DeWitt setup is cumbersome. There have been attempts to assure the absence of anomalies but this is at the price of having a Hilbert space that is too small and there is not enough room for a well defined classical limit \cite{Thiemann:2007pyv}.

Beside this technical problems that make the mathematical structure of the Wheeler-DeWitt equation still poorly understood, there are also some conceptual issues that complicate this approach \cite{Thiemann:2007pyv}. First of all, guessing that we are able to find the solution to quantum Einstein equations, one must find a set of Dirac observables (operators that leave the space of solutions invariant) to define the measurable gauge invariant quantities, which is extremely hard even in the classical setup. Equivalently we would need to deparametrise the theory to be able to give a relational description of the dynamics. In other words it is extremely difficult to define what is an observer in this setup. 

Moreover, having a background independent theory is synonymous of an emergent spacetime, in the sense that the quantum theory not defined \textit{on} a spacetime, but conversely \textit{it defines the spacetime}. It is therefore no longer clear if General Relativity is the corresponding classical limit. The issue must be approached by a semiclassical analysis, that is impossible to perform without an appropriate knowledge of the structure of the physical Hilbert space.

These major difficulties made the canonical quantization program stall for almost 60 years, until in 1986 Abhay Ashtekar realised that through a canonical transformation we could recast the constraint in an extremely easy form \cite{PhysRevLett.57.2244}, similar to an $\SU(2)$ Yang-Mills gauge theory. The key point was to understand that a $\SU(2)$ connection introduced some years before by Amitaba Sen (1982) \cite{Sen:1982qb}, satisfies the canonical Poisson algebra with the desensitized triad. These opened the door for the construction of Loop Quantum Gravity (LQG hereafter).

\section{Loop quantum gravity}
\label{sec4.1:LQG}

Loop quantum gravity (LQG) is a non-perturbative and background-independent theory of quantum gravity, where the geometry of spacetime is treated quantum mechanically. It relies on the key insight of writing general relativity as a gauge theory using the Ashtekar gauge connection as the configuration variable (conjugated to the triad). Originally introduced as a canonical (or Hamiltonian) approach, it also exists in a covariant (or path integral) formulation known as spin foams. One of the key results of LQG is the derivation of a notion of quantum Riemannian geometry. Indeed, one can show that in this theory operators measuring areas and volume of spatial regions have discrete spectra with a non-zero minimal eigenvalue. This is how quantum geometry sets a Planckian cut-off below which there exist no degrees of freedom.

First of all, instead of having the metric coefficient as dynamical variable (as in ADM), we define the triads $e^i_a$ on each slice\footnote{This is completely different from the minisuperspace reduction, where the triad appears as a splitting between spatial and temporal dependences. Here the triad depends on both time and space and the internal metric is just the diagonal $\delta_{ij}$} such that:
\be
q_{ab} = e^i_a e^j_b \delta_{ij}\,,
\ee
where $q_{ab}$ is the induced metric on the slice. The Ashetekar connection is given by
\be
A^i_a\coloneqq\Gamma^i_a[e]+\gamma K^i_a\,,
\ee
where $\Gamma^i_a$ is the torsionless spin connection of the triad (solution to the Cartan equation $\de e^i +\epsilon^i_{\;jk} \Gamma^j \wedge e^k = 0$), $K^i_a$ is the extrinsic curvature one-form, such that the first fundamental form \eqref{Extrinsic curvature} is given by $K_{ab} = K^a_i e^i_b$. Ashtekar had initially taken the imaginary value $\gamma=i$, it was later Barbero and Immirzi who introduced a real parameter. The parameter is named after them. The conjugate variable to the connection is the densitized triad
\be
\label{densitized triad}
E^a_i\coloneqq\det(e^i_a)\hat{e}^a_i=\f{1}{2}\eps^{abc}\eps_{ijk}e^j_be^k_c\,.
\ee
These variables obey the Poisson bracket relation \footnote{This is obtained from the original triad phase space $\lb k^i_a(x), e^b_j(y)\rb \sim \delta^i_j\delta^b_a\delta^3(x,y)$. The second bracket is highly non trivial and it is the main result of the Ashtekar seminal work. This because $A$ depends on the non-commuting quantities $K$ and $E$.}
\bsub\be\label{AE Poisson}
\lb A^i_a(x),E^b_j(y)\rb&=\delta^i_j\delta^b_a\delta^3(x,y)\,,\\
\lb A^i_a(x),A^j_b(y)\rb&= 0 \label{AA Poisson}\,.
\ee\esub
We remark that the phase space is polarized in a way that $A$ appears as the configuration variable and the triad as its momenta, contrary to what is the usual gravitodynamics convention. It's also possible to define the Ashtekar connection starting from the Palatini tetrad-connection action, in this case, we can complete the triad with the normal vector to the foliation to obtain the tetrad: $e^I_\alpha$ such that $g_{\alpha\beta}=\eta_{IJ} e^{I}_\alpha e^{J}_\beta$, and its spin connection $\omega^{IJ}_\alpha$ satisfying:
\be
\label{torsion free}
\partial_\alpha e^I_\beta - \Gamma^\sigma_{\,\alpha\beta}e^I_\sigma+\omega^I_{\alpha J}e^J_\beta=0\,.
\ee
We have that the connection can be written as:
\be
\label{barbero connection}
A^{i}_\alpha= \dfrac{1}{2} \epsilon^i_{\,jk}\omega^{jk}_\alpha + \gamma\omega^{0i}_\alpha\,.
\ee

In terms of $A$ and $E$ the action \eqref{ADM_Hamilton} takes the Hamiltonian form:
\be
\label{LQG_action}
\cS_\rm{LQG}=\int_\mathbb{R}\de t\int_\Sigma\de^3x\,\left[E^a_i\partial_0{A}^i_a-N \cC_\text{g}-N^a \cC_a-A^i_0G_i \right]\,,
\ee
where $N$ is the lapse function and $N_a$ is the shift vector enforcing respectively the scalar and vector constraints:  
\bsub
\be\label{scalar constraint_LQG}
\cC_\text{g}&=\f{1}{2\kappa} q^{-1/2} E^a_iE^b_j\left({\eps^{ij}}_kF^k_{ab}-2\left(1+\gamma^2\right)K^i_{[a}K^j_{b]}\right)\,,\\
\cC_a& =\f{1}{2\kappa} q^{-1/2} F^i_{ab} E^{bi}\,, \label{vector_constraint_LQG}
\ee\label{LQG_constraints}
\esub
where $q^{1/2}\coloneqq|\det(q_{ab})|^{1/2}=|\det(E^a_i)|^{1/2}=|\det(e^i_a)|$, and
\be
F^i_{ab}=\partial_aA^i_b-\partial_bA^i_a+{\eps^i}_{jk}A^j_aA^k_b\,.
\ee
The last constraint is the Gauss constraint $G_i= D_a E^a_i$. The last one is necessary to get rid of the antisymmetric part of the first fundamental form $k^a_i e^b_i$, and reduces the initial 9+9 dimensional phase space, back to the 6+6 metric gravity one ($q_{ab}, \pi^{ab}$). The scalar constraint takes a simple form only for the self-dual case $\gamma=i$. On the other hand, when we take the Ashtekar variables as the starting point for Loop Quantum Gravity (LQG hereafter) we will need to consider real Barbero-Immirzi parameters in order to have a well-defined Hilbert space.

Together with the use of Ashtekar's variables, the key point of Loop Quantization (which the theory is named after) was the realization by Ted Jacobson and Lee Smolin that it is possible to label the solutions of the constraints \eqref{LQG_constraints} by gauge-invariant loops. They take the name of \textit{holonomy} and measure the extent to which parallel transport around closed loops fails to preserve the geometrical data being transported. They are defined as the path oriented exponential of the integral of the connection along a closed loop.

The theory should be formulated in terms of intersecting loops, or graphs, which lead to the notion of spin-networks, which are graphs labelled by spins and provide a basis for states of quantum geometry. These are also the key ingredient to defining the spin foam covariant path integral formulation.
 
The discreteness of the area spectrum was firstly shown by Lee Smolin and Carlo Rovelli in 1994 \cite{Rovelli:1994ge}. It comes directly as a consequence of the quantization of the area operator in terms of triads and their corresponding $\SU(2)$ representations. For a surface $\Sigma$, the area operator acts on a Wilson loop (trace of the holonomy) eigenvector like:
\be
A_\Sigma = 8 \pi  \gamma \lp^2 \sum_I \sqrt{j_I(j_I+1)}\,,
\ee
where the sum is over all edges $I$  of the Wilson loop that pierce the surface $ \Sigma $, and $j$ is the label of the $SU(2)$ representation on the edge. This gives the minimal value for the area\footnote{We see here the importance of a real value of $\gamma$ to have a well-defined area operator}, when there is only one edge with $j=1/2$:
\be
\label{area_gap}
A_\rm{min} = 4 \sqrt{3} \pi \gamma \lp^2 := \Delta\,.
\ee
A complete review of Loop Quantum Gravity is beyond the scope of this thesis, and we refer the reader to the excellent literature already existing, for example, the book of Thiemann \cite{Thiemann:2007pyv} for the canonical approach, or the review by Rovelli and Vidotto \cite{Rovelli_introduction} for an introduction to the covariant approach, as well as the undergraduate level introduction by Gambini and Pullin \cite{Gambini_introduction}.

Despite its well-definiteness, LQG is a cumbersome theory and extracting physical information is a quite involved task. For instance, for several reasons, it is very complicated to study the semiclassical limit. Even worse, there is no known solution to determine the physical quantum states representing simple classical solutions, even for the homogeneous and isotropic cosmologies or spherically symmetric black holes.

It is nonetheless possible to incorporate some features of the full theories into reduced symmetry models. The application of LQG to homogeneous and isotropic minisuperspaces is known as Loop Quantum Cosmology (LQC). In the case of flat $\Lambda=0$ FLRW cosmologies, quantum effects become relevant in the Planck regime, giving rise to an effective repulsive force that enables dynamical trajectories to avoid the big bang singularity \citep{Ashtekar:2006uz}, \citep{Ashtekar:2006wn}. This was also extended to anisotropic Bianchi models and also to accommodate the presence of inhomogeneities.

\section{Loop quantum cosmology and polymerization}
\label{sec4.2:LQC}

In the following, we will briefly recall the results of homogeneous and isotropic FLRW loop quantum cosmology. This will serve as a basis for what follows in the next chapter, introducing the regularization procedure on the phase space known as \textit{polymerization}.

We investigate the symmetry reduced model obtained by restricting ourselves to spatially flat, homogeneous and isotropic cosmologies. This has already been presented in \ref{sec2.1:FLRW}, but we will rapidly review it here, this time in terms of Ashtekar variables. For this, we denote here the fiducial frame (without the time dependence) by an index $^o$. Explicitly the three-metric on the slice at constant time is given by  
\be
q_{ab}:=a^2(t)\left({}^o{q}_{ab}\right)=a^2(t)\left({}^o e^i_a\,{}^o e^j_b\delta_{ij}\right)\,.
\ee
A remark should be done at this point. In the first chapter, we have briefly used the triad formalism to define the minisuperspace models. Therein the triad has the role of a set of one-form fields that account for the spatial dependence of the metric, that is fixed \textit{a-priori}, when we select a particular minisuperspace model. On the other hand, in LQG the latin index $i$ stands for the coefficient of the $\su(2)$ basis carried along an edge, in this sense $A$ is an $\SU(2)$ connection:  
\be
A_a := -\f{i}{2} A_a^k \sigma_k\,,\q\q
e^a := -\f{i}{2} e^a_k \sigma_k\,,
\ee
where the $\sigma_k$ (or equivalently $\sigma^k$) are matrices reproducing the $\su(2)$ algebra. We then impose the condition of homogeneity and isotropy on the Wilson loops. The first one consists in taking a lattice of cubic cells, with edges that carry only the lower spin representation, so that the $\sigma_k$ are the 2x2 Pauli matrices. Simultaneously, isotropic connections and triads are characterized by:
\begin{subequations}
\ba
\label{gauge}
A^i_a&=&\cV_0^{-1/3}c(t)\,{}^o e^i_a\,,\\
E^a_i&=&\cV_0^{-2/3}\sqrt{{}^oq}p(t)\,{}^oe^a_i\,.
\ea
\end{subequations}
We recall that to regularize the integration over $\Sigma$, that is topologically $\mathbb{R}^3$ and so not compact, we restrict ourselves to a fixed fiducial cell of fiducial volume $\cV_0$. Where the scale factor $a(t)$ is related to the triad coefficient by $|p|=a^2 V_0^{2/3}$ (so that the comoving volume is $V= |p|^{3/2}$).

Notice that $p$ can be either positive or negative, which corresponds to two orientations of the triad. Since the metric is invariant under $e^a_i\rightarrow-e^a_i$ and we do not consider fermions in our model, the physics will also remain unchanged. Thus this orientation reversal of the triad is a gauge transformation that needs to be taken into account.

Because the model that we describe is spatially flat, the spin connection $\Gamma^i_a$ vanishes and $A^i_a=\gamma K^i_a$, and we find that the gravitational part of the Hamiltonian can be expressed as:
\be\label{LQCham}
\cC_\text{g}=-\frac{E^a_iE^b_j}{2\kappa\gamma^2\sqrt{q}}\epsilon^{ij}_{~~k}F^k_{ab}\,.
\ee
Having fixed the symmetry we are left with the sole scalar constraint. Following the literature on LQC we can add a minimally coupled massless scalar field as matter content, the action \eqref{LQG_action} thus rewrites:
\be
\cS=\int dt\left( p\dot{c}
+ p_\Phi \dot \Phi
+N\f{\kappa^2}{6}\sqrt{|p|}c^2-\frac{1}{2}\frac{p_\Phi^2}{|p|^{3/2}}\right)\,.
\ee
The canonical term gives us the Poisson brackets:
\be
\lbrace c,p\rbrace=1\,,
\qquad
\lbrace \phi, p_\phi \rbrace = 1\,,
\ee
and the Hamiltonian constraint is
\be\label{hamiltonian}
H= N\left(-\f{\kappa^2}{6}\sqrt{|p|}c^2+\frac{1}{2}\frac{p_\phi^2}{|p|^{3/2}}\right)
\equiv N(\mathcal{H}_\rm{g} + \cH_\Phi)\,.
\ee

The main feature of the full LQG theory is that there is no operator corresponding to the connection itself. The only well-defined operators are its holonomies along some edges. This means that there is no operator $\hat c$ on the gravitational Hilbert space $\mathfrak{H}^\text{grav}$. Since the holonomy $h^{(\mu)}_k$ of the connection $A^i_a$ along a line segment $e^a_k$ is given by 
\be
h_k^{(\mu)} = \cos \f{\mu c}{2} \I + 2 \sin \f{\mu c}{2} \tau_k\,,
\ee
the elementary dynamical variable to be quantized is taken to be $\exp(i \mu c)$. 
In the above formula $\I$ is the unit $2 \times 2$ matrix and $\tau_k=-i/2\, \sigma_k$, with $\sigma_k$ the Pauli matrices. Now in the full theory, a minimal value for the area is predicted \eqref{area_gap}. In order to take into account this fact we should impose that the area enclosed by the holonomy loops, measured on the physical metric $q_{ab}$ is $\Delta$. Since the area of faces of the elementary cell is $|p|$, and each side of the loop is $\mu$, we are led to choose $\mu=\bar \mu(p)$, such that:
\be
\bar \mu^2 |p| = \Delta\,.
\ee
The phase space dependence of the regularization parameter $\mu$ makes it difficult to well-define the Hilbert space. For the sake of convenience we reintroduce the canonical variables corresponding to the volume and its rate of change:
\be
B\equiv\frac{2c}{3\sqrt{|p|}}\,,
\qquad
V\equiv \sgn(p)|p|^{3/2}\,,
\ee
because they satisfy $\mb c= \lambda B$, with $\lambda=\sqrt{3\Delta/2}$. Now it's easier to work in the $V$ representation, where states on $\mathfrak{H}^\text{grav}$ are given by wave function $\Psi(V)$, and the holonomies are operators acting like:
\be
\widehat{e^{\pm i \mb c}}\Psi(V)\equiv\widehat{e^{\pm i\lambda b}}\Psi(V)=\Psi(V\pm \lambda)\,.
\ee

This means we need to work within a Hilbert space that supports well-normalized functions upon which the holonomy operator could act. This is provided by the functions on the compactified real line $\mathfrak{H}^\grav=L^2(\R_\rm{Bohr}, \de V_\rm{Bohr})$, meaning that the wavefunction $\Psi(V)$ is supported on a countable set of points on the real line. The holonomy operator acts then as a constant shift operator by step $\lambda$ on the wavefunction support. For a given parameter $\lambda$, the Hilbert space is divided into the so-called superselected sector, according to the position $(V)$ eigenstates, the latter taking discrete real values $\epsilon + n\lambda$, with a fixed offset $\epsilon$. The operator $e^{i\lambda B}$ creates a finite shift of step $\lambda$ and lets us move only within a given superselected sector labelled by the offset $\epsilon$.

An important remark should be done at this point. Loop Quantum Cosmology is not the homogeneous sector of Loop Quantum Gravity, because the symmetry reduction is done before the quantization. This makes the Hilbert space simpler but very different from the one of the full theory, and the huge simplification has led to numerous criticisms of LQC \cite{Bojowald:2020wuc}, in spite of the large effort to make contact with the full theory.

In order to introduce a well-defined operator on $\mathfrak{H}^\grav$ corresponding to the scalar constraint, we need to express the classical Hamiltonian in term of elementary variables $V$ and $h_k^{(\mu)}$. Following the general theory methods, the term involving the triads can be written as:

\be
\epsilon_{ijk} e^{-1} E^{aj} E^{bk} = \sum_k  \f{1}{2\pi\gamma G\mu } \epsilon^{abc} \;^o e_c^k \;\Tr \big(h_k^{(\mu)} \{ h_k^{(\mu)}{}^{-1}, V \} \tau_i \big)\,.
\ee

For the field strength, we use the strategy employed in lattice gauge theory. If we consider a face $\square_{ij}$ of the elementary cell, whose sides have length $\lambda V_o^{1/3}$, the curvature is given by:
\be
F_{ab}^k = -2  \lim_{{Ar}_\square \to 0} \Tr \left(
 \f{h^{(\mu)}_{\square_{ij}}-1}{\mu^2 }\right)\tau^k\; ^o e_a^i \;^o e_b^j\,. 
 \ee
The holonomy $h_{\square_{ij}}$ is just the product of holonomies along ${\square_{ij}}$ edges: $h^{(\mu)}_{\square_{ij}}= h_i^{(\mu)} h_j^{(\mu)} h_i^{(\mu)}{}^{-1} h_j^{(\mu)}{}^{-1}$. Now, the main concept of the LQG is not to shrink the area of the loop to zero, but only to its minimal value $\Delta$ , this means that the gravitational terms of the action becomes:
\be
\cC_{\text{g}} :=  -  \f{4}{2 \kappa^2 \gamma^3 \mb^3 } \sum_{ijk} \epsilon^{ijk} \,\Tr \big( h_i^{(\mb)} h_j^{(\mb)} h_i^{(\mb)}{}^{-1} h_j^{(\mb)}{}^{-1} h_k^{(\mb)} \{ h_k^{(\mb)}{}^{-1}, V \} \big)\,,
\ee
by explicit calculation it gives:
\be
\cC_{\text{g}} = \mathcal{H}_\text{g}=  -\f{3\kappa^2}{8} \f{\sin^2 \lambda B}{\lambda^2} V \,.
\label{polym_LQC}
\ee
Let us remark that we could have obtained the same result expressing the action in terms of $B$ and $V$, as in \eqref{Ham_prop_time_flrw}, and replace $B \to \f{\sin \lambda B}{\lambda}$. This procedure is known as \textit{polymerization} and it is the one we will discuss in the following for black holes. 

\paragraph{Polymerisation}
In classical and quantum mechanics with canonical phase space $(q,p)$ we define the Weyl algebra as the one generated by the exponentiated position and momentum operator $W = e^{i (b q -ap)}$. One of the most important theorems in quantum mechanics is the Stone-Von Neumann theorem \cite{doi:10.1073/pnas.16.2.172, Neumann1931DieED}, which states that any irreducible representation of the Weyl algebra that is weakly continuous is unitarily equivalent to the Schr\"odinger representation \cite{Corichi:2007tf}. Sometimes we are forced to work with the exponentiated operators, as in the famous  Aharonov-Bohm effect \cite{Aharonov_1969}, or whenever we want to measure a phase in an interference experiment. This makes us work with the so-called modular representation of the Weyl algebra, where we realize the Weyl algebra as commuting operator on a lattice. It is again equivalent to the Shr\"odinger representation but has found interesting application in gauge theories \cite{Strocchi_2016, Freidel:2015uug, Freidel:2016pls}. There are however other representations of the Weyl algebra where only half of the phase space is weakly continuous, evading the Stone-Von Neumman theorem, and thus being inequivalent to the Schr\"odinger picture. Among them, there is the polymer representation given by wavefunctions on the compactified real line, where the momentum operator $p$ is not well defined, only the exponentiated version acting as a finite translation it is. 

The interested reader might look at \cite{Strocchi_2016} for a complete review of the modular and polymer representations and \cite{Yargic:2020nkw} for its consequences on simple models like an harmonic oscillator.

Let us go back to LQC. To obtain the quantum Hamiltonian operator we now simply have to replace the classical variables in \eqref{polym_LQC} with the corresponding operator, and we can study how the expectations values evolve. On the polymer Hilbert space, only the exponentiated version of $B$ is well defined, and this is the one that appears in \eqref{polym_LQC}. For example, on a gaussian state, the effective dynamics of the volume prevent it from being zero. Conversely, it undergoes a bounce that replaces the classical big bang singularity. 

An important property of LQC, often used as an argument in favour of its robustness is the fact that the evolution of quantum coherent
states on the polymer Hilbert space follows the effective evolution described by the corresponding polymer Hamiltonian. Despite the apparent triviality of this statement due to Ehrenfest's theorem, the non-linearity introduced in the polymerization procedure makes it non-trivial anymore \cite{Corichi:2007tf, Strocchi_2016,Yargic:2020nkw}

For example, the majority of the works about black hole dynamics start directly with the heuristic effective dynamics, introducing by hand polymer corrections. In the absence of fully controlled quantum dynamics, the question of the equivalence between the effective evolution and the expectation values of quantum states is rarely addressed, making the models to lack a robust consistency check. 

The introduction of the quantum correction by a polymer regularization of the Hamiltonian also introduces a new set of ambiguities. On top of the usual ordering issues at the quantum level, already at the semiclassical level the regularization is ill-defined. When we replace the canonical momenta $p$ with the exponentiated version $\sin( \lambda p) /\lambda$, the parameter $\lambda$, typically related to the Planck length could be phase space dependent. Even worse, the choice of the $\sin( \lambda p)$ function is normally justified by analogy with the cosmological case, where on the other hand it is a simple consequence of the choice of lowest spin representation for the $\su(2)$ matrices. Including the higher spin representations introduces higher power of $\sin( \lambda p)$, with an opportune combination of the representations we will be able to freely choose any periodic function \cite{BenAchour:2016ajk} as \textit{polymerisation}. 

Furthermore, the choice of momenta that must be polymerized depends on the field space parametrization. For the black holes is it $(p_u, p_v)$ or $(p_a, p_b)$ or again $(P_1, P_2)$ that must be polymerised? While for cosmology there is a solid consensus in favour of the $\bar \mu-$scheme, presented above, no agreement has been found yet for the black holes, despite the large effort \cite{Ashtekar:2018cay,Ashtekar:2018lag,Bojowald:2018xxu,BenAchour:2018khr,Bodendorfer:2019xbp,Bodendorfer:2019cyv,Bodendorfer:2019nvy,Bodendorfer:2019jay,Ashtekar:2020ckv,BenAchour:2020mgu,Kelly:2020uwj,BenAchour:2020gon,Kelly:2020lec,Husain:2021ojz,Munch:2020czs,Munch:2021oqn}. This is where the symmetries introduced in the first part of the thesis become helpful.

\chapter{Minisuperspace quantization}
\label{chap5}

Throughout the whole thesis, the recurrent leitmotif has been the fact that symmetries provide an extremely powerful tool in physics. They give rise to conservation laws via Noether’s theorem, control the structure of classical solutions of a system, and also organize its quantum states via representation theory. In the first part, we have shown the existence of a group of physical symmetries that governs the minisuperspaces dynamic, and we dwelled on the classical structure of black holes and cosmologies. It is now time to discuss the consequences of these symmetries in the quantum regime. 

In the previous section, we have presented the canonical approach to quantum gravity, its LQG formulation and the corresponding cosmological application. The main point of the canonical approach is to realize an anomaly-free representation of the constraint algebra, which is a manifestation of the covariance of the classical theory. As we have seen in the first part this criterion becomes useless, as we are left with a unique constraint, trivially commuting with itself, leaving us without any criterion to test the covariance of the eventual regularization. 

On the other hand, the minisuperspace reduction makes us work with a mechanical model instead of a field theory, making the canonical quantization extremely simple. In this case, there are no technical obstacles to successfully performing the Dirac program, defining a Hilbert space for the Wheeler-DeWitt (WdW) operator and even explicitly finding the set of solutions. 

Moreover, they possess a finite-dimensional symmetry group, whose irreducible representation can be easily handled. The main idea behind this section is then to consider the minisuperspace symmetries as physically relevant, and demand that they are protected at the quantum level. This condition has two effects, on one side we can replace the missing anomaly free argument for the diffeomorphism algebra with the \textit{hidden} symmetries, this is reasonable if we think about the fact that this symmetry encodes the scaling properties of the minisuperspace. In some sense they provide the remnant of the reparametrization invariance, trading the status of gauge symmetries of the latter with the rigid (extended) conformal symmetry. Secondly, we can directly use the irreducible representation of the symmetry group to define a Hilbert space. Another motivation for preserving the conformal structure is the fact that it could also open the doors towards a bootstrap of the theory, solving the quantum theory only with conformal invariance arguments and leaving aside the knowledge of the quantum perturbations at the Hamiltonian level.

As in the first part of the manuscript, we will focus our attention on the black hole minisuperspace, spending some words on the more general case when needed.

This section collect results from \cite{Sartini:2021ktb}.

\section{Group quantization}
\label{sec5.1:WDW}
In gravitodynamics, the canonical quantization is performed by the usual recipe of replacing the Poisson brackets on a phase space with commutators of operators acting on a suitable Hilbert space. For minisuperspaces, this amounts to assigning to the configuration variables the corresponding multiplicative operator, and to momenta the respective linear derivative.

Let us illustrate this in detail for the black holes. Therein the classical phase space is given by $a,b$, which are the coordinates on the field space ($q^\mu)$, and the respective momenta $p_a, p_b$. The natural quantization scheme consists in promoting the field variables to multiplicative operators and the momenta to derivatives $\hat p_\mu=-i \partial_{\mu}$. The Hilbert space is given by square normalizable wavefunctions $L^2(\R^2,\de a \de b)$.

This corresponds to the \textit{position} polarization of the wavefunctions, which is known to be equivalent to the \textit{momenta} polarization up to a Fourier transform. 

This is of course well-defined, we could obtain the Hamiltonian operator and study the corresponding dynamical solution. However, this construction suffers from a major ambiguity. The presentation of the first part of this thesis strongly relies on the ability to freely redefine coordinates on the field space. The latter is indeed seen as a manifold endowed with a Riemannian supermetric, and classical trajectories for the black hole were interpreted as geodesics on the field space. The freedom of the choice of variables on the field space is translated into the invariance of the Hamiltonian formalism under a canonical transformation. This has been widely used in the first part to swap between scales factor $a,b$, null coordinate fields $u,v$ and $\bms_3$ vectors $V_1, V_2$. 

Unfortunately, not all the canonical transformations are implemented by a unitary transformation on the respective quantum mechanical description, so the choice of field coordinates might change the conclusions that we infer for the quantum evolution.

A way out of this puzzle is provided by the symmetry group of the theory. Instead of sticking to the usual WdW prescription, we start directly from the irreducible representation of the symmetry group. The cumbersome task to find Dirac observables for the Wheeler-DeWitt quantization is trivially solved here, the rigid symmetry naturally providing a complete set of conserved charges. The simplicity of the model also offers the opportunity to study the semiclassical limit.

There are several ways to use the symmetries for quantization. We can for instance quantize the conserved charges \eqref{charges}, that because of their explicit time dependence will give a Heisenberg-like picture, or conversely we can quantize the kinematical version of the algebra, that depends only on the phase space, and obtain the corresponding Schr\"odinger representation. 

We also shall come back here to the discussion about the size of the algebra. In the first sections we have pointed out the need for having a number of independent generators that is not higher than the original gravitational phase space. For this we have identified the fundamental nucleus of the charge algebra, which surprisingly is given by a two dimensional Heisenberg algebra. We recall that the classical expression for its generators in terms of the scale factors and their momenta is given by
\be
\cS^+_{1/2} &=  \sqrt{\f {\cV_0} G} \f{8 a b^2 \cV_0+p_a \tau G}{4 a b \cV_0}\,,
&\cS^-_{1/2} &= \sqrt{\f {\cV_0} G} \f{8 a^2 b^2 \cV_0+ (2 b p_b-a p_a)\tau G}{4 b \cV_0} \,,\notag\\  
\cS^+_{-1/2} &=  \sqrt{\f G {\cV_0}} \f{p_a}{4 a b} \,,
&\cS^-_{-1/2} &= \sqrt{\f G {\cV_0}} \f{2 b p_b-a p_a}{4 b} \,,
\ee
which measure the initial condition for the phase space quantities \footnote{These are exactly the null coordinates on the field space \eqref{conf_to_null_bh} and their momenta} 	
\be
u &=  \sqrt{\f {\cV_0} G} 2 b\,,
&v &= \sqrt{\f {\cV_0} G} 2 a^2 b  \,,\notag\\  
p_u &= \sqrt{\f G {\cV_0}} \f{2 b p_b-a p_a}{4 b}\,,
&p_v &=  \sqrt{\f G {\cV_0}} \f{p_a}{4 a b} \,. 
\label{h_2_quantum}
\ee
We can thus promote them to operators\footnote{The quantities $u,v,p_u,p_v$ are like the position and momenta operators in quantum mechanics in the Shr\"odinger picture, where the states are wavefunction evolving in time. The conserved quantities that measures the respective classical initial condition are obtained by integrating backward along the Hamiltonian flow $e^{\{\bullet, \tau H\}}$.  A straightforward calculation leads to verify that this will give the conserved charges $\cS^\pm_n$. The construction is implemented in quantum mechanics by the unitary transformation that maps the Schr\"odinger picture to the Heisenberg one, were the latter is obtained by the quantization of the explicitly time dependent operators acting on a fixed wavefunction.} by using the fundamental representation of the Heisenberg algebra
\be
\hat u\, \Psi(u,v) = u \Psi (u,v) \,,\q\q \widehat p_u \,\Psi(u,v) = -i \partial_u \Psi(u,v)\,, \notag\\
\hat v\, \Psi(u,v) = v \Psi (u,v) \,,\q\q \widehat p_v\, \Psi(u,v) = -i \partial_v \Psi(u,v)\,, 
\ee
on the Hilbert space of square integrable wavefuctions $L^2(\R^2, \de u \de v)$.

We have now two main questions to address: \textit{(i)} does this Hilbert space provide an irreducible representation of the larger Poicar\'e group? \textit{(ii)}  Is this equivalent to the WdW quantization of the scale factors? 

The answer to the first question is hidden in the fact that the Poincar\'e generators are not independent, as seen in equation \eqref{poincar_casimir}. The identities that allow identifying the Heisenberg algebra as the fundamental blocks of the symmetry are translated into the vanishing condition on the Casimir of the Poicar\'e algebra. In particular, this means that the black hole interior must carry a massless and zero spin representation of the Poincar\'e group, which turns out to be unitary equivalent to the fundamental realization of the $\h_2$ algebra, as it will be discussed in a moment.

Answering the second question is less trivial, as we need to find a unitary transformation on the square-integrable wavefunctions that implement the coordinate transformation \eqref{h_2_quantum}. The same happens if we perform the WdW quantization of the $\bms_3$ vector fields and their momenta. This is a quite involved task because of the inverse power of the configuration variable that appears in the map between the momenta. The Poincar\'e structure will also help to enlighten this point.

\subsection{ISO(2,1) representations}
\label{iso_irreps}

The unveiling of the conformal symmetry for cosmology and its extension to black holes suggests that for minisuperspaces we can replace the constraint algebra with the new symmetry algebra. Here this means that the quantum Hilbert space must contain an irreducible representation of Poincar\'e algebra. It is then smart to directly start by exploiting the well known irreducible representation of (the universal cover of) $\ISO(2,1)$. For this, the reader shall refer to an exhaustive discussion in \cite{Grigore:1993}, of which we will recall here some key features. 

First of all want to define the kinematical algebra that once quantized will provide the set of fundamental operator in the Shr\"odinger picture. These are defined by the following quantities, expressed in terms of the $\bms$ vector fields and the respective momenta 
\bsub
\be
&J= \f{V_2}{2\lambda} - \lambda \f{\cV_0}{G} H\,,&  
&K_x= \lambda \f{\cV_0}{G} 	 H  + \f{V_2}{2\lambda}\,,& 
&K_y = -C_\rm{d}\,, \label{Poincare_alg_lorenz} \\
&\Pi_0 = \f{V_1}{2\lambda}+ \lambda\cA\,,&
&\Pi_x= V_1 P_2\,,& 
&\Pi_y= \f{V_1}{2\lambda}-  \lambda\cA\,,\label{Poincare_alg_trans}
\ee	
\label{Poincare_alg_gen}
\esub
for a real dimensionless parameter $\lambda$.
The boosts $K_x$, $K_y$ and the rotation generator $J$ are a linear combinations of the evolving quantities of whom the $\cL_n$'s \eqref{superrot_gen} measures the initial condition, while the translations $\Pi$ have the same role with respect to $\cT$ \eqref{supertran_gen}.  Computing the Poisson brackets between the 6 generators gives the $\iso(2,1)$ algebra
\be
\label{Poincare_alg}
\{J,K_i\}& = \epsilon_{ij}K_j\,,&
\{K_x,K_y\}&= -J\,,&
 &\cr
\{J,\Pi_i\}&=\epsilon_{ij}\Pi_j\,, &
\{K_i,\Pi_0\} & = \Pi_i \,,&  
\{K_i,\Pi_j\} & = \delta_{ij} \Pi_0\,.
\ee 
The quantities $\cA$ and $\cB$ are the first integrals introduced in \ref{sec2.3:BH_minisup}, and $C_\rm{d}$ is defined in \eqref{HKV_mink}, all of them rewritten in terms of $V_i, P_i$:
\be
\cA= \f{P_2^2 V_1}{2} \,,\q\q \cB= P_1 V_1 \,,\q\q C_\rm{d}= P_1 V_1 + P_2 V_2\,,
\ee
while the Hamiltonian reads
\be
H= -\f{P_2 (2 P_1 V_1 + P_2 V_2) G}{2 \cV_0}\,.
\ee
We consider the realization of the algebra \eqref{Poincare_alg} as self-adjoint operators, acting on wavefunctions on $\R^2$ in polar coordinates, with the scalar product
\be 
\label{Hilbert_scalar}
\braket{\psi|\chi} =\int_0^\infty \de \rho \int_0^ {2 \pi} \de\phi\, \psi^*(\rho,\phi) \chi(\rho,\phi)\,.
\ee
The $\iso(2,1)$ generators are realized as a one parameter family, with $s\in\R (\rm{mod}	\, 2)$
\bsub
\be
(\widehat{\Pi_0}^{(s)} \psi)(\rho,\phi) =& \rho \,\psi(\rho,\phi)\,,\\
(\widehat{\Pi_x}^{(s)} \psi)(\rho,\phi) =& \rho \sin \phi \,\psi(\rho,\phi)\,,\\
(\widehat{\Pi_y}^{(s)} \psi)(\rho,\phi) =& \rho \cos \phi\, \psi(\rho,\phi)\,,\\
(\widehat{J}^{(s)} \psi)(\rho,\phi)  =& \left[ i \f{\partial}{\partial \phi}  -\f{s}{2} \right ]\psi(\rho,\phi)\,,\\
(\widehat{K_x}^{(s)}  \psi)(\rho,\phi)  =& \left[i \rho \left(\sin \phi\f{\partial}{\partial \rho} +\f{\cos \phi}{\rho}\f{\partial}{\partial \phi} \right ) -\f{s}{2}\cos \phi \right] \psi(\rho,\phi)\,,\\
(\widehat{K_y}^{(s)}  \psi)(\rho,\phi) =& \left[i \rho \left(\cos \phi\f{\partial}{\partial \rho} -\f{\sin \phi}{\rho}\f{\partial}{\partial \phi} \right ) +\f{s}{2}\sin \phi \right] \psi(\rho,\phi)\,.
\ee
\label{Poincare_quant}
\esub
By a straightforward calculation, we can verify that they satisfy the quantum version of the algebra \ref{Poincare_alg}, replacing the Poisson bracket with the commutator between operators. We would like to stress that in this polarization for the wavefunctions the translations $\Pi$ act like multiplicative operators, and the $\sl(2,\R)$ sector as derivatives. Through a Fourier transformation of the wave-packet, we can of course inverse this property, or have a mixed polarization if we perform the transformation only for one of the variables, or a combination of them.
The realization \eqref{Poincare_quant} satisfies the condition that both the Casimirs are zero,
\bsub
\be
\widehat{\mathfrak{C}_1} &:= -\widehat{\Pi_0}^2 +\widehat{\Pi_x}^2+\widehat{\Pi_y}^2 \,,\\
\widehat{\mathfrak{C}_2} &:= \f{1}{2} \left (\widehat{J} \widehat{\Pi_0} +\widehat{\Pi_0} \widehat{J} +\widehat{K_x} \widehat{\Pi_y} +\widehat{\Pi_y} \widehat{K_x}- \widehat{K_y} \widehat{\Pi_x}- \widehat{\Pi_x} \widehat{K_y} \right  )\,,\\
&\widehat{\mathfrak{C}_2}\, \psi(\rho,\phi) = 0 =\widehat{\mathfrak{C}_1}\, \psi(\rho,\phi)\,.\notag
\ee
\esub
We can explicitly calculate the action of the $\mathfrak{sl}(2,\R)$ Casimir, and see that it does not depend on $s$
\be
({\widehat{\mathfrak{C}}_{\mathfrak{sl}(2,\R)}}^{(s)}  \psi)(\rho,\phi) =(-\widehat{J}^2+\widehat{K_x}^2+\widehat{K_y}^2)\psi (\rho,\phi) = -\rho \left(2\f{\partial}{\partial \rho} +\rho\f{\partial^2}{\partial \rho^2} \right) \psi(\rho,\phi)\,.
\ee
Let us finally remark that we can also define the \textit{square root} of this operator, which corresponds to the classical integral $\cB$:
\be
(\widehat \cB  \psi)(\rho,\phi) :=i\left (\rho \f{\partial}{\partial \rho} +\f{1}{2} \right) \psi(\rho,\phi)\,,\q\q \widehat{\mathfrak{C}}_{\mathfrak{sl}(2,\R)}=\widehat{\cB}^2+\f{1}{4}\,.
\ee
The factor $1/4$ represents a quantum correction to the Casimir with respect to its value in terms of the classical integration constant $\cB$. 

It is now time to discuss the equivalence with the canonical gravitodynamics quantization. Let us begin with the phase space defined by the $\bms_3$ fields $V_i$ and the respective momenta. The question is whether the Hilbert space that has been just introduced carries or not an irreducible representation of the Heisenberg algebra of $(V_i,P_i)$. For this we see that the classical version of the multiplicative operators $\widehat \Pi$ depend only the phase space functions $V_1$, $P_2$. For example if we consider the change of coordinates for the wavefunctions
\be
\begin{array}{rl}
V_1&=\lambda \rho (1+ \cos \phi)\,,\\
P_2&=\df{1}{\lambda}\tan\df{\phi}{2}\,,\\
\end{array}
\label{WdW_1}
\ee
we see that the quantum realization of the $\Pi$'s takes exactly the same form as the classical expressions \eqref{Poincare_alg_gen}, e.g. $\hat \Pi_x \ket \psi =  \rho \sin \phi \ket \psi = V_1 P_2 \ket \psi$. From the change of variables above, we can also try to find out the respective conjugate operators $V_2,P_1$. This corresponds to finding a definition for the action of $-i \partial_{V_1}$ and $i \partial_{P_2}$, seeking for a mixed polarization of the standard Schoedinger picture, where we perform a Fourier transform on the usual wavefunction depending on the configuration variables $V_i$. A straightforward calculation leads to
\be
\begin{array}{rl}
\widehat{P_1} \psi(\rho,\phi) &= -i \df{\partial_\rho \psi}{\lambda + \lambda\cos \phi}\,,\\
\widehat{V_2} \psi(\rho,\phi) &= i \lambda \left ( (1+\cos \phi) \partial_\phi+\rho \sin\phi\partial_\rho\right )\psi\,.\\
\end{array}
\label{WdW_2}
\ee
Of course, they reproduce the Heisenberg algebra $\h_2$. The subtlety here is given by the range of the coordinates $V_1, P_2$ induced by the definition of the scalar product \eqref{Hilbert_scalar}. Because of the fact  that $\rho\in \R^+$ and $\phi \in S^1$, we have that $V_1$ is always positive (or zero) while $P_2$ is real. Conversely, $P_2$ spans the whole real line. This in turn implies that the conjugate momenta are well-defined hermitian operators with continuous spectra, but $P_1$ has only positive eigenvalues, while $V_2$ can change its sign. This means that the Hilbert space \eqref{Hilbert_scalar} is unitarily equivalent to the WdW quantization only if we restrict the $g_{\Omega\Omega}$ coefficient to positive (or zero) values, and we allow the $g_{xx}$ to change its sign. Again we stress that this corresponds exactly to the extension from the interior to the exterior of the horizon. The presence of the symmetry forces us to extend the minisuperspace to cover the whole black hole spacetime.

Moreover, this remark implies that the square-integrable functions within the Hilbert space must not diverge faster than $1/\sqrt{\rho}$ to zero when $\rho \to 0$, or equivalently $V_1 \to 0$. We recall that classically this point in the phase space represents the singularity of the black hole, although this does not completely solve the singularity, the presence of the symmetry, already imposes a strong bond on the values of the wavefunctions near it, on a simple kinematical level. To discuss the fate of the singularity more in detail we must take into account the dynamics and find the physical Hilbert space.

\section{Hamiltonian eigenstates and dynamics}
First of all we search for eigenstate of the rotation generator $J$ and the Casimir $\cB$. These provide a complete basis of the Hilbert space, and are labelled by a real number $B$ and an integer $m$
\bsub
\be
\widehat{J}\ket{B,m}&=m\ket{B,m}\,,\q\q m\in \Z\,,\\
\widehat{\cB}\ket{B,m}&=B\ket{B,m}\,,\q\q B\in \R\,,\\
\widehat{K_\pm} \ket{B,m}&= \left (m\pm \left (\f 1 2+i B\right )\right )\ket{B,m\pm 1}\,,\\
\widehat{\Pi_0} \ket{B,m}&= \ket{B+i,m}\,,\\
\widehat{\Pi_\pm} \ket{B,m}&= \ket{B+i,m\pm 1}\,,
\ee
\label{j_eigen}\esub
where $K_\pm=K_x\pm i K_y$, and $\Pi_\pm=\Pi_y\mp i \Pi_x$ \footnote{In the last two lines the complex shift of the Casimir must be read as a formal replacement of $B$ into the wavefunction, for example
\be
\braket{\rho,\phi|\widehat \Pi_0|B,m}&= \f{1}{2\pi} \f{1}{\sqrt{\rho}} e^{-i (B+i) \log (\rho)} e^{-\f{1}{2}i (2m+s) \phi}= \f{1}{2\pi} \sqrt{\rho}\, e^{-i B \log (\rho)} e^{-\f{1}{2}i (2m+s) \phi} \notag
\ee
Actually the momenta operator $\Pi_i$ acting on a single eigenstate of the Casimir labelled by $B$ maps it to a combination of eingenstates, exactly like the momentum operator acting on a position eigenstate in standard quantum mechanics. 
}.

The formulas above can be verified by direct computation of the realization \eqref{Poincare_quant} on the normalized wavefunctions
\be
\braket{\rho,\phi|B,m} = \f{1}{2\pi} \f{1}{\sqrt{\rho}} e^{-i B \log (\rho)} e^{-\f{1}{2}i (2m+s) \phi}\,,\q\q
\braket{B', m'|B,m} = \delta_{m,m'} \,\delta(B-B')\,.
\ee
We shall remark that the first lines in \eqref{j_eigen} provide a representation of the $\sl(2,\R)$ algebra at fixed Casimir, but the presence of the abelian sector, represented by translations of the Poincar\'e group, allows to move between different values of $B$.

An interesting role is played by the parameter $s$. It is very similar to the one played by the superselection parameter in LQC \cite{Ashtekar:2011ni}. With respect to the scalar product \eqref{Hilbert_scalar}, two states with different $s$ are always orthogonal, but belong to two unitarily equivalent representations of the Poincar\'e group, exactly as two superselected lattices in LQC with respect to the polymer representation of Weyl algebra \cite{Ashtekar:2002sn, Corichi:2007tf}. Hereafter, without loss of generality, we will set $s=0$.

If we want to impose some dynamics on the Hilbert space we must recall that the classical evolution is generated by the Hamiltonian $H$. At the quantum level, it means that we need to search for eigenstates of $K_x-J$. The Hamiltonian operator is defined as
\be
(\widehat H\,\psi)(\rho,\phi) = \f{G}{2\lambda\cV_0}(\widehat{K_x}-\widehat{J})\psi(\rho,\phi) = E\,\psi(\rho,\phi)\,.
\ee
The associated differential equation has an analytical solution. Diagonalising with respect to the Casimir $\mathfrak{C}_{\mathfrak{sl}(2,\R)}$, we obtian again a complete basis for the Hilbert space,
\bsub
\be
\widehat H\ket{B,E}&=E\ket{B,E}\,,\\
\widehat \cB\ket{B,E}&=B\ket{B,E}\,,\\		
\braket{\rho,\phi|B,E} &= \f{\sqrt{\lambda \cV_0/G}}{2\pi} \f{1}{\sqrt{\rho} \sin(\phi/2)}  e^{-i B \log \left (\rho \sin^2(\phi/2)\right )} e^{-i \left (2\lambda \cV_0/G E \cot\left (\f{\phi}{2}\right ) 
\right )}\,,\\
\braket{B', E'|B,E} &= \delta(E-E') \,\delta(B-B')\,.
\ee
\esub
We shall remark that the spectrum of the Hamiltonian is continuous and unbounded from below, exactly as it happens for cosmology and conformal mechanics \cite{deAlfaro:1976vlx}. This should lead to a catastrophic instability when we consider (multiple) interacting systems, but our formalism is valid only for a single static black hole, without any matter content, and thus the question of stability cannot be addressed here. For instance, it does not make any sense to couple with a thermal bath and to look at the partition function $e^{-\beta H}$.

In order to verify the orthogonality, as well as the completeness, of the basis it is convenient to perform a change from the $(\rho,\phi)$ polarization to a new set of variables, with respect to whom the eigenfunctions look like plane waves (see  \cite{Sartini:2021ktb} for the realization of the Poincar\'e algebra on the new variables):
\bg
z:= 2\lambda \f{\cV_0}{G} \cot \left (\f{\phi}{2}\right )\,,\q\q a:= \log \left (\rho \sin^2\f{\phi}{2}\right )\,,\q\q(z,a)\in \R^2\,,\\
\braket{z,a|B,E} := \f{1}{2\pi} e^{-i B a} e^{-i E z 
}\,.
\label{za_wave}
\eg
We choose to call the second variable $a$ because it is actually related to the $A$ operator:
\be
(\hat \cA\, \psi)(\rho,\phi) = \f{1}{2\lambda} (\Pi_0 - \Pi_y)\psi (\rho,\phi) = \f{1}{\lambda} \left (\rho\, \sin^2\f{\phi}{2} \right )\psi(\rho,\phi):=\f{1}{\lambda}(\widehat{e^a}\, \psi)(\rho,\phi)\,.
\ee
On the other hand, $z$, being conjugated to the energy, is expected to be related to time. This is indeed what happens once we consider physical states satisfying the quantum dynamics. The imposition of the dynamics on the Hilbert space will be the subject of the next subsection.


\subsection{Physical solution and semiclassical states}
\label{physical space}
In the previous section, we found the eigenvectors that diagonalise both the Hamiltonian ($K_x-J$) and one of the classical first integral represented by the $\mathfrak{sl}(2,\R)$ Casimir ($\cB$). We can exploit this basis to impose the dynamics of the system. We face two different possibilities to do so: we recall that at the classical level we can both impose the constraint $\cC_\cH=H-\cV_0/(G L_s^2)=0$, or equivalently see $H$ as a true Hamiltonian generating the time evolution with respect to $\tau$. At the quantum level, the two interpretations (hereafter denoted respectively by \textit{strong} and \textit{weak} constraint) will lead to drastically different semiclassical evolutions. In both cases, we will reconstruct the metric as an emergent quantity, based on expectation values of the fundamental operators. Before starting to discuss the quantum evolution we would like to recall that the classical solutions for $V_i$ can be inferred from \eqref{Bh_sol}, and the definition \eqref{new_fields_v}. They are
\bsub \be
V_1(\tau) &= \frac{\cA G^2 (\tau-\tau_0)^2}{2 \cV_0^2}\,,\\
V_2(\tau) &= \frac{(\tau-\tau_0) \left(2\cB L_s^2 G/\cV_0-\tau+\tau_0 \right)}{2 L_s^2}\,.
\ee \label{classic_traj} \esub
\paragraph{Strong constraint}
Following the Dirac procedure, we implement the strong constraint requiring that the physical states are the ones that satisfy $\cC_\cH {\ket\psi}_\text{phys} =0$, or equivalently:
\be
{\ket \psi}_\text{phys} :=\int\de E\int  \de B\, \delta\left (E-\f{\cV_0}{ GL_s^2}\right ) \,\psi (B) \,\ket{B,E} = \int_\R \de B \,\psi (B) \,\ket{B,\f{\cV_0 }{G L_s^2}}	\,,
\ee 
where the energy scales and their fiducial volume  $\cV_0$ are given \textit{a priori}, and they act as multiples of the identity operator. These physical states are of course not normalized within the original Hilbert space, and we need to introduce a new inner product to obtain the physical space. To this purpose, we make use of \textit{group averaging} (or refined algebraic quantization) \cite{Thiemann:2007pyv,Marolf:1995cn,Marolf:2000iq} and define the projector
\be
\label{group_averag}
\delta(\cC_\cH)=\f{1}{2\pi}\int_{-\infty}^\infty \de x \exp\left ( i\, x\,\cC_\cH\right )\,,	\q\q \delta(\cC_\cH): \cH_\rm{kin}\to \cH_\rm{phys}\,,
\ee
that induces the inner product
\be
{\braket{\chi |\psi}}_\text{phys} = {\braket{\chi|\delta(\cC_\cH)|\psi}}_\rm{kin} = \int \de B\, \chi(B)^* \psi(B)\,.
\ee
Notice that on $H_\rm{phys}$, only the quantum operators $\widehat \cO$ that commute with the constraint are well-defined $[\widehat \cO,\widehat{\cC_\cH}]=0$, otherwise their action will map out of the physical subspace. 
We need then to deparametrize the dynamics with respect to a preferred clock (here $\tau$). More explicitly this means that the observables we can measure are one-parameter families, e.g:
\be
\label{Dirac_obs}
\widehat{V_1}(\tau) = \f{G^2\tau^2}{2\cV_0^2} \widehat {\cA}\,,\q\q \widehat{V_2}(\tau) = \widehat {\cB}\, \f{G \tau}{\cV_0} -\f{G \tau^2}{2\cV_0} \hat {H}\,.
\ee
From the expression \eqref{za_wave} we can infer that $B$ and $a$ are conjugated variables, while $z$ is conjugated to the energy $E$, and it is thus traced out in the group averaging \eqref{group_averag}. The most convenient way of representing the physical space is in terms of functions of the variable $a$ (i.e. Fourier transform of $\psi(B)$), upon which the physical observables in \eqref{Dirac_obs} act as
\be 
\hat \cA \psi(a) = \f{1}{\lambda} e^a \psi(a)\,,\q\q \hat{\cB}\psi (a) = i \partial_a \psi(a)\,,\q\q \hat{H} \psi(a) = \f{\cV_0}{G L_s^2} \psi(a)\,.
\label{ba_operators}
\ee

The semiclassical states can be obtained by picking $\psi(a)$ to be a Gaussian distribution peaked around some classical values $(a_*,B_*)$,
\be
\psi_*(a) = \,\f{1}{(2\pi \sigma^2)^{1/4}}\,e^{-\f{(a-a_*)^2}{4 \sigma^2}} e^{-i B_* a}\,,\q\q
\left|
\begin{array}{rl}
\braket{\widehat{V_1}(\tau)}&=\dfrac{e^{a_*+ \f{\sigma^2}{2}}}{2\lambda} \,\dfrac{\tau^2 G^2}{2\cV_0^2} :=\dfrac{A_*}{2}\dfrac{\tau^2G^2}{2\cV_0^2}\,,\\
\braket{\widehat{V_2}(\tau)}&=\dfrac{ B_*\tau G}{\cV_0} -\dfrac{\tau^2}{2L_s^2} \,.\\
\end{array}
\right .
\label{v_strong_evolution}
\ee
This means that the expectation values follow the classical trajectories, up to a constant rescaling of the first integral $A$, due to quantum indetermination. This comes without much surprise, as the classical evolution has already been imposed in the deparametrization of the dynamics, in the definition of the one-parameter family of Dirac observables \eqref{Dirac_obs}.

\paragraph{Weak constraint}
The other possibility to impose the dynamics consists in asking that the constraint is satisfied in a weaker sense: $\braket{ \psi|H|\psi}= \f{\cV_0}{G L_s^2}$. One could imagine that somehow this would account for some deep fuzziness of the geometry, contributing as an effective stress-energy tensor, that allows some fluctuations around the classical energy value. The idea is similar to the so-called \textit{deconstantization} of unimodular gravity, where the cosmological constant is allowed to fluctuate around its classical value \cite{Unruh:1988in,Amadei:2019ssp,Perez:2019gyd,Magueijo:2021rpi}. In the following we will see how this statement should be correctly interpreted, the uncertainty on the energy level coming explicitly into the game.

The imposition of the weak constraint is easily achieved by Gaussian wavepackets, peaked on some semiclassical values for the pairs of conjugated variables $(B_*, a_*)$ and $(\cV_0/(G L_s^2), z_*)$:
\bsub
\be
\ket {\psi_{*}}&:= \int\de E\int  \de B \,\f{1}{(2\pi \sigma_B \sigma_E)^{1/2}}\,e^{-\f{(B-B_*)^2}{4 \sigma_B^2}} \,e^{-\f{(E-\cV_0/(G L_s^2))^2}{4 \sigma_E^2}} \,e^{i B a_*}e^{i E z_*} \,\ket{B,E}\,,\\
\braket {a,z|\psi_{*}}&=  \,\sqrt{\f{2\sigma_B \sigma_E}{\pi}} \,e^{-\left (a-a_* \right )^2 \sigma_B^2} \, e^{-\left (z- z_*\right )^2 \sigma_E^2}\, e^{i B_*\left (a_*- a\right )} e^{i (z_*- z )\cV_0 /(G L_s^2) }\,.
\ee 
\label{semiclass:weak}
\esub
Now all the operators corresponding to the $\mathfrak{iso}(2,1)$ generators are well defined.  If we want to reconstruct the evolution of the black hole, we simply need to map these generators back to the gravitational phase space. This is exactly the same as the mapping to the WdW quantization \eqref{WdW_1}, \eqref{WdW_2} :
\be
\widehat{V_1}\, \psi(\rho, \phi) =\lambda (\widehat{\Pi_0}+\widehat{\Pi_y})\psi(\rho,\phi)\,,&\q\q
\widehat{V_2}\, \psi(\rho, \phi) =\lambda (\widehat{J}+\widehat{K_x})\,\psi(\rho,\phi)\,.
\ee
A straightforward calculation, using the properties of Gaussian integrals, gives the expectation values
\bsub
\be
\braket{\widehat{V_1}}&=\f{G^2}{2\lambda \cV_0^2} e^{a_*+ \f{1}{8\sigma_B^2}}\,\left (z_*^2+\f{1}{4 \sigma_E^2}\right ):=\f{G^2 A_*}{2 \cV_0^2}\,\left (z_*^2+\f{1}{4 \sigma_E^2}\right ) \,,\\
\braket{\widehat{V_2}}&= \f{B_* G}{\cV_0} \, z_* -\f{1}{2 L_s^2} \,\left (z_*^2+\f{1}{4 \sigma_E^2}\right )\,.
\ee
\label{expected_gaussian_weak}
\esub
Comparing with the classical solutions, we identify $\tau= z_*$, and we see that the quantum evolution closely follows a classical trajectory, for large time $\tau\gg1/4\sigma_E^2$, up to a shift in the relationship between the classical constant of motion $A_*$ and its quantum realization $e^{a_*}$, due to the quantum indetermination. On the other hand for small $\tau$, the quantum correction comes into play and it actually prevents $V_1$ from being zero, avoiding the singularity. 

We must also notice that the imposition of the weak constraint coincides with considering $H$ as a true Hamiltonian so that the states must satisfy the Schr\"odinger equation
\be
i \partial_\tau \ket{\psi(\tau)} = H \ket{\psi(\tau)}\,,
\ee
whose solution is indeed given by \eqref{semiclass:weak} replacing $z-z_* \mapsto \tau-\tau_0$. In the following figure, there is a comparison between the classical solution and the expectation values of a Gaussian semiclassical state. Therein the classical trajectory is represented by the black dashed line. This one is separated into two branches, the upper one is for positive $\tau$ (we set $\tau_0=0$) and represents the black hole solution. The interior is for $V_2>0$ and the exterior for $V_2<0$. The lower branch, for negative $\tau$, represents the white hole side. Because of the negative mass, this has no horizon and $V_2$ is always negative. The red solid line in \ref{fig:effective_gaussian} represents the quantum effective evolution, resulting from a weakly constrained state, this has the property of smoothly connecting the two regions, without going through a singularity, i.e $\braket{V_1}$ is never zero. 

\vbox{
\begin{center}
\begin{minipage}{0.4\textwidth}
		\includegraphics[width=\textwidth]{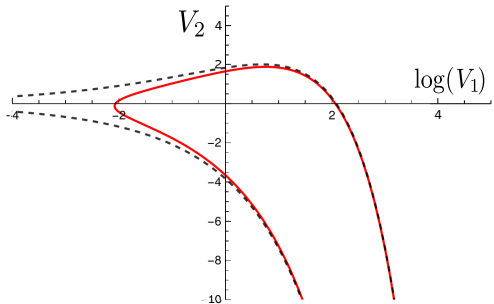}
\end{minipage}
~\hspace{0.5cm}
\begin{minipage}{0.5\textwidth}
\captionof{figure}{\small{Plot (in solid red) of the expectation values of the gaussian state \eqref{expected_gaussian_weak} in the $V_i$ plane, with $A_*= 1\,; B_*=2	\,;L_s=\cV_0/G\,$ and $\sigma_E=G/\cV_0\,$, compared with the respective classical trajectory labeled by the first integrals $\cB=B_*\,,\cA=A_*$, in dashed black.} \label{fig:effective_gaussian}}
\end{minipage}
\end{center}
}
If we want to reconstruct the effective metric we need to insert the expectation values into the line element \eqref{Bh_minisup_line:Vi}. For this, we shall also choose a shift function, and for the moment we will restrict ourselves to the case $N^1=0$. We will come back to the freedom of the shift later. We shall remark here that we actually calculate the expectation values of the fundamental fields $V_i$, and not of the metric coefficients. The two can differ from some $\sigma$ correction. We consider indeed the effective metric to be:
\be
\de s^2_\rm{eff} = - \f{\braket{V_1}}{2\braket{V_2}}\de \tau^2+\f{\braket{V_2}}{2	\braket{V_1}}\de x^2+\braket{V_1}L_s^2\de\Omega^2\,.
\ee
Doing so we have a well defined operator $V_i$, otherwise we should introduce some regularization to deal with the inverse operator. With a suitable change of coordinates similar to \eqref{change_classic}\footnote{The change of coordinate is here
\be
\tau= z_*=\sqrt{\f{2}{A_*}}\f{\cV_0 }{G L_s} R	\,,\q\q x= \f{L_s G}{\cV_0}\sqrt{2A_*} T \notag\,.
\ee}, we can recast the metric into the form
\be
\de s^2_\rm{eff} = -f(R)^{-1} \de R^2 + f(R) \de T^2 + (R^2 + \Delta) \de \Omega^2\,,\q f(R) =\f{2 M R - R^2 - \Delta}{R^2 + \Delta}\,,
\label{effective_metric}
\ee
where the quantum corrections are encoded in the parameter
\be
\Delta = \f{A_* L_s^2 G^2}{8 \cV_0^2 \sigma_E^2} = \f{2 A_* \lp^4}{L_0^2 L_s^2 \sigma_E^2}\,.
\ee
It depends on the scales of the system, but also the quantum states through the uncertainty on the energy $\sigma_E$ and the classical first integral $A_*$. For small quantum correction ($\Delta/M^2\ll 1$), the region where $\tau$ is timelike is bounded by two horizons for the Killing vector $\partial_r$, in correspondence of the zero of $V_2$. The outer one is close to $R\approx 2M$ and represents the event horizon for the outside of the black hole. The inner one is close to $R\approx 0$. The interior structure resembles closely the Reissner-Nordstr\"om solution of general relativity, bounded by two null horizons. Extending the solution outside the horizons \cite{Ashtekar:2018cay, Achour:2021dtj}, we merge two asymptotically flat regions at $R\to\pm\infty$, without any singularity. This also solves the problem of the hole in the conformal causal diagram that appears at the classical level.
 
Figure \ref{fig:lighcone} represents a schematic diagram for the lightcones structure in the three regions and the related conformal causal structure.

\vbox{
\begin{center}
\begin{minipage}{0.35\textwidth}
		\includegraphics[width=\textwidth]{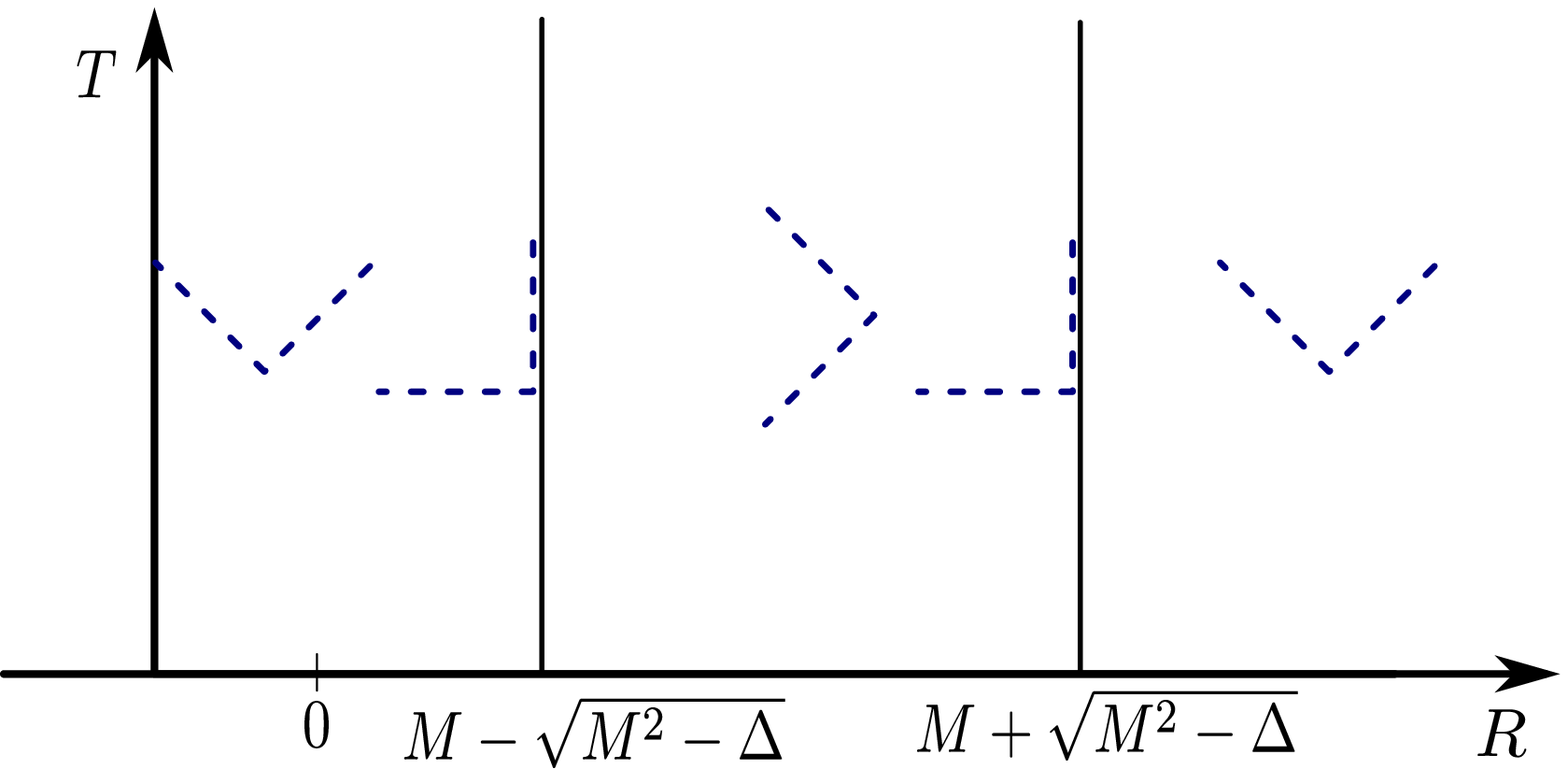}
\end{minipage}
~\hspace{0.5cm}\begin{minipage}{0.35\textwidth}
		\includegraphics[width=\textwidth]{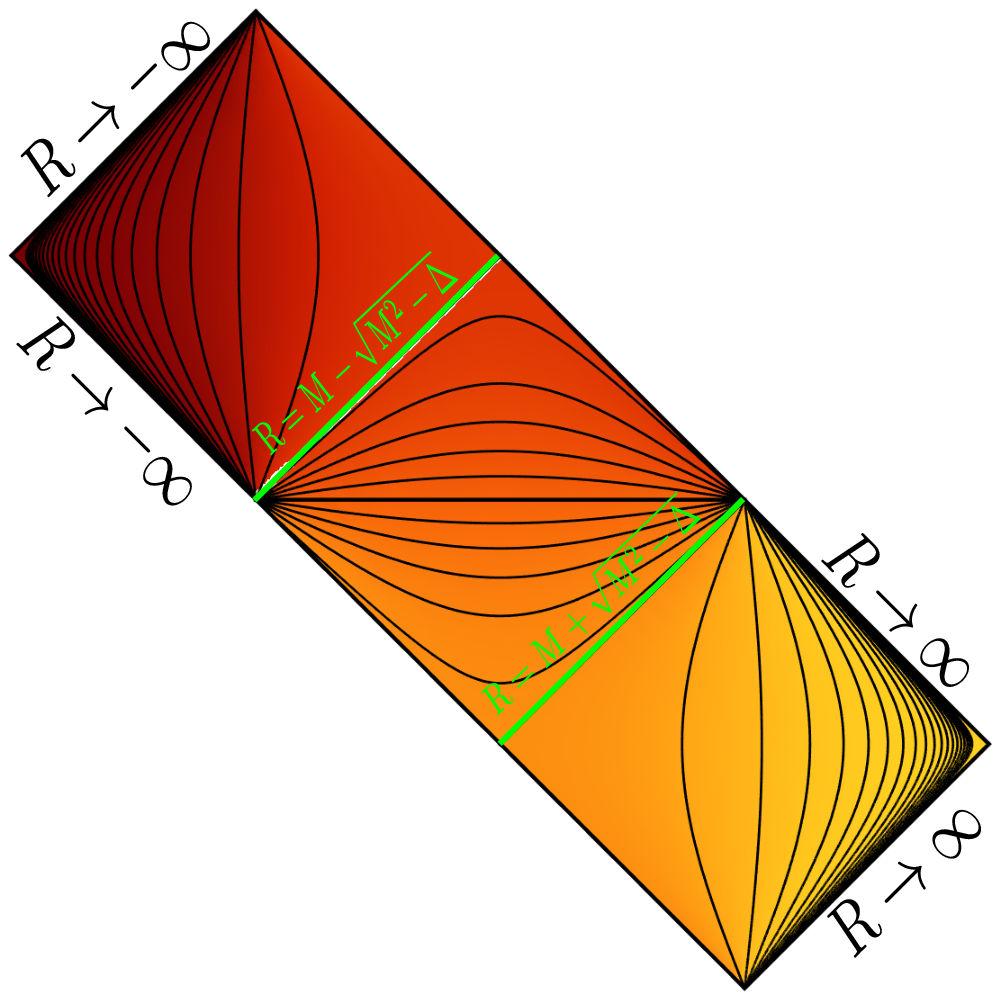}
\end{minipage}\\
\begin{minipage}{\textwidth}
\captionof{figure}{\small{On the left we have the future directed lightcones in the effective solution with two horizons, the $R$ coordinates spans the whole real line, the vertical lines correspond to the locations of the horizons, on the right the conformal Penrose diagram, where the lines at constant $R$ are drawn in black. The backgound color is lighter for $R\to \infty$, and gets darker when $R$ decrease. With respect to the classical setup we see that there is no singularity and the hole in the diagram disappears.} \label{fig:lighcone}}
\end{minipage}
\end{center}
}

Looking at the exterior region for positive $R$ this effective solution will give a new class of stationary modified black holes, and it would be interesting to further study the correction to standard black hole physics (e.g. Hawking radiation or quasi-normal modes) on such an effective spacetime \cite{Arbey:2021jif, Arbey:2021yke}. Concerning the region behind the inner horizon, it represents a white hole \textit{outside} region, where $\partial_T$ is timelike, and the \textit{effective} mass is negative $-M$. This could also have interesting perspectives on the dark matter issue.

We shall nevertheless remark that the locations (and even their existence) of the two horizons depend on the quantum states. Moreover in the extremal limit $\Delta\to M^2$ the two horizons coincide and the quantum correction becomes relevant at a macroscopic scale, meaning that there are large deviations from the classical solution in the low curvature regime, near the horizon. In order to see this, we could also look at the value of the Kretschmann scalar at the transition surface ($R=0$), where the radius of the two-sphere ($V_1$) is minimal. It is given by
\be
\cK_{R=0} = \f{12}{\Delta^2}\,.
\ee
All the corrections to the standard Schwarzschild solution are encoded in the parameter $\Delta$, which in turn depends on the uncertainty on the energy $\sigma_E$ and the classical first integral $A$. The importance of the quantum correction is directly proportional to $\Delta$, so inversely proportional to $\sigma_E$. By calculating the expectation values of the squared operators it is possible to show that the uncertainties on the metric coefficients ($\delta V_i:= \braket{V_i^2}-{\braket{V_i}}^2$) near the minimal radius $(z_*=0)$ also grow inversely proportional to the dispersion $\sigma_E$: 
\be
\delta V_i\big |_{z_*=0} \xrightarrow[\sigma_E \to 0]{} \cO\left (\f{1}{\sigma_E^4}\right )\,,\q\q
\begin{array}{rl}
\delta V_1\big |_{z_*=0} \xrightarrow[\sigma_E \to \infty]{} \cO\left (\dfrac{1}{\sigma_E^4}\right )\,\\[.4cm]
\delta V_2\big |_{z_*=0} \xrightarrow[\sigma_E \to \infty]{} \cO\left (\dfrac{1}{\sigma_E^2}\right )\,\\
\end{array}\,.
\ee
It is then logical to expect that for a heavily fluctuating metric (small $\sigma_E$), the quantum correction becomes important, and this is indeed what happens. The problem of this model is that for any given dispersion $\sigma_E$, playing with $\cA$, coupled with the other integral $\cB$, it is possible to make $\Delta$ as big as desired, without changing $M$. Even for small metric fluctuation (big $\sigma_E$), the deviation from the Schwarzschild solution could be relevant as close as desired to the horizon, or even cancel the horizon itself (if $\Delta>M^2$). This means that we need to add by hand a first-class constraint on the integral $\cA$, that fixes its value. We also would like to eliminate the dependence on the fiducial scale of the quantum correction. This uniquely fixes $A_* \propto L_0^2 L_s^2/L_P^4$ and implies that all the corrections come from the uncertainty on the energy. We shall remark that this represents a huge difference with respect to cosmology, where the appearance of quantum correction for the Wheeler-DeWitt quantization has not been observed \cite{Ashtekar:2011ni}. Nonetheless, this dependence on the energy uncertainty disappears for non-squeezed states, where the dispersion on conjugate variables is minimized, by e.g. fixing $\sigma_{E, B} = 1/2$.

We will see in the next chapter (\ref{chap6}) how the results are modified if we introduce a regularization. Before moving to the study of possible regularizations we could exploit again the Poincar\'e structure to discuss the mass spectrum, this is the subject of the next subsection.

\section{Mass operator}
\label{mass_spectra}
We recall that at the classical level we have a degeneracy on the definition of the mass, we have indeed two first integral $\cA$ and $\cB$, that combine into \eqref{classic mass} to give the only physical quantity that is relevant in the GR framework, the mass, that labels diffeo-inequivalent solutions. But if we look at the quantum theory, we expect that both $\cB$ and $\cA$ acquire some fluctuation contributing to the mass. Moreover, the two observables do not commute, and we had to build coherent states to represent semiclassical solutions with a finite spread on both of them. 

This in turn inevitably forces us to work with semiclassical states that are not eigenvectors of the mass operator. Nevertheless, the group quantization provides interesting information about the mass spectrum. For this purpose, we need to further investigate the properties of the mass operator. 
For this, we start by mapping the classical observable measuring the squared mass to a combination of Poincar\'e generators

Despite its non-analytical form, with the definition of the $\cA$ operator and the \textit{square root} of the $\sl(2,\R) $ Casimir ($\cB$) provided at the beginning of this section \eqref{ba_operators}, we can easily build the self-adjoint mass operator
\be
\widehat{M} \psi (a,z)&:= L_M \widehat \cA^{1/4} \widehat \cB  \widehat \cA^{1/4} \, \psi(a,z) := i \f{L_M}{\sqrt{\lambda}} e^{a/4} \partial_a \left (e^{a/4} \psi(a,z) \right )\,,\\ 
L_M&:= \f{L_s^3 G^2}{\sqrt{2} \cV_0^2} = \f{8 \sqrt{2} \lp^4 }{ L_0^2 L_s}\notag \,,
\ee
where we have introduced the constant length $L_M$, indicating how the UV
fundamental length and the fiducial scales couple into the definition of the mass observable. We shall remark that, despite the apparent dependence on the fiducial scale, the mass is unchanged by a rescaling of the IR length, because also $\cA$ and $\cB$ change under the rescaling \cite{Geiller:2020edh}.

We shall remark that this operator sees only the $a$ dependence of the wavefunction. Unsurprisingly this means that the mass is a Dirac observable commuting with the Hamiltonian, which in turn implies that it can be measured without any problem on both the weakly and strongly constrained states. For the sake of simplicity, in the following we will consider states on the strongly constrained physical space, tracing out the time-energy dependence of the wavefunction. The conclusions about the spectrum will not be affected by this simplification.

We can explicitly calculate the wavefunctions that diagonalise the mass operator and provide a complete basis for the physical wavefunctions. These are given by the set
\be
\braket{a|M} &:= \psi_M(a)= \sqrt{\frac{2 \lambda^{1/2}}{\pi L_M}} e^{-a/4} \cos \left(\frac{2 M \sqrt \lambda}{L_M e^{a/2}}\right) \,,\q \widehat{M}\ket{M} = M\ket{M}\,. \label{a_mass}
\ee
%
where $M$ is a real positive continuous parameter. By virtue of the integral properties of the cosine functions we can prove the orthogonality and completeness of the basis. With  $y= 2 \sqrt \lambda e^{-a/2}/L_M$, we have indeed
\be
\braket{M'|M}= \f{2}{\pi} \int_0^\infty \de y \, \rm \cos (M y) \rm \cos (M' y) &= \delta(M-M')\, ,\q\q
\int_0^\infty \ket{M}\bra{M} =\I\,.
\ee
The existence of the Poincar\'e structure forces the mass to have a continuous spectrum, as it has been pointed out in \cite{Achour:2021dtj}. This property is in contrast with several other investigations of black hole spectra \cite{Bekenstein:1995ju, Berezin:1998xf} where a discrete spectrum is postulated or obtained \cite{Vaz:1998gm, Louko:1996md, Bojowald:1999ex}. In particular, it means that the black hole could emit particles with any given mass, and not only the ones corresponding to the gap between eigenstates. 

\ \newline

In this chapter, we have seen how to exploit the existence of the rigid symmetry to build a quantum theory of the black hole minisuperspace. The simplicity of the model allows us to overcome many difficulties of the covariant approach to quantum gravity. Moreover, the remark that the symmetry is physical, in the sense that it reflects the scaling property of the system, together with containing the mass observable, makes it natural to request its protection on a quantum level. In addition, the conserved charges naturally provide a full set of Dirac observables, and the well-known representations of $\ISO(2,1)$ make the semiclassical limit easy to handle. We have seen however a huge difference between the imposition of the strong constraint and the addition of some energy fluctuation. This results in a completely different dynamics for the latter, where the singularity is replaced by a transition surface and a second horizon. In the next chapter, we will see to what extent the introduction of a regularisation on the phase space closes the gap between \textit{strong} and \textit{weak constraints}.

\chapter{Polymerisation and singularity resolution}
\label{chap6}

In this section, we will discuss how it is possible to define a \textit{polymer} quantization that preserves the $\ISO(2,1)$ symmetry. For this regularization, the evolution of the coherent state will reproduce the effective metric \eqref{effective_metric} for both the strong and weak constraints.

In the chapter \ref{chap4} we stressed that the main ingredient of Loop Quantum Cosmology is a realization of the Weyl algebra on a non-separable Hilbert space, inequivalent to the standard Schr\"odinger representation. For a given configuration variable (say $q\in \R$) the space is spanned by orthogonal vectors $\ket{q}$ and it contains functions that are non-vanishing only on a countable subset of $\R$. The lack of weak continuity implies that the momentum operator (say $p=-i\partial_q$) is not defined, but only its finite exponential $e^{i \lambda p}$. This leads to the necessity to introduce a regularized Hamiltonian, where the momenta are replaced by (combination of) their exponentiated version. This is usually done by the substitution $p \mapsto \sin (\lambda p)/\lambda$, but other regularizations are possible as well, and the exact form of the effective Hamiltonian has been heavily debated, especially in the context of black hole interior \cite{Ashtekar:2018cay, Ashtekar:2018lag, Bodendorfer:2019xbp, Bodendorfer:2019cyv, Bodendorfer:2019nvy, Bodendorfer:2019jay, Ashtekar:2020ckv}. In any case, the regulator $\lambda$ is claimed to encode the fundamental discreteness of spacetime, relating its value to the Planck length. In the limit where it becomes negligible $\lambda \to 0$, we shall recover the classical evolution. For a given parameter $\lambda$ the Hilbert space is divided into so-called superselected sectors, according to the position eigenstates, the latter taking discrete real values $\epsilon + n \lambda$, with a fixed offset $\epsilon$. The operator $e^{i \lambda p}$ creates a finite shift of step $\lambda$ and lets us move within a given superselected sector. 

The problem with introducing a regularization scheme for the Hamiltonian is that, in general, it spoils the classical Poincar\'e symmetry, unless we extend the regularization to the other observables. A systematic way to ensure that any Poisson structure on a phase space is preserved is to look at the regularization as a canonical transformation \cite{Geiller:2020xze, BenAchour:2019ywl}, represented by a symplectomorphism $(V_i, P_i) \mapsto (v_i,p_i)$. For this to describe a different physics we also need to change the mapping between metric coefficient and field space, this means that we replace the new variables $v_i$ at the place of the corresponding \textit{classical} $V_i$ into the line element \eqref{Bh_minisup_line:Vi}, instead of writing the old variables in terms of the new ones. To make this clearer let us imagine a point transformation $V_i = V_i(v_i)$, the usual classical picture is unchanged if we calculate the evolution of the new variables and than we insert them into
\be
\de s^2 = - \f{V_1(v_i)}{2V_2(v_i)}\de \tau^2+\f{V_2(v_i)}{2	V_1(v_i)}\de x^2+V_1(v_i) L_s^2\de\Omega^2\,,
\ee
on the other hand the metric
\be
\de s^2 = - \f{v_1}{2v_2}\de \tau^2+\f{v_2}{2	v_1}\de x^2+v_1 L_s^2\de\Omega^2\,
\ee
might be completely different from the original one, in this sense the polymerization as canonical transformation changes the physics of the system.

In \cite{Geiller:2020xze} we have proposed a possible canonical transformation that realises the simple polymerisation for the $P_i$ variables. This was only performed on an effective level, as the majority of the works in the framework of black hole minisuperspace. We recall rapidly some of its features here. The aim is to obtain the replacement $P_i \mapsto \sin(\lambda P_i)/\lambda_i$ in the Hamiltonian, for some dimensionless constants $\lambda_i$. In \cite{Geiller:2020xze} we gave explicitly the transformation, that is:
\bsub\be
V_1 &= v_1 \cos^2\left (\f{ \lambda_1 p_1}{\lambda_1}\right )\,,  & V_2 &= v_2 \cos^2\left (\f{ \lambda_2 p_2}{\lambda_2}\right ) + \f{\lambda_2^2 \cV_0^2}{2 G^2 L_s^2} \,,\\  
P_1 &= \f{2}{\lambda_1} \tan \left ( \f{\lambda_1 p_1}{\lambda_1}\right )\,,  &P_2 &= \f{2}{\lambda_2} \tan \left ( \f{\lambda_2 p_2}{\lambda_2}\right )\,.
\ee \label{canonic_transf_1}\esub
This was obtained by starting from the hypothesis of a point symplectomorphism for the momenta $P_i\mapsto p_i$, with $P_i = F_i(p_i)$, and the polymer Hamiltonian
\be
H_\rm{poly} =  -\f{G}{2 \cV_0} \f{\sin(\lambda_2 P_2)}{\lambda_2} \left (2 \f{\sin(\lambda_1 P_1) V_1}{\lambda_1} + \f{\sin(\lambda_2 P_2) V_2}{\lambda_2}\right ) \,. \label{h_poly_1}
\ee
We then calculate the evolution of the \textit{polymer} variables, via the Poisson bracket with $H_\rm{poly}$. Using the on shell value of the Hamiltonian ($H_\rm{poly}\approx H\approx \cV_0/GL_s^2$) and the requirement for the right semiclassical limit when $\lambda_i \to 0$, we finally obtain the form of $F_i$ and the transformation \eqref{canonic_transf_1}. We have also studied the causal structure of the effective solution. The non-vanishing of the polymer field $v_1$ is synonymous with a singularity resolution, that is replaced by a bounce in the two-sphere radius. This is seen in the effective trajectories in the figure below \eqref{fig:effective_polym}

\vbox{
\begin{center}
\begin{minipage}{0.4\textwidth}
		\includegraphics[width=\textwidth]{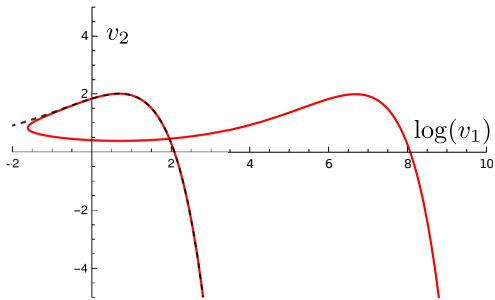}
\end{minipage}
~\hspace{0.5cm}
\begin{minipage}{0.5\textwidth}
\captionof{figure}{\small{Plot (in solid red) of the expectation values of the polymer variables in the $v_i$ plane, the initial conditions are given by $A= 1\,; B=2	\,; L_s=\cV_0/G\,$ and the polymerisation parameters are $\lambda_i =0.1\,$. The effective evolution is compared with the respective classical trajectory for the $V_i$'s in dashed black} \label{fig:effective_polym}}
\end{minipage}
\end{center}
}
The solutions are easily obtained by inverting the canonical transformation \eqref{canonic_transf_1}, and replacing the solutions into the modified line element gives the effective spacetime \footnote{Actually in \cite{Geiller:2020xze} we point out that the modified line element is not simply obtained by a replacement $V_i\to v_i$ in \eqref{Bh_minisup_line:Vi}, but there is a modification in the clock choice, the $\de \tau$ term pick up a factor $\cos^2(\lambda_2 p_2/2)$, to have a coherent relationship between the mapping between Hamiltonians ($H_\rm{polym}$ and $H$) and the energy levels. Thus the effective line element is
\be
\notag \de s^2_\rm{poly} = -\f{v_1}{2v_2}\f{\de \tau^2}{\cos^4(\lambda_2 p_2/2)}   + \f{v_2}{2v_1} \de x^2 + L_s^2 v_1 \de \Omega^2\,.
\ee
}. Just like in most of the effective studies of the LQC black hole interior, we find that the singularity is replaced by a black-to-white hole transition. At the minimal radius (minimum of $v_1$) we have a sphere which is a transition surface between the trapped (BH) and anti-trapped (WH) space-like regions. This can be checked by computing the expansion of the future pointing null normal to the 2-sphere at constant time, which changes sign while passing through the transition surface. We refer the reader to \cite{Geiller:2020xze}6 for details about this model.

We just would like to point out some substantial differences with respect to the effective evolution obtained in the previous chapter, represented in figure \eqref{fig:effective_polym}. Both the models are regular in the sense that there is no singularity, which is replaced by an inner horizon, connecting the interior to an asymptotically flat \textit{white hole} region. The main difference concerns the effective mass of the white hole: while it is negative for the first case (figure \ref{fig:effective_gaussian}) but its absolute value equals the mass on the black hole side, it is positive in the second case (figure \ref{fig:effective_polym}). A stronger contrast between the two effective metrics is however on a more conceptual level. The quantum effects of the polymerisation are indeed introduced here purely on a semiclassical level, performing a canonical transformation and changing the variables in the line element. Due to the high non-linearity of the Hamiltonian, we are however no longer sure that the evolution of coherent states for the corresponding quantum evolution reproduces the same dynamics. For instance, we have seen that when we consider the weakly constrained states for the non-polymer dynamics, the quantum effects are already relevant and prevent the system to go through a singular point. If now we promote the variables $v_i, p_i$ to operators and we make the states evolve under the quantum version of \eqref{h_poly_1}, it is reasonable to expect that some other quantum effect comes into play, dramatically modifying the effective structure. In other words, the quantum effects of the semiclassical polymerisation will differ from the effects of the quantization! 
Even worse, in general, we are not even able to describe such coherent states, due to the complexity of the quantum Hamiltonian. 

As a final remark, we point out that this way of looking at the polymerisation as a symplectomorphism does not solve all its ambiguities. The choice of momenta to be polymerised is still arbitrary, as well as the choice of \textit{lowest spin representation}, manifest in the choice of $\sin$ functions. If we stick to the effective level, there is an infinite amount of canonical transformations that possibly define a polymerisation, in the sense that we replace some momenta with a periodic function. It would be interesting to study whether there are some global features, sheared by any choice of periodic function, or not. This is however beyond the scope of this thesis and we defer it to future works. On the other hand, we can make use of the explicit realization of the symmetry on a quantum level to constrain the polymerisation. This is the subject of the next section.

\section{Regularisation from symmetry representations}

In the previous chapter, we have detailed the properties of the Poincar\'e representations, and we now want to make use of them to discuss the polymerisation. In LQC the inequivalence between the Wheeler-DeWitt and the loop quantization is obtained by avoiding the Stone-Von Neumann theorem through a modular Hilbert space. However, if we want the symmetry to be preserved upon regularisation, we need to work within the Hilbert space presented in the previous chapter, which is equivalent to the Wheeler-DeWitt quantization (with appropriate boundary conditions).

The idea is then to mix the two ingredients at our disposal: the representations of $\ISO(2,1)$ and the polymerisation as canonical transformation. We recall that in general a canonical transformation could be implemented by a non-unitary transformation at the quantum level. The difference between the WdW quantization and the polymer one is not seen as the result of Hilbert spaces that are unitarily inequivalent, but as the consequence of considering inequivalent operators (not related by a unitary transformation, but still canonical on the semiclassical level) on the \textit{same} Hilbert space, that in addition carries an irreducible representation of $\ISO(2,1)$. Moreover, we expect to be able to rewrite the new variables $v_i$ in terms of $\iso (2,1)$ generators and calculate their expectation values.

Nevertheless, we cannot freely choose any transformation, but we want it to satisfy a set of properties:
\begin{itemize}
\item the transformation must be such that the effective metrics are asymptotically-flat,
\item we want the phase space functions representing the \textit{polymer} coefficients to have a quantum realization with discrete spectra.
\end{itemize}

We already have at our disposal an operator whose eigenvalues are discrete, and with a superselected sector,  the rotation generator $J$. The idea is then to take the regularized metric coefficient to be \cite{Bodendorfer:2020ovt}
\be
v_2 = 2\lambda J = V_2 - 2\f{\cV_0}{G}\lambda^2 H\,.
\label{poly_v2}
\ee
The $\lambda$ parameter must be the same as in the mapping from the original phase space to the $\iso(2,1)$ generator to ensure the right limit $\lambda\to 0$, that maps back to the original phase space. For the superselected sector chosen in the previous sections $(s=0)$, the eigenvalues of $v_2$ are discrete real values $2 n \lambda\,, n \in \Z$.   

Concerning $V_1$ the polymerization is less straightforward. Assuming the transformation \eqref{poly_v2} for $V_2$, we find that a compatible canonical transformation is of the form:
\be
\left |
\begin{array}{rl}
v_1&= v_1(\Pi_0, \cB) = v_1(V_1 + \lambda V_1 P_2^2, V_1 P_1)\,,\\
p_1&= p_1(\Pi_0, \cB) = p_1(V_1 + \lambda V_1 P_2^2, V_1 P_1)\,,\\
v_2&= 2\lambda J = V_2 - 2\lambda^2\df{\cV_0}{G}H\,,\\
p_2&= \df{1}{\lambda} \arctan (\lambda P_2)\,,\\
\end{array} \right.
\ee
with two function satisfying $\{v_1, p_1\}=1$. We thus need to find an operator, built with $\Pi_0$ and $B$ that has a discrete spectrum. Unfortunately, this is not achievable through a linear combination, but we need at least a quadratic operator. The simplest one is given by
\be 
\widehat{v_1^{\,2}} := 4 \lambda^2 \widehat \Pi_0^2 - \mu^2 \widehat \cB^2\,,
\label{polymer_v1}
\ee
with a real parameter $\mu$. Its discrete eigenvalues are $ 4\mu^2 n^2\,,n \in \Z$.  In order to see this explicitly we shall in a first place look at the linear combination
\be 
\widehat{v_1} := 2 \lambda \widehat{\Pi_0} + \mu \widehat{\cB}\,.
\ee
By acting on a wavefunction in the polar representation, we can analytically solve the eigenvalue problem and find the eigenvectors
\be
\braket{\rho,\phi|v_1,m} = \f{1}{2\pi \sqrt{\mu \rho}}e^{-i\left (2\f{\rho\lambda}{\mu}-\f{v_1}{\delta} \log \rho\right )} e^{-i m \phi}\,,\q\q \widehat{v_1}\ket{v_1,m} =v_1 \ket{v_1,m}\,,
\ee
that provide an orthonormal basis for a continuous spectrum $v_1 \in \R$: 
\be
\braket{v_1',n|v_1,m} =\delta_{n,m}\delta(v_1-v_1')\,.
\ee
On the other hand the quadratic combination  \eqref{polymer_v1} has eigenvectors
\be
\braket{\rho,\phi|v_1^{\,2},m} = \sqrt{\f{v_1}{2\lambda\rho}}\,J_{\f{v_1}{\mu}}\left (\f{2 \lambda \rho}{\mu}\right )\, \f{1}{\sqrt{2\pi}}e^{-i m \phi}\,,\q\q \widehat{v_1^2}\ket{v_1^2,m} = v_1^2 \ket{v_1^2,m}\,,
\ee
where $J_n$ are the first kind Bessel fucntions. The eigenstates are normalized for a discrete spectrum $v_1/2\mu \in Z$. This is easily shown by using the integral property of the Bessel functions
\be
\int_0^\infty \f{\de y}{y}J_{2n}(y)J_{2m}(y) = \f{1}{2\pi}\f{\sin\left (\pi(n-m)\right)}{n^2-m^2} = \f{1}{2 n}\delta_{n,m}\q \Leftrightarrow\q n,m \in \Z\,.
\ee

In this case, we will not have access to the quantum operator measuring $v_1$, but only its square. From the point of $v_1$, this is similar to what happens in cosmology, where the fundamental discreteness is imposed on the volume, the third power of the scale factor. From the point of view of the scale factor in FLRW cosmology and $v_1$ here, it looks like a so-called $\bar \mu$-scheme. Nevertheless, at the semiclassical level, it is possible to take the square root and implement the canonical transformation: 
\be
\left |
\begin{array}{rl}
v_1&= \sqrt{4 \lambda^2 \Pi_0^2 - \mu^2 \cB^2} = V_1 \sqrt{(1+P_2 \lambda^2)^2 - \mu^2 P_1^2}\,,\\
p_1&= \df{1}{\mu} \arctan \left (\df{\mu P_1}{\sqrt{(1+P_2 \lambda^2)^2 - \mu^2 P_1^2}}\right )\,,\\
v_2&= V_2 +\lambda^2 P_2 (2 P_1 V_1 + P_2 V_2)\,,\\
p_2&= \df{1}{\lambda} \arctan (\lambda P_2)\,.\\
\end{array}\right.
\ee
For the previous construction to make sense, we need to add a constraint on $\mu$. The minimal value of $\Pi_0$ during the classical evolution is provided by $2\lambda A$. If we want a positive definite square $v_1^2$, we need
\be
\mu  \leq \f{2 \lambda^2 \cA }{\cB}\,.
\ee
When the strict inequality holds, we have that $v_1$ is never zero and there is no singularity in the effective metric. On the other hand, if  $\mu B =2 \lambda^2 A$ there is a singularity at $T=0$. If the inequality is not satisfied the canonical transformation is not defined on the whole trajectory and the singularity occurs before the classical one. The only way to have a canonical transformation that is well defined on the whole phase space is to take the limit $\mu\to 0$, which corresponds to the situation where $V_1$ is not polymerised. This is somehow analogous to what has been found for the deformation of the constraint algebra for spherically symmetric spacetimes \cite{Arruga:2019kyd}, where only one of the momenta gets polymerized. When $\mu$ goes to zero, we gain back an operator measuring directly the metric coefficient $v_1$, instead of its square value, but now it has continuous eigenvalues:
\be
\mu \to 0\q \Rightarrow\q
\left |
\begin{array}{rl}
v_1&= 2\lambda \Pi_0 = V_1 (1+P_2 \lambda^2)\,,\\
p_1&= \df{P_1}{1+P_2 \lambda^2}\,,\\
v_2&= 2\lambda J\,,\\
p_2&= \df{1}{\lambda} \arctan (\lambda P_2)\,.\\
\end{array}\right.
\ee
Inverting the canonical transformation we can express $H$ in terms of the polymerised variables and it gives
\be
\f{\cV_0}{G} H= -v_1 \f{\tan \mu p_1}{\mu} \f{\sin (2\lambda p_2)}{2\lambda} - v_2 \f{\sin^2(\lambda p_2)}{2\lambda^2} \q\xrightarrow[\mu\to 0]{} \q-v_1 p_1 \f{\sin (2\lambda p_2)}{2\lambda} - v_2 \f{\sin^2(\lambda p_2)}{2\lambda^2}
\label{poly_hamilton}\,.
\ee
The evolution of $v_1$ and $v_2$, generated by the effective Hamiltonian, can be easily solved by inverting the canonical transformation above. 

In the appendix \ref{app:4_LQG_phase_space} we review the relationship between the Ashtekar variables and the $\bms$ vectors $V_i$. It happens that from the point of view of the triad connection variables, the polymerization proposed here is translated to a half $\bar\mu$-scheme, where the connection along the angular direction is polymerised with a $\bar \mu$-like scheme, while the connection along the radial direction is not polymerised, this is equivalent to introducing a discretization only along the edges perpendicular to the angular, direction, i.e. along the $x$ coordinate.

The main advantage of looking at the regularization as a canonical transformation is that the description of both classical and quantum mechanics in terms of Poincar\'e generators is not modified. In other words, the evolution is always generated by $K_x -J$ and the dynamical quantum states are the same as in the previous section, according to which kind of constraint we want to impose. We simply need to change the operators whose expectation values we want to replace in the semiclassical metric. Contrary to the heuristic approach presented at the beginning of the chapter, we path followed here keeps us closer to the $\ISO(2,1)$ representation theory, making it straightforward to analyse the question of the semiclassical limit. We study here the case where $\mu \to 0$, and the effective metric corresponds to  
\be
\de s^2_\rm{eff} = - \f{\braket{\Pi_0}}{2\braket{J}}\de \tau^2+\f{\braket{J}}{2\braket{\Pi_0}}\de x^2+2\lambda \braket{\Pi_0}L_s^2\de\Omega^2\,.
\label{regular_metric}
\ee
Without much surprise, deparametrizing the dynamics with respect to the time $\tau$, the evolution of $J$ and $\Pi_0$ on the physical Hilbert space satisfying the strong constraint follows the respective classical trajectories
\be
\begin{array}{rl}
\braket{\lambda  \widehat{\Pi_0}(\tau)}&=\dfrac{A_*G^2}{2\cV_0^2}\,\left (\tau^2 + \df{4\cV_0^2}{G^2} \lambda^2\right ) \,,\\[9pt]
\braket{\lambda  \widehat{J}(\tau)}&= \dfrac{B_* G}{\cV_0}\, \tau -\dfrac{1}{2L_s^2}\,\left (\tau^2 + \df{4\cV_0^2}{G^2} \lambda^2\right )\,.\\
\end{array}
\ee
The expectation values are on the Gaussian physical state, as in \eqref{v_strong_evolution}, and the deparametrization has been done by using the classical solution for $\Pi_0$ and $J$ given directly by \eqref{classic_traj}, and then we replace the first integrals $\cA,\cB$ by the corresponding quantum operator. This leads to the same effective metric as in \eqref{effective_metric}, where now $\Delta= 2\lambda^2 A_*L_s^2$. 

Perhaps even more surprisingly, the $\sigma$ correction appearing in the evolution on the weakly constrained states goes in the same direction: more precisely, on the weakly constrained Gaussian wavepackets \eqref{semiclass:weak} we have
\be
\begin{array}{rl}
\braket{\widehat{\lambda  \Pi_0}}&=\df{G^2}{2\lambda \cV_0^2} e^{a_*+ \f{1}{8\sigma_B^2}}\,\left (z_*^2+\df{4 \lambda^2 \cV_0^2}{G^2}+\df{1}{4 \sigma_E^2}\right ):=\df{A_* G^2}{2\cV_0^2 }\,\left (z_*^2+ 4 \lambda^2 \df{\cV_0^2}{G^2}+ \df{1}{4 \sigma_E^2}\right ) \,,\\[9pt]
\braket{\widehat{\lambda J}}&= \df{B_* G}{\cV_0} \, z_* -\df{1}{2L_s^2}\,\left (z_*^2+ 4 \df{\cV_0^2}{G} \lambda^2 +\df{1}{4 \sigma_E^2}\right )\,.
\end{array}\label{polymer_gaussian_weak}
\ee
The effective structure is again given by \eqref{effective_metric}, but now the quantum correction is encoded in 
\be
\Delta = 2\left (\lambda^2 L_s^2+ \f{ \lp^4}{L_0^2 L_s^2 \sigma_E^2}\right ) A_*\,.
\ee
We shall remark that this does not solve the apparent paradox of quantum correction at a macroscopic scale. If we leave $A$ free, even for metrics with small fluctuation (big $\sigma_E$), the inner horizon can come as close as desired to the external horizon. We still need to add a condition on $\cA$. 

Taking a closer look at the parameter $\Delta$, we see that it is exactly the sum of the one obtained for the weakly constrained WdW states and the strongly constrained polymer states. It is natural to interpret the two contributions as taking into account respectively the quantum uncertainty on the metric coefficients and the deep discreteness of the spacetime. For wavefunctions that are well localised, the first one is negligible compared to the second one, i.e. $\f{\cV_0^2 }{\sigma_E^2 G^3} \ll \lambda^2 $. In this case, we expect that the quantum corrections are of Planck size, meaning that the first-class constraint would impose 
\be
2 \lambda^2 L_s^2 A_* \approx \lp^2\q \Rightarrow \q
\Delta \approx \lp\left (1 + \f{G^2}{16 \cV_0^2 \sigma_E^2 \lambda^2}\right ) \,.
\label{area_constr}
\ee

This in turn implies that it is impossible to fully get rid of the fiducial scale, entering the game through the central charge $\cV_0/G$. It would be worth studying the role of the cut-off scales, looking at them as running renormalization parameters. The imposition of a first-class constraint relating the polymerization parameter to one of the first integrals is analogous to the construction in \cite{Ashtekar:2018cay}.  Despite \eqref{area_constr} being more reasonable than the one imposed in the WdW setup, the question of if and how we can infer this kind of constraint from the full LQG theory is still unanswered. However, we can have a hint about its origin by noticing the presence of $A_*$ in the coordinate redefinition \eqref{change_classic}. The relationship \eqref{area_constr} is expected to be related somehow to the introduction of a Planck length ruler on spacetime. Using the relationship \eqref{area_constr} makes the change of coordinate \eqref{change_classic} to have the interesting form 
\be
\tau -\tau_0 =\f{2 \cV_0 \lambda }{\lp^3} R\,,\q\q x= \f{\lp^3}{\cV_0 \lambda} T\,,
\label{change_quantum_2}
\ee
that could be interpreted as swapping between IR and UV scales at the coordinate level.

Furthermore, the impossibility of eliminating the dependence of the effective metric on the fiducial scales points again towards a physical role of the boundary. We would like to stress that similar behaviour has been observed in cosmology \cite{Mele:2021gzx}.

In the previous section, we saw that the evolution of the operators measuring the metric coefficients produces drastically different metrics, depending on whether we allow some energy fluctuation or not. Here the main features of the effective metric are the same in both cases, and they agree with the classical line element corresponding to \eqref{effective_metric}. The evolution of quantum coherent states on the polymer Hilbert space follows the effective evolution described by the corresponding polymer Hamiltonian, and this is stable for non zero energy fluctuation. This property has been used as a consistency check in favour of the robustness of FLRW polymerization and can be here extended to the black hole minisuperspace.

In other words, we have shown here that the effective evolution described by the polymer Hamiltonian \eqref{poly_hamilton} is the same as the expectation values of the operators $J, \Pi_0$ on some coherent state, upon identification $v_2 = 2 \lambda J$ and $v_1  = 2 \lambda \Pi_0$. Focusing on the effective approach, we would like to spend some more words on its dynamics and causal structure. For this we want to reintroduce a non-vanishing shift and consider the line element
\be
\de s^2_\rm{eff} = - \f{v_1}{2 v_2}\de \tau^2+\f{v_2}{2 v_1}(\de x +N^1 \de \tau)^2+ v_1 L_s^2\de\Omega^2\,.
\label{polym_metric_shift}
\ee
In the section \ref{sec2.3:BH_minisup} we have studied the classical solution and we have shown that by properly choosing the shift we can select the role of the homogeneous coordinate $x$, respectively as the time measured by an asymptotic observer, as the null infalling coordinate or the proper time of a raindrop. We could wonder if we have a similar property for the effective metric. The definition of the three \textit{times} investigated in the classical case do not depends on the dynamics of the fields. For instance the null coordinate is always selected by the choice
\be
N^1_\rm{EF} = -\f{v_1}{v_2}\,,
\ee
such that the $\de \tau^2$ term disappears (i.e. $g_{\tau\tau} =0$). Consistently, if we insert the effective solution into the line-element with this choice of shift and we do the same change of variables as \eqref{change_classic} we recover the null parametrization of the effective metric  \eqref{effective_metric}, that is
\be
\de s^2_\rm{EF} = -2 \de R \de T + f(R) \de T^2 + (R^2 + \Delta) \de \Omega^2\,,\q f(R) =\f{2 M R - R^2 - \Delta}{R^2 + \Delta}\,.
\label{effective_metric_null}
\ee
Concerning the raindrop proper time, the situation is a bit more subtle, because of the presence of the momenta in the shift. The definition of the raindrop time is equivalent to the requirement that $g^{TT}=-1$. On the other hand, the very definition of $T$ depends on the integrals of motion, that in turn depends on the momenta. At the classical level the following procedures commute: 
\begin{itemize}
\item[\textit{(i)}] we impose $N^1 = \left (\f{2 V_1^3 P_1}{V_2^2(2P_1 V_1 + P_2 V_2)}\right  )^{1/2}$ as described in section \ref{sec2.3:BH_minisup}, we use the classical solution and the change of variables \eqref{change_classic} to obtain the Gullstrand-Painlev\'e coordinates for the static black hole, or conversely 
\item[\textit{(ii)}] we leave a general $N^1$, we ask that $g^{TT}=-1$, where the latter is seen as an equation for the shift, where all the constant of motion $\cA$ and the energy are expressed in terms of phase space functions. 
\end{itemize}
If we want the two to commute also in the polymer case we need to choose
\be
N^1_\rm{GP} = \left (\f{2 V_1^3 P_1}{V_2^2(2P_1 V_1 + \tan (\lambda P_2) V_2/\lambda)}\right  )^{1/2} \,,
\ee
where $\lambda$ is the regularization parameter. We see that we need to introduce a regularization pattern whenever the momenta appear in the metric. This is somehow similar to the construction in \cite{Kelly:2020uwj}. Therein the metric depends on the momenta and a regularization on it is imposed by the request of closure of the constraint algebra. Here we trade the constraint algebra for the new symmetry to constrain the polymerisation, but we still need to introduce a coherent regularization pattern to have a consistent spacetime description. The corresponding solution for the line element is
\be
\de s^2_\rm{EF} =  \de R^2  +2\sqrt{\f{2M R}{R^2 +\Delta}} \de T - f(R) \de T^2 + (R^2 + \Delta) \de \Omega^2\,,
\label{effective_metric_rain}
\ee
where $f$ is given in \eqref{effective_metric_null}. The lines at constant $T$ are the same as in figure \ref{fig:coord_level}, but now the hole in the conformal diagram is filled as in figure \ref{fig:lighcone}.

The invariance under shift is an important property pointing towards a sort of covariance of the polymerisation. However, the invariance of the dynamics, both classically and at an effective level, under different choices of shift is not realised directly as a dynamical symmetry, making the connection with the covariance of the full theory difficult to establish. This is because of the lack of vector constraint from the onset and it suggests that a clear understanding of the model could be achieved by studying the symmetries of the inhomogeneous midisuperspace black hole, with both $\tau$ and $x$ seen as evolution parameters. \newline
\ 

To summarise, we have used the irreducible representation of the symmetry group to build a quantum theory of the black hole minisuperspace. We have then introduced a regularization scheme by changing the map between metric observables and quantum operators. In this setup, the inequivalence between the classical and regularised descriptions is given by a non-unitary mapping between different operators on the same Hilbert space. Nonetheless, at the effective level, this is implemented by a canonical transformation, that in turn assures the protection of the symmetry. 

However, the choice of polymerisation is not unique, as there is an infinite set of canonical transformations that could give inequivalent dynamics. We have proposed here a possible choice by adding some further requirements on the spectra of the fundamental regularised fields $v_i$, but this is still only one possibility, as nothing forbids to do the polymerisation starting from another parametrization of the field space. The argument in favour of our choice is the \textit{a posteriori} result that the effective dynamics is stable under energy perturbation and reproduces the WdW weakly constrained dynamics. We expect however that the requirement of protection of the symmetry, together with the necessity to have the right asymptotic limit, might leave unaltered some of the features of the effective dynamics studied here like the resolution of the singularity, the filling of the conformal diagram like in figure \eqref{fig:lighcone} and the presence of a black-to-white hole transition.

The generalisation to the other Bianchi models, presented in the first part of the thesis, also opens the door toward a systematic study of the quantization of the minisuperspaces exhibiting a finite group of symmetry. Along the lines of this section, we could define the corresponding quantum theory starting from the irreducible representations of the symmetry group, and introduce a regularisation pattern that protects the symmetry. However, we defer these studies to future works.

\pagestyle{Conclusion}
\chapter*{Conclusions}
\addcontentsline{toc}{part}{Conclusion}

The study of the symmetries of minisuperspaces in gravity has been the central point of my thesis. A huge amount of work has been produced more or less recently about the effect of boundaries in gauge theories, especially gravity. The symmetry reduced models have also drawn attention, because of their simplicity and ease to be handled. In particular, they reduce the infinite-dimensional phase space into a simple mechanical model. In quantum gravity we expect them to be toy models, enlightening some properties of the full theory, but also a tool to find physically motivated solutions of the full theory, necessary to build a perturbation scheme and study the semiclassical limit.

In this thesis, we try to merge the two approaches by the study of the symmetry structure of minisuperspaces. Although this is not something completely new (\cite{Christodoulakis:1991ky, Christodoulakis:2018swq} and references therein for some generalities on the mechanical field space approach or \cite{BenAchour:2017qpb, BenAchour:2019ufa, BenAchour:2019ywl, BenAchour:2020njq} for a  more developed approach to cosmology), most of the previous works \cite{Christodoulakis:1991ky, Christodoulakis:2018swq} seem to neglect the role of the regulators and the associated boundary.

In the first part of the thesis, we have concentrated on the classical setup. We have defined the conditions for the symmetry reduction, for its dynamics to be equivalent to Einstein's equations. We have then introduced the field space approach, endowing the configuration space of evolving metric coefficient with a supermetric, we have interpreted its dynamics in terms of the geometrical properties of the supermetric. This has allowed us to reveal the existence of a rigid symmetry living on top of the residual diffeomorphism gauge invariance.

The associated conserved charges are \textit{evolving constants of motion} depending explicitly on a gauge fixed time. This makes them non-local with respect to a more general coordinate time, as they depend on the history of the system.

In an appropriate set of coordinates, for two-dimensional field spaces, these symmetries decompose into a conformal $\sl(2,\R)$ sector acting as M\"obius transformations of the gauge fixed time, a constant rescaling of the configuration fields and a 4-dimensional Abelian part generating linear time translations of the configuration variables. This build the symmetry group $(\SL(2,\R) \times \R) \ltimes \R^4$. Thanks to the Noether theorem we have found the corresponding conserved charges that allow us to (completely or partially) integrate the motion in terms of symmetries.

The existence of the rigid symmetry is related to the intrinsic properties of the evolving hypersurfaces, by the contribution of its Ricci scalar as a potential for the reduced action. Even though we were not able yet to give a general systematic criterion to directly determine the existence of (part of) the symmetry, we were able to explicitly discuss some simple, and yet physically relevant, systems.

For example, we have seen that the symmetry allows us to completely integrate the motion of the black hole minisuperspace, as well as the curved FLRW metric, coupled with a scalar field. It also encodes the full dynamics of the III, VI and VI Bianchi models, while it contains information on only half of the phase space of the Bianchi VIII and IX. This does not come as a big surprise, because of their well-known chaotic properties. We also find out that the symmetry does not exist for the Bianchi IV and VII models.  

Focusing the attention on the black hole case has allowed us to better understand the role of boundaries in the minisuperspace setup. For it, we have singled out a subgroup of symmetries that is isomorphic to the Poincar\'e group in 2+1 dimensions

In section \ref{sec2.3:BH_minisup}, we have studied in detail the Poincar\'e symmetries of the classical action. We stressed once again that these are not mere time reparametrizations leftover from diffeomorphism invariance by the fixing to homogeneous metrics. These are \textit{new symmetries} existing on top. Importantly, we have computed the action of the symmetries of the Lagrangian on the physical trajectories of the system and found that they act indeed as \textit{physical symmetries} changing the mass of the black hole.

Furthermore, the fiducial scales representing the spatial boundaries of the slice happen to be changed by the symmetry. An intriguing role of the IR cutoff $\cV_0$ has been witnessed at various stages of our study
\begin{itemize}
\item First of all it is necessary to regulate the divergent integration of the Einstein-Hilbert action. For the latter to define a good variational principle we were forced to constrain the integration on the spatial slice $\Sigma$ to a finite region of fiducial (coordinate) volume $\cV_0$.
\item Once we have gauge fixed the time, the lapse function disappears from the action and we lose the scalar constraint generating the time diffeomorphisms, the correctness of the symmetry reduction is insured if we fix the energy level of the system, it must be a function of the volume $\cV_0$ and the others fiducial scales $L_0$, $L_s$. 
\item The rigid symmetry happens to change the energy value of the trajectory, and so we need to modify accordingly the fiducial scales of the system $L_0$ and $L_s$, in order to protect the scalar constraint
\item $\cV_0$ has the role of central charge in the infinite-dimensional extension
\item It labels the quantum solution and allowing some fluctuation on it will prevent the system to withstand a singularity. Furthermore, it happens to control the regulation parameter in the polymerization scheme.
\end{itemize}

The inclusion of the fiducial scales (representing the boundary) into the degrees of freedom of the theory allows also to solve an apparent contradiction between the number of independent charges, and the results expected from the no-hair theorem, assigning only to the mass the status of physically relevant quantity. Nonetheless, the precise meaning of the physical relevance of the boundary is yet to be determined. We have seen that it happens to control the effective metric near the singularity for a particular polymerization procedure, is this property more general? Some recent work \cite{Achour:2022syr} has pointed out a possible relationship between the conformal properties of the Schwarzschild background and its static perturbation. There is maybe a more general relationship between the propagation of test fields and the rigid symmetry presented here? The intimate connection between perturbations and Hawking radiation, together with the role of boundaries in the symmetry has some interesting perspectives on the holographic properties of gravity.

We have revealed in chapter \ref{chap3} that the newly discovered rigid symmetries actually descend from an infinite-dimensional set of transformations. Always focusing on the black hole case, we have rewritten the homogeneous action \eqref{new_lapse_action} describing the black hole interior spacetime, as
\be
\cS_0
=\f{\cV_0}{G}\int \de \tau \, \f{ \dot{V}_1 ( V_2 \dot{V}_1 - 2  V_1  \dot{V}_2)}{2V_1^2 }\,,
\tag{\ref{new_lapse_action}}
\ee
where $V_1$ and $V_2$ are components of the 4-metric. We have shown in section \ref{sec3.2:bms_adjoint} how finite and infinitesimal BMS$_3$ transformations act on \eqref{new_lapse_action}. These transformations do not leave the action invariant, and only the subgroup of transformations corresponding to Poincar\'e does. Nevertheless, in the case of the supertranslations, one can interpret the non-invariance of the action as creating a term corresponding to a cosmological constant. Acting with a further supertranslation then preserves the form of the Lagrangian while however changing the value of the cosmological \textit{constant}. We have shown that even if the BMS transformations are not strictly speaking symmetries of the action, they nevertheless admit integrable generators on the phase space of the theory. Their charge algebra, given by \eqref{poisson_bms}, does however fail to reproduce the centrally-extended $\mathfrak{bms}_3$. This is to be expected since these are indeed not symmetries of the theory. We have explained how a fully BMS-invariant action \eqref{invariant_1d_action} can be obtained by promoting the BMS$_3$ group elements to dynamical variables. Aside from this 1-dimensional invariant action, there are 2-dimensional geometrical actions typically arising in studies of the boundary dynamics of 3-dimensional gravity \cite{Barnich:2012rz, Barnich:2013yka, Merbis:2019wgk}. This is for example the BMS$_3$ invariant action \eqref{BMS Liouville}. Intriguingly, the charges of this 2-dimensional action are written in terms of coadjoint vectors which turn out to be precisely the equations of motion of our starting point action \eqref{new_lapse_action}. This has motivated the study of the coadjoint representation of BMS$_3$ in section \ref{sec3.3:coadj_section}. 

We have then shown that the action \eqref{new_lapse_action} for the black hole interior can be rewritten (up to a boundary term innocent for the equations of motion) as the compact geometric action
\be
\cS=  \f{\cV_0}{G}\int \de \tau \left[\f{ V_2 \dot{V}_1^2}{2 V_1^2} -  \f{\dot{V}_1 \dot{V}_2}{V_1} + \ddot V_2 \right]
=  \f{\cV_0}{G}\int \de \tau \big[\cJ V_1 + \cP V_2 \big]\,.
\tag{\ref{BMS_geometric_action}}
\ee
This is made possible by the fact that $\cJ$ and $\cP$ (which are defined in terms of $V_1, V_2$ and their time derivatives)
are coadjoint vectors under BMS$_3$ with central charge $c_2=\cV_0/G$, while the configuration variables $V_1$ and $V_2$ are elements of the Lie algebra $\mathfrak{bms}_3$.
Remarkably, the variation of this action \eqref{BMS_geometric_action} leads to the field equations $\cJ=\cP=0$, which are exactly equivalent to the original equations of motion.
%
This property of the geometric action also enables to turn on the other central charge $c_1$ by adding a term to the Lagrangian without however modifying the equations of motion. In this construction, the Poincar\'e subgroup corresponds to the stabilizer of the vacuum orbit of BMS$_3$.

These results show that even if the action \eqref{new_lapse_action} is only invariant under Poincar\'e symmetries, there is a meaningful way in which one can understand this invariance as a broken BMS$_3$ symmetry. This confirms the intuition that this latter symmetry group does indeed plays a foundational role in the physics of the black hole interior, although it does not appear here in the more usual way as an asymptotic or horizon boundary symmetry. 

In order to investigate further the origin and the physical role of these symmetries (both the Poincar\'e and the extended BMS one), there are several directions to be developed:
\begin{itemize}
\item One should investigate which physical systems correspond to the other BMS coadjoint orbits. A related question is how the phase space structure of these coadjoint orbits can be used to define a quantization of the system. This can already be investigated in terms of Poincar\'e representations (\cite{Sartini:2021ktb} and chapter \ref{chap5}), but the embedding into BMS could allow to shift the Casimirs and describe new physical processes related to the dynamics of the black hole interior.
\item An intriguing question remains that of the relationship between the BMS group appearing in the present context and that appearing for asymptotic or near-horizon symmetries. It is particularly interesting how the Poincar\'e symmetry of \eqref{new_lapse_action} can be seen as a broken BMS symmetry as in the case of SL$(2,\mathbb{R})$ and the Schwarzian action in JT gravity.
\item A related question is that of understanding BMS-invariant actions in mechanics and field theory. It could be that \eqref{new_lapse_action} descends from a higher-dimensional BMS-invariant action with a gauge-fixing and a dimensional reduction.
\item We have seen that the BMS transformations do not leave the action invariant, but that in a sense (which is clear for supertranslations) they generate new terms which can be interpreted e.g. as a cosmological constant. Then the BMS transformations generate a flow in a space of theories and preserve the form of the action while changing its couplings. It would be very interesting to investigate the extension of Noether's theorem(s) to this type of transformation (which are very similar to renormalization flow transformations).
\end{itemize}

The existence of the \textit{hidden} rigid symmetry becomes a key ingredient to consider in any quantization approach. Taking here the conservative approach of preserving the classical structure, has provided a criterion to constrain the quantization. Concretely, this means that any Hilbert space we would like to choose, being it the standard Schr\"odinger WdW picture, or a regularized polymer space, must contain an irreducible representation of the symmetry group. In this work, I focus the attention on the black hole interior, but the construction can be generalised to any minisuperspace model that exhibits the same symmetries. 

The well known irreducible representations of $\ISO(2,1)$ are used to build a consistent quantum theory, providing an explicit example of observables and their spectra. The most striking consequence of this construction is that we obtain a mass operator with a continuous spectrum. This has important consequences on the emission spectra of black holes and is in contrast to what has been postulated in various works on black hole quantum physics. The existence of this \textit{hidden} symmetry plays also a role in perturbation theory, providing an interesting interpretation in terms of conserved quantities associated to test fields propagating on the black hole background.

On a more concrete playground, we have started with a quantization equivalent to the standard Shr\"odinger representation of Wheeler-DeWitt gravity, calculating the expected values of the metric coefficients on some semiclassical states. Classically, only a particular combination of first integrals (namely the mass) is physically relevant. On the other hand, if we allow some fluctuation on the energy, we have seen that the effective metric, emerging as a result of the quantum evolution, strongly depends on both the first integrals and the amplitude of the fluctuations. 

In the last chapter \ref{chap6}, we propose a \textit{half-polymerized} regularization, reminiscent of the modification allowed in the context of deformed constraint algebra for spherically symmetric spacetime \cite{Arruga:2019kyd}. The apparent puzzle of introducing a discretization on the configuration space, keeping the invariance under Poincar\'e group is solved here by looking at the regularised variables as a set of operators that satisfy the polymer-Weyl algebra on the same Hilbert space as the usual Shr\"odinger operator, but the two sets are not related by a unitary transformation.

We find out that the quantum corrections come from two terms going in the same direction, summing up into the parameter $\Delta$ that modifies the classical spacetime structure as in \eqref{effective_metric}. The singularity is replaced by a Killing horizon, leading to a white hole region. The two contributions have been interpreted as the effect of a quantum uncertainty on the metric coefficients, and a constant piece proportional to the Planck length, encoding the fundamental discreteness of spacetime. This implies that the light cone structure is the same for both the weakly and strongly constrained states, where the effective structure is achieved by evolving the metric coefficients with respect to the polymerized Hamiltonian \eqref{poly_hamilton}.

Despite the common feature of replacing the singularity with a black-to-white hole transition, the metric presented in this article is different to the one usually considered for the study of properties of regular black holes (\citep{Arbey:2021jif, Arbey:2021yke} and reference therein), and it would be interesting to see how this affects the phenomenology.

The extension of the \textit{hidden} Poincar\'e structure to the case with a nonvanishing cosmological constant \cite{Achour:2021dtj} and here to some of the Bianchi models open the doors towards a systematic approach to singularity issues and related regularization in many minisuperspaces.

Finally, let us point out that many times along the thesis we have spotted the necessity of generalising the study to the inhomogeneous black hole (midisuperspace). It is possible to consider for example a spherically-symmetric ansatz for the metric, and thereby reduce the problem to a two-dimensional $(r,t)$-plane gravitational theory with an inhomogeneous radial direction. Upon imposition of homogeneity, this reduces to the Kantowski--Sachs model studied here, and for which the symmetries have been identified. In this framework we lose the simplicity of dealing with a mechanical model, going back to the field theory approach, but we might expect to have a clearer setup to deal with the boundaries, the effect of the choices of the shift and hopefully enlighten the role of the $\bms$ charges, the symmetry breaking and the 2d Liouville action. The question is there that of the origin and the realization of these symmetries and the corresponding non-local charges in the inhomogeneous precursor model.\newline \

We have reached the end of my thesis, however, I realise that the three-year work I have presented here has just opened the door towards a lot of new leads: we still need a deeper understanding of the physical role of the symmetries, their relationship with other boundaries degrees of freedom and hopefully their relevance in some bulk-boundary correspondence; we could also wonder if there are experimental setup that mimics the minisuperspaces dynamics helping to shed a light onto the physical information of the charges; we can as well hope to establish a connection between the requirement of the preservation of the symmetries in the regularisation with the studies of solutions of full quantum theories, ....

\newpage
~
\thispagestyle{empty}
\newpage
~
\thispagestyle{empty}

\pagestyle{Regular}
\appendix

\chapter{Triad homogeneous decomposition of ADM Hamiltonian}
\label{app:1_ADM_triad}

We have defined the minisuperspaces as a manifold sliced in a way that the line-element separates into the temporal (orthogonal direction to the slice) and spatial (tangential directions on the slice) dependence as \eqref{minisup_metric}. This implies that the trace of the extrinsic curvature (GHY term) and the ADM kinetic term depend only on the internal metric $\gamma$, up to the determinant of the triad. This is once we have imposed the shift condition
\be
0\doteq \pounds_{\vec{\cN}} q_{ab} = N_i D_{(a} e^i_{b)}\,,\tag{\ref{shift_condition}}
\ee
In this case the extrinsic curvature \eqref{extrins_curv_tetrad} becomes
\be
K_{ab} = \f{1}{2N} \dot q_{ab}\q \Rightarrow\q \sqrt{q} \left (K^2 - K^{ab}K_{ab}\right ) = \f{ |e| \sqrt{\gamma}}{4 N^2} \left ((\gamma^{ij} \dot \gamma_{ij})^2+ \dot \gamma_{ij} \dot \gamma^{ij}\right )\,,
\ee
where $e:= \det(e_a^i)$ and $\gamma =\det(\gamma_{ij})$. The former, once integrated on a compact fiducial slice, gives the volume $\cV_0$. To analyse the three dimensional curvature it turns out to be useful to introduce the spin connection
\be
\omega^{ij}_{a} :=& e^{b i} \partial_{[a}^{\phantom{j}} e_{b]}^j -e^{b j} \partial_{[a}^{\phantom{i}} e_{b]}^i - e^{c i} e^{d j} e^k_a \partial_{[c} e_{d] k}\\ 
=&\gamma^{\ell i} (e^{b}_\ell \partial_{[a}^{\phantom{j}} e_{b]}^j )- \gamma^{\ell j} (e^{b}_\ell \partial_{[a}^{\phantom{i}} e_{b]}^i) -  \gamma_{\ell k}  \gamma^{n i}  \gamma^{m j} (e^{c}_n e^{d}_m  e^\ell_a \partial_{[c}^{\phantom k} e_{d]}^k)\notag\\
=& \gamma^{\ell [i} (e^{b}_\ell \partial_{a}^{\phantom{j}} e_{b}^{j]} )- \gamma^{\ell [i} (e^{b}_\ell \partial_{b}^{\phantom{j}} e_{a}^{j]}) -  \gamma_{\ell k}  \gamma^{n [i}  \gamma^{j] m} (e^{c}_n e^{d}_m  e^\ell_a \partial_{c}^{\phantom k} e_{d}^k)\,.
\ee
As we can see from the last line it does not simply split into the product of functions depending separately on triad and on the internal metric, the same happens for the curvature:
\be
F^{ij}_{ab} &:= 2 \left (\partial_{[a}^{\phantom j} \omega^{ij}_{b]} + \gamma_{k \ell} \omega^{i\ell}_{[a} \omega^{kj}_{b]} \right ) \,,	\\
R^{(3)} &=\f{1}{|e|} \epsilon^{abc}\epsilon_{ijk} e^k_c F^{ij}_{ab}\,.
\ee
On the other hand the vector constraint \eqref{minisup_vect_constr} depends explicitly on the spin connection:
\be
0=\cH^a &= 2 D_b \left (K q^{ab}-K^{ab} \right )\notag\\
&= \f{1}{N} D_b \left (\gamma^{k \ell} \dot \gamma_{k \ell} e^a_i\, e^b_j\, \gamma^{ij}+ e^a_i\, e^b_j\, \dot \gamma^{ij} \right )\notag	\\
&= \f{1}{N} D_b (e^a_i e^b_j) \left (\gamma^{k \ell} \dot \gamma_{k \ell} \, \gamma^{ij}+ \dot \gamma^{ij} \right )\notag\\
&=- \left (\partial_b(e^a_i\, e^b_j) + (e^{a}_k e_{\sigma \ell} \omega^{k\ell}_b + e^a_k \partial_b e^k_\sigma) e^\sigma_i\, e^b_j + (e^{b}_k e_{\sigma \ell} \omega^{k\ell}_b + e^b_k \partial_b e^k_\sigma) e^a_i\, e^\sigma_j \right ) \f{\pi^{ij}}{\cV_0 \sqrt{\gamma}}\notag\\
&=-\f{\pi^{ij}}{\cV_0 \sqrt{\gamma}} \left (\cancel{\partial_b(e^a_i\, e^b_j)} + e^{a}_k \gamma_{i\ell}\, e^b_j \omega^{k\ell}_b -  \cancel{\partial_b (e^a_i)\, e^b_j} +  e^a_i\, \gamma_{\ell j} e^{b}_k  \omega^{k\ell}_b -  \cancel{\partial_b (e^b_j)\, e^a_i} \right ) \notag\\
\Leftrightarrow &\q \pi_i^j e^{(a}_k e^{b)}_j \omega^{ik}_b =0 \,.
\ee
where the momenta are exactly the conjugate momenta to the internal metric coefficients in the ADM action \eqref{Einstein_mini}
\be
\pi^{ij}:=\f{\delta \cL_{\rm {ADM}}}{\delta \dot \gamma_{ij}} = - \cV_0 \f{ \sqrt{\gamma}}{N} \left ((\gamma^{k\ell} \dot \gamma_{k\ell})\,  \gamma^{ij} + \dot \gamma^{ij}\right ) \,.
\ee

\chapter{Metric and algebras in Bianchi models}
\label{app:2_bianchi}

The appendix gathers all the properties of the Bianchi models which are needed for the study of the phase space symmetry algebra. First, we give a list of the Bianchi line elements which satisfy the vector constraint \eqref{minisup_vect_constr}. These are
\bsub
\be
(\rm{III})&&\de s^2 &= -N^2 \de t^2 + a^2 \de x^2 + b^2 L_s^2 \left (\de y^2+\sinh^2 y\, \de \phi^2\right )\,,\\
(\rm{IV})&&\de s^2 &= -N^2 \de t^2 + a^2 L_s^2 \de x^2 + \f{a^4}{b^2}e^{-2x}\de y^2 +b^2 e^{-2x}(\de z-x\,\de y)^2\,,\\
(\rm{V})&&\de s^2 &= -N^2 \de t^2 + a^2 L_s^2 \de x^2 + \f{a^4}{b^2}e^{-2x} \de y^2 +b^2 e^{-2x} \de z^2\,,\\
(\rm{VI}_0)&&\de s^2 &= -N^2 \de t^2 + a^2 L_s^2 \de x^2 + b^2 \left (e^{-2x} \de y^2 +e^{2x}\de z^2\right )\,,\\
(\rm{VI}_h)&&\de s^2 &= -N^2 \de t^2 + a^2 L_s^2 \de x^2 + a^{\f{4h}{h-1}} b^{\f{2(1+h)}{1-h}} e^{\f{2(1-h)x}{1+h}} \de y^2 + b^2 e^{-2x}\de z^2 \,,\\
(\rm{VII})&&\de s^2 &= -N^2 \de t^2 + a^2 L_s^2 \de x^2  + b^2 e^{-2hx}(\cos x\, \de z-\sin x\, \de y)^2\cr
&&&\phantom{=\ -N^2 \de t^2 + a^2 L_s^2 \de x^2 }+\f{a^4}{b^2} e^{-2hx}(\cos x \, \de y + \sin x \,\de z)^2 \,,\\
(\rm{VIII})&&\de s^2 &= -N^2 \de t^2 + a^2 (\de x + L_s \cosh y\, \de \phi)^2  + L_s^2  b^2\, (\de y^2 +\sinh y\,  \de \phi)^2 \,,\\
(\rm{IX})&&\de s^2 &=  -N^2 \de t^2 + a^2 (\de x + L_s \cos \theta\, \de \phi)^2  + L_s^2  b^2\, (\de \theta^2 +\sin \theta\,  \de \phi)^2 \,.
\ee
\esub
The length scale $L_s$ has been introduced in order to have dimensionless fields. In terms of the decomposition \eqref{minisup_metric}, the fundamental triads are
\bsub
\be
&&&e^1=&&e^2=&&e^3=\notag\\
(\rm{III})\quad&&
&\de x &
& L_s\, \de y &
& L_s\, \sinh y\,\de z \,,\\
(\rm{IV})\quad&&
& L_s\,\de x &
& e^{-x}\, \de y &
&e^{-x}(\de z- x\,\de y)\,,\\
(\rm{V})\quad&&
&L_s\,\de x&
&e^{-x}\, \de y &
&e^{-x}\, \de z \,,\\
(\rm{VI}_0)\quad&&
&L_s\,\de x&
&e^{-x}\, \de y &
&e^{x}\, \de z \,,\\
(\rm{VI}_h)\quad&&
&L_s\,\de x&
&e^{\f{(1-h)x}{1+h}}\, \de y &
&e^{-x}\, \de z \,,\\
(\rm{VII})\quad&&
&L_s\,\de x&
&e^{-hx}(\cos x\, \de y+\sin x\, \de z) &
&e^{-hx}(\cos x\, \de z-\sin x\, \de y)\,,\\
(\rm{VIII})\quad&&
&\de x+ L_s \cosh y\, \de z&
&L_s \de y &
&L_s \sinh y\, \de \phi\,,\\
(\rm{IX})\quad&&
&\de x+ L_s \cos y\, \de z&
&L_s  \de \theta &
& L_s  \sin\theta\, \de \phi\,.
\ee
\esub
One should note that for each triad the line elements given above are not the only solutions to the vector constraint. The finite volumes of the fiducial cells are $\cV_0 = \f{1}{16 \pi}\int_\Sigma |e|$ and given by
\bsub
\be
(\text{III, VIII})&& \cV_0=&\f{1}{4}L_x L_s^2\sinh^2 \left (\f{y_0}{2}\right) &&x\in[0,L_x]\,,y\in[0,y_0]\,, \phi \in[0,2\pi]\,, \\
(\text{IV, V})&& \cV_0=&\f{1}{16 \pi}L_s L_y L_z e^{-x_0}\sinh x_0 &&x\in[0,x_0]\,,y\in[0,L_y]\,, z \in[0,L_z] \,,\\
(\rm{VI}_0)&& \cV_0=&\f{1}{16 \pi}L_s L_y L_z x_0  &&x\in[0,x_0]\,,y\in[0,L_y]\,, z \in[0,L_z] \,,\\
(\rm{VI}_h)&& \cV_0=&\f{1+h}{32 \pi h}L_s L_y L_z (1-e^{-2\f{h x_0}{1+h}})  &&x\in[0,x_0]\,,y\in[0,L_y]\,, z \in[0,L_z] \,,\\
(\rm{VII})&& \cV_0=&\f{1}{32 \pi h}L_s L_y L_z (1-e^{-2h x_0}) &&x\in[0,x_0]\,,y\in[0,L_y]\,, z \in[0,L_z]\,, \\
(\rm{IX})&& \cV_0=&\f{1}{4}L_x L_s^2  &&x\in[0,L_0]\,,\theta\in[0,\pi]\,, \phi \in[0,2\pi]\,. 
\ee
\esub
Note that there are several fiducial scales entering the equations. $L_s$ is used to ensure that the dynamical fields are dimensionless, while $L_x,L_y,\dots$ are dimensionful cut-offs (fiducial lengths) on the variables $x,y,\dots$ and are equivalent to $L_0$ for the black hole example, while $x_0,y_0,\dots$ are dimensionless cut-offs. The metrics are divided in three categories depending on the internal metric
\bsub
\be
\gamma_{ij} &= \rm{diag}\left(a^2,b^2,b^2\right) &&\; \rm{III}\,,\; \rm{VI}_{0}\,,\; \rm{VIII}\,,\; \rm{IX}\,.\\
\gamma_{ij} &= \rm{diag}\left (a^2,\f{a^4}{b^2},b^2\right) &&\; \rm{IV}\,,\; \rm{V}\,,\; \rm{VII}\,.\\
\gamma_{ij} &= \rm{diag}\left (a^2, b^{\f{2(1+h)}{1 - h}} a^{\f{4 h}{h -1}},b^2\right) &&\; \rm{VI}_h\,.
\ee
\esub
To the first class we can also add the Kantowski sachs cosmology
\bsub
\bg
\de s^2_{\rm{KS}} = -N^2 \de t^2 + a^2 \de x^2 + b^2 L_s^2 \left (\de \theta^2+\sin^2 \theta\, \de \phi^2\right)\,,\\
e^1 = \de x\,,\q e^2 = L_s\, \de \theta \,,\q e^3 = L_s\, \sin \theta \, \de \phi \,,\\
\cV_0 = \f{L_0 L_s^2}{4} \,,\q \q x\in[0,L_0]\,,\theta\in[0,\pi]\,, \phi \in[0,2\pi].
\eg
\esub
The latter does not belong to the Bianchi classification because it doesn't have three independent spacelike killing vector forming a closed Lie algebra. We write here the potential terms coming from the minisuperspace reduction of the three dimensional Ricci scalar.
\be
\int_\Sigma \f{|e| \sqrt{\gamma} R^{(3)}}{16 \pi G} = \f{1}{G}N \cV_0\Omega_\text{model}
\ee
with
\bsub
\be
\Omega_\text{KS}&=-2\df{a}{L_s^2}=-\df{3^{1/3 }}{L_s^2 \sigma^{4/3}}\df{\tilde u}{\tilde{v}^{1/3}}\,,\\
\Omega_\text{III}&= 2\df{a}{L_s^2}=\df{3^{1/3 }}{L_s^2 \sigma^{4/3}}\df{\tilde u}{\tilde{v}^{1/3}}\,,\\
\Omega_\text{IV}&=\df{6a}{L_s^2}+ \df{b^4}{2 L_s^2 a^3 }=\f{1}{4 L_s^2}(3\tilde u \tilde v)^{1/3}\left (12+16^{\sqrt{3}} 81^{1/\sqrt{3}} (\tilde{v} \sigma^2 /\tilde u)^{4/\sqrt{3}}\right )\,,\\
\Omega_\text{V}&=\df{6a}{L_s^2}=\df{3(3\tilde u \tilde v)^{1/3}}{L_0^2}\,,\\
\Omega_\text{VI$_0$}&= 2\df{b^2}{L_s^2 a}=\df{(3 \sigma^8 \tilde v^5)^{1/3}}{\tilde u L_s^2}\,,\\
\Omega_\text{VI$_h$}&= 2\df{(1+3 h^2)}{1+h^2}\df{b^{\f{2}{1-h}}a^{\f{1+h}{h-1}}}{L_s^2}
=\left(\df{3G}{\cV_0}\right )^{\f{1}{3}}\df{1+3 h^2}{(1+h)^2 L_s^2} \tilde u^{\f{4}{3\sqrt{3 h^2+1}}+\f{1}{3}} \tilde v^{-\f{4}{3\sqrt{3 h^2+1}}+\f{1}{3}} \,,\\
\Omega_\text{VII$_h$}
&= \df{a^8+2(6h^2-1) a^4 b^4 +b^8}{2 L_0^2 a^3b^4}\\
&=\left (\df{3G}{\cV_0}\right )^{1/3} \df{\left (\left ( \left (\df{\tilde u}{\tilde v}\right )^{\f{4}{\sqrt{3}}}-1  \right )^2 +12 h^2 \left (\df{\tilde u}{\tilde v}\right )^{\f{4}{\sqrt{3}}} \right )\tilde  u^{-\f{4}{\sqrt{3}}+\f{1}{3}}\tilde  v^{\f{4}{\sqrt{3}}+\f{1}{3}}} {4 L_s^2}\,,\q\notag\\
\Omega_\text{VIII}&=\df{a^3+4a b^2 }{2L_s^2 b^2}=\df{3^{1/3}\left (\tilde u^3 + 4 \tilde u \tilde v^2 \sigma^4 \right )}{4 L_s^2 \tilde v^{7/3} \sigma^{16/3}}\,,\\
\Omega_\text{IX}&=\df{a^3-4a b^2}{2L_s^2 b^2}=\df{3^{1/3}\left (\tilde u^3 - 4 \tilde u \tilde v^2 \sigma^4 \right )}{4 L_s^2 \tilde v^{7/3} \sigma^{16/3}}	\,.
\ee
\esub
Note that the transformation that maps the \textit{scale factors} to the conformal null coordinates on the field space changes between the three families with different internal metric, and is given by
\be
\tag{\ref{family_1_conf}}
&\left|
\begin{array}{rl}
\tilde u &= 2  \sqrt{\df {2\cV_0}{3G}}\, a \sqrt{b} \\
\tilde v &= 2 \sqrt{\df{2\cV_0}{3G}}\, b^{3/2} \\
\end{array} \right.\,, &&\;\rm{III}\,,\; \rm{VI}\,,\; \rm{VIII}\,,\; \rm{IX}\,,\\
\tag{\ref{family_2_conf}}
&\left|
\begin{array}{rl}
\tilde u &= 2 \sqrt{\df{2\cV_0}{3G}} a^{\f{3+\sqrt 3}{2}} b^{-\f{\sqrt{3}}{2}} \\
\tilde v &= 2 \sqrt{\df{2\cV_0}{3G}} a^{\f{3-\sqrt 3}{2}} b^{\f{\sqrt{3}}{2}}\\
\end{array}
\right. \,,&&\; \rm{IV}\,,\; \rm{V}\,,\; \rm{VII}\,,\\
\tag{\ref{family_3_conf}}
&\left|
\begin{array}{rl}
\tilde u &= 2 \sqrt{\df{2\cV_0}{3G}} a^\f{1 - 3 h- \sqrt{1 + 3 h^2}}{2 -2 h}  b^\f{2 + \sqrt{1 + 3 h^2}}{2 -2 h}\\
\tilde v &= 2 \sqrt{\df{2\cV_0}{3G}} a^\f{1 - 3 h+ \sqrt{1 + 3 h^2}}{2 -2 h}  b^\f{2 - \sqrt{1 + 3 h^2}}{2 -2 h} \\
\end{array} \right.\,,
&&\; \rm{VI}_h\,.
\ee
We see that for Bianchi III and the black hole we can choose the same lapse, that gives an opposite sign to the constant potential, changing the sign of the "mass of the particle moving on the field space" (see main text for the algebra). We can also put the potential to a constant with a flat conformal factor in the V and VI cases. This will respectively give the null coordinates:
\bsub \be
N&= \f{1}{2  a}\,, && \begin{array}{rlcrl}
u&= 2 \sigma a^2 b	\,,		&\q\q & p_u &=-\df{p_a}{4 a b \sigma}\,,\\[9pt]
v&= 2 \sigma b\,,		&\q\q & p_v &=  \df{a p_a + 2 b p_b}{4b \sigma}\,,\\
\end{array} \q\q\rm{III}\,,\; \rm {KS}&&\\
N&= \f{1}{6  a}\,,&& \begin{array}{rlcrl}
u&= 3 \sigma a^{2+\f{2}{\sqrt{3}}}	b^{-\f{2}{\sqrt{3}}}	\,,		
							&\q & p_u &=\left (\df a b\right )^{-2/\sqrt{3}}\df{a p_a +(1-\sqrt{3}) b p_b)}{12 a^2 \sigma}\,,\\[9pt]
v&= 3 \sigma a^{2-\f{2}{\sqrt{3}}}	b^{\f{2}{\sqrt{3}}}\,,
							&\q & p_v &=  \left (\df a b \right )^{2/\sqrt{3}}\df{a p_a +(1+\sqrt{3}) b p_b}{12 a^2\sigma }\,,\\
\end{array}&&\rm{V}\\
N&= \f{a}{2  b^2}\,, && \begin{array}{rlcrl}
u&= 2\sigma \log a^2 b	\,,	&\q\q & p_u &=\df{a p_a}{2\sigma}\,,\\[9pt]
v&= {\sigma b^4}\,,	&\q\q & p_v &=  \df{-a p_a + 2 b p_b}{8b^4\sigma}\,,\\
\end{array} &&\rm{VI}_0\,,
\ee
where $\sigma= \sqrt{\df{\cV_0}{G}}$, finally for the VI$_h$ case
\be
N&= \f{(1+3h^2)}{2h^2 }b^{\f{2}{h-1}}a^{\f{1+h}{1-h}}\,,\notag \\
& \begin{array}{rl}
u&=\sigma a^{\f{2h (1+h +\sqrt{1+3 h^2})}{(h-1)\sqrt{1+3 h^2}}} b^{\f{2 (1+h^2+\sqrt{1+3 h^2})}{(1-h)\sqrt{1+3 h^2}}}\,,		\\[9pt]
p_u &= a^{\f{2h (1+h +\sqrt{1+3 h^2})}{(1-h)\sqrt{1+3 h^2}}} b^{\f{2 (1+h^2+\sqrt{1+3 h^2})}{(h-1)\sqrt{1+3 h^2}}}\df{b p_b h (1+h -\sqrt{1+3 h^2}) +a p_a (m^2+1 - \sqrt{1+3 h^2})}{4\sigma h^2}\,, \\[9pt]
v&=\sigma a^{\f{2h (h-1 +\sqrt{1+3 h^2})}{(h-1)\sqrt{1+3 h^2}}} b^{\f{2 (1+h^2-\sqrt{1+3 h^2})}{(h-1)\sqrt{1+3 h^2}}}\,,	\\[9pt] 
p_v &=  a^{\f{2h (1+h +\sqrt{1+3 h^2})}{(h-1)\sqrt{1+3 h^2}}} b^{\f{2 (1+h^2+\sqrt{1+3 h^2})}{(1-h)\sqrt{1+3 h^2}}}\df{b p_b h (1+h +\sqrt{1+3 h^2}) +a p_a (m^2+1 + \sqrt{1+3 h^2})}{4\sigma h^2}\,, \\
\end{array} &&\rm{VI}_h\,.
\ee \esub
We just need to preform the change of coordinate above into the expression of the generators, to get the algebra in terms of scale factors and their momenta. This is not possible for the other models, on the other hand, looking at the potentials for VIII and IX, we see that the two monomials are the same with a sign difference, and they satisfy the condition \eqref{power_condition}. We have than two choices for the lapse that gives the $\SL(2,\R)$ algebra
\be
N&= \f{1}{2 a} \,,&& \left |\begin{array}{rl}
u&=2 \sigma a^2 b\\
v&=2\sigma b \\
\end{array} \right .\,,
&C_\rm{d}&= \f{b b_p}{2}\,,
&V_3&= 8 \sigma^2 a^2 b^2 \,,
&\cQ_0&=\f{p_a (a p_a -2 b p_b)}{16 \sigma a b^2}+ \f{\cV_0^2 a^2}{4  L_s^2 G^2 b^2}\,.\\
N&= - \f{2 b^3}{ a^3}\,,&& \left|\begin{array}{rl}
u&= \df{\sigma}{2} a^4 b^2\\[9pt]
v&= \df{\sigma}{2  b^2}\\
\end{array} \right .\,,
&C_\rm{d}&=\f{b b_p}{2}\,,
&V_3&=\sigma^2 \f{a^4}{2}\,,
&\cQ_0&= -\f{p_a (a p_a -2 b p_b)}{4 a^3}\mp \f{4 \cV_0^2 b^2}{  L_s^2 G^2 a^2}\,.
\ee
where the upper sign is for VIII, and the lower one is for IX. Unfortunately \eqref{power_condition} is not satisfied for IV and VII, meaning that they do not exhibit the $\SL(2,\R)$ symmetry, because of the non-conformal potential.

\chapter{Connection triad variables}
\label{app:4_LQG_phase_space}

Early work on the effective dynamics of the Kantowski--Sachs black hole interior used a slightly different language based on the connection-triad variables of loop quantum gravity \cite{Bojowald:2001xe,Bohmer:2007wi}. In this appendix write down the relationship between these and our variables, and also recall the structure of the CVH algebra of full LQG.

The line element used in e.g. \cite{Bohmer:2007wi} is
\be
\de s^2 = - N^2 \de t^2 + \f{p_b^2}{L_0^2 p_c} \de x^2 + p_c \de \Omega^2\,.
\ee
This corresponds to the choice of a densitized triad
\be\label{KS E}
E=E^a_i\tau^i\partial_a=p_c\sin\theta\,\tau_1\partial_x+\f{p_b}{L_0}\sin\theta\,\tau_2\partial_\theta+\f{p_b}{L_0}\tau_3\partial_\phi\,.
\ee
In LQG this densitized triad is canonically conjugated to the $\text{SU}(2)$ Ashtekar--Barbero connection, and the symplectic structure is
\be
\{A^i_a(x),E_j^b(y)\}= 8 \pi\gamma \delta_a^b \delta_j^i \delta(x-y)\,.
\ee
In the case of the homogeneous spherically-symmetric BH interior the connection is
\be\label{KS A}
A=A^i_a\tau_i\de x^a=\f{c}{L_0}\tau_1\de x+b\tau_2\de\theta+b\sin\theta\,\tau_3\de\phi+\cos\theta\,\tau_1\de\phi\,,
\ee
and the Poisson brackets reduce to
\be
\{c,p_c\}= 2 \gamma\,,\q\q\{b,p_b\}= \gamma\,.
\ee
These phase space variables are related to the $\bms_3$ vectors by
\be
\label{transf_bc_to_12}
 V_1 = L_s^2 p_c \,,\q V_2 = 2L_s^2 \f{p_b^2}{L_0^2}\,,\q P_1=-\f{1}{2L_s^2\gamma} c\,, \q P_2 = -\f{L_0^2}{4 L_s^2\gamma} \f{b}{p_b}\,.
\ee

We now close this appendix by commenting on the CVH algebra structure of full general relativity in Ashtekar--Barbero canonical variables. This follows the appendix of \cite{BenAchour:2017qpb} and corrects a minor point there. We then connect the results to the spherically-symmetric homogeneous spacetime considered here. The connection-triad variables are built from the spatial triad $e^i_a\de x^a$ and the extrinsic curvature $K^i_a \de x^a$, where $i,j,k$ are internal $\mathfrak{su}(2)$ indices, as
\be
E^a_i=\det(e^i_a) e^a_i\,, \q\q A^i_a=\Gamma_a^i[E] + \gamma K^i_a\,,
\ee
with the torsion-free spin connection $\Gamma^i_a[E]$ is
\be
\Gamma_a^i= \f{1}{2} \epsilon^{ijk} E^b_k \left(2 {\partial}_{[b} {E^j}_{a]} + E^c_j E^l_a \partial_b E^l_c\right) +\f{1}{4}\f{ \epsilon^{ijk} E^b_k}{\det E} \big(2E^j_a \partial_b \det E-E^j_b \partial_a \det E\big)\,.
\ee
The canonical pairs $(A_a^i,E^i_a)$ are subject to seven first class constraints given by Gauss, diffeomorphism and scalar constraints. Here we are interested in the scalar constraint
\be
H[N]=\int_\Sigma \de^3 x\,\cH=\f{1}{16\pi}\int_\Sigma \de^3 x \f{N}{\sqrt{q}}E^a_iE^b_j\Big({\eps^{ij}}_kF^k_{ab}-2(1+\gamma^2)K^i_{[a}K^j_{b]}\Big)\,.
\ee
It is useful to split the constraint into the so-called Euclidean $H_\text E$ and Lorentzian $H_\text K$ parts given by
\bsub
\be
H_{\text E}[N]&=\int_\Sigma \de^3 x\,N \cH_\text{E} = \f{1}{16\pi}\int_\Sigma \de^3 x N\f{E^a_iE^b_j}{\sqrt{\det E}}\big(\eps^{ij}_{\;\;k}F^k_{ab}\big)\,,
\\
H_{\text K}[N]&=\int_\Sigma \de^3 x\,N \cH_\text{K} =-\f{1+\gamma^2}{8\pi}\int_\Sigma \de^3 x N \f{E^a_iE^b_j}{\sqrt{\det E}}K^i_{[a}K^j_{b]}\,.
\ee
\esub
The generator of dilation in the phase space is here the trace of the extrinsic curvature, also known as the complexifier (hence its name $C$ in the main text) as it plays a primary role in defining the Wick transform between real and self-dual version of LQG. The last quantity to consider is the volume of the space-like hypersurface. These quantities are given by
\be
C=\f{1}{8\pi }\int_\Sigma \de^3 x \,E^a_i K^i_a\,, \q\q
V=\int_\Sigma \de^3 x \sqrt{\det(E^a_i)}\,.
\ee
A straightforward calculation shows that, together with the Lorentzian part for a constant lapse function, they form an $\mathfrak{sl}(2,\R)$ CVH algebra, that is the LQG analogous of the kinematical ADM algebra presented in the main text. For $N=1$ we have indeed 
\be
\label{HKCV}
\{ C,V \} =	\f{3}{2 } V\,, \q \q 
\{ V, H_{\text K} \}= (1+\gamma^2){8\pi} C\,,  \q \q
\{ C, H_{\text K} \}= -\f{3}{2} H_{\text K} \,.
\ee
On the other hand, if we also consider the Euclidean part of the Hamiltonian constraint the algebra fails to be closed and we find for $N=1$
\bg
\label{HECV}
\{ V, H_{\text E} \}= -{8 \pi \gamma^2} C \,,\q \q
\{ C, H_{\text E} \}= \f{1}{2} H_{\text E} + 2\f{\gamma^2}{1+\gamma^2} H_{\text K} \,,\\
\{ H_{\text E} , H_{\text K} \} = \f{1+\gamma^2}{16\pi}\int_\Sigma \de^3 x	\,\left [ \epsilon^{abc} F_{bc}^i K_a^i - 6 \gamma^2 \det (K^i_a) \right ].
\notag
\eg
Note the particular role played by the self-dual value $\gamma=\pm i$.

In flat FLRW cosmology, it turns out that the Euclidean and Lorentzian parts of the Hamiltonian constraint are proportional, as a consequence of the vanishing of $\Gamma$, that is indeed synonymous of a flat three dimensional slice. This is however not the case for the Kantowski--Sachs geometry, where with the variables introduced above and for $N=1$ we get
\be
\label{cb_hamilt}
H_{\text E}=\f{2bcp_c+(b^2-1)p_b}{2\sqrt{p_c}}\,,
\q\q
H_{\text K}=-(1+\gamma^{2})\f{ 2bcp_c+b^2p_b}{2\sqrt{p_c}}\,,
\ee
while the complexifier and the volume are given by
\be
\label{bc_cvh}
C=\f{2 b p_b +c p_c}{2\gamma}\,,\q\q
V= 4 \pi p_b \sqrt{p_c}\,.
\ee
The algebra is then obviously given again by \eqref{HKCV} and \eqref{HECV}, with the last bracket changed to
\be
\{ H_{\text E} , H_{\text K} \} = -\f{1 + \gamma^2}{2 \gamma} c \,.
\label{CVH_reduced}
\ee
The key idea which leads to the CVH algebra for the black hole interior is to change the lapse so as to recover the same property as in FLRW, namely a simple phase space independent relationship between $H_\text{E}$ and $H_\text{K}$. This choice corresponds to
\be
N=\f{2\sqrt{p_c}}{p_b} \q\Rightarrow\q H_{\text K}[N] = -(1+\gamma^2) (H_{\text E }[N]+1)\,.
\ee
Using \eqref{transf_bc_to_12} shows that this lapse is indeed (up to a prefactor depending on the fiducial lengths) the one used in section \ref{sec2.3:BH_minisup}. Reabsorbing the lapse and redefining the volume and the complexifier, we see that the modified CVH algebra gives indeed the $\mathfrak{sl}(2,\R)$ sector of the $\mathfrak{iso}(2,1)$ structure presented in the main text, with
\bsub
\ba
\f{1}{\gamma^2(1+\gamma^2)} H_{\text K}[N] &=&  -\f{L_s^6}{2\cV_0^2}(2 P_1 V_1 + P_2 V_2) P_2 = \f{L_s^6}{\cV_0 G}H \,, \\
V &\to& \f{V}{ 16 \pi N} = \f{p_b^2}{8} \propto V_2\,, \\
\phantom{\f{1}{2}}C &\to& C_\rm{d} = \left \{V_2,\f{\cV_0}{G}H\right \} = - V_1 P_1 - V_2 P_2\,.
\ea
\esub

Finally, we use the relationship \eqref{transf_bc_to_12} to rewrite the polymer Hamiltonian \eqref{poly_hamilton} in terms of connection and triads. This reads
\be
H_\rm{poly}= -\f{c p_c }{4\gamma \lambda} \sin \left (\f{L_0^2 \lambda b}{2\gamma L_s^2 p_b}\right ) -  \f{L_s^2 p_b^2}{\lambda^2 L_0^2} \sin^2 \left (\f{L_0^2\lambda b}{2\gamma L_s^2 p_b}\right )\,.
\ee
\chapter{Properties of homothetic killing vectors}
\label{app:5_HKV_prop}

In the main text of the article we have used some properties of the conformal Killing vectors, for example the fact that they are solution to the geodesic deviation equation. We will prove here this statement and other properties of the conformal vector. We recall that given an invertible metric $g_{\mu\nu}$, we define an homothetic Killing vector by the property
\be
\label{HKV_append} \nabla_\mu \xi_\nu + \nabla_\nu \xi_\mu : =  2  \nabla_{(\mu} \xi_{\nu)} =\lambda g_{\mu\nu}\,,\q\q \lambda=\rm{const}\,.
\ee

\begin{theorem}
Any homotetic killing vector $\xi$ is a solution of the geodesic deviation equation \cite{Caviglia:1982}:
\be
\label{geodesic deviation}
p^\mu p^\nu \nabla_\mu \nabla_\nu \xi_\rho = - R_{\rho\mu\sigma\nu}p^\mu p^\nu \xi^\sigma\,.
\ee
Where $p^\mu$ is the tangential vector to a geodesic (i.e. a curve describing a solution of the equations of motion). It satisfy the property: $p^\mu \nabla_\mu p^\nu =0$.
\end{theorem}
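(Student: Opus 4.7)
The plan is to derive a Killing-type identity of the schematic form $\nabla_\alpha\nabla_\beta\xi_\gamma = R_{\beta\gamma\alpha}{}^{\delta}\,\xi_\delta$ for any homothetic Killing vector, and then contract with $p^\alpha p^\beta$ to read off \eqref{geodesic deviation}. The key enabling observation is that, because $\lambda$ in \eqref{HKV_append} is \emph{constant} and the metric is covariantly constant, differentiating \eqref{HKV_append} yields
\begin{equation}
\nabla_\alpha\nabla_\beta\xi_\gamma + \nabla_\alpha\nabla_\gamma\xi_\beta = 0,
\end{equation}
which is formally identical to the expression one obtains in the strict Killing case ($\lambda=0$). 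Consequently every step of the classical Killing identity derivation carries over verbatim; the homothetic constant contributes nothing.

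Concretely, I would write the three cyclic permutations of the above relation (in the indices $\alpha,\beta,\gamma$) and form the combination (first)$-$(second)$+$(third). Using the Ricci identity $[\nabla_\mu,\nabla_\nu]\xi_\rho = -R^{\sigma}{}_{\rho\mu\nu}\,\xi_\sigma$ to convert the commutators that appear into curvature contractions, all the ``double-derivative'' terms except one cancel pairwise thanks to the symmetrization relation above, leaving
\begin{equation}
\nabla_\alpha\nabla_\beta\xi_\gamma = R_{\beta\gamma\alpha}{}^{\delta}\,\xi_\delta,
\end{equation}
with the precise index placement fixed by the conventions being used.

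The final step is immediate. Contract with $p^\alpha p^\beta$; the symmetry of $p^\alpha p^\beta$ combined with the pairwise symmetries of the Riemann tensor gives
\begin{equation}
p^\alpha p^\beta \nabla_\alpha\nabla_\beta\xi_\gamma = p^\alpha p^\beta R_{\beta\gamma\alpha}{}^{\delta}\,\xi_\delta = -R_{\gamma\mu\sigma\nu}\,p^\mu p^\nu\,\xi^\sigma
\end{equation}
after relabelling dummy indices and using $R_{\beta\gamma\alpha\delta}=-R_{\gamma\beta\alpha\delta}$. The left-hand side is unambiguously $p^\mu p^\nu\nabla_\mu\nabla_\nu\xi_\rho$ thanks to the geodesic condition $p^\mu\nabla_\mu p^\nu=0$, which lets one push the outer tangent vector past the derivative without generating extra terms. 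This is exactly \eqref{geodesic deviation}.

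The main obstacle is purely bookkeeping: matching Riemann-tensor index conventions and signs of the Ricci identity to the form stated in \eqref{geodesic deviation}. The one conceptually essential ingredient — and the reason the proof does not extend to a generic conformal Killing vector — is the constancy of $\lambda$: if $\nabla_\alpha\lambda\neq 0$ one would pick up extra terms proportional to derivatives of $\lambda$ on the right-hand side, and the equation would no longer reduce to a pure curvature--$\xi$ contraction. This is also consistent with the well-known fact that conformal Killing vectors satisfy a \emph{modified} geodesic deviation equation with source terms built from $\lambda$.
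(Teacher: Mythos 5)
Your proof is correct, but it follows a genuinely different route from the one in the paper. You first establish the full uncontracted identity $\nabla_\alpha\nabla_\beta\xi_\gamma = R_{\beta\gamma\alpha}{}^{\delta}\xi_\delta$ by the classical three-permutation argument (cyclic relabelling of $\nabla_\alpha\nabla_\beta\xi_\gamma+\nabla_\alpha\nabla_\gamma\xi_\beta=0$, the Ricci identity, and the first Bianchi identity), and only contract with $p^\alpha p^\beta$ at the very end. The paper instead contracts with $p^\mu p^\nu$ from the outset: it uses the homothetic equation once to trade $\nabla_\mu\nabla_\nu\xi_\rho$ for $-\nabla_\mu\nabla_\rho\xi_\nu$ (the $\lambda\,\nabla_\mu g_{\nu\rho}$ term dropping by metric compatibility), applies the Ricci identity a single time, and then observes that the leftover piece $p^\mu p^\nu\nabla_\rho\nabla_{(\mu}\xi_{\nu)}=\tfrac{\lambda}{2}\,p^\mu p^\nu\nabla_\rho g_{\mu\nu}$ vanishes precisely because of the symmetry of $p^\mu p^\nu$. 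Your version buys a stronger pointwise statement and makes completely transparent why the homothetic case is verbatim the Killing case; the paper's version buys brevity (no cyclic sum, no Bianchi identity) at the cost of only proving the contracted statement. You both correctly isolate the constancy of $\lambda$ as the essential hypothesis. One small caveat: the geodesic condition $p^\mu\nabla_\mu p^\nu=0$ plays no role in either derivation of the identity as stated --- contracting a pointwise tensor identity with $p^\alpha p^\beta$ requires nothing of $p$ --- so your closing appeal to it is unnecessary; it matters only if one wants to reinterpret the left-hand side as the second covariant derivative of $\xi_\rho$ along the curve, which is how the equation earns the name ``geodesic deviation.''
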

\begin{proof}
We use the definition of HKV \eqref{HKV_append} on the l.h.s.
\be
p^\mu p^\nu \nabla_\mu \nabla_\nu \xi_\rho 
&= - p^\mu p^\nu \nabla_\mu \nabla_\rho \xi_\nu + \lambda p^\mu p^\nu \cancel{\nabla_\mu g_{\nu \rho}}\notag\\
&= - p^\mu p^\nu  R_{\nu\sigma\mu\rho} \xi^\sigma +  p^\mu p^\nu \nabla_\rho \nabla_\mu \xi_\nu\notag\,,
\ee
From the first line to second one we use the definition of Riemann tensor. The last term is zero:
\be
p^\mu p^\nu \nabla_\rho \nabla_\mu \xi_\nu = p^\mu p^\nu \nabla_\rho \nabla_{(\mu} \xi_{\nu)} = \lambda p^\mu p^\nu \nabla_\rho g_{\mu\nu }=0\notag\,,
\ee
so we prove the statement, using the properties of $R$:
\be
p^\mu p^\nu \nabla_\mu \nabla_\nu \xi_\rho  = - p^\mu p^\nu  R_{\nu\sigma\mu\rho} \xi^\sigma=
- R_{\rho\mu\sigma\nu}p^\mu p^\nu \xi^\sigma\,.\notag
\ee
\end{proof}
The homothetic Killing vectors form an algebra:
\be
\label{structure}
[\xi_{(i)},\xi_{(j)}] &= c_{ij}^{\;\;k} \xi_{(k)}\,,\\
[\xi_{(i)},\xi_{(j)}]^\nu &:= \xi_{(i)}^\mu \nabla_\mu \xi_{(j)}^\nu -\xi_{(j)}^\mu \nabla_\mu \xi_{(i)}^\nu\,,\notag
\ee
and the derived subalgebra is given by Killing vectors
\be
\cL_{[i,j]} g_{\mu\nu} &= \nabla_ \mu (\xi_{(i)}^\sigma \nabla_\sigma \xi_{(j)\,\nu} -\xi_{(j)}^\sigma \nabla_\sigma \xi_{(i)\,\nu}) + \left (\mu \leftrightarrow \nu \right )\notag\\
&= \nabla_ \mu \xi_{(i)}^\sigma \nabla_\sigma \xi_{(j)\,\nu} + \xi_{(i)}^\sigma \nabla_ \mu \nabla_\sigma \xi_{(j)\,\nu} -\nabla_ \mu \xi_{(j)}^\sigma \nabla_\sigma \xi_{(i)\,\nu} + \xi_{(j)}^\sigma \nabla_ \mu \nabla_\sigma \xi_{(i)\,\nu} + \left (\mu \leftrightarrow \nu \right )\notag\\
&= \nabla_ \mu \xi_{(i)\,\nu} \lambda_{(j)} - \cancel{\nabla_ \mu \xi_{(i)}^\sigma \nabla_\nu \xi_{(j)\,\sigma}}  + \xi_{(i)}^\sigma \nabla_ \mu \nabla_\sigma \xi_{(j)\,\nu} -\nabla_ \mu \xi_{(j)\,\nu} \lambda_{(i)}\cr 
&\;\;+ \cancel{\nabla_ \mu \xi_{(j)}^\sigma \nabla_\nu \xi_{(i)\,\sigma}} + \xi_{(j)}^\sigma \nabla_ \mu \nabla_\sigma \xi_{(i)\,\nu} + \left (\mu \leftrightarrow \nu \right )\notag\\
&= \cancel{g_{\mu\nu}\lambda_{(i)} \lambda_{(j)}} + \xi_{(i)}^\sigma R_{\nu\rho\mu\sigma}\xi_{(j)}^\rho - \xi_{(i)}^\sigma \nabla_ \sigma \nabla_\mu \xi_{(j)\,\nu} -\cancel{g_{\mu\nu}\lambda_{(j)}\lambda_{(i)}} \cr
&\;\;- \xi_{(j)}^\sigma R_{\nu\rho\mu\sigma}\xi_{(i)}^\rho + \xi_{(j)}^\sigma \nabla_ \sigma \nabla_\mu \xi_{(i)\,\nu}+ \left (\mu \leftrightarrow \nu \right )\notag\\
&= \xi_{(i)}^\rho \xi_{(j)}^\sigma (R_{\nu\mu\sigma\rho} + R_{\mu\nu\sigma\rho})  -\xi_{(i)}^\sigma \nabla_\sigma (g_{\mu\nu}) \lambda_{(j)} + \xi_{(j)}^\sigma \nabla_\sigma (g_{\mu\nu}) \lambda_{(i)}\cr
& =0\,.
\ee
Here we have used \eqref{HKV_append} when going from the second to the third line, and eliminated the antisymmetric terms in $\mu\,,\nu$. We have then used \eqref{HKV_append} once again as well as the definition of the Riemannn tensor, and finally concluded by using the antisymmetry of the Riemann tensor. This result implies that the structure constants of the algebra of homothetic Killing vectors satisfy
\be
c_{ij}^{\;\;k} \lambda_k =0\,.
\ee
Concerning the quadratic charges, we have that their third time derivative vanish if the field space is equivalent to Minkowski (i.e. if the Riemann tensor vanishes)
\be
\f{\de^3}{\de t^3}V_{ij}&=
p^\mu p^\nu p^\rho \nabla_\mu \nabla_\nu \nabla_\rho \left (\xi_{(i)}^\sigma \xi_{(j)\sigma}\right )\notag\\&=2p^\mu p^\nu p^\rho \nabla_\mu\left ( (\nabla_\nu \xi_{(i)}^\sigma) (\nabla_\rho \xi_{(j)\sigma}) -  \xi_{(i)}^\sigma R_{\sigma\nu\kappa\rho} \xi_{(j)}^\kappa) \right ) \notag\\
&= - 2 p^\mu p^\nu p^\rho \left (2(\nabla_\rho \xi_{(j)}^\sigma) R_{\sigma\mu\kappa\nu} \xi_{(i)}^\kappa +2(\nabla_\rho \xi_{(i)}^\sigma) R_{\sigma\mu\kappa\nu} \xi_{(j)}^\kappa  + \xi_{(i)}^\sigma\xi_{(j)}^\kappa (\nabla_\rho R_{\sigma\mu\kappa\nu})\right)\,.\ee
This shows in particular that $\dddot{V}_{(ij)}=0$ whenever the Riemann tensor vanishes (this is of course sufficient but not necessary), as in the case of the flat field space geometry discussed in section \ref{sec2.2:mink&minisup}.

We are interested in the condition for which the functions $V_{(ij)}$ and $C_{(i)}$ form a closed algebra with the Hamiltonian $H$. In the free case where $H=p_\mu p^\mu/2$ we find
\be
\lb V_{(ij)},H\rb
&=p^\mu \partial_\mu \big( \xi_{(i)}^\nu \xi_{(j)\,\nu} \big)\,\notag\\
&=p^\mu \xi_{(i)}^\nu \nabla_\mu   \xi_{(j)\,\nu} +p^\mu \xi_{(j)}^\nu \nabla_\mu   \xi_{(i)\,\nu}\notag\\
&= \lambda_{(i)} C_{(j)}+\lambda_{(j)} C_{(i)}- p^\mu\left ( \xi_{(i)}^\nu \nabla_\nu   \xi_{(j)\,\mu} +\xi_{(j)}^\nu \nabla_\nu   \xi_{(i)\,\mu}\right ) \,,
\ee
and
\be
\lb V_{(ij)}, C_{(k)}\rb
&=\xi_{(k)}^\mu \partial_\mu  \big( \xi_{(i)}^\nu \xi_{(j)\,\nu} \big)\,\notag\\
&=\xi_{(k)}^\mu \xi_{(i)}^\nu \nabla_\mu   \xi_{(j)\,\nu} +\xi_{(k)}^\mu \xi_{(j)}^\nu \nabla_\mu   \xi_{(i)\,\nu}\notag\\
&= \lambda_{(i)} V_{(jk)}+\lambda_{(j)} V_{(ik)}- \xi_{(k)}^\mu\left ( \xi_{(i)}^\nu \nabla_\nu   \xi_{(j)\,\mu} +\xi_{(j)}^\nu \nabla_\nu   \xi_{(i)\,\mu}\right ) \,,
\ee
which closes iff
\be\label{condition on xi's}
\xi_{(i)}^\nu \nabla_\nu   \xi_{(j)\,\mu} +\xi_{(j)}^\nu \nabla_\nu   \xi_{(i)\,\mu}=\sum_i \alpha^i\xi_{(i)\,\mu}\,,
\ee
for some combination of the vectors on the RHS.
\chapter{List of variables and operators for the BH minisuperspace}
\label{app:6_Variables}

In this appendix we gather some of the notations which are used throughout the paper to denote certain variables. In particular, the Poincar\'e generators have appeared in many different forms depending on the context. We give here these different forms as well as the equations defining them. 

We start by the line element that defines the minisuperspace. This has been written in terms of three different set of variables according to the parametrization of the field space that we choose. First of all there is the \textit{scale factor} choice that gives
\be
\de s^2_\rm{BH} = -N^2 \de t^2 + a^2 (\de x + N^1 \de t)^2 +L_s^2 b^2\, \de \Omega^2\,,
\tag{\ref{Bh_minisup_line}}
\ee
then the conformal null parametrization
\be
\de s^2_\rm{BH} = -N^2 \de t^2 + \f{u}{v} (\de x + N^1 \de t)^2 + \f{G L_s^2}{4\cV_0} v^2 \, \de \Omega^2\,,
\tag{\ref{Bh_minisup_line:null_conf}}
\ee
and finally the $\BMS_3$ vectors $V_i$
\be
\de s^2_\rm{BH} = -N^2 \de t^2 + \f{V_2}{2V_1} (\de x + N^1 \de t)^2 + L_s^2 V_1 \, \de \Omega^2\,,
\tag{\ref{Bh_minisup_line:Vi}}
\ee
The three are related via
\be
\begin{array}{rcl}
V_1 =& b^2 &= \df{G v^2}{4 \cV_0} \,,\\[9pt]
V_2 =& 2a^2 b^2 &= \df{G uv}{2 \cV_0}\,.
\end{array}
\tag{\ref{new_fields_v}}
\ee
The mechanical model is in priciple invariant under the change of coordinates \eqref{new_fields_v}, but it might happen that different choices of field space parametrization do not cover the whole evolution, exactly like in GR, where a choice of coordinate can be restricted to a patch of the whole manifold. However there is in principle no need to restrict the field to some range of values, it is the evolution that will tell us how the trajectories span the field space.

We have also chosen to gauge fix the lapse in order to have a constant potential term, this is given by
\be
N^2=\f{1}{4a^2} = \f{v}{4u} = \f{V_1}{2 V_2}\,, 
\ee
that in turn make us picking the supermetric
\be
g_{\mu\nu} \dot  x^\mu \dot  x^\nu = \left \{
\begin{array}{l}
-\df{\cV_0}{G}\left (4 a^2\, \dot b^2 +8 b a\, \dot a\, \dot b\right )\\[9pt]
- 2 \dot u \dot v\\
\df{\cV_0}{G} \left (\df{\dot V_1 (V_2 \dot V_1 - 2 V_1 \dot V_2)}{V_1^2} \right )\,	\\
\end{array}
\right. \,.
\ee
From \ref{new_fields_v} we infer the relationship between the canonical momenta
\be
\begin{array}{rcl}
P_1 =& \df{b p_b- a p_a}{2 b^2} &= \df{2\cV_0}{G}\df{v p_v- u p_u}{v^2}\,,\\[9pt]
P_2 =& \df{p_a}{4 a b^2} &= \df{2 \cV_0}{G}\df{p_u}{v}\,.
\end{array}
\tag{\ref{new_fields_v}}
\ee
Because of the dimension of the phase space, the dynamics need four integrals of motion in order to be solved, these are provided by
\be
\cA &= \f{p_a^2}{32 a^2 b^2} = \f{\cV_0}{2G} p_u^2 =\f{P_2^2 V_1}{2}\\
\cB &= \f{b p_b -a p_a}{2} = \f{v p_v -u p_u}{2} = P_1 V_1\\
\tau_0 &= \tau + \f{8 a^2 b \cV_0}{p_a G}=   \tau +\f{v}{p_u}= \tau +\f{2\cV_0}{G P_2} \\
H &= g^{\mu\nu} p_\mu p_\nu \approx \f{\cV_0}{G L_s^2}
\ee
For the initial time $\tau$ we necessarily have a time dependent phase space function, because of the fact that it cannot commute with the Hamiltonian it must be expressed as an evolving constant of motion. Inverting these relations provides the solutions for the equations of motion. From them we infer that the $\bms_3$ and null fields can be used all along the trajectory, from $\tau=-\infty$ to $\tau= + \infty$, while there is a problem with the scale factors. It turns out that $V_1$ is always positive, while $V_2$ changes its sign at the point representing the horizon. This means that for negative $V_2$, it is not possible to invert \eqref{new_fields_v} for real $a$, meaning that the scale factor parametrization is valid only in the interior of the black hole. However it is also possible that for another clock the trajectory happens to be prolonged outside the chart spanned by $(V_1,V_2) \in \R^+ \times \R$.

\vbox{
\begin{flushleft}
\small{(a)}\begin{minipage}{0.30\textwidth}
		\includegraphics[width=\textwidth]{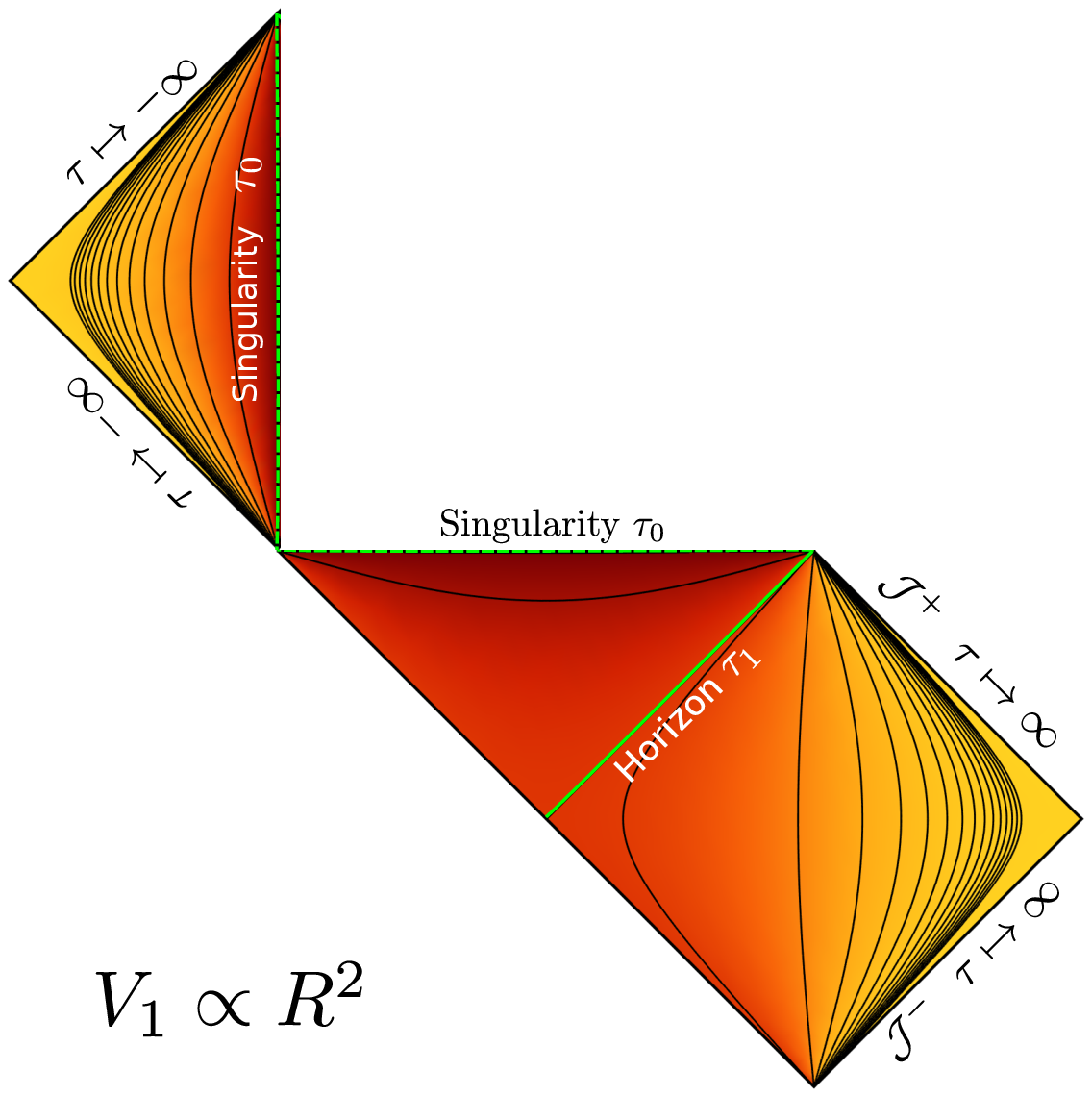}
\end{minipage}
~\hspace{2cm}
\small{(b)}
\begin{minipage}{0.30\textwidth}
		\includegraphics[width=\textwidth]{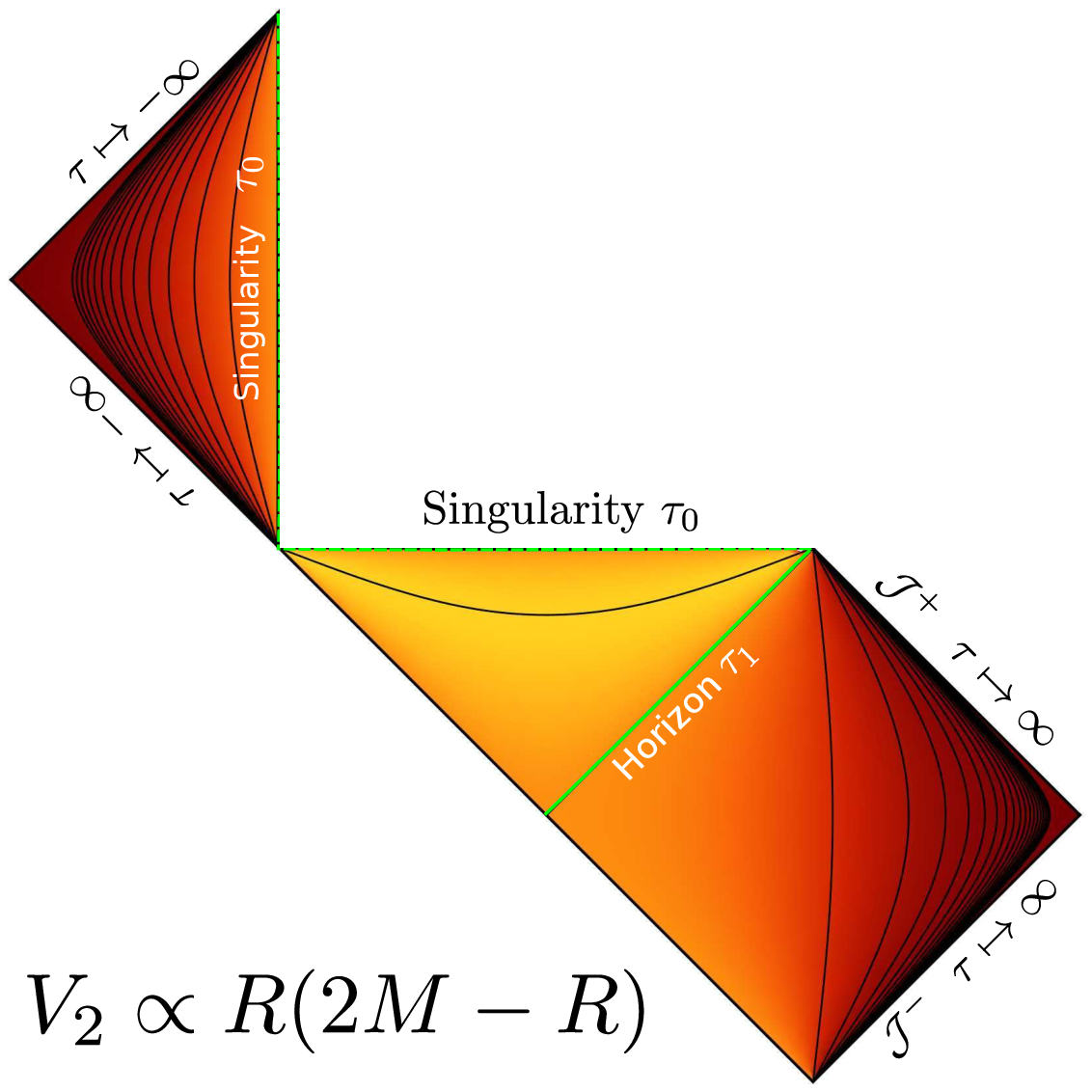}
\end{minipage}\\
\begin{minipage}{\textwidth}
\captionof{figure}{\small{In these figures we have drawn the levels of the fuctions $V_1,V_2$ in terms of the Schwarzschild radius $R$ on a Penrose diagram, completing the ones represented in figure \eqref{fig:kruskal_level}. The black lines are hypersurfaces that equally space the levels of the respective functions, the lighter background color represent the higher values of the functions and it gets darker as the function decreases, we recall that $R \propto (\tau-\tau_0)$ and the Penrose diagram is extented behind the singularity if we add the patch with negative coordinate radius. The first one starting from the left (a) represents $V_1(\tau)$ as in \eqref{classic_traj}. In (b) we have $V_2$, both are well-defined for any real $\tau$.} \label{fig:kruskal_level_2}}
\end{minipage}
\end{flushleft}}

Finally the symmetry generators have the expressions
\be
\begin{array}{ll}
\hline
 & \\[\dimexpr-\normalbaselineskip+4pt]
       \multirow{3}{*}{$\cL_n$} 
&  \multicolumn{1}{l}{p_u p_v \tau^{1 + n} +  \df{\tau^n}{2} (p_u u+ p_v v) (1 + n) +  \df{\tau^{n-1}}{2} u v n (1 + n)} \\[9pt]
& \multicolumn{1}{l}{\df{\tau^{n-1} (2 b p_a p_b \tau^2 G^2 + 
   32 a^3 b^4 \cV_0^2 n (1 + n) + a \tau G (-p_a^2 \tau G + 8 b^3 p_b \cV_0 (1 + n))}{16 a b^2 \cV_0 G}} \\[9pt]
& \multicolumn{1}{l}{\df{\tau^{n-1} (2 P_1 \tau V_1 G (\cV_0 + 
      P_2 \tau G + \cV_0 n) + 
   V_2 (P_2^2 \tau^2 G^2 + 
      2 P_2 \tau \cV_0 G (1 + n) + 
      2 \cV_0^2 n (1 + n))}{2 \cV_0 G}} \\[9pt]\hline
       & \\[\dimexpr-\normalbaselineskip+4pt]
      \multirow{3}{*}{$\cT^+_n$}  
& \multicolumn{1}{l}{\df{1}{4} \tau^{n-1} (2 p_u^2 \tau^2 + 2 p_u \tau v (1 + n) +    v^2 n (1 + n))} \\[9pt]
& \multicolumn{1}{l}{\df{1}{4} \tau^{n-1} \left (\df{p_a^2 \tau^2 G}{
   8 a^2 b^2 \cV_0} + \df{pa \tau (1 + n)}{a} + \df{4 b^2 \cV_0 n (1 + n)}{G}\right )} \\[9pt]
& \multicolumn{1}{l}{\df{1}{2} \tau^{n-1} V_1 \left (\df{P_2^2 \tau^2 G}{\cV_0 }+ 
   2 P_2 \tau (1 + n) + \df{2 \cV_0 n (1 + n)}{G}\right )} \\[9pt]\hline
    \end{array}\notag
\ee
\be
\begin{array}{ll}
& \phantom{\df{\tau^{n-1} (2 P_1 \tau V_1 G (\cV_0 + 
      P_2 \tau G + \cV_0 n) + 
   V_2 (P_2^2 \tau^2 G^2 + 
      2 P_2 \tau \cV_0 G (1 + n) + 
      2 \cV_0^2 n (1 + n))}{2 \cV_0 G}} \\[9pt]\hline
   & \\[\dimexpr-\normalbaselineskip+4pt]
       \multirow{3}{*}{$\cS^+_n$} 
& \multicolumn{1}{l}{\df{1}{2} \tau^{n-\f 1 2} (2p_u \tau + v + 2 v n)} \\[9pt]
& \multicolumn{1}{l}{\df{\tau^{n- \f 1 2} (p_a \tau G+ 
   4 a b^2 \cV_0 (1 + 2 n))}{4 a b \sqrt{\cV_0 G}}} \\[9pt]
& \multicolumn{1}{l}{\tau^{n-\f 1 2} \sqrt{\df{V_1}{\cV_0 G}} \left (\cV_0 + P_2 \tau G +  2 \cV_0 n\right )} \\[9pt]\hline
 & \\[\dimexpr-\normalbaselineskip+4pt]
	\multirow{3}{*}{$\cS^-_n$} 
& \multicolumn{1}{l}{\df{1}{2} \tau^{n-\f 1 2} (2p_v \tau + u + 2 u n)} \\[9pt]
& \multicolumn{1}{l}{\df{\tau^{n- \f 1 2} (-a p_a \tau G+ 
   2 b pb \tau G+  4 a^2 b^2 \cV_0 (1 + 2 n))}{4 b \sqrt{\cV_0 G}}} \\[9pt]
& \multicolumn{1}{l}{\tau^{n-\f 1 2}  \df{\tau (2 P_1 V_1 + P_2 V_2) G + 
   \cV_0 (V_2 + 2 V_2 n)}{2 \sqrt{\cV_0 V_1 G}}} \\[9pt]\hline
    & \\[\dimexpr-\normalbaselineskip+4pt]
      \multirow{3}{*}{$\cD_n$}  
& \multicolumn{1}{l}{\tau^n (p_u u - p_v v)} \\[9pt] 
& \multicolumn{1}{l}{\tau^n (p_a a - p_b b)} \\[9pt]
& \multicolumn{1}{l}{-2 \tau^n P_1 V_1} \\[9pt]\hline
        \end{array}
\ee
These can be extended to an infinite tower of integrable generators by relaxing the conditions on $n$, but on the other hand they do not provide conserved charges. The $\iso(2,1)$ subgroup is recast in the form
\bsub
\be
&J^{(\tau)}= \f{G}{4 \lambda \cV_0} \cL_1 + \f{\cV_0 \lambda}{G} \cL_{-1} \,,&  
&K_x^{(\tau)}=  \f{G}{4 \lambda \cV_0} \cL_1 - \f{\cV_0 \lambda}{G} \cL_{-1} \,,& 
&K_y^{(\tau)} = -\cL_0\,, \\
&\Pi_0^{(\tau)} =  \f{G}{4 \lambda \cV_0} \cT_1^+ + \f{\cV_0 \lambda}{G} \cT_{-1}^+\,,& 
&\Pi_y^{(\tau)}=  \f{G}{4 \lambda \cV_0} \cT_1^+ - \f{\cV_0 \lambda}{G} \cT_{-1}^+\,,&
&\Pi_x^{(\tau)}= \cT^+_0\,.
\ee	
\esub
This the evolving version of the kinematical algebra used in the quantization \eqref{Poincare_alg_gen}, they of course do form the same algebra and e.g. $J$ in\eqref{Poincare_alg_lorenz} is $J = J^{(\tau)} (\tau =0)$. The evolving BMS$_3$ generators are linked to the first integrals via (this is the same as evaluating them \textit{on-shell})
\be
\cL_n &= \cB \tau^{n-1}(n+1)\left (n(\tau-\tau_0)-\tau\right ) + \f{H}{2} \tau^{n-1} \left ( n (\tau^2-\tau_0^2)-n^2 (\tau-\tau_0)^2 -2 \tau \tau_0 \right ) \,,\\
\cT_n^+ &= \f{\cA \tau^{n-1}G}{2 \cV_0} \left (n^2 (\tau-\tau_0)^2- n (\tau^2-\tau_0^2) +2 \tau \tau_0 \right ) \,, \q\q \forall n \in \Z\,.
\ee
These simplify to constants if and only if $n\in \{\pm 1 , 0\}$
\be
\cL_n &= -(n+1)\tau_0^n \cB - \tau_0^{n+1} H \tag{\ref{gen&integral_1}}\,,\\
\cT_n^+ &= \f{\cA G}{\cV_0} \tau_0^{n+1} \tag{\ref{gen&integral_2}}\,, \q\q  n \in \{0,\pm 1\}\,.
\ee

\newpage
~
\thispagestyle{empty}
\pagestyle{Bibliography}
\bibliography{./biblio/Biblio,./biblio/mypaper,./biblio/other_field,./biblio/gravity_general,./biblio/minisuper}
\bibliographystyle{bib-style}

\end{document}